\newcommand\isoeq{\mathrel{\stackrel{\makebox[0pt]{\mbox{\normalfont\footnotesize iso}}}{=}}}
\newtheorem{manifold}{Definition}[subsection]
\newtheorem{liederiv}[manifold]{Definition}
\newtheorem{lievector}[manifold]{Proposition}
\newtheorem{evolution}{Proposition}[subsection]
\newtheorem{symplchart}[evolution]{Proposition}
\newtheorem{symplmanif}{Definition}[subsection]
\newtheorem{symplvec}[symplmanif]{Definition}
\newtheorem{symplperp}[symplmanif]{Lemma}
\newtheorem{sympltype}[symplmanif]{Definition}
\newtheorem{evenlagr}[symplmanif]{Lemma}
\newtheorem{symplframe}[symplmanif]{Proposition}
\newtheorem{darboux}[symplmanif]{Theorem}
\newtheorem{symplectomorphism}[symplmanif]{Definition}
\newtheorem{localhamilton}[symplmanif]{Proposition}
\newtheorem{poissonalg}[symplmanif]{Proposition}
\newtheorem{lieaction}{Definition}[subsection]
\newtheorem{moment}[lieaction]{Definition}
\newtheorem{noether}[lieaction]{Theorem}
\newtheorem{elementary}[lieaction]{Definition}
\newtheorem{bracohom}[lieaction]{Proposition}
\newtheorem{coadorbit}[lieaction]{Proposition}
\newtheorem{coad}[lieaction]{Proposition}
\newtheorem{classelem}[lieaction]{Proposition}
\newtheorem{presympl}[lieaction]{Definition}
\newtheorem{reduction}[lieaction]{Proposition}
\theoremstyle{definition}
\newtheorem{su2}[lieaction]{Example}
\newtheorem{poincare}[lieaction]{Example}
\newtheorem{automorphism}{Definition}[subsection]
\newtheorem{exactgroup}[automorphism]{Proposition}
\newtheorem{isompreq}[automorphism]{Proposition}
\newtheorem{prequantization}[automorphism]{Definition}
\newtheorem{preqdirac}[automorphism]{Proposition}
\newtheorem{preqcrit}[automorphism]{Proposition}
\newtheorem{preqreduction}[automorphism]{Proposition}
\newtheorem{preqsu2}[automorphism]{Example}
\newtheorem{realpolarization}{Definition}[subsection]
\newtheorem{connectionleaf}[realpolarization]{Proposition}
\newtheorem{polarizationcot}[realpolarization]{Proposition}
\newtheorem{polarizationcoord}[realpolarization]{Corollary}
\newtheorem{polinomial}[realpolarization]{Definition}
\newtheorem{polinomialindeed}[realpolarization]{Proposition}
\newtheorem{cxpolarization}[realpolarization]{Definition}
\newtheorem{complexlagrangian}[realpolarization]{Proposition}
\newtheorem{strongint}[realpolarization]{Definition}
\newtheorem{gencxlagr}[realpolarization]{Proposition}
\newtheorem{compatiblecoisotropic}[realpolarization]{Definition}
\newtheorem{reducpolarize}[realpolarization]{Proposition}
\newtheorem{stronginteg}[realpolarization]{Proposition}
\newtheorem{polsection}[realpolarization]{Definition}
\newtheorem{quantumsphere}[realpolarization]{Example}
\newtheorem{wavequation}[realpolarization]{Example}
\newtheorem{eulerlagrange}{Proposition}[subsection]
\newtheorem{omegafield}[eulerlagrange]{Proposition}
\newtheorem{classical}[eulerlagrange]{Example}
\newtheorem{fieldsymplectic}[eulerlagrange]{Example}
\newtheorem{cxg}{Proposition}[subsection]
\newtheorem{sho}[cxg]{Example}
\newtheorem{fockfield}[cxg]{Example}
\newtheorem{projectj}{Proposition}[subsection]
\newtheorem{projectcoherent}[projectj]{Lemma}
\newtheorem{composepi}[projectj]{Proposition}
\newtheorem{metasymplectic}[projectj]{Proposition}
\newtheorem{lagrangiangrassmanian}[projectj]{Proposition}
\newtheorem{halform}[projectj]{Definition}
\newtheorem{metadone}[projectj]{Proposition}
\newtheorem{correctsho}[projectj]{Example}
\newtheorem{metastruc}{Definition}[subsection]
\newtheorem{metapol}[metastruc]{Proposition}
\newtheorem{stringdual}[metastruc]{Example}
\begin{document}

\title{Geometric Quantization: Particles, Fields and Strings}

\author[1]{David S Berman}
\author[2]{Gabriel Cardoso}
\affil[1]{School of Physical and Chemical Sciences, Queen Mary University of London, Mile End Road, London E1 4NS, U.K.}
\affil[2]{Department of Physics and Astronomy, Stony Brook University, Stony Brook, New York 11794-3800, USA}

\date{\today}

\maketitle

\begin{abstract} 
These notes present an introduction to the method of geometric quantization. We discuss the main theorems in a style suitable for a theoretical physicist with an eye towards the physical motivation and the interpretation of the geometric construction as providing a solution to Dirac's axioms of quantization. We provide in detail the examples of free relativistic particles, their corresponding quantum fields, and the bosonic string using formalism of double field theory. Based on lectures written by Gabriel Cardoso.

\end{abstract}


\section{Introduction}\label{sec:intro}

Twentieth century physics is remarkable for its use of geometric methods. The most impressive examples are Riemannian geometry in the theory of general relativity and the description of the fundamental forces of nature as fibre bundles. Wigner famously spoke of the unreasonable effectiveness of mathematics in the natural sciences. He might have just as well as discussed the the unreasonable effectiveness of geometry. But where does quantum theory fit into a geometric description of nature? Geometry and quantum theory have often given the impression of being distinct, maybe even (as suggested by quantum gravity) as being incompatible. Rather than trying to tackle the difficult topic of making quantizing geometry we will examine how we can make quantizing geometric.

Historically, the origin of what we mean by quantization is the canonical quantization proceedure, which provided the derivation of the Schr\"odinger equation from classical mechanics through a prescription for replacing the canonical position and momentum variables by quantum operators. Later, this replacement was understood as a realization of Heisenberg's canonical commutation relations. The commutation relations of operators implied the quantum uncertainty principle of observables while also suggested the relationship between the algebra of quantum observables and the Poisson brackets of classical observables.

It was then Dirac \cite{dirac1981principles} who first proposed thinking of the quantization prescription in a more general sense: given a classical system, which in practice means a phase space and the relevant classical observables, how to construct the analogue quantum system, ie the Hilbert space and relevant quantum operators? In particular, the association $\mathcal{Q} : f\mapsto\hat{f}$ from functions on the classical space of states to operators on the quantum space of states should be such that
\begin{align}\label{diracintro}
\begin{split}
\bullet \, & \mathcal{Q} : f\mapsto\hat{f} \text{ is $\mathbb{R}$-linear} \\ 
\bullet \, & [\mathcal{Q}(f),\mathcal{Q}(g)] = -i\hbar \mathcal{Q}(\{f,g\})\\
\bullet \, & f \text{ is a constant function} \Rightarrow \mathcal{Q}(f) = f\mathds{1} \text{ acts by multiplication by $f$} \\
\end{split},
\end{align}
where $[\cdot,\cdot]$ denotes the commutator of linear maps and $\{\cdot,\cdot\}$ is the Poisson bracket. It was realised early on that such a map cannot be extended to all the classical observables, so one should also have some criteria to select a subalgebra of the observables to be quantized. Finally, one also expects that, if the action of some symmetry group on the classical phase space by canonical transformations reveals important physical properties of the system, this group should also act on the quantum space of states by (protectively) unitary transformations. Note that it is the quantum system which is thought of to be more fundamental, so that in principle there is no guarantee that reconstructing it through quantization is even possible. Surprisingly, however, it has been successfully applied to a wide class of problems, and is in practice the only way quantum theories are effectively constructed, from field theory to condensed matter physics.

Besides its practical importance in physics, quantization has sparked the interest of both the mathematics and the mathematical physics communities, because it leads to interesting technical questions. For example, it is common for physics problems to have symmetries, which typically appear as group actions. For such systems, quantization naturally relates to the theory of unitary representations. Thus there are now various ``methods'' of quantization, which attempt to solving the demands of quantization, reveal their mathematical structure, and explain obstructions and subtleties related to this procedure, like deformation quantization, BV formalism and, the subject of these notes, geometric quantization.

Geometric quantization uses the geometry of phase space to construct the quantum states and the operators coresponding to observables. The underlying geometry of the Hamiltonian formalism of mechanics is symplectic geometry. It is natural to ask how the quantization proceedure fits into this symplectic geometric picture of classical mechanics as a geoemtric construction. This was the question most notably spearheaded by Kostant \cite{kostant1970quantization} and Souriau \cite{souriau1970strucure}, but it was later developed further by many others. 

The method of consists in three parts: prequantization, which relies on the geometry of complex line bundles with connection and hermitian strucutre; quantization, which uses polarizations, a special type of integrable distribution present in symplectic manifolds; and finally the metaplectic correction, which involves extending the symplectic group of classical mechanics to its double cover (just as a spinor extends the rotation group to its double cover). As we will see, the quantization process brings all these geometric ingredients together to produce a general solution to Dirac's axioms of quantization. 

The notes will follow the logical structure beginning with classical symplectirc geometry and then using the geometry of complex line bundles to construct the prequantum bundle before introducing full quantization and the role of polarizations and finally the metaplectic correction. Along the way we will give a few with applications to a physically significant problems, namely the derivation of the wave equations of relativistic quantum mechanics, the corresponding free quantum fields, and the quantum string.

A general familiarity with manifolds, Lie derivatives, Lie groups and Lie algebras, connections on fibre bundles etc. at the level of \cite{nakahara} is desirable, but we tried to include some of the main definitions and use similar notations so it should be possible to look up the necessary concepts as they appear. For those interested in carrying on into more technical details of the quantization procedure, we strongly recommend \cite{woodhouse1997geometric,asymptotics,souriau1970strucure,sniatycki2012geometric,kirillov2001geometric}, which served as the main references for these notes.

\pagebreak

\section{Mechanics and Symplectic Geometry}\label{mechgeom}
From the perspective of geometric quantization a classical system is simply a symplectic manifold equipped with a Hamiltonian flow. We will now review how this notion arises naturally from a geometrization of classical mechanics. A couple of standard references for the topics in this section are \cite{bruhat,nakahara,goldstein,symplectic,marsden}. (In the whole text, we assume that all manifolds are smooth and make use of the Einstein summation convention unless otherwise stated.)

\subsection{Manifolds}

\begin{manifold}
An $m$-dimensional \emph{smooth manifold} is a second-countable Hausdorff topological space M with a smooth atlas, ie. a family $ \left \{ (U_{i}, \phi _{i}) \right \} $ of charts $(U_{i}, \phi _{i})$ such that 
\begin{description}
\item[(i)] $\left \{U_{i}\right \}$ is a family of open sets which covers M, that is, $ \cup U_{i} = M$;
\item[(ii)] for each $i$, $\phi _{i}:U_{i} \to \mathbb{R} ^{m}$ is a homeomorphism;
\item[(iii)] whenever $U_{i} \cap U_{j} \neq \varnothing$, the map $\phi _{i} \circ \phi _{j}^{-1}: \phi _{j}(U_{i} \cap U_{j}) \to \phi _{i}(U_{i} \cap U_{j})$ is infinitely differentiable.
\end{description}
\end{manifold}


This is a definition that is intuitive for the physicist to understand: a smooth manifold is a space which admits local coordinates, and the transition functions between such local coordinates are smooth. It generalises $\mathbb{R}^{n}$ in that not necessarily there is any global coordinate system which covers the entire manifold. One can then use the coordinates to define differentiability of functions between manifolds, vector fields and so on. Particularly, one can define the Lie derivative.

\begin{liederiv}\label{liederiv}
Let $T$ denote a smooth tensor field in $M$ and $X$ be a smooth vector field in $M$. If $\rho : M \times \mathbb{R} \to M$ denotes the flow of $X$, then the Lie derivative of $T$ along $X$ is the tensor field $\mathcal{L}_{X}T$ defined at each point by
\begin{equation}\nonumber
\mathcal{L}_{X}T|_{q} = \frac{d}{dt} (\rho_{- t})_{*}T|_{\rho_{ t}(q)}\Big|_{t=0} = \lim_{t \to 0} \frac{(\rho_{- t}(q))_{*}T|_{\rho_{ t}(q)} - T|_{q}}{t},
\end{equation}
where $\rho_{t}(q) = \rho(q, t)$ and $\rho_{*}$ is the associated differential map.
\end{liederiv}
The restriction of this operation to the linear space $V(M)$ of vector fields on $M$ gives it the structure of a Lie algebra.

\begin{lievector}\label{lievector}
Let $X, Y \in V(M)$. The \emph{Lie bracket} $[X,Y] \in V(M)$ defined by $[X,Y] = \mathcal{L}_{X}Y$ has the following properties:
\begin{description}
\item[(i)] (Linearity) \\
$[aX + bY,Z] =  a[X, Z] + b[Y,Z]$,
\item[(ii)] (Antisymmetry) \\
$[X,Y] = -[Y,X]$,
\item[(iii)] (Jacobi identity) \\
 $[[X,Y],Z] + [[Y,Z],X] + [[Z,X],Y] = 0$,
\end{description}
$\forall a, b \in \mathbb{R}, \forall X, Y \in V(M)$.
\end{lievector}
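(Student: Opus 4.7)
The plan is to reduce everything to the fact that, when acting on smooth functions, the Lie bracket $[X,Y] = \mathcal{L}_X Y$ coincides with the commutator of $X$ and $Y$ regarded as first-order differential operators, i.e.
\begin{equation}\nonumber
[X,Y](f) = X(Y(f)) - Y(X(f)) \qquad \forall f \in C^\infty(M).
\end{equation}
Once this identification is in hand, antisymmetry and the Jacobi identity become purely algebraic statements about compositions of linear operators, and linearity is inherited directly from the linearity of the Lie derivative in its tensor argument.

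First I would establish the commutator formula. Let $\rho_t$ denote the flow of $X$, and fix $f \in C^\infty(M)$ and $q \in M$. Expanding $f \circ \rho_t$ in $t$ around $0$ gives $f\circ \rho_t = f + t\, X(f) + O(t^2)$ uniformly near $q$. Applying the pullback definition of $\mathcal{L}_X Y$ to $f$, I would compute
\begin{equation}\nonumber
\bigl((\mathcal{L}_X Y)(f)\bigr)(q) = \left.\frac{d}{dt}\right|_{t=0} \bigl(Y(f\circ \rho_t)\bigr)(\rho_{-t}(q)) \cdot \text{(suitable rearrangement)},
\end{equation}
and, after carefully expanding both the position $\rho_{-t}(q)$ and the function $f \circ \rho_t$ to first order in $t$, collect the cross-terms to obtain $X(Y(f))(q) - Y(X(f))(q)$. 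This is the only genuine computation in the proof and is the step I expect to be the main technical obstacle, since it requires a careful Taylor expansion inside the pushforward.

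With the commutator identity in hand, the three properties are immediate. For linearity, the $\mathbb{R}$-linearity of $\mathcal{L}_{aX+bY}$ in $X,Y$ follows from the chain rule applied to the flow of $aX + bY$ and from the linearity of the Lie derivative in its tensor argument, so $[aX+bY,Z] = a[X,Z] + b[Y,Z]$. Antisymmetry is then trivial: acting on any $f$,
\begin{equation}\nonumber
[X,Y](f) = X(Y(f)) - Y(X(f)) = -\bigl(Y(X(f)) - X(Y(f))\bigr) = -[Y,X](f),
\end{equation}
and since vector fields are determined by their action on smooth functions, $[X,Y] = -[Y,X]$.

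For the Jacobi identity, I would apply the cyclic sum to an arbitrary $f$ and expand each term using the commutator formula twice. For instance,
\begin{equation}\nonumber
[[X,Y],Z](f) = [X,Y]\bigl(Z(f)\bigr) - Z\bigl([X,Y](f)\bigr) = X(Y(Zf)) - Y(X(Zf)) - Z(X(Yf)) + Z(Y(Xf)),
\end{equation}
and similarly for the cyclic permutations. Summing the three resulting expressions, each of the six triple compositions $XYZ, XZY, YXZ, YZX, ZXY, ZYX$ appears exactly once with a plus sign and once with a minus sign, so the total vanishes identically as an operator on $C^\infty(M)$; hence the cyclic sum of brackets is the zero vector field, which is the Jacobi identity.
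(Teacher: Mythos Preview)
Your proposal is correct and follows essentially the same strategy as the paper: both reduce to a commutator-type identity obtained by Taylor-expanding the flow, and then read off (i)--(iii). The only difference is presentational: the paper carries out the expansion explicitly in a local chart $x^\mu$ to derive the component formula $[X,Y] = (X^\mu\partial_\mu Y^\nu - Y^\mu\partial_\mu X^\nu)\partial_\nu$, whereas you phrase the same computation coordinate-free as $[X,Y](f)=X(Yf)-Y(Xf)$. One small remark: your aside that linearity in the first slot ``follows from the chain rule applied to the flow of $aX+bY$'' is shaky, since flows do not depend linearly on the generating field; but once you have the commutator formula, linearity is immediate from the $\mathbb{R}$-linearity of $X\mapsto X(f)$, so simply invoke that instead.
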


\begin{proof}
This is a good example of how local coordinates can come in handy. In a local chart $x^{\mu}$, the flow of $\rho$ of $X$ solves $\frac{d \rho^{\mu}(x, t)}{dt} = X^{\mu}(\rho(x, t))$ and therefore
\begin{equation}\nonumber
\rho^{\mu}_{\epsilon}(q) = \rho^{\mu}(q, \epsilon) = q^{\mu} + \epsilon X^{\mu}(q) + O(\epsilon ^2)
\end{equation}
Thus to first order in $\epsilon$
\begin{equation}\nonumber
\begin{split}
Y|_{\rho_{\epsilon}(q)} & = Y^{\mu}(q + \epsilon X(q))\partial_{\mu}|_{q + \epsilon X} \\
&= [Y^{\mu}(q)+\epsilon X^{\nu}(q)\partial_{\nu}Y^{\mu}(q)]\partial_{\mu}|_{q+ \epsilon X}
\end{split}
\end{equation}
Pushing this vector forward to $q$ gives
\begin{equation}\nonumber
\begin{split}
(\rho _{- \epsilon})_{*} Y|_{\rho_{\epsilon}(q)} & = [Y^{\mu}(q)+\epsilon X^{\nu}(q)\partial_{\nu}Y^{\mu}(q)](\rho _{- \epsilon})_{*}\partial_{\mu}|_{q+ \epsilon X} \\
& =  [Y^{\mu}(q)+\epsilon X^{\nu}(q)\partial_{\nu}Y^{\mu}(q)]\partial_{\mu}[x^{\nu} - \epsilon X^{\nu}(q)]\partial_{\nu}|_{q} \\
& =  [Y^{\mu}(q)+\epsilon X^{\nu}(q)\partial_{\nu}Y^{\mu}(q)][\delta^{\nu}_{\mu} - \epsilon \partial_{\mu}X^{\nu}(q)]\partial _{\nu}|_{q} \\
& = [Y^{\nu} + \epsilon (X^{\mu}(q)\partial_{\mu}Y^{\nu}(q) - Y^{\mu}(q)\partial_{\mu}X^{\nu}(q))]\partial _{\nu}|_{q}
\end{split}
\end{equation}
up to terms $O(\epsilon^2)$. Substituting this in Definition \ref{liederiv} we get the coordinate expression
\begin{equation}\nonumber
[X, Y] = (X^{\mu}\partial_{\mu}Y^{\nu} - Y^{\mu}\partial_{\mu}X^{\nu})\partial_{\nu},
\end{equation}
from which \textbf{(i)} and \textbf{(ii)} follow immediately while \textbf{(iii)} is a straightforward application of the Leibnitz rule.
\end{proof}

\subsection{Mechanics in $\mathbb{R}^{2n}$}\label{mecrn}

The phase space of a classical mechanics system has more than just a differentiable structure and this can be motivated by the fact that it comes with a special class of \emph{canonical} coordinates. Their significance appears in the Hamiltonian formulation in the following way. Take $\mathbb{R}^{2n}$ with coordinates $(p_{1}, ..., p_{n}, q^{1} ..., q^{n})$ (the convention of the position of the indices will be clear later), as the phase space. To write the expressions more compactly, we define $\boldsymbol{\zeta} = (p_{a}, q^{b})$, so that

\begin{equation}\nonumber
\frac{\partial f}{\partial \boldsymbol{\zeta}} := \left(\frac{\partial f}{\partial p_{1}}, ..., \frac{\partial f}{\partial p_{n}}, \frac{\partial f}{\partial q^{1}},..., \frac{\partial f}{\partial q^{n}}\right)^{T}
\end{equation}
for any function $f(p_{a}, q^{b})$ and let
\begin{equation}\nonumber
\mathbf{J} := 
\begin{pmatrix}
\mathbf{0} & \mathbf{1} \\
\mathbf{-1} & \mathbf{0}
\end{pmatrix}
\end{equation}
where $\mathbf{0}$ and $\mathbf{1}$ are, respectively, the zero and the identity $n \times n$ matrices. Then the Poisson bracket $\{f, g\}_{\boldsymbol{\zeta}}$ between the functions $f, g: \mathbb{R}^{2n} \to \mathbb{R}$ in the coordinates $(p_{a}, q^{b})$ is defined as

\begin{equation}\label{poisson}
\{f, g\}_{\boldsymbol{\zeta}} := \left( \frac{\partial f}{\partial \boldsymbol{\zeta}} \right)^{T}\mathbf{J}\left( \frac{\partial g}{\partial \boldsymbol{\zeta}} \right).
\end{equation}
Notice that the coordinates $\boldsymbol{\zeta} = (p_{a},q^{b})$ satisfy the \emph{fundamental Poisson brackets}: $\{q^{a}, q^{b}\}_{\boldsymbol{\zeta}} = 0 = \{p_{a}, p_{b}\}_{\boldsymbol{\zeta}}$ and $\{p_{a}, q^{b}\}_{\boldsymbol{\zeta}} = \delta^{b}_{a}$, which we summarize as

\begin{equation}\nonumber
\{\boldsymbol{\zeta}, \boldsymbol{\zeta}\}_{\boldsymbol{\zeta}} = \mathbf{J}.
\end{equation}
Then, given an expression for the Hamiltonian function $H(p, q)$ in these coordinates, the dynamics of the system is given by the curve $(p,q)(t)$ which solves Hamilton's equations

\begin{equation}\nonumber
\frac{dp_{a}}{dt}(t) = \{ H, p_{a}\}_{\boldsymbol{\zeta}}, \,\,\,\,\, \frac{dq^{b}}{dt}(t) = \{ H, q^{b}\}_{\boldsymbol{\zeta}},
\end{equation}
or, equivalently,
\begin{equation}\label{hamiltoneqn}
\frac{d \boldsymbol{\zeta}}{dt}(t) = \{H, \boldsymbol{\zeta}\}_{\boldsymbol{\zeta}}.
\end{equation}
In fact, the Hamiltonian generates the time evolution of all observables\footnote{We consider observables given by a function $f:\mathbb{R}^{2n}\to\mathbb{R}$, whose only time-dependence comes from evaluation on the phase space trajectory traced out by the system.} through the Poisson bracket:
\begin{evolution}
Let $f: \mathbb{R}^{2n} \to \mathbb{R}$ be a function in phase space (a classical observable). Then its restriction $f(\boldsymbol{\zeta}(t))$ to the trajectory of the mechanical system of Hamiltonian $H(\boldsymbol{\zeta})$  is given by

\begin{equation}\label{hamiltonobs}
\frac{df}{dt}(\boldsymbol{\zeta}(t)) = \{H,f\}_{\boldsymbol{\zeta}}.
\end{equation}
\end{evolution}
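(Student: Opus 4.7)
The plan is a direct application of the chain rule followed by the definition of the Poisson bracket. Since the only time-dependence of $f(\boldsymbol{\zeta}(t))$ enters through the phase-space trajectory, I would first write
\begin{equation}\nonumber
\frac{df}{dt}(\boldsymbol{\zeta}(t)) = \left(\frac{\partial f}{\partial \boldsymbol{\zeta}}\right)^{T} \frac{d\boldsymbol{\zeta}}{dt},
\end{equation}
and then substitute Hamilton's equation \eqref{hamiltoneqn} to replace $d\boldsymbol{\zeta}/dt$ by $\{H,\boldsymbol{\zeta}\}_{\boldsymbol{\zeta}}$.

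The second step is to unpack the column vector $\{H,\boldsymbol{\zeta}\}_{\boldsymbol{\zeta}}$ using the definition \eqref{poisson}. Componentwise, $\{H,\zeta_i\}_{\boldsymbol{\zeta}} = (\partial H/\partial\boldsymbol{\zeta})^{T}\,\mathbf{J}\,\mathbf{e}_i$, so assembling the entries gives $\{H,\boldsymbol{\zeta}\}_{\boldsymbol{\zeta}} = \mathbf{J}^{T}\,\partial H/\partial\boldsymbol{\zeta}$. Feeding this back into the chain-rule expression yields the scalar $(\partial f/\partial\boldsymbol{\zeta})^{T}\mathbf{J}^{T}(\partial H/\partial\boldsymbol{\zeta})$; taking the transpose of this scalar (which leaves it unchanged) immediately produces $(\partial H/\partial\boldsymbol{\zeta})^{T}\mathbf{J}\,(\partial f/\partial\boldsymbol{\zeta}) = \{H,f\}_{\boldsymbol{\zeta}}$, which is the claim.

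The main (and only) obstacle is bookkeeping: the placement of transposes and the overall sign hinge on the antisymmetry $\mathbf{J}^{T}=-\mathbf{J}$, and it is easy to confuse $\{H,f\}_{\boldsymbol{\zeta}}$ with $\{f,H\}_{\boldsymbol{\zeta}}$. A cleaner route, if one prefers to avoid matrix manipulations, is to note that $\{H,\cdot\}_{\boldsymbol{\zeta}}$ is a derivation — this is immediate from \eqref{poisson} and the Leibniz rule for partial derivatives — so that $\{H,f\}_{\boldsymbol{\zeta}} = \sum_i (\partial f/\partial\zeta_i)\,\{H,\zeta_i\}_{\boldsymbol{\zeta}}$, and then \eqref{hamiltoneqn} together with the chain rule identifies this sum with $df/dt$.
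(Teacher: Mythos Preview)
Your proposal is correct and follows essentially the same approach as the paper: chain rule along the trajectory, substitute Hamilton's equations, then identify the result with the Poisson bracket. The paper simply writes this out in explicit $(p_a,q^b)$ components rather than your compact $\boldsymbol{\zeta}$/matrix notation, and your ``cleaner route'' via the derivation property is in fact exactly the line the paper takes.
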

\begin{proof}
\begin{equation}\nonumber
\begin{split}
\frac{df}{dt}(p_{a}(t),q^{b}(t)) &= \frac{\partial f}{\partial p_{a}}\frac{dp_{a}}{dt} + \frac{\partial f}{\partial q^{b}}\frac{dq^{b}}{dt} = \frac{\partial f}{\partial p_{a}} \{ H, p_{a}\}_{\boldsymbol{\zeta}} + \frac{\partial f}{\partial q^{b}} \{ H, q^{b}\}_{\boldsymbol{\zeta}} \\
&= -\frac{\partial f}{\partial p_{a}}\frac{\partial H}{\partial q^{a}} + \frac{\partial f}{\partial q^{a}}\frac{\partial H}{\partial p_{b}} = \{H,f\}_{\boldsymbol{\zeta}}.
\end{split}
\end{equation}
\end{proof}

The condition for a second coordinate system $\boldsymbol{\eta} = (p'_{a}, q'^{b})$ to be such that the Hamiltonian dynamics is still expressed in terms of equations (\ref{hamiltoneqn}) and (\ref{hamiltonobs}) is the following.

\begin{symplchart}
Two coordinate systems $\boldsymbol{\zeta}=(p_{a},q^{b})$ and $\boldsymbol{\eta}=(p'_{a},q'^{b})$ are such that the Poisson brackets of any two arbitrary functions $f, g: \mathbb{R}^{2n} \to \mathbb{R}$ are equal, i.e.,
\begin{equation}\label{eqcanon}
\{f, g\}_{\boldsymbol{\zeta}} = \{f, g\}_{\boldsymbol{\eta}}, \,\,\,\, \forall f, g,
\end{equation}
if, and only if, the Jacobian matrix 
\begin{equation}\nonumber
\frac{\partial \boldsymbol{\zeta}}{\partial \boldsymbol{\eta}} = \left(
\begin{array}{c|c}
\frac{\partial p_{a}}{\partial p'_{b}} & \frac{\partial p_{a}}{\partial q'^{b}} \\ \hline
\frac{\partial q^{a}}{\partial p'_{b}} & \frac{\partial q^{a}}{\partial q'^{b}} 
\end{array}
\right)
\end{equation}
is an element of the symplectic group $SP(n, \mathbb{R})$, where
\begin{equation}\nonumber
SP(n, \mathbb{R}) = \{ \mathbf{A} \in GL(2n, \mathbb{R}) | \mathbf{A^{T}JA} = \mathbf{J}\}.
\end{equation}
\end{symplchart}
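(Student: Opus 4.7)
The plan is to translate the invariance of the Poisson bracket into a matrix identity on the Jacobian via the chain rule, and then use the freedom of choice of $f,g$ to conclude. Let $\mathbf{M} = \partial\boldsymbol{\zeta}/\partial\boldsymbol{\eta}$ denote the Jacobian matrix. The chain rule gives, in column-vector form,
\begin{equation}\nonumber
\frac{\partial f}{\partial \boldsymbol{\eta}} = \mathbf{M}^{T}\frac{\partial f}{\partial \boldsymbol{\zeta}}.
\end{equation}
Substituting this into the definition (\ref{poisson}) of the Poisson bracket in the $\boldsymbol{\eta}$ coordinates, one obtains
\begin{equation}\nonumber
\{f,g\}_{\boldsymbol{\eta}} = \left(\frac{\partial f}{\partial \boldsymbol{\zeta}}\right)^{T}\mathbf{M}\mathbf{J}\mathbf{M}^{T}\left(\frac{\partial g}{\partial \boldsymbol{\zeta}}\right).
\end{equation}
This is the key computation; everything else is a matter of extracting the matrix identity from it.

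For the $(\Leftarrow)$ direction, I would argue as follows. If $\mathbf{M}\in SP(n,\mathbb{R})$, then $\mathbf{M}^{T}\mathbf{J}\mathbf{M}=\mathbf{J}$. Because $\mathbf{J}$ is invertible this forces $\mathbf{M}$ to be invertible, and a short manipulation (multiplying $\mathbf{M}^{T}\mathbf{J}\mathbf{M}=\mathbf{J}$ by $\mathbf{M}^{-1}$ on the right, by $(\mathbf{M}^{T})^{-1}$ on the left, and using $\mathbf{J}^{-1}=-\mathbf{J}$) yields the equivalent identity $\mathbf{M}\mathbf{J}\mathbf{M}^{T}=\mathbf{J}$. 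Plugging this into the chain-rule expression above gives $\{f,g\}_{\boldsymbol{\eta}} = \{f,g\}_{\boldsymbol{\zeta}}$ for all $f,g$, as required.

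For the $(\Rightarrow)$ direction, assume (\ref{eqcanon}) holds for all $f,g$. Then
\begin{equation}\nonumber
\left(\frac{\partial f}{\partial \boldsymbol{\zeta}}\right)^{T}(\mathbf{M}\mathbf{J}\mathbf{M}^{T}-\mathbf{J})\left(\frac{\partial g}{\partial \boldsymbol{\zeta}}\right)=0.
\end{equation}
The trick is that the gradients $\partial f/\partial\boldsymbol{\zeta}$ are arbitrary: taking $f=\zeta^{i}$ and $g=\zeta^{j}$ we get $\partial f/\partial\boldsymbol{\zeta}=\mathbf{e}_{i}$ and $\partial g/\partial\boldsymbol{\zeta}=\mathbf{e}_{j}$, so every entry of $\mathbf{M}\mathbf{J}\mathbf{M}^{T}-\mathbf{J}$ vanishes. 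This gives $\mathbf{M}\mathbf{J}\mathbf{M}^{T}=\mathbf{J}$, and by the same invertibility argument as above this is equivalent to $\mathbf{M}^{T}\mathbf{J}\mathbf{M}=\mathbf{J}$, i.e., $\mathbf{M}\in SP(n,\mathbb{R})$.

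There is essentially no obstacle here: the content of the proposition is the chain-rule computation. The only mildly subtle point is noting that the symplectic condition $\mathbf{M}^{T}\mathbf{J}\mathbf{M}=\mathbf{J}$ used to define $SP(n,\mathbb{R})$ and the condition $\mathbf{M}\mathbf{J}\mathbf{M}^{T}=\mathbf{J}$ that naturally arises from the Poisson-bracket computation are equivalent, which is a routine linear-algebra identity.
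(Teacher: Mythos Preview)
Your proof is correct and follows essentially the same chain-rule approach as the paper: substitute the Jacobian into the Poisson bracket and use the arbitrariness of $f,g$ to extract the matrix identity. You are in fact a bit more explicit than the paper in two places---choosing $f=\zeta^{i}$, $g=\zeta^{j}$ to isolate matrix entries, and noting that the condition $\mathbf{M}\mathbf{J}\mathbf{M}^{T}=\mathbf{J}$ arising naturally from the computation is equivalent to the defining condition $\mathbf{M}^{T}\mathbf{J}\mathbf{M}=\mathbf{J}$ of $SP(n,\mathbb{R})$---both of which the paper passes over silently.
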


\begin{proof}
Clearly the Jacobian matrix should be invertible for the coordinate transformation to not be singular. Also, substituting (\ref{poisson}) in (\ref{eqcanon}), we have
\begin{equation}\nonumber
\begin{split}
\left( \frac{\partial f}{\partial \boldsymbol{\zeta}} \right)^{T}\mathbf{J}\left( \frac{\partial g}{\partial \boldsymbol{\zeta}} \right) &= \{f, g\}_{\boldsymbol{\zeta}} = \{f, g\}_{\boldsymbol{\eta}} = \left( \frac{\partial f}{\partial \boldsymbol{\eta}} \right)^{T}\mathbf{J}\left( \frac{\partial g}{\partial \boldsymbol{\eta}} \right) \\
&= \left( \frac{\partial f}{\partial \boldsymbol{\zeta}} \right)^{T}\left[ \left(\frac{\partial \boldsymbol{\zeta}}{\partial \boldsymbol{\eta}}\right)^{T} \mathbf{J}\left(\frac{\partial \boldsymbol{\zeta}}{\partial \boldsymbol{\eta}}\right)\right]\left( \frac{\partial g}{\partial \boldsymbol{\zeta}} \right), \,\,\,\,\, \forall f, g,
\end{split}
\end{equation}
and the claim follows.
\end{proof}

In particular, this shows that if the Jacobian matrix is symplectic then also in the $\boldsymbol{\eta} = (p'_{a}, q'^{b})$ coordinates we have
\begin{equation}\nonumber
\{\boldsymbol{\eta}, \boldsymbol{\eta}\}_{\boldsymbol{\eta}} = \mathbf{J}, \,\,\,\,\,\, \frac{d \boldsymbol{\eta}}{dt}(t) = \{ \boldsymbol{\eta}, H\}_{\boldsymbol{\eta}}, \,\,\,\,\,\, \frac{df}{dt}(\boldsymbol{\eta}(t)) = \{f,H\}_{\boldsymbol{\eta}},
\end{equation}
where the two last refer to the trajectory of the system. It follows that the new coordinate system is just as good as the old one to formulate classical mechanics. In fact, it can be better, in the sense that the equations of motion can be simpler in the new coordinates \footnote{An example is given by the Hamilton-Jacobi theory, where one uses this fact to bring Hamilton's equations into a trivial form (cf. chapter $10$ of \cite{goldstein})}. Thus one says that both coordinate systems are \emph{canonical} and the transformation relating them is said to be a \emph{canonical transformation}.

\subsection{Mechanics in a manifold}

A manifold generalizes $\mathbb{R}^n$ in that it admits \emph{local} coordinates which are consistently related by smooth diffeomorphisms. Similarly, a \emph{symplectic manifold} generalizes $\mathbb{R}^{2n}$ in that it admits a special class of local coordinate systems which are related to one another by canonical transformations. As a byproduct, a symplectic manifold comes with a coordinate-free generalization of classical mechanics, including geometrical definitions of canonical coordinates, Hamilton's equations, and Poisson brackets.

\begin{symplmanif}
A \emph{symplectic manifold} is a pair $(M, \omega)$ in which $M$ is a smooth manifold and $\omega$ is a closed nondegenerate two-form on $M$. In other words,
\begin{equation}\nonumber
\omega \in \Omega^2(M), \,\,\,\,\,\,\,\,\, d\omega = 0,
\end{equation}
and the map
\begin{equation}\nonumber
T_{m}M \to T^{*}_{m}M: X \to X\lrcorner \omega
\end{equation}
is a linear isomorphism at each $m \in M$, where the contraction $\lrcorner$ is the generalization to tensor fields of the map $V(M)\times\Omega^{1}(M):\to C^{\infty}(M):(X,\theta)\mapsto X\lrcorner\theta = \theta(X)$.

The two-form $\omega$ is called the \emph{symplectic structure} of $(M, \omega)$.
\end{symplmanif}

In particular, any (even-dimensional) vector space with a specified antisymmetric nondegenerate bilinear form is a symplectic manifold if we think of the components in some basis as chart coordinates. Hence the definition
\begin{symplvec}
A \emph{symplectic vector space} is a pair $(V, \omega)$, where $V$ is a vector space\footnote{In this subsection we will mainly refer to real vector spaces, but one should bear in mind the obvious generalization to the complex case.} and $\omega$ is an antisymmetric, nondegenerate bilinear form on $V$. That is $\omega(X, Y) = - \omega(Y,X), \,\,\, \forall X, Y \in V$ and $X \lrcorner \omega = 0 \Leftrightarrow X = 0$.
\end{symplvec}
These will appear both as phase spaces of linear systems as well as the tangent spaces of general symplectic manifolds. 

It will be useful to define the symplectic complement $F^{\perp}$ of a given subspace $F \subset V$ by
\begin{equation}\nonumber
F^{\perp} = \{X \in V | \omega(X, Y) = 0, \,\,\, \forall Y \in F\},
\end{equation}
which is also a subspace. The symplectic complement has the simple properties
\begin{symplperp}
If $F, G \subset V$ are subspaces of $V$ then
\begin{description}
\item[(i)]$F \subset G \Rightarrow F^{\perp} \supset G^{\perp}$
\item[(ii)]$(F^{\perp})^{\perp} = F$
\item[(iii)]$(F+G)^{\perp} = F^{\perp}\cap G^{\perp}$
\item[(iv)]$(F\cap G)^{\perp} = F^{\perp}+G^{\perp}$
\item[(v)]$\dim(V) = \dim(F) + \dim(F^{\perp})$
\end{description}
\end{symplperp}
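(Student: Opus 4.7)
The plan is to prove (i), (iii), (v) first from definitions, and then derive (ii) and (iv) as formal consequences. Properties (i) and (iii) are essentially immediate: for (i), if $X \in G^\perp$ and $F \subset G$, then $\omega(X,Y) = 0$ for all $Y \in G$, hence in particular for all $Y \in F$; for (iii), $X \in (F+G)^\perp$ means $\omega(X,Y+Z) = \omega(X,Y) + \omega(X,Z) = 0$ for all $Y \in F, Z \in G$, which, by taking $Z = 0$ and then $Y = 0$ separately, is equivalent to $X \in F^\perp$ and $X \in G^\perp$.

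The main technical content sits in (v). The strategy is to use the nondegeneracy of $\omega$ to promote the contraction map $\omega^{\flat}: V \to V^{*}$, $X \mapsto X \lrcorner \omega$, to a \emph{linear isomorphism} (nondegeneracy says exactly that its kernel is trivial, and $\dim V = \dim V^{*}$ takes care of surjectivity in the finite-dimensional case). Under this isomorphism, $F^{\perp}$ is carried precisely onto the annihilator $F^{\circ} = \{\alpha \in V^{*} \mid \alpha|_{F} = 0\}$, because $X \in F^\perp$ iff $\omega(X,Y) = (\omega^\flat X)(Y) = 0$ for every $Y \in F$. From elementary linear algebra, $\dim F^{\circ} = \dim V - \dim F$, so $\dim F^{\perp} = \dim V - \dim F$, which is (v). I expect this to be the only step where I must actually use nondegeneracy nontrivially — the identification of $F^\perp$ with an annihilator via $\omega^\flat$ is the crux of the whole lemma.

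Given (v), property (ii) follows by a dimension-counting argument: the inclusion $F \subset (F^{\perp})^{\perp}$ is immediate from antisymmetry of $\omega$ (for $X \in F$ and any $Y \in F^\perp$, one has $\omega(X,Y) = -\omega(Y,X) = 0$), and applying (v) twice gives $\dim (F^\perp)^\perp = \dim V - \dim F^\perp = \dim V - (\dim V - \dim F) = \dim F$, so the inclusion is an equality. Finally, (iv) follows by applying (iii) to the subspaces $F^\perp$ and $G^\perp$: this yields $(F^\perp + G^\perp)^\perp = (F^\perp)^\perp \cap (G^\perp)^\perp = F \cap G$ by (ii), and then taking perps once more and using (ii) again gives $F^\perp + G^\perp = (F \cap G)^\perp$, as desired.
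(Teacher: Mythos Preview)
Your proof is correct and, if anything, more careful than the paper's. The paper proves (i) and (iii) essentially as you do, asserts that (ii) ``follows directly from the definition'', handles (iv) ``analogously'' to (iii), and then invokes nondegeneracy for (v). Your reordering --- establishing (v) first via the isomorphism $\omega^\flat\colon V\to V^*$ identifying $F^\perp$ with the annihilator $F^\circ$, and then deducing (ii) by dimension count and (iv) from (ii) and (iii) --- is a genuinely cleaner route: the inclusion $F\subset (F^\perp)^\perp$ is indeed immediate, but the reverse inclusion really does need nondegeneracy (via (v)), which the paper's phrasing obscures. Likewise, the ``analogous'' direct proof of (iv) only gives $F^\perp+G^\perp\subset (F\cap G)^\perp$ for free; the other inclusion tacitly needs (ii), which your derivation makes explicit.
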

\begin{proof}
\textbf{(i)} and \textbf{(ii)} follow directly from the definition. For \textbf{(iii)},
\begin{equation}\nonumber
\begin{split}
X \in (F+G)^{\perp} &\Leftrightarrow 0 = \omega(X, aY + bZ) = a\omega(X, Y)+b\omega(X, Z), \,\, \forall (a,b) \in \mathbb{R}^{2}, \forall Y\in F, \forall Z \in G  \\
&\Leftrightarrow \omega(X, Y) = \omega(X, Z) = 0, \,\, \forall Y \in F, \forall Z \in G \,\,\,\, \Leftrightarrow \,\,\,\, X \in F^{\perp}\cap G^{\perp},
\end{split}
\end{equation}
and analogously for \textbf{(iv)}. Finally, \textbf{(v)} follows from the nondegeneracy of $\omega$.
\end{proof}

We will make use of the following classification of subspaces with respect to the operation of symplectic complement:
\begin{sympltype}
A subspace $F \subset V$ is defined to be
\begin{description}
\item[(i)]\emph{isotropic} $\Leftrightarrow F \subset F^{\perp}$
\item[(ii)]\emph{coisotropic} $\Leftrightarrow F \supset F^{\perp}$
\item[(iii)]\emph{symplectic} $\Leftrightarrow F \cap F^{\perp} = 0$
\item[(iv)]\emph{Lagrangian} $\Leftrightarrow F$ is maximal isotropic, that is, $F$ is isotropic and $\nexists G \supsetneq F$ such that $G$ is an isotropic subspace of $V$.
\end{description}
\end{sympltype}
It follows that, if $V$ is finite dimensional, a subspace $F \subset V$ is Lagrangian if, and only if, it is isotropic and $\dim(F) = \frac{1}{2}\dim(V)$ or, equivalently, $F = F^{\perp}$. We also find the following
\begin{evenlagr}
If $(V, \omega)$ is a finite-dimensional symplectic vector space, then $V$ is even-dimensional and contains a Lagrangian subspace.
\end{evenlagr}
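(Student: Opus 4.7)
The plan is to build a Lagrangian by inductively enlarging an isotropic subspace, and then read off even-dimensionality from property \textbf{(v)} of the preceding lemma. The key observation driving the construction is that an isotropic subspace $L$ is maximal isotropic precisely when $L = L^{\perp}$.

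First, isotropic subspaces exist trivially: any nonzero vector $v$ spans a line which is isotropic, since $\omega(v,v) = -\omega(v,v) = 0$ by antisymmetry. Second, I would establish the enlargement step. Suppose $L \subset V$ is isotropic with $L \subsetneq L^{\perp}$; pick any $w \in L^{\perp}\setminus L$ and set $L_1 = L + \langle w\rangle$. A short bilinearity check — using isotropy of $L$, the defining property $w \in L^{\perp}$, and $\omega(w,w) = 0$ — shows $L_1$ is still isotropic, while $\dim L_1 = \dim L + 1$. Since $V$ is finite-dimensional, iterating this must terminate at an $L$ satisfying $L = L^{\perp}$. Such an $L$ is genuinely maximal: any further enlargement would adjoin some $w \notin L = L^{\perp}$, and nondegeneracy then forces $\omega(w,v) \neq 0$ for some $v \in L$, spoiling isotropy. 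Hence $L$ is Lagrangian.

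Finally, once a Lagrangian $L$ is in hand I would apply property \textbf{(v)} of the preceding lemma — which rests only on nondegeneracy and does not presuppose even-dimensionality — to conclude $\dim V = \dim L + \dim L^{\perp} = 2\dim L$, so $V$ is even-dimensional. The only point requiring real care is keeping the argument free of circular dimension counting: the existence of a Lagrangian must be proved before, or at least independently of, the conclusion that $V$ is even-dimensional, since \textbf{(v)} is what bridges the two. Everything else is bookkeeping with antisymmetry and the definition of $L^{\perp}$.
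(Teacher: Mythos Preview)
Your proposal is correct and follows essentially the same inductive enlargement argument as the paper: start from a one-dimensional isotropic subspace, repeatedly adjoin a vector from $L^{\perp}\setminus L$ until $L = L^{\perp}$, and then invoke property \textbf{(v)} of the preceding lemma. Your version is arguably cleaner in that you make the termination condition $L = L^{\perp}$ explicit and only afterward read off even-dimensionality, whereas the paper phrases termination as reaching $\dim F = \tfrac{1}{2}\dim V$, which tacitly assumes the conclusion; but the underlying mathematics is identical.
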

\begin{proof}
Take a one-dimensional subspace $F$ of $V$. Since $\omega$ is nondegenerate and antisymmetric: $F \subset F^{\perp} \Rightarrow \dim(F) \leq \dim(F^{\perp}) = \dim(V) - \dim(F) \Rightarrow \dim(F) \leq \frac{1}{2}\dim(V)$. If the equality is satisfied, we have the claim. Otherwise, substitute $F$ by the two-dimensional subspace spanned by $F\cup \{X\}$, where $X$ is a vector in $F^{\perp}-F$, which is also isotropic, and repeat the process recursively. This has to terminate, since $V$ is finite-dimensional, at which point we have an isotropic subspace of dimension $\frac{1}{2}\dim(V)$.
\end{proof}

Notice that $\mathbb{R}^{2n}$, with coordinates $(p_{1}, \dots, p_{n}, q^{1}, \dots, q^{n})$, is a symplectic vector space with the symplectic structure $\omega((p_{a}, q^{b}), (p_{c}, q^{d})) = \frac{1}{2}(p_{a}q'^{a} - p'_{b}q^{b})$. It turns out that, up to the choice of a basis, this is the most general finite-dimensional symplectic vector space:
\begin{symplframe}\label{symplbasis}
Let $(V, \omega)$ be a $2n$-dimensional symplectic vector space. Then $V$ has a basis (called a \emph{symplectic frame}) $\{X^{1}, X^{2}, \dots, X^{n}, Y_{1}, Y_{2}, \dots, Y_{n}\}$ such that
\begin{equation}\nonumber
\omega(X^{a}, X^{b}) = 0, \,\,\, 2\omega(X^{a}, Y_{b}) = \delta ^{a}_{b}, \,\,\, \omega(Y_{a}, Y_{b}) = 0, \,\,\, \forall a, b \in \{1, \dots, n\}.
\end{equation}
\end{symplframe}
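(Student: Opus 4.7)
The plan is to build the symplectic frame inductively by peeling off one hyperbolic pair at a time and restricting to its symplectic complement. The base case is trivial ($n=0$, empty basis), so the entire argument rests on executing one step cleanly and then invoking induction.

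For the inductive step, I would first pick any nonzero $X^{1}\in V$. Nondegeneracy of $\omega$ supplies some $Z\in V$ with $\omega(X^{1},Z)\neq 0$; setting $Y_{1}=Z/(2\omega(X^{1},Z))$ gives the normalization $2\omega(X^{1},Y_{1})=1$, together with the antisymmetric identities $\omega(X^{1},X^{1})=\omega(Y_{1},Y_{1})=0$. Let $F=\operatorname{span}\{X^{1},Y_{1}\}$. The key observation is that $F$ is symplectic in the sense of the preceding classification: any $aX^{1}+bY_{1}\in F\cap F^{\perp}$ must pair trivially with both $X^{1}$ and $Y_{1}$, forcing $a=b=0$ (this uses only $2\omega(X^{1},Y_{1})=1$).

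Once $F$ is symplectic, property \textbf{(v)} of the preceding Lemma gives $\dim F^{\perp}=2n-2$, and $F\cap F^{\perp}=0$ combined with the dimension count yields $V=F\oplus F^{\perp}$. I then need to check that $(F^{\perp},\omega|_{F^{\perp}})$ is itself a symplectic vector space. Antisymmetry is automatic; for nondegeneracy, suppose $Z\in F^{\perp}$ satisfies $\omega(Z,W)=0$ for all $W\in F^{\perp}$. Any $U\in V$ decomposes as $U=u+W$ with $u\in F$ and $W\in F^{\perp}$, so $\omega(Z,U)=\omega(Z,u)+\omega(Z,W)=0$ (the first term vanishes because $Z\in F^{\perp}$). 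Nondegeneracy of $\omega$ on $V$ then forces $Z=0$.

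At this point induction applied to $F^{\perp}$ (of dimension $2(n-1)$) delivers a symplectic frame $\{X^{2},\dots,X^{n},Y_{2},\dots,Y_{n}\}$ of $F^{\perp}$ satisfying the stated relations. The cross terms $\omega(X^{1},X^{a})$, $\omega(X^{1},Y_{a})$, $\omega(Y_{1},X^{a})$, $\omega(Y_{1},Y_{a})$ for $a\geq 2$ all vanish simply because the vectors $X^{a},Y_{a}$ lie in $F^{\perp}$, which is precisely the set of vectors annihilated against $F$ by $\omega$. Concatenating the frames produces the required basis. I do not anticipate a genuine obstacle here; the only subtlety is the verification that $F$ is a symplectic subspace and that its complement inherits nondegeneracy, which are both short algebraic checks once the hyperbolic pair $(X^{1},Y_{1})$ is chosen with the correct normalization.
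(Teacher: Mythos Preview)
Your proof is correct and is the standard ``hyperbolic pair'' induction found in most textbooks. The paper, however, takes a different route: it first invokes the preceding lemma to fix a Lagrangian subspace $F\subset V$, picks any complementary $n$-dimensional $G$ with $V=G\oplus F$, and uses nondegeneracy to identify $G$ with $F^{*}$ via $Z\mapsto 2\omega(Z,\cdot)$. A basis $\{Y_{b}\}$ of $F$ and its dual $\{Z^{a}\}$ in $G$ then already satisfy $\omega(Y_{a},Y_{b})=0$ and $2\omega(Z^{a},Y_{b})=\delta^{a}_{b}$; the remaining step is to shear $Z^{a}\mapsto X^{a}=Z^{a}+\omega(Z^{a},Z^{b})Y_{b}$ to kill $\omega(X^{a},X^{b})$. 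Your inductive argument is more self-contained (it does not rely on the prior existence of a Lagrangian subspace), whereas the paper's argument is non-inductive and explicitly exhibits the identification $V\cong Q^{*}\oplus Q$ that motivates the upper/lower index convention and the cotangent-bundle picture used immediately afterward.
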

\begin{proof}
Let $F$ be a Lagrangian subspace of $V$ and let $G$ be some other $n$-dimensional subspace such that $V = G \oplus F$. Then the map $G \to F^{*}: Z \mapsto 2\omega(Z, \cdot)$ is linear and injective since, if $Z \in G$ and $2\omega(Z, Y) = 0, \forall Y \in F \Rightarrow X \in G \cap F^{\perp} = G \cap F = \{0\}$. Therefore this identifies isomorphically $G = F^{*}$. Take a basis $\{ Y_{1}, \dots, Y_{n}\}$ in $F$ and let $\{ Z^{1}, \dots, Z^{n}\}$ be the dual basis in $G = F^{*}$. Then $\omega(Y_{a},Y_{b}) = 0$ ($F$ is Lagrangian) and $2\omega(Z^{a}, Y_{b}) = \delta^{a}_{b}$. Let $\lambda ^{ab} = \omega(Z^{a}, Z^{b})$ and $X^{a} = Z^{a} + \lambda ^{ab}Y_{b}$. Thus $\lambda ^{ab} = -\lambda ^{ba}$ and
\begin{equation}\nonumber
\begin{split}
\omega(X^{a}, X^{b}) = \omega&(Z^{a}+\lambda^{ac}Y_{c}, Z^{b}+\lambda^{bd}Y_{d}) = \lambda^{ab}-\frac{1}{2}\lambda^{ab}+\frac{1}{2}\lambda^{ba} = 0, \\
&2\omega(X^{a}, Y_{b}) = 2\omega(Z^{a}+\lambda^{ac}Y_{c}, Y_{b}) = \delta^{a}_{b},
\end{split}
\end{equation}
so that $\{X^{a}, Y_{b}\}$ is a symplectic frame.
\end{proof}
If we parametrize $V$ by writing an element $X \in V$ as $X = p_{a}X^{a} + q^{b}Y_{b}$, then $V$ is identified with $\mathbb{R}^{2n}$ and $\omega(X, X') = \omega( p_{a}X^{a} + q^{b}Y_{b},  p'_{c}X^{c} + q'^{d}Y_{d} = \frac{1}{2}(p_{a}q'^{a} - p'_{b}q^{b})$. The convention of upper and lower indices is useful to remind of the identification $V = Q^{*} \oplus Q$ which was used in the proof. We call the coordinates in a symplectic frame \emph{canonical coordinates}.

The nonlinear analogue of proposition \ref{symplbasis} is that any symplectic manifold is covered by charts of local canonical coordinates. The basic example here is the cotangent bundle $M = T^{*}Q$ of some manifold $Q$ (the configuration space). It has a natural symplectic structure, the \emph{canonical two-form}, which can be defined in a coordinate-free way by $\omega = d\theta$, where the \emph{canonical one-form} $\theta$ is given, at each $m = (p, q), \,\, p \in T^{*}_{q}Q, \,\, q \in Q$ by
\begin{equation}\nonumber
X \lrcorner \theta = (\pi _{*}X)\lrcorner p, \,\,\, X \in T_{m}M,
\end{equation}
where $\pi$ is the bundle projection $T^*Q\to Q$. Then, as $m$ varies, this defines a smooth 1-form on $M$. If we choose coordinates $q^{1}, \dots, q^{n}$ in $Q$, then $M$ has the chart $p_{1}, \dots, p_{n}, q^{1}, \dots, q^{n}$ given by expressing each $p \in T^{*}_{q}Q$ as $p = p_{a}dq^{a}$. In these coordinates, we see that $\theta$ and $\omega$ have the simple forms
\begin{equation}\nonumber
\theta = p_{a}dq^{a}, \,\,\,\,\,\, \omega = dp_{a}\wedge dq^{a}.
\end{equation}
Analogously, for an arbitrary manifold $(M, \omega)$, we will call \emph{local canonical coordinates} on the open neighbourhood $U \subset M$ a chart $p_{a}, q^{b}$ in which $\omega = dp_{a}\wedge dq^{a}$. As we have just seen, if $M = T^{*}Q$ then such coordinates exist by construction. That these always exist locally is a consequence of the Darboux-Weinstein theorem.
\begin{darboux}\label{darbweins}
Let $N$ be a submanifold of $M$ and let $\omega_{0}$ and $\omega_{1}$ be two non-degenerate closed two-forms on $M$ such that $\omega_{0}|_{N} = \omega_{1}|_{N}$. Then there exists a neighbourhood $U$ of $N$ and a diffeomorphism $f: U \to M$ such that $f(n) = n, \forall n \in N$ and $f^{*}\omega_{1} = \omega_{0}$.\footnote{For a proof of this formulation of the Darboux-Weinstein theorem, see chapter 4 of \cite{asymptotics}.}
\end{darboux}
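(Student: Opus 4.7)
The plan is to apply Moser's trick. Consider the family of closed two-forms $\omega_t := (1-t)\omega_0 + t\omega_1$ for $t \in [0,1]$, which interpolates between $\omega_0$ and $\omega_1$. On $N$ each $\omega_t$ restricts to the nondegenerate tensor $\omega_0|_N = \omega_1|_N$, so by continuity of the determinant in any local chart there is an open neighbourhood $U_0 \supset N$ on which $\omega_t$ remains nondegenerate for every $t \in [0,1]$. The aim is to build a time-dependent isotopy $f_t$ defined on some possibly smaller neighbourhood $U$ of $N$, fixing $N$ pointwise and satisfying $f_t^* \omega_t = \omega_0$; the diffeomorphism $f := f_1$ will then do the job.

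The key analytic input is a relative Poincar\'e lemma applied to the closed two-form $\alpha := \omega_1 - \omega_0$. Since $\alpha$ vanishes pointwise on $N$ as a bilinear form on $T_m M$, the lemma produces a one-form $\sigma$ on a tubular neighbourhood of $N$ satisfying $d\sigma = \alpha$ and $\sigma|_N = 0$; one constructs $\sigma$ explicitly via the standard radial homotopy operator on the tubular neighbourhood. Using the nondegeneracy of $\omega_t$ on $U_0$, I would then define a unique time-dependent vector field $X_t$ on $U_0$ by
\begin{equation}\nonumber
X_t \lrcorner \omega_t = -\sigma.
\end{equation}
Because $\sigma$ vanishes pointwise on $N$, so does $X_t$, which is precisely what will make the flow fix $N$.

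Now I would integrate $X_t$. Since $X_t|_N = 0$, standard ODE arguments yield a flow $f_t$ defined on some neighbourhood $U \subset U_0$ of $N$ for all $t \in [0,1]$, with $f_0 = \mathrm{id}$ and $f_t(n) = n$ for every $n \in N$. The verification then reduces to computing, using $d\omega_t = 0$ and Cartan's formula $\mathcal{L}_{X_t} = d \circ (X_t \lrcorner \cdot) + (X_t \lrcorner \cdot) \circ d$,
\begin{equation}\nonumber
\frac{d}{dt} f_t^*\omega_t = f_t^*\!\left( \mathcal{L}_{X_t}\omega_t + \frac{\partial \omega_t}{\partial t} \right) = f_t^*\bigl( d(X_t \lrcorner \omega_t) + \alpha \bigr) = f_t^*( -d\sigma + \alpha ) = 0,
\end{equation}
so that $f_1^*\omega_1 = \omega_0$ upon integration. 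The main obstacle is the relative Poincar\'e lemma with the boundary condition $\sigma|_N = 0$: one must carefully invoke a tubular neighbourhood and verify smoothness of the homotopy-operator output transversally across $N$, and then check that the time-dependent flow can be chosen on a common neighbourhood uniformly in $t$, which is what dictates the passage from $U_0$ to the (possibly smaller) $U$.
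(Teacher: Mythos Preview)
Your proposal is the standard Moser-trick argument and is correct; it is essentially the approach one finds in the reference the paper cites (chapter~4 of \cite{asymptotics}). Note that the paper itself does not supply a proof of this statement at all --- it simply defers to that reference via the footnote --- so there is no in-paper proof to compare against beyond observing that your argument is the classical one.
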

It follows that, if we take $N$ to be a point $m$ in the symplectic manifold $(M, \omega)$, it is possible, by proposition \ref{symplbasis}, to choose coordinates $\{r_{a}, s^{b}\}$ in a neighbourhood of $N$ such that $\omega = dr_{a}\wedge ds^{a}$ at $m = N$. Application of theorem \ref{darbweins} with $\omega_{0} = \omega$ and $\omega_{1} = dr_{a}\wedge ds^{a}$ gives a diffeomorphism $f$ in a neighbourhood of $m$ which we use to define the local coordinates $p_{a} = r_{a} \circ f, \,\, q^{b} = s^{b} \circ f$ and the theorem guarantees that
\begin{equation}\nonumber
\omega = \omega_{0} = f^{*}\omega_{1} = f^{*}dr_{a}\wedge ds^{a} = d(r_{a} \circ f)\wedge d(s^{a} \circ f) = dp_{a}\wedge dq^{a},
\end{equation}
in a neighbourhood of $m$.

To see how this definition of local canonical coordinates relates to the one given in section \ref{mecrn}, suppose $U, V \subset M$ are intersecting open subsets of the $2n$-dimensional symplectic manifold $(M,\omega)$ which admit local canonical coordinates  $\boldsymbol{\zeta} = (p_{a}, q^{b})$ and $\boldsymbol{\eta} = (p'_{a}, q'^{b})$, respectively. Then, on $U \cap V$,
\begin{equation}\nonumber
\begin{split}
\frac{1}{2}d\boldsymbol{\zeta}^{T}\wedge \mathbf{J}d\boldsymbol{\zeta} &= dp_{a}\wedge dq^{a} = \omega = dp'_{a}\wedge dq'^{a} = \frac{1}{2}d\boldsymbol{\eta}^{T}\wedge \mathbf{J}d\boldsymbol{\eta}  \\
&= \frac{1}{2}d\boldsymbol{\zeta}^{T}\wedge \left[ \left(\frac{\partial \boldsymbol{\eta}}{\partial \boldsymbol{\zeta}}\right)^{T}\mathbf{J}\left(\frac{\partial \boldsymbol{\eta}}{\partial \boldsymbol{\zeta}}\right)\right] d\boldsymbol{\zeta} \,\, \Rightarrow \,\, \left(\frac{\partial \boldsymbol{\eta}}{\partial \boldsymbol{\zeta}}\right) \in SP(n, \mathbb{R}),
\end{split}
\end{equation}
where we used the notation $\mathbf{a}^{T}\wedge \mathbf{b} = a_{i}\wedge b^{i}$. So introducing the symplectic form induces a preferred choice of charts which cover the manifold $M$ in such a way that the transition functions are \emph{canonical transformations}.

The natural symmetries of symplectic manifolds are diffeomorphisms which preserve the symplectic structure.
\begin{symplectomorphism}
A \emph{symplectomorphism} between two symplectic manifolds $(M, \omega)$ and $(N, \sigma)$ is a diffeomorphism $\rho : M \to N$ that preserves the symplectic structure, i.e., $\rho^{*}\sigma = \omega$.
\end{symplectomorphism}

Also, we call \emph{symplectic automorphism} or \emph{canonical transformation} of $(M, \omega)$ a symplectomorphism from $(M, \omega)$ to itself. These form a group and its infinitesimal generators form a Lie subalgebra of the one in proposition \ref{lievector}.

\begin{localhamilton}\label{localderived}
 \emph{Locally Hamiltonian vector fields in $M$}, which we denote by, $V^{LH}$, are defined by 
\begin{equation}\nonumber
X \in V^{LH}(M) \Leftrightarrow \mathcal{L}_{X}\omega = 0,
\end{equation}
form a Lie algebra with respect to $[X, Y] = \mathcal{L}_{X}Y$. The \emph{Hamiltonian vector fields} $V^H(M)$, defined through Hamilton's equation as
\begin{equation}\label{hamiltoneq}
X_{h} \in V^{H}(M) \Leftrightarrow \exists h \in C^{\infty}(M) \,\,\, \text{such that} \,\,\, X_{h}\lrcorner \omega + dh = 0,
\end{equation}
form a Lie subalgebra of $V^{LH}(M)$, and the derived algebra $[V^{LH}(M), V^{LH}(M)]$ is contained in $V^H(M)$. 
\end{localhamilton}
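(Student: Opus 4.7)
The whole proposition rests on Cartan's magic formula $\mathcal{L}_X = d\,\iota_X + \iota_X d$ (with $\iota_X = X\lrcorner$) together with the fact that $d\omega = 0$. My plan is to compute $\mathcal{L}_{[X,Y]}\omega$ and $[X,Y]\lrcorner\omega$ for vector fields $X,Y$ satisfying the appropriate condition, and show that the required identities drop out essentially for free.

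First I would verify that $V^{LH}(M)$ is closed under the bracket. The general identity $\mathcal{L}_{[X,Y]} = [\mathcal{L}_X,\mathcal{L}_Y]$ (valid on any tensor field, and in particular on $\omega$) immediately gives $\mathcal{L}_{[X,Y]}\omega = \mathcal{L}_X\mathcal{L}_Y\omega - \mathcal{L}_Y\mathcal{L}_X\omega = 0$ whenever $\mathcal{L}_X\omega = \mathcal{L}_Y\omega = 0$. Since the Lie bracket on $V(M)$ is already a Lie bracket by Proposition \ref{lievector}, the restriction to $V^{LH}(M)$ inherits antisymmetry, bilinearity and the Jacobi identity.

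Next I would check that $V^H(M)\subseteq V^{LH}(M)$. If $X_h\lrcorner\omega = -dh$, then Cartan's formula gives
\begin{equation}\nonumber
\mathcal{L}_{X_h}\omega = d(X_h\lrcorner\omega) + X_h\lrcorner d\omega = d(-dh) + 0 = 0,
\end{equation}
so every globally Hamiltonian vector field is locally Hamiltonian. The nontrivial part is the last statement, which subsumes the closure of $V^H$ under $[\,\cdot\,,\,\cdot\,]$. Given $X,Y\in V^{LH}(M)$, I would compute
\begin{equation}\nonumber
[X,Y]\lrcorner\omega \;=\; \mathcal{L}_X(Y\lrcorner\omega) - Y\lrcorner(\mathcal{L}_X\omega) \;=\; \mathcal{L}_X(Y\lrcorner\omega),
\end{equation}
where I used the identity $\mathcal{L}_X(Y\lrcorner\alpha) = (\mathcal{L}_X Y)\lrcorner\alpha + Y\lrcorner\mathcal{L}_X\alpha$ and $\mathcal{L}_X\omega = 0$. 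Applying Cartan's formula again and using $d(Y\lrcorner\omega) = \mathcal{L}_Y\omega - Y\lrcorner d\omega = 0$ yields
\begin{equation}\nonumber
\mathcal{L}_X(Y\lrcorner\omega) \;=\; d(X\lrcorner Y\lrcorner\omega) + X\lrcorner d(Y\lrcorner\omega) \;=\; d(\omega(Y,X)) \;=\; -d(\omega(X,Y)).
\end{equation}
Hence $[X,Y]\lrcorner\omega + d(\omega(X,Y)) = 0$, which is exactly Hamilton's equation \eqref{hamiltoneq} with Hamiltonian $h = \omega(X,Y)\in C^\infty(M)$. This simultaneously shows $[V^{LH},V^{LH}]\subseteq V^H$ and, specialising to $X = X_f$, $Y = X_g$, that $[X_f,X_g]$ is the Hamiltonian vector field generated by $\omega(X_f,X_g)$, so $V^H(M)$ is a Lie subalgebra.

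The one place where care is needed, and the only real conceptual content beyond bookkeeping, is the passage from ``closed'' to ``exact'': for $X,Y\in V^{LH}$, the one-forms $X\lrcorner\omega$ and $Y\lrcorner\omega$ are only guaranteed to be closed (not exact), yet the Cartan computation produces an \emph{exact} form $-d(\omega(X,Y))$ for $[X,Y]\lrcorner\omega$. This is what promotes the bracket of two merely locally Hamiltonian fields to a globally Hamiltonian one, and it is the geometric reason the derived algebra sits inside $V^H$. Everything else is a mechanical application of Cartan's formula together with $d\omega=0$.
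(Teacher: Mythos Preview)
Your proof is correct and follows essentially the same route as the paper: closure of $V^{LH}$ via $\mathcal{L}_{[X,Y]}=[\mathcal{L}_X,\mathcal{L}_Y]$, then the Cartan-formula computation $[X,Y]\lrcorner\omega = d(X\lrcorner Y\lrcorner\omega)$ to exhibit an explicit global Hamiltonian for the bracket. The only cosmetic discrepancy is that the paper's convention has $X\lrcorner(Y\lrcorner\omega) = -2\omega(X,Y)$ (so the Hamiltonian is written as $2\omega(X,Y)$ rather than your $\omega(X,Y)$), and the paper defers the explicit statement that $V^H$ is closed under the bracket to the subsequent proposition on the Poisson algebra, whereas you obtain it directly by specialising $X=X_f$, $Y=X_g$.
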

\begin{proof}
If $X, Y \in V^{LH}(M)$, then
\begin{equation}\nonumber
\begin{split}
&\mathcal{L}_{aX+bY}\omega = a\mathcal{L}_{X}\omega + b\mathcal{L}_{Y}\omega = 0 \\
&\mathcal{L}_{[X, Y]}\omega = \mathcal{L}_{X}\mathcal{L}_{Y}\omega - \mathcal{L}_{Y}\mathcal{L}_{X}\omega = 0
\end{split}\,\,\,,
\end{equation}
so that $V^{LH}(M)$ is a subalgebra of $V(M)$. Also, if $X, Y \in V^{LH}(M)$, then $\mathcal{L}_{X}\omega = 0 = \mathcal{L}_{Y}\omega = d(Y\lrcorner \omega)$\footnote{We shall use very often that $\mathcal{L}_{X}\alpha = X\lrcorner d\alpha + d(X\lrcorner\alpha), \,\,\forall X\in V(M), \,\, \forall \alpha\in \Omega^{p}(M)$.}, so that
\begin{equation}\label{liepoisson}
\begin{split}
[X, Y]\lrcorner \omega &= (\mathcal{L}_{X}Y)\lrcorner\omega = \mathcal{L}_{X}(Y\lrcorner\omega) - Y\lrcorner\mathcal{L}_{X}\omega = X\lrcorner d(Y\lrcorner\omega) + d(X\lrcorner(Y\lrcorner\omega)) \\
&= -df, \,\,\,\,\,\,\,\,\, f = 2\omega(X, Y),
\end{split}
\end{equation}
and hence, from the linearity of (\ref{hamiltoneq}), $[V^{LH}(M), V^{LH}(M)] \subset V^{H}(M)$. That $V^H(M)$ is a subalgebra follows from the next result, proposition \ref{poissonalgebra}.
\end{proof}
\noindent Note that, if $M$ is simply connected ($H^1(M) = \{0\}$), $V^{LH}(M) = V^{H}(M)$ since in this case
\begin{equation}\nonumber
X \in V^{LH}(M) \Leftrightarrow 0 = \mathcal{L}_{X}\omega = d(X\lrcorner\omega) \Leftrightarrow X\lrcorner \omega = -dh, \,\, \text{for some} \,\, h \in C^{\infty}(M).
\end{equation}
Secondly, we remark that, in local canonical coordinates, $\omega = dp_{a}\wedge dq^{a}$ and $dh = \frac{\partial h}{\partial p_{a}}dp_{a} + \frac{\partial h}{\partial q^{b}}dq^{b}$, so that, if the tangent vector to a curve $(p(t), q(t))$ in $M$ satisfies (\ref{hamiltoneq}), then
\begin{equation}\nonumber
\dot{q}^{a}(t) = \frac{\partial h}{\partial p_{a}}, \,\,\,\,\,\,\,\,\,\,\,\, \dot{p}_{b}(t) = - \frac{\partial h}{\partial q^{b}},
\end{equation}
which is the standard form of Hamilton's equations.

A further consequence of the existence of a symplectic structure is that one can define the \emph{Poisson bracket} in a natural, coordinate-independent way.
\begin{poissonalg}\label{poissonalgebra}
The functions on a symplectic manifold $(M, \omega)$ form a Lie algebra with respect to the \emph{Poisson bracket}, defined by 
\begin{equation}\nonumber
f, g \in C^{\infty}(M) \Rightarrow \{f, g\} = X_{f}(g),
\end{equation}
where $X_{f}\lrcorner\omega + df = 0$. Furthermore, Hamilton's equation provides a Lie algebra isomorphism
\begin{equation}\label{hamiltonpoisson}
V^{H}(M) \isoeq C^{\infty}(M)/\mathbb{R},
\end{equation}
where $\mathbb{R}$ represents the constant functions on $M$.
\end{poissonalg}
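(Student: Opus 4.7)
The plan is to reduce the Lie algebra structure on $C^\infty(M)$ to the already-established Lie algebra structure on vector fields, via the assignment $f \mapsto X_f$, with the bridge being the identity $[X_f, X_g] = X_{\{f, g\}}$, which is essentially equation (\ref{liepoisson}).

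First I would rewrite the bracket in manifestly geometric form. Contracting the defining relation $dg = -X_g \lrcorner \omega$ with $X_f$ gives $\{f, g\} = X_f(g) = -X_f \lrcorner (X_g \lrcorner \omega) = 2\omega(X_f, X_g)$, matching the factor-of-two convention already used in the proof of Proposition \ref{localhamilton}. From this expression $\mathbb{R}$-bilinearity follows from linearity of $\omega$ and of $f \mapsto X_f$, and antisymmetry is immediate from antisymmetry of $\omega$. For the Jacobi identity, equation (\ref{liepoisson}) combined with the above reads $[X_f, X_g] \lrcorner \omega = -d\{f, g\}$, and by nondegeneracy of $\omega$ this forces $[X_f, X_g] = X_{\{f, g\}}$. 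Evaluating on a third function $h$ gives $\{\{f, g\}, h\} = X_{\{f, g\}}(h) = [X_f, X_g](h) = X_f(X_g(h)) - X_g(X_f(h)) = \{f, \{g, h\}\} - \{g, \{f, h\}\}$, which rearranges via antisymmetry into the standard Jacobi identity.

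For the isomorphism I would then consider the $\mathbb{R}$-linear map $\Phi : C^\infty(M) \to V^H(M)$, $f \mapsto X_f$. It is surjective by the very definition of $V^H(M)$, and its kernel is exactly the constants, since $X_f = 0 \Leftrightarrow df = 0$, which on connected $M$ forces $f \in \mathbb{R}$. Hence $\Phi$ descends to a linear isomorphism $C^\infty(M)/\mathbb{R} \isoeq V^H(M)$. The identity $[X_f, X_g] = X_{\{f, g\}}$ established above says that $\Phi$ intertwines Poisson and Lie brackets, upgrading it to an isomorphism of Lie algebras. As a byproduct, $V^H(M)$ is automatically closed under $[\cdot, \cdot]$, completing the one remaining claim of Proposition \ref{localhamilton}.

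The only real obstacle is bookkeeping: carrying the factor of $2$ in $\{f, g\} = 2\omega(X_f, X_g)$ consistently when invoking (\ref{liepoisson}), and tacitly assuming $M$ is connected so that $\ker\Phi$ is exactly $\mathbb{R}$ rather than the (possibly larger) space of locally constant functions.
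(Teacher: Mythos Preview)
Your proposal is correct and follows essentially the same route as the paper: both establish $\{f,g\}=2\omega(X_f,X_g)$, invoke equation~(\ref{liepoisson}) to get $[X_f,X_g]=X_{\{f,g\}}$, and identify the kernel of $f\mapsto X_f$ with the constants. Your derivation of the Jacobi identity is in fact slightly cleaner than the paper's---you apply $[X_f,X_g]=X_fX_g-X_gX_f$ directly to $h$, whereas the paper expands $\mathcal{L}_{X_f}[X_g\lrcorner(X_h\lrcorner\omega)]$ via the Leibniz rule---but the content is the same.
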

\begin{proof}
Since $2\omega(X_{f}, X_{g}) = - X_{f}\lrcorner (X_{g}\lrcorner \omega) = X_{f}\lrcorner dg = \{f,g\}$, equation (\ref{liepoisson}) shows that $X_{\{f,g\}} = [X_{f},X_{g}]$. Also,
\begin{equation}\nonumber
\begin{split}
\{f,g\} &= X_{f}\lrcorner dg = -X_{f}\lrcorner(X_{g}\lrcorner \omega) = X_{g}\lrcorner(X_{f}\lrcorner \omega) = -X_{g}\lrcorner df = -\{g,f\}, \\
\{f,ag+bh\} &= X_{f}\lrcorner d(ag+bh) = aX_{f}\lrcorner dg + bX_{f}\lrcorner dh \\
&= a\{f,g\}+b\{f,h\},
\end{split}
\end{equation}
with $a,b\in\mathbb{R}$. For the Jacobi identity, note that
\begin{equation}\nonumber
\begin{split}
\{f,\{g,h\}\} &= X_{f}\lrcorner d(X_{g}\lrcorner dh) = -\mathcal{L}_{X_{f}}[X_{g}\lrcorner(X_{h}\lrcorner\omega)] \\
&= -[X_{f},X_{g}]\lrcorner(X_{h}\lrcorner\omega)+ [X_{f},X_{h}]\lrcorner(X_{g}\lrcorner\omega) = X_{\{f,g\}}\lrcorner dh - X_{\{f,h\}}\lrcorner dg \\
&= \{\{f,g\}h\} - \{\{f,h\}g\}
\end{split}
\end{equation}
Finally, note that the kernel of the homomorphism $f \mapsto X_{f}$ is given by the constants.
\end{proof}
As expected, in local canonical coordinates,
\begin{equation}\nonumber
\begin{split}
X_{f} &= \frac{\partial f}{\partial p_{a}}\frac{\partial}{\partial q^{a}}-\frac{\partial f}{\partial q^{b}}\frac{\partial}{\partial p_{b}}\\
\Rightarrow \{f,g\} &= X_{f}(g) =  \frac{\partial f}{\partial p_{a}}\frac{\partial g}{\partial q^{a}}-\frac{\partial f}{\partial q^{b}}\frac{\partial g}{\partial p_{b}}.
\end{split}
\end{equation}
Finally, we note that, because the time evolution of the system is given by the integral lines of $X_{h}$, for $h$ the energy Hamiltonian, the measured value of an obsearvable $f\in C^{\infty}(M)$ satisfies
\begin{equation}\nonumber
\frac{df}{dt}(p(t),q(t)) = X_{h}(f) = \{h, f\},
\end{equation}
which is the original expression (\ref{hamiltonobs}), but without any reference to a particular choice of coordinates.

\pagebreak

\section{Free Elementary Particles}\label{particles}

In this section we present the quantization of free elementary relativistic particles. We use this example both to motivate the general ideas of geometric quantization as well as to show how they are explicitly applied. The goal is simple: to find the Hilbert spaces of states of free relativistic particles, a problem which is typically presented in terms of representation theory. Geometric quantization however uses as an input a symplectic manifold associated to a classical system and so one is now led to ask, what is an elementary relativistic particle at the classical level? The necessary concept is that of an elementary system consisting of a symplectic manifold equipped with a transitive action of a symmetry group given by symplectic diffeomorphisms. The connection to representation theory is through Kirillov's orbit method \cite{kirillov1962unitary,kirillov2004lectures}. The physical interpretation and subsequent generalizations served as one of the key results for geometric quantization \cite{kostant1970quantization}. We also refer to the standard references \cite{symplectic,marsden,souriau1970strucure,simms1976lectures,woodhouse1997geometric,kirillov2001geometric} from which we drew most of the material of this section.

\subsection{Elementary systems}

We saw in the previous section that many of the main tools of classical mechanics could be cast into the language of symplectic geometry. In particular, the symmetries of classical mechanics, canonical transformations, correspond to symplectic diffeomorphisms of phase space. We now look at actions of a Lie algebra (and eventually Lie groups) on the phase space manifold as generators of such symmetries.

\begin{lieaction}
Let $(M, \omega)$ be a symplectic manifold and $\mathfrak{g}$ a real Lie Algebra. A \emph{canonical action} of $\mathfrak{g}$ on $(M,\omega)$ is a Lie algebra homomorphism $\mathfrak{g} \to V^{LH}(M)$.
\end{lieaction}

Where $V^{LH}$ are locally Hamiltonian vecotr fields as defined in the previous chapter.
This definition says that it is possible to find a subset of the infinitesimal canonical transformations which comes with the algebraic structure of a certain Lie algebra. A further step is to consider whether there are Hamiltonians which generate these particular transformations, ie. whether one can lift the homomorphism $\mathfrak{g} \to V^{LH}$ to the Poisson algebra $C^{\infty}(M)$.

\begin{moment}
If, for the canonical action $\mathfrak{g} \to V^{LH}(M): A \mapsto X_{A}$, it is possible to construct a linear map $\mathfrak{g}\to C^{\infty}(M):A\mapsto h_{A}$ such that, $\forall A, B \in \mathfrak{g}$,
\begin{description}
\item[(i)]$X_{A}\lrcorner\omega + dh_{A} = 0$,
\item[(ii)]$h_{[A,B]} = \{h_{A},h_{B}\}$,
\end{description}
then the map $A\mapsto h_{A}$ is said to be a \emph{Hamiltonian} for the action $A\mapsto X_{A}$, and the dual map
\begin{equation}\nonumber
\mu : M \to \mathfrak{g}^{*}: m\mapsto f_{m},
\end{equation}
where $f_{m}(A) = h_{A}(m)$, is called a \emph{moment} for the action.
\end{moment}
An action which admits a Hamiltonian is often called a \emph{Hamiltonian action} or a moment map. One of the reasons why the moment map is interesting is Noether's theorem:
\begin{noether}
Let $\mathfrak{g} \to V^{LH}(M): A \mapsto X_{A}$ be a Hamiltonian action of $\mathfrak{g}$ on $(M, \omega)$ and $\mu : M \to \mathfrak{g}^{*}: m\mapsto f_{m}$ its moment map. If, additionally, the action preserves the Hamiltonian of the system, then the moments are constant. In other words,
\begin{equation}\nonumber
X_{A}(h) = 0, \,\,\forall A \in \mathfrak{g}\,\,\,\, \Rightarrow \,\,\,\, \mu \circ \rho _{t} = \mu, \,\,\forall t \in \mathbb{R},
\end{equation}
where $\rho _{t}$ is the flow of $X_{h}$.
\end{noether}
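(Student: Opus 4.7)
The plan is to reduce the claim $\mu\circ\rho_t=\mu$ to the pointwise statement that each component $h_A$ of the moment is invariant along the Hamiltonian flow of $h$. Since by definition $\mu(m)(A)=h_A(m)$, once we establish $h_A\circ\rho_t=h_A$ for every $A\in\mathfrak{g}$, it follows that $\mu(\rho_t(m))(A)=h_A(\rho_t(m))=h_A(m)=\mu(m)(A)$ for all $A$, which gives $\mu\circ\rho_t=\mu$.

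To prove that each $h_A$ is preserved along the flow, I would invoke Proposition \ref{poissonalgebra}, which expresses the time evolution of an observable under the flow of $X_h$ as a Poisson bracket with $h$:
\[
\frac{d}{dt}h_A(\rho_t(m))=\{h,h_A\}(\rho_t(m)).
\]
Condition (i) of the definition of a Hamiltonian action asserts $X_A\lrcorner\omega+dh_A=0$; since $\omega$ is nondegenerate, this means precisely $X_{h_A}=X_A$. Antisymmetry of the Poisson bracket combined with the identity $\{f,g\}=X_f(g)$ from Proposition \ref{poissonalgebra} then gives
\[
\{h,h_A\}=-\{h_A,h\}=-X_{h_A}(h)=-X_A(h),
\]
which vanishes by the invariance hypothesis $X_A(h)=0$. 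Hence $h_A\circ\rho_t$ is constant in $t$, and evaluating at $t=0$ yields $h_A\circ\rho_t=h_A$, as required.

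There is no substantive obstacle: the argument is essentially a one-line consequence of the antisymmetry of the Poisson bracket. The only conceptual point worth noting is the interplay between the two conditions in the definition of a moment: condition (i) is what lets us translate the invariance of $h$ under the group action (a statement about vector fields) into the vanishing of a Poisson bracket (a statement about functions), while condition (ii) plays no role here. In effect the theorem is just the symplectic-geometric repackaging of the elementary fact that functions which Poisson-commute with the Hamiltonian are constants of motion.
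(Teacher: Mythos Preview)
Your proof is correct and follows essentially the same route as the paper: both reduce to showing $\frac{d}{dt}h_A(\rho_t(m))=\{h,h_A\}=-X_A(h)=0$ via antisymmetry of the Poisson bracket and the identification $X_{h_A}=X_A$, then integrate. Your observation that condition (ii) of the moment-map definition is not used here is accurate and worth noting.
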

\begin{proof}
At any point $m \in M$, the linear functional $\mu \circ \rho _{t}$ is given by
\begin{equation}\nonumber
(\mu \circ \rho _{t})(m)(A) = [\mu (\rho _{t}(m))](A) = f_{\rho _{t}(m)}(A) = h_{A}(\rho _{t}(m)), A \in \mathfrak{g},
\end{equation}
which is constant since the flow of $X_{h}$ preserves the $h_{A}$'s. Indeed,
\begin{equation}\nonumber
\begin{split}
\frac{d}{dt}[(\mu \circ\rho _{t})(m)(A)] &= \frac{d}{dt}[(h_{A}(\rho_{t}(m))] = [X_{h}(h_{A})](\rho_{t}(m)) = -\{h_{A},h\}(\rho_{t}(m)) \\
&= -[X_{A}(h)](\rho_{t}(m)) = 0, \,\,\forall A \in \mathfrak{g},\,\,\forall m \in M,\,\,\forall t \in \mathbb{R}.
\end{split}
\end{equation}
Finally, since $\mu\circ\rho_{0} = \mu$, solving the above first-order ODE gives $\mu\circ\rho_{t} = \mu, \forall t$.
\end{proof}

In the cases we will be interested in, the canonical action of the Lie algebra is the infinitesimal form of an action of the corresponding Lie group where that Lie group is generated by finite canonical transformations. Now we can make precise what we mean by a classical elementary system: a phase space with the action of the symmetry group $G$ given by canonical transformations under which all states are equivalent.
\begin{elementary}
Let $G$ be a Lie group with Lie algebra $\mathfrak{g}$ which acts (on the right) on a symplectic manifold $(M,\omega)$ by symplectomorphisms. In other words, each $g \in G$ determines a diffeomorphism $g:M\to M:m\mapsto mg$ such that
\begin{equation}\nonumber
g^{*}\omega = \omega, \,\,\, (mg)g' = m(gg'), \,\, \forall g, g' \in G, \,\,\forall m \in M.
\end{equation}
We then call the action \emph{Hamiltonian} whenever the corresponding infinitesimal action of $\mathfrak{g}$ is Hamiltonian and, in this case, the corresponding moment map $\mu:M\to\mathfrak{g}^{*}$ is said to be a \emph{moment} for the action of $G$. If, in addition, the action of $G$ is \emph{transitive}, $(M,\omega)$ is an \emph{elementary system} with symmetry $G$.
\end{elementary}

We first show that elementary systems with Lie group symmetry exist by giving a concrete realization of them as orbits of the coadjoint action of the Lie group in the dual of its Lie algebra. From now on, we will always assume that the Lie group $G$ is connected unless explicitly stated otherwise. Consider the adjoint action of $G$ on itself, given by
\begin{equation}\nonumber
Ad : G\times G \rightarrow G : (g,h) \mapsto Ad_{g}(h):=ghg^{-1}.
\end{equation}
Notice that, for any $g \in G$, the action $Ad_{g}$ fixes the identity $e$. Hence its derivative at $e$,
\begin{equation}\label{adjoint}
ad : G\times \mathfrak{g} \rightarrow \mathfrak{g} : (g,A) \mapsto ad_{g}A:=\frac{d}{dt}(ge^{tA}g^{-1})\Big|_{t=0},
\end{equation}
 is a linear representation of the group on the vector space of its Lie algebra, which we will also refer to as the adjoint action. Here, $e^{A}$ denotes the exponential map in the group manifold. Indeed, neglecting terms of order higher than one in $t$ on the Baker-Campbell-Hausdorff formula (which cancel on evaluation at t=0),
\begin{equation}\nonumber
\begin{split}
ad_{g}(xX + yY) &= \frac{d}{dt}(ge^{txX}gg^{-1}e^{tyY}g^{-1})\Big|_{t=0} = x\frac{d}{dt}(ge^{tX}g^{-1})\Big|_{t=0} + y\frac{d}{dt}(ge^{tY}g^{-1})\Big|_{t=0} \\
&= xad_{g}X + yad_{g}Y.
\end{split}
\end{equation}
The coadjoint action is the dual representation
\begin{equation}\label{coadjoint}
ad^{*} : G\times \mathfrak{g}^{*} \rightarrow \mathfrak{g}^{*} : (g,f) \mapsto ad^{*}_{g}f, \,\,\, \text{where} \,\,\, ad^{*}_{g}f(A) = f(ad_{g}A).
\end{equation}
This action splits $\mathfrak{g}^{*}$ in orbits of the form $M = \{ad^{*}_{g}f|g \in G\}$. On each one, the action of $G$ is obviously transitive. Moreover, since we assume that $G$ is connected, each orbit is also connected and, because all the elements of $M$ are related by the action of $G$, the tangent space at any $f$, $T_{f}M$, is spanned by the generators of the coadjoint action in $M$. The explicit form of these vectors is given by the infinitesimal form of equations (\ref{adjoint}) and (\ref{coadjoint}):
\begin{equation}\nonumber
\begin{split}
[ad^{*}_{e^{tA}}f](B) &= f(ad_{e^{tA}}B)=f\left[\frac{d}{ds}(e^{tA}e^{sB}e^{-tA})\Big|_{s=0}\right]\\
&=f\left[\frac{d}{ds}(e^{tA}e^{sB}e^{-tA})\Big|_{s=t=0}+t\frac{d}{dt}\frac{d}{ds}(e^{tA}e^{sB}e^{-tA})\Big|_{s=t=0}+O(t^{2})\right]\\
&=f(B+t[A,B]+O(t^{2}))=[f+tX_{A}|_{f}+O(t^{2})](B), \,\,\forall B \in \mathfrak{g},
\end{split}
\end{equation}
where in the last line we have used that $\mathfrak{g}^{*}$ is a vector space identify $T_{f}\mathfrak{g}^{*}\sim\mathfrak{g}^{*}$ such that $X_{A}|_{f}(B)=f([A,B])$. We can therefore fully define a differential form on $M$ by giving its values on the vectors $X_{A}|_{f}$ at each point $f \in M$.
\begin{coadorbit}
The two-form $\omega$ which, at each $f\in M$, is given by
\begin{equation}\label{symplorbit}
\omega(X_{A}|_{f},X_{B}|_{f})=\frac{1}{2}f([A,B]),
\end{equation}
is a well-defined symplectic structure on $M = \{ad^{*}_{g}f|g\in G\}$ and it is invariant under the coadjoint action of $G$.
\end{coadorbit}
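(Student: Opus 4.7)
The plan is to verify, in sequence, well-definedness of the prescription, antisymmetry and non-degeneracy, closedness, and $G$-invariance. The key preliminary fact, already extracted in the text, is that $T_{f}M$ is spanned by the fundamental vectors $X_{A}|_{f}$ with $X_{A}|_{f}(B)=f([A,B])$ under the identification $T_{f}\mathfrak{g}^{*}\cong\mathfrak{g}^{*}$. I will use this identity throughout.

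First I would establish well-definedness and non-degeneracy in one stroke. The linear map $A\mapsto X_{A}|_{f}$ has kernel $\mathfrak{g}_{f}=\{A\in\mathfrak{g}:f([A,\cdot])=0\}$, the infinitesimal stabilizer of $f$. If $X_{A}|_{f}=0$, then $f([A,B])=0$ for every $B\in\mathfrak{g}$, so the scalar $\tfrac{1}{2}f([A,B])$ depends only on the tangent vectors $X_{A}|_{f},X_{B}|_{f}$ and not on the lifts $A,B$; this is what it means for $\omega$ to descend to a two-form on $M$. The very same identity gives non-degeneracy: if $f([A,B])=0$ for all $B$, then $X_{A}|_{f}=0$. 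Antisymmetry follows at once from antisymmetry of the Lie bracket, and smoothness from the smoothness of the fundamental vector fields.

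The main technical step is closedness. I would evaluate $d\omega(X_{A},X_{B},X_{C})|_{f}$ via Cartan's formula for the exterior derivative of a two-form on three vector fields. The derivative terms $X_{A}\cdot\omega(X_{B},X_{C})$ compute the directional derivative of the linear function $h\mapsto\tfrac{1}{2}h([B,C])$ along $X_{A}|_{f}\in T_{f}\mathfrak{g}^{*}\cong\mathfrak{g}^{*}$, which evaluates to $\tfrac{1}{2}f([A,[B,C]])$; summed over the cyclic permutations of $(A,B,C)$ this produces $\tfrac{1}{2}f\bigl([A,[B,C]]+[B,[C,A]]+[C,[A,B]]\bigr)=0$ by the Jacobi identity. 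The bracket terms require the identity $[X_{A},X_{B}]=X_{[A,B]}$, which follows either from $ad^{*}$ being a right action or by a direct computation applying the formula $X_{A}|_{h}(C)=h([A,C])$ twice to the linear function $h\mapsto h(C)$; each such term is then a scalar multiple of $f([[A,B],C])$, and the cyclic sum again vanishes by Jacobi. Hence $d\omega=0$.

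Finally, I would establish $G$-invariance by direct calculation. Writing $\sigma_{g}=ad^{*}_{g}$ and using the factorization $e^{tA}g=g\cdot e^{t\,ad_{g^{-1}}A}$ inside the right-action composition law $\sigma_{gh}=\sigma_{h}\circ\sigma_{g}$ gives the equivariance $(\sigma_{g})_{*}X_{A}=X_{ad_{g^{-1}}A}\circ\sigma_{g}$. Substituting into the definition of $\omega$ and exploiting the fact that $ad_{g^{-1}}$ is a Lie algebra homomorphism collapses $(\sigma_{g}^{*}\omega)|_{f}(X_{A}|_{f},X_{B}|_{f})$ back to $\tfrac{1}{2}f([A,B])$. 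I expect the main obstacle to be the bookkeeping of left versus right action conventions, which controls the sign in $[X_{A},X_{B}]=\pm X_{[A,B]}$ and the exponent in the equivariance identity; fortunately the conclusion is robust, since with the opposite sign convention the two cyclic sums in the closedness argument would cancel each other rather than vanish separately, still giving $d\omega=0$.
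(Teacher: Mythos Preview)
Your argument is correct, but it proceeds along a different line from the paper. You prove closedness by evaluating $d\omega(X_{A},X_{B},X_{C})$ via the Cartan formula and invoking the Jacobi identity directly, and you prove $G$-invariance by the equivariance identity $(\sigma_{g})_{*}X_{A}=X_{ad_{g^{-1}}A}\circ\sigma_{g}$ together with the fact that $ad_{g^{-1}}$ is a Lie algebra automorphism. The paper instead introduces the linear functions $h_{A}(f)=f(A)$, establishes Hamilton's equation $X_{A}\lrcorner\omega+dh_{A}=0$, and then obtains both properties at once: invariance from $\mathcal{L}_{X_{A}}\omega=0$ (Hamiltonian flows are canonical), and closedness from $X_{A}\lrcorner d\omega=\mathcal{L}_{X_{A}}\omega-d(X_{A}\lrcorner\omega)=d^{2}h_{A}=0$ since the $X_{A}$ span each tangent space. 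Your route is more self-contained and makes the role of the Jacobi identity explicit; the paper's route is slicker once the Hamiltonians $h_{A}$ are in hand, and has the bonus of simultaneously exhibiting the moment map, which is exactly what the next proposition in the paper needs.
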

\begin{proof}
That the definition gives a well-defined two-form on $M$, in the sense that $\omega(X_{A},X_{B})|_{f}$ depends only on the value of the $X_{A}$ fields at the point $f$, is seen from
\begin{equation}\nonumber
\begin{split}
X_{A}|_{f}=X_{A'}|_{f} \,\,\,\,\Rightarrow \,\,\,\,\omega(X_{A},X_{B})|_{f}-\omega(X_{A'},X_{B})|_{f} &= \frac{1}{2}f([A,B])-\frac{1}{2}f([A',B]) \\
&= \frac{1}{2}(X_{A}|_{f}-X_{A'}|_{f})(B) = 0, \,\, \forall B \in \mathfrak{g}.
\end{split}
\end{equation}
To prove that it is also invariant under the coadjoint action, we define, for each $A \in \mathfrak{g}$, the function $h_{A}:M \rightarrow \mathbb{R}:f\mapsto h_{A}(f):=f(A)$. Then
\begin{equation}\nonumber
\begin{split}
(X_{A}\lrcorner dh_{B})(f) = \frac{d}{dt}[h_{B}(f+tX_{A}|_{f})]\Big|_{t=0} &=\frac{d}{dt}[(f+tX_{A}|_{f})(B)]\Big|_{t=0} = f([A,B]) = h_{[A,B]}(f) \\
&\Rightarrow \,\,\,\, X_{A}\lrcorner dh_{B} = h_{[A,B]},
\end{split}
\end{equation}
and, since $2\omega(X_{B},X_{A})(f) = -f([A,B]) = -h_{[A,B]}(f),\,\, \forall f$, this gives
\begin{equation}\nonumber
X_{A}\lrcorner (X_{B}\lrcorner\omega + dh_{B}) = 2\omega(X_{B},X_{A}) + h_{[A,B]} = 0, \,\,\forall A,B \in \mathfrak{g} \,\,\,\, \Rightarrow\,\,\,\, X_{B}\lrcorner\omega + dh_{B} = 0, \,\,\forall B\in \mathfrak{g},
\end{equation}
where the last implication comes from using once again that the $X_{A}|_{f}$'s span $T_{f}M$. Hence
\begin{equation}\nonumber
0 = \mathcal{L}_{X_{A}}(X_{B}\lrcorner\omega + dh_{B}) = X_{B}\lrcorner\mathcal{L}_{X_{A}}\omega + [X_{A},X_{B}]\lrcorner\omega + d(X_{A}\lrcorner dh_{B}) = X_{B}\lrcorner\mathcal{L}_{X_{A}}\omega, \,\,\forall A,B\in\mathfrak{g},
\end{equation}
since $[X_{A},X_{B}] = X_{[A,B]}$. It follows that $\omega$ is invariant under the flows of the $X_{A}$'s and thus under the coadjoint action.

To show that this form is a symplectic structure, we point out that
\begin{equation}\nonumber
\omega(X_{A},X_{B})|_{f} = 0, \,\, \forall X_{B}|_{f} \,\,\,\, \Leftrightarrow \,\,\,\, f([A,B]) = 0, \,\, \forall B\in\mathfrak{g} \,\,\,\,\Leftrightarrow\,\,\,\, X_{A}|_{f} = 0,
\end{equation}
so it is nondegenerate. Finally, closure comes from
\begin{equation}\nonumber
0 = \mathcal{L}_{X_{A}}\omega = d(X_{A}\lrcorner\omega) + X_{A}\lrcorner d\omega = -d^{2}h_{A} + X_{A}\lrcorner d\omega = X_{A}\lrcorner d\omega, \,\,\forall A\in\mathfrak{g}.
\end{equation}
\end{proof}
Notice that actually the map $\mathfrak{g} \rightarrow C^{\infty}(M): A \mapsto h_{A}$ used in the proof is itself a Hamiltonian for the action. In summary, we have that
\begin{coad}
Each orbit $M$ of the coadjoint action of a Lie group $G$ on the dual of its Lie algebra $\mathfrak{g}$ is an elementary system with symmetry $G$ and with moment map given by the inclusion $M\hookrightarrow\mathfrak{g}^{*}$.
\end{coad}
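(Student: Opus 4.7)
The plan is essentially to repackage the contents of Proposition \ref{coadorbit} into the vocabulary of Definition \ref{elementary}. The four conditions to check are: $(M,\omega)$ is symplectic; $G$ acts on $M$ by symplectomorphisms; the action is transitive; and the action is Hamiltonian with moment map equal to the inclusion $M \hookrightarrow \mathfrak{g}^{*}$.

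The first is already established in Proposition \ref{coadorbit}, which produced a well-defined closed nondegenerate two-form $\omega$ on $M$. Transitivity is immediate from the very definition $M=\{ad^{*}_{g}f\mid g\in G\}$, so $G$ acts on $M$, and any two points are related by some $g\in G$. For invariance of $\omega$ under the full $G$-action, Proposition \ref{coadorbit} shows $\mathcal{L}_{X_{A}}\omega=0$ for every $A\in\mathfrak{g}$, so $\omega$ is preserved by the flow of each fundamental vector field; since $G$ is connected, these flows generate the whole group, hence $g^{*}\omega=\omega$ for all $g\in G$.

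It remains to exhibit the Hamiltonian. The natural candidate is the linear map $\mathfrak{g}\to C^{\infty}(M)\colon A\mapsto h_{A}$, $h_{A}(f):=f(A)$, already used in the proof of Proposition \ref{coadorbit}. There it was shown that $X_{A}\lrcorner\omega+dh_{A}=0$, giving condition (i) of Definition \ref{moment} as well as the identification $X_{h_{A}}=X_{A}$ (by nondegeneracy of $\omega$). For condition (ii), we use the identity $X_{A}\lrcorner dh_{B}=h_{[A,B]}$ derived in the same proof to compute
\begin{equation}\nonumber
\{h_{A},h_{B}\}=X_{h_{A}}(h_{B})=X_{A}\lrcorner dh_{B}=h_{[A,B]},
\end{equation}
which is exactly the bracket condition. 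Finally, the associated moment map $\mu\colon M\to\mathfrak{g}^{*}$ sends $f\mapsto\mu(f)$ with $\mu(f)(A)=h_{A}(f)=f(A)$ for all $A\in\mathfrak{g}$, so $\mu(f)=f$, i.e. $\mu$ is literally the inclusion $M\hookrightarrow\mathfrak{g}^{*}$.

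Since every substantive geometric input (closedness, nondegeneracy, infinitesimal invariance of $\omega$, and the two Hamiltonian identities for $h_{A}$) was already proved in Proposition \ref{coadorbit}, the proof here is essentially bookkeeping. The only mildly nontrivial point is the promotion from infinitesimal to global $G$-invariance of $\omega$, which is what forces the standing assumption that $G$ be connected; for a disconnected $G$ one would need to argue component-by-component that each $ad^{*}_{g}$ preserves $\omega$.
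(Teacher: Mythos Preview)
Your proof is correct and follows the same approach as the paper, which does not give a separate proof but simply remarks (just before the statement) that the map $A\mapsto h_{A}$ used in the proof of Proposition~\ref{coadorbit} is itself a Hamiltonian for the action. Your write-up is in fact more explicit than the paper's, in particular by spelling out the verification of $\{h_{A},h_{B}\}=h_{[A,B]}$ and the passage from infinitesimal to global invariance via connectedness of $G$.
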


In fact, in the cases we will be interested in, these are the \emph{only} elementary systems.
\begin{classelem}
Let $(M',\omega ')$ be an elementary system with symmetry $G$ and let $\mu : M' \rightarrow\mathfrak{g}^{*}$ be a moment. Then $\mu$ is a surjective local symplectomorphism from $(M',\omega ')$ onto a coadjoint orbit $(M,\omega)\subset\mathfrak{g}^{*}$.
\end{classelem}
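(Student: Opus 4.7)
The plan is to establish three things in sequence: first, that $\mu$ is $G$-equivariant, so that its image lies in (and equals) a single coadjoint orbit $M\subset\mathfrak{g}^*$; second, that $\mu^*\omega = \omega'$ for the Kirillov--Kostant--Souriau form $\omega$ on $M$; and third, that $\mu_*$ is a linear isomorphism at every point. To see equivariance, I would differentiate $t\mapsto \mu(m\cdot e^{tA})(B) = h_B(m\cdot e^{tA})$ at $t=0$. The Hamiltonian hypothesis identifies the fundamental vector field $X_A$ on $M'$ with the Hamiltonian vector field $X_{h_A}$, so the derivative is $X_A(h_B)(m) = \{h_A,h_B\}(m) = h_{[A,B]}(m) = \mu(m)([A,B])$, which is precisely the value of the coadjoint generator $X_A|_{\mu(m)}$ computed just before Proposition \ref{coadorbit}. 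Hence $\mu_*(X_A|_m) = X_A|_{\mu(m)}$, i.e., $\mu$ intertwines the fundamental vector fields of $G$ on $M'$ and on $\mathfrak{g}^*$. Integrating the intertwined flows yields $\mu(m\cdot e^{tA}) = ad^*_{e^{tA}}\mu(m)$, and since $G$ is connected and generated by its one-parameter subgroups this extends to $\mu(mg) = ad^*_g\mu(m)$ for all $g\in G$. Transitivity of the $G$-action on $M'$ then forces $\mu(M')$ to be a single coadjoint orbit $M$, which gives both the orbit statement and surjectivity at once.

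For the pullback identity, I use the explicit formula (\ref{symplorbit}) for the orbit symplectic form, the identity $\{f,g\}=2\omega(X_f,X_g)$ recorded in the proof of Proposition \ref{poissonalgebra}, and the generator-to-generator relation $\mu_*(X_A) = X_A$ just established:
\begin{equation}\nonumber
(\mu^*\omega)(X_A,X_B)|_m = \omega(X_A,X_B)|_{\mu(m)} = \tfrac12\mu(m)([A,B]) = \tfrac12 h_{[A,B]}(m) = \tfrac12\{h_A,h_B\}(m) = \omega'(X_A,X_B)|_m.
\end{equation}
By transitivity the $X_A|_m$ span $T_mM'$, so this pointwise identity on a spanning set upgrades to $\mu^*\omega=\omega'$ globally.

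Finally, nondegeneracy of $\omega'$ combined with $\mu^*\omega=\omega'$ forces $\mu_*$ to be injective at every point: if $\mu_*v=0$ then $\omega'(v,\cdot) = \omega(\mu_*v,\mu_*\cdot) = 0$, and so $v=0$. The image of $\mu_*$ already contains the spanning set $\{X_A|_{\mu(m)}\}$ of $T_{\mu(m)}M$, so $\mu_*$ is also surjective onto $T_{\mu(m)}M$; thus $\mu_*$ is an isomorphism at every point and $\mu$ is a local diffeomorphism onto $M$. Together with $\mu^*\omega = \omega'$ this exhibits $\mu$ as a surjective local symplectomorphism $(M',\omega')\to (M,\omega)$. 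The main subtlety, and the step I would check most carefully, is the equivariance argument: the factors of $\tfrac12$ in (\ref{symplorbit}) and in the bracket--symplectic pairing must be tracked consistently, and the connectedness of $G$ (assumed just before Proposition \ref{coadorbit}) is essential for passing from infinitesimal intertwining of the $X_A$--flows to global $G$-equivariance of $\mu$.
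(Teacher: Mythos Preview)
Your proof is correct and follows essentially the same route as the paper's: establish $\mu_*(X_A'|_{m'})=X_A|_{\mu(m')}$ via the moment condition, integrate using connectedness of $G$ to get equivariance and hence that the image is a single coadjoint orbit, verify $\mu^*\omega=\omega'$ on the spanning set $\{X_A'\}$, and deduce that $\mu_*$ is a pointwise isomorphism from nondegeneracy. The only cosmetic difference is that you argue surjectivity of $\mu_*$ explicitly from the spanning set, whereas the paper simply notes that injectivity plus $\mu^*\omega=\omega'$ with both forms nondegenerate forces $\mu$ to be a local symplectic diffeomorphism.
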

\begin{proof}
Let $A\mapsto X_{A}'$ and $A\mapsto X_{A}$ denote the actions in $M'$ and $M$, respectivelly, and likewise for the Hamiltonians $h_{A}$ and $h_{A}' = h_{A}\circ\mu$.

Since $\mu$ is a moment, $X_{A}'\lrcorner dh_{B}' = h_{[A,B]}', \,\, \forall A,B\in\mathfrak{g}$. Taking this into account, we have
\begin{equation}\nonumber
\begin{split}
[(X_{A}'\lrcorner d\mu)(m')](B) &= \frac{d}{dt}h_{B}'(m'+tX_{A}')\Big|_{t=0} = (X_{A}'\lrcorner dh_{B}')(m') = h_{[A,B]}'(m') \\
&= h_{[A,B]}(\mu(m')) = [\mu(m')]([A,B]) = [X_{A}|_{\mu(m')}](B) \\
&= [(X_{A}\circ\mu)(m')](B), \,\, \forall m'\in M', \,\, \forall B\in \mathfrak{g}\\
\Rightarrow & \,\,\,\, \,\,\,\, X_{A}'\lrcorner d\mu = X_{A}\circ \mu,
\end{split}
\end{equation}
so that we have, at any $m'\in M'$,
\begin{equation}\nonumber
\mu_{*}(X_{A}'|_{m'}) = \frac{d}{dt}\mu(m+tX_{A}')\Big|_{t=0} = (X_{A}'\lrcorner d\mu)(m') = X_{A}|_{\mu(m')}.
\end{equation}
One sees that $\mu_{*}X_{A}' = X_{A}, \,\, \forall A\in \mathfrak{g}$ and thus, since $G$ is connected, $\mu(m'g) = ad^{*}_{g}\mu(m'), \,\, \forall m'\in M', \forall g\in G$.

Furthermore, the action of $G$ on $M'$ is transitive, so $\mu(M') = \mu(\{m'g|g\in G\}) = \{ad^{*}_{g}\mu(m')|g\in G\}$, which is a coadjoint orbit. Also because the action on $M'$ is transitive, the vector fields $X_{A}'$ span the tangent spaces at each point of $M'$, and hence
\begin{equation}\nonumber
2\omega '(X_{A}', X_{B}') = h_{[A,B]}' = h_{[A,B]}\circ\mu = 2\omega(X_{A},X_{B})\circ\mu
\end{equation}
implies that $\omega ' = \mu^{*}\omega$. Now both $\omega$ and $\omega '$ are nondegenerate, so $\mu_{*}$ is injective at each point and therefore a surjective local symplectic diffeomorphism, as claimed.

\end{proof}

This gives a correspondence between elementary systems \emph{with a momentum map} and coadjoint orbits. It turns out that for some Lie algebras of central interest to physics the momentum map for a given elementary system is unique.

\begin{bracohom}\label{corsesi}
Let $\mathfrak{g}$ be a Lie algebra such that $[\mathfrak{g},\mathfrak{g}] = \mathfrak{g}$ and $H^{2}\mathfrak{g} = 0$. Then every canonical action of $\mathfrak{g}$ is Hamiltonian and has a unique moment.
\end{bracohom}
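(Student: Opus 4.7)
The plan is to interpret the obstructions to producing a Hamiltonian lift as a class in the Chevalley--Eilenberg cohomology $H^{2}\mathfrak{g}$, and to see that the two hypotheses remove obstructions in complementary degrees.

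First I would upgrade each $X_{A}$ from locally Hamiltonian to globally Hamiltonian. Writing an arbitrary $A=\sum_{i}[B_{i},C_{i}]$ using $[\mathfrak{g},\mathfrak{g}]=\mathfrak{g}$, and using that the canonical action is a Lie algebra homomorphism so that $X_{A}=\sum_{i}[X_{B_{i}},X_{C_{i}}]$, equation (\ref{liepoisson}) shows that each $[X_{B_{i}},X_{C_{i}}]\lrcorner\omega$ is exact. Thus $X_{A}\lrcorner\omega$ is exact for every $A$, and by choosing primitives $f_{A_{i}}\in C^{\infty}(M)$ on a basis $\{A_{i}\}$ of $\mathfrak{g}$ and extending $\mathbb{R}$-linearly we obtain a linear map $A\mapsto f_{A}$ with $X_{A}\lrcorner\omega+df_{A}=0$, which is condition (i). Note that this step uses $[\mathfrak{g},\mathfrak{g}]=\mathfrak{g}$ but not yet $H^{2}\mathfrak{g}=0$.

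Next I would study the bracket defect $c(A,B):=\{f_{A},f_{B}\}-f_{[A,B]}$. Its associated Hamiltonian vector field is $X_{\{f_{A},f_{B}\}}-X_{f_{[A,B]}}=[X_{A},X_{B}]-X_{[A,B]}=0$, so (assuming $M$ connected) $c(A,B)\in\mathbb{R}$, defining a bilinear antisymmetric map $c:\mathfrak{g}\times\mathfrak{g}\to\mathbb{R}$. Expanding the Jacobi identity $\{\{f_{A},f_{B}\},f_{C}\}+\text{cyc.}=0$, using that constants are central for the Poisson bracket, and combining with Jacobi in $\mathfrak{g}$, one finds
\[
c([A,B],C)+c([B,C],A)+c([C,A],B)=0,
\]
so $[c]\in H^{2}\mathfrak{g}$. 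Invoking $H^{2}\mathfrak{g}=0$ produces a linear $\phi:\mathfrak{g}\to\mathbb{R}$ with $c(A,B)=-\phi([A,B])$. Setting $h_{A}:=f_{A}-\phi(A)$ preserves (i) (a constant shift has vanishing differential), while
\[
\{h_{A},h_{B}\}-h_{[A,B]}=c(A,B)+\phi([A,B])=0
\]
gives (ii), completing the existence half.

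For uniqueness, if $A\mapsto h_{A}$ and $A\mapsto h_{A}'$ are both moments then $X_{h_{A}-h_{A}'}=X_{A}-X_{A}=0$, so $b(A):=h_{A}-h_{A}'$ is a constant depending $\mathbb{R}$-linearly on $A$. Comparing (ii) for the two moments forces $b([A,B])=0$, so $b$ annihilates $[\mathfrak{g},\mathfrak{g}]=\mathfrak{g}$ and therefore $b\equiv 0$. The main obstacle in carrying this out is the bookkeeping in the middle paragraph: verifying carefully that $c$ is an honest Chevalley--Eilenberg 2-cocycle (the Poisson Jacobi identity produces several terms that must be rearranged using Jacobi in $\mathfrak{g}$), and keeping the two hypotheses in their distinct roles -- $[\mathfrak{g},\mathfrak{g}]=\mathfrak{g}$ for globalising each $X_{A}$ and for uniqueness, and $H^{2}\mathfrak{g}=0$ for assembling the linear primitives into a Lie algebra homomorphism.
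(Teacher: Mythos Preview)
Your argument is correct and is the standard cohomological proof of this result. Note, however, that the paper does not actually supply its own proof here: immediately after the statement it writes ``To avoid a digression into the cohomology of Lie algebras we refer to \cite{woodhouse1997geometric} for a proof of this result.'' So there is no proof in the paper to compare against; what you have written is essentially the argument one finds in Woodhouse, and it fits cleanly with the paper's setup (your use of equation~(\ref{liepoisson}) and Proposition~\ref{localderived} in the first step is exactly what the paper has prepared for this purpose). The only tacit assumption worth flagging is that $M$ is connected, so that $\ker(f\mapsto X_{f})$ consists precisely of the constants; this is implicit in the paper's Proposition~\ref{poissonalgebra} as well.
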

To avoid a digression into the cohomology of Lie algebras \cite{chevalley1948cohomology} we refer to \cite{woodhouse1997geometric} for a proof of this result. In particular, in the case of semisimple Lie algebras, Whitehead's lemmas imply that the above conditions hold \cite{whitehead1937certain}. We see therefore that in cases like that of $SO(3)$ or Lorentz symmetry, there exists a unique way of assigning classical observables to the generators of the symmetry transformations. The moment map thus recovers the classical and relativistic angular momenta unambiguously. More than that, the constructed moment map associates in a unique way a coadjoint orbit to each elementary system with $SO(3)$ or Lorentz symmetry. This is a classification of the classical phase space analogues of elementary particles.

Aside from knowing that the phase spaces of elementary particles are coadjoint orbits, we want to explicitly construct them. A useful result is the following reduction procedure \cite{guillemin1982moments}.

\begin{presympl}\label{pres}
Let $C$ be a smooth manifold and let $\sigma$ be a closed two-form of constant rank on $C$, that is, such that the dimension of
\begin{equation}\nonumber
K_{m} = \{X|X\lrcorner\sigma = 0\}\subset T_{m}C
\end{equation}
is constant as $m$ varies over $C$. Then $K$ is a distribution\footnote{A real distribution on $M$ is a sub-bundle of $TM$.} on $C$, which we will call the \emph{characteristic distribution} of $\sigma$ and $(C,\sigma)$ is said to be a \emph{presymplectic manifold}.
\end{presympl}

Note now that the identity
\begin{equation}\label{a116}
\begin{split}
d\alpha(X_{i},X_{j},...,X_{m}) = X_{[i}(\alpha(X_{j},X_{k},...,X_{m]})-\frac{p}{2}&\alpha([X_{[i},X_{j}],X_{k},...,X_{m]}), \,\,\\ &\forall \alpha\in\Omega^{p}(M), \,\, \forall X_{i},X_{j},...\in V(M)
\end{split}
\end{equation}
(square-bracketed indices are antisymmetrized) implies that, for $X,Y\in V_{K}(C)$ and any $Z\in V(C)$,
\begin{equation}\nonumber
\sigma ([X,Y],Z) = -3d\sigma (X,Y,Z) = 0 \,\,\,\,\Rightarrow\,\,\,\, [X,Y]\in V_{K}(C),
\end{equation}
since $\sigma$ is closed. Therefore $K$ is an integrable distribution\footnote{This is Frobenius' theorem: if a distribution $K$ satisfies $[X,Y]\in K, \,\, \forall X,Y\in K$, then it is integrable. An integrable distribution is also called a foliation.}. We will call the presymplectic manifold $(C,\sigma)$ $reducible$ if its characteristic foliation $K$ is reducible\footnote{If K is a foliation on $C$, there are sumanifolds on $C$ (the leaves of $K$) whose tangent bundles are given by $K$. $C/K$ is the space of leaves. If this is Hausdorff, the foliation is said to be reducible.}. In this case, the space of leaves $C/K$ is a Hausdorff manifold and $\sigma$ projects to a well defined symplectic structure $\underline{\omega}$ on $C/K$. Indeed, closure follows from the closure of $\sigma$ and nondegeneracy from the definition of $C/K$. Finally, for any $X\in V_{K}(C)$,
\begin{equation}\nonumber
X\lrcorner\sigma = 0 = X\lrcorner d\sigma.
\end{equation}
We call the symplectic manifold $(C/K, \underline{\omega})$ the \emph{reduction} of $(C, \sigma)$. In our discussion, this construction will enter through the following proposition.

\begin{reduction}\label{reductionorbit}
Let $(M,\omega)$ be a coadjoint orbit in the Lie algebra $\mathfrak{g}^{*}$ of some Lie group and let $C$ be a manifold with: an action $\mathfrak{g}\rightarrow V(C): A\mapsto X_{A}'$, a surjection $\pi :C\rightarrow M$ and a 1-form $\theta '\in\Omega ^{1}(C)$ such that
\begin{description}
\item[(i)] $\pi ^{-1}(m)$ is connected for each $m\in M$
\item[(ii)] $\pi _{*}X_{A}' = X_{A}$
\item[(iii)] for each $m'\in C$, $X_{A}'\lrcorner\theta '(m') = [\pi (m')](A)$ (remember that $\pi (m') \in M\subset \mathfrak{g}^{*}$)
\end{description}
Then there is a symplectic diffeomorphism between $(M,\omega)$ and the reduction of $(C,d\theta ')$.
\end{reduction}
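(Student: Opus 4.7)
The plan is to identify $\pi : C \to M$ as the quotient map of the presymplectic reduction of $(C, d\theta')$. Three things are needed: (a) $\pi$ is a submersion with connected fibres, so the vertical distribution $\ker\pi_{\ast}$ is reducible with quotient $M$; (b) $d\theta'$ evaluated on the ``horizontal'' fields $X_A'$ reproduces $\pi^{\ast}\omega$; (c) the characteristic distribution $K$ of $d\theta'$ coincides with $\ker\pi_{\ast}$.

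For (a), condition (ii) reads $\pi_{\ast}X_A' = X_A$, and by transitivity of the coadjoint action the vectors $\{X_A|_{f}\}_{A\in\mathfrak{g}}$ span $T_{f}M$ at every $f \in M$. Hence $\pi_{\ast}$ is pointwise surjective, so $\pi$ is a submersion; combined with (i) this gives the smooth identification $M \cong C/\ker\pi_{\ast}$.

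For (b), condition (iii) is exactly $X_A' \lrcorner \theta' = h_A \circ \pi$ with $h_A(f) := f(A)$, and since $A \mapsto X_A'$ is a Lie algebra homomorphism $[X_A', X_B'] = X_{[A,B]}'$. Feeding these into the invariant formula (\ref{a116}) at $p=1$,
\begin{equation*}
d\theta'(X_A', X_B') = \tfrac{1}{2}\bigl(X_A'(\theta'(X_B')) - X_B'(\theta'(X_A'))\bigr) - \tfrac{1}{2}\theta'([X_A', X_B']),
\end{equation*}
and using the identity $X_A \lrcorner dh_B = h_{[A,B]}$ already established on the coadjoint orbit, the three terms collapse to $d\theta'(X_A', X_B') = \tfrac{1}{2}\,h_{[A,B]} \circ \pi$. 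By the defining equation (\ref{symplorbit}) of the orbit symplectic form this is precisely $\omega(X_A, X_B) \circ \pi = (\pi^{\ast}\omega)(X_A', X_B')$, so $d\theta'$ matches $\pi^{\ast}\omega$ on the horizontal span of the $X_A'$.

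For (c), the computation in (b) shows that $d\theta'$ restricted to the subspace $H_{m'} := \mathrm{span}\{X_A'|_{m'}\}$, which is mapped isomorphically onto $T_{\pi(m')}M$ by $\pi_{\ast}$, is identified with the nondegenerate $\omega$; hence $K_{m'} \cap H_{m'} = \{0\}$, so $\dim K_{m'} \le \dim \ker\pi_{\ast}|_{m'}$. The reverse inclusion $\ker\pi_{\ast} \subseteq K$ is where I expect the technical heart of the argument to lie: for vertical $V$ one must verify $d\theta'(V, \cdot) = 0$, which via Cartan's formula $\mathcal{L}_{X_A'}\theta' = X_A'\lrcorner d\theta' + d(h_A\circ\pi)$ reduces to the vanishing of $(\mathcal{L}_{X_A'}\theta')(V)$ for each $A$ together with the vanishing of $d\theta'$ on vertical-vertical pairs. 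Both of these amount to a horizontality/equivariance property of $\theta'$ along the fibres of $\pi$, which in the canonical constructions the proposition is designed to accommodate (e.g.\ $C$ a group cover of the orbit carrying a left-invariant $\theta'$) follows from the $G$-action. Once $K = \ker\pi_{\ast}$ is established, reducibility of the characteristic foliation follows from (i) and the submersion structure, the leaf space is $M$, and the step-(b) computation shows the induced form on $C/K$ agrees with $\omega$, producing the desired symplectic diffeomorphism.
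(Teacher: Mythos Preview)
Your step (c) contains an actual error, not just the acknowledged gap. You assert that $\pi_\ast$ maps $H_{m'} = \mathrm{span}\{X_A'|_{m'}\}$ \emph{isomorphically} onto $T_{\pi(m')}M$, but only surjectivity follows from (ii) and transitivity; nothing in the hypotheses forces $H_{m'}\cap\ker\pi_\ast|_{m'}=0$. In the paper's principal applications (e.g.\ $C=G$ with $X_A'$ the left-invariant fields) the $X_A'$ span all of $T_{m'}C$, so $H_{m'}=T_{m'}C$ and $\pi_\ast|_H$ has kernel exactly $\ker\pi_\ast\neq 0$. In that regime $d\theta'|_H$ is degenerate and $K\cap H=K$, so the conclusion $K\cap H=\{0\}$ simply fails. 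Ironically, once one grants this spanning hypothesis, your (b) already gives $d\theta'=\pi^\ast\omega$ on all of $TC$, whence $K=\ker(\pi^\ast\omega)=\ker\pi_\ast$ in one line and your overall scheme closes; it is the intermediate dimension-counting argument that is wrong, not the strategy.

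The paper takes a more structural route that avoids identifying $K$ with $\ker\pi_\ast$ in advance. From (b) and (iii) it derives the invariance $\mathcal{L}_{X_A'}\theta' = X_A'\lrcorner d\theta' + d(h_A\circ\pi) = \pi^\ast(X_A\lrcorner\omega + dh_A) = 0$ --- this is the identity you allude to but stop short of proving. That invariance shows the $\mathfrak{g}$-action preserves the characteristic foliation and that the functions $h_A'=X_A'\lrcorner\theta'$ are leafwise constant, so both descend to the reduction $(C/K,\underline{\omega})$ and furnish a moment $\mu:C/K\to\mathfrak{g}^\ast$. A one-line computation from (iii) then gives the commuting triangle $\mu\circ\phi=\pi$, and the symplectic diffeomorphism follows from this together with (i). What this buys over the direct approach is that the reduction is exhibited as a Hamiltonian $G$-space with moment $\mu$, so its identification with the coadjoint orbit becomes an instance of the earlier classification of elementary systems rather than a bespoke comparison of two foliations.
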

\begin{proof}
Let $\phi : C\to C/K$ denote the reduction map, i.e., $\phi$ maps $m'\in C$ to the leaf of $K$ through $m'$, where $K$ is the characteristic foliation of $d\theta '$. We know that $(C/K, \underline{\omega})$, where $\underline{\omega}$ is the projection of $d\theta '$, is a symplectic manifold. We first show that the action of $\mathfrak{g}$ on $C$ projects to a Hamiltonian action on $(C/K, \underline{\omega})$. For this, recall the definition of the symplectic structure on $M$ from (\ref{symplorbit}). We have
\begin{equation}\nonumber
\begin{split}
[X_{B}'\lrcorner (X_{A}'\lrcorner d\theta ')](m') &= [X_{B}'\lrcorner (\mathcal{L}_{X_{A}'}\theta ' - d(X_{A}'\lrcorner \theta '))](m') \\
&= [X_{A}'(X_{B}'\lrcorner\theta ')](m') - ([X_{A}',X_{B}']\lrcorner\theta ')(m') - [X_{B}'(X_{A}'\lrcorner\theta ')](m') \\
&= X_{A}'([\pi (m')](B)) - [\pi (m')]([A,B]) - X_{B}'([\pi (m')](A)) \\
&= [\pi (m')]([A,B]) = [X_{B}\lrcorner (X_{A}\lrcorner\omega)](\pi (m')) \\
\Rightarrow d\theta ' = \pi^{*}\omega 
\end{split}
\end{equation}
on $\text{span}\{X_{A}'\}$. It follows that $\theta '$ is invariant under the action of $\mathfrak{g}$:
\begin{equation}\nonumber
\begin{split}
(\mathcal{L}_{X_{A}'}\theta ')(m') &= (X_{A}'\lrcorner d\theta ')(m') + [d(X_{A}'\lrcorner \theta ')](m') = [\pi^{*}(X_{A}\lrcorner\omega)](m') + d(h_{A}[\pi(m')]) \\
&= [\pi^{*}(X_{A}\lrcorner\omega + dh_{A})](m') = 0,
\end{split}
\end{equation}
where $h_{A}$ is the Hamiltonian in $M$. Now, for any $Y\in V_{K}(C)$ and any $A\in\mathfrak{g}$,
\begin{equation}\nonumber
[X_{A}',Y']\lrcorner d\theta ' = \mathcal{L}_{X_{A}'}(Y'\lrcorner d\theta ') - Y'\lrcorner d(\mathcal{L}_{X_{A}'}\theta ') = 0 \,\,\,\,\Rightarrow \,\,\,\, [X_{A}',Y']\in V_{K}(C),
\end{equation}
so that $X_{A}'$ projects to a well defined $\underline{X_{A}} = \phi_{*}X_{A}' \in V(C/K)$. Clearly $[\underline{X_{A}},\underline{X_{B}}] = \underline{X_{[A,B]}}$. Consider now the function $h_{A}' = X_{A}'\lrcorner \theta ' \in C^{\infty}(C)$. For any $Y'\in V_{K}(C)$,
\begin{equation}\nonumber
Y'(h_{A}') = Y'\lrcorner d(X_{A}'\lrcorner \theta ') = - Y'\lrcorner (X_{A}'\lrcorner d\theta ') = 0,
\end{equation}
since $Y'\lrcorner d\theta ' = 0$. Therefore, $h_{A}' = \underline{h_{A}}\circ\phi$ for some well-defined $\underline{h_{A}}\in C^{\infty}(C/K)$. Moreover, $\underline{h_{A}}$ generates $\underline{X_{A}}$:
\begin{equation}\label{invpotential1}
0 = \mathcal{L}_{X_{A}'}\theta ' = X_{A}'\lrcorner d\theta ' + d(X_{A}'\lrcorner\theta ') = \phi^{*}(\underline{X_{A}}\lrcorner\underline{\omega} + d\underline{h_{A}}),
\end{equation}
and,
\begin{equation}\label{invpotential2}
[\underline{h_{A}},\underline{h_{B}}]\circ\phi = \phi^{*}[\underline{X_{A}}(\underline{h_{B}})] = X_{B}'\lrcorner\mathcal{L}_{X_{A}'}\theta ' + [X_{A}',X_{B}']\lrcorner\theta ' = h_{[A,B]}' = \underline{h_{[A,B]}}\circ\phi,
\end{equation}
so $\underline{h_{A}}$ is a Hamiltonian. Hence there is a moment $\mu : C/K \rightarrow \mathfrak{g}^{*}$. To complete the proof, note that, by \textbf{(iii)},
\begin{equation}\nonumber
\begin{split}
[\mu(\phi(m'))](A) = \underline{h_{A}}(\phi(m')) = & h_{A}'(m') = (X_{A}'\lrcorner\theta ')(m') = [\pi(m')](A), \,\,  \forall A\in \mathfrak{g}, \,\,\forall m' \in C \\
& \Rightarrow\,\,\,\, \mu\circ\phi = \pi .
\end{split}
\end{equation}
In other words, the diagram
\begin{equation}\nonumber
\begin{tikzcd}
& (C,d\theta ') \arrow[dl, "\phi"'] \arrow[dr, "\pi"] & \\
(C/K, \underline{\omega}) \arrow[rr, "\mu"'] & & (M,\omega)
\end{tikzcd}
\end{equation}
commutes. Finally, $\mu$ is a symplectic diffeomorphism as a consequence of \textbf{(i)}, of $ker(\phi_{*}|_{m'}) = ker(\pi_{*}|_{m'}), \,\, \forall m'\in C$, and of
\begin{equation}\nonumber
\phi^{*}\underline{\omega} = d\theta ' = \pi^{*}\omega = (\mu\circ\phi )^{*}\omega = \phi^{*}(\mu^{*}\omega) \Rightarrow \underline{\omega} = \mu^{*}\omega.
\end{equation}
\end{proof}

\begin{su2}\emph{(Rotational symmetry)}\label{sphereso3}

We finally apply these results to some examples. Before jumping into the case of Poincar\'e symmetry, however, it is instructive to see how the above strategy works in the case of rotational symmetry. 

To start with, note that proposition \ref{corsesi} implies that any canonical action of $so(3)$ has associated a unique moment. Let us check what this moment is for the action of the group $SO(3)$ on $(T^{*}\mathbb{R}^{3} \simeq \mathbb{R}^{6}, d\theta)$ by rotations, where $\theta$ is the canonical one-form. First let $(p_{a},q^{b})$ be coordinates on a symplectic frame. Then an element $g\in SO(3)$ acts by $\mathbf{q} = (q^{a})\mapsto \mathbf{gq} = (\tensor{g}{^a_b}q^{b})$ and $\mathbf{p} = (p^{a})\mapsto \mathbf{gp} = (\tensor{g}{^a_b}p^{b})$, where $\mathbf{q}$ and $\mathbf{p}$ are column vectors (we raise and lower indices with the identity and employ the Einstein summation convention). Note that this leaves the canonical one-form $\theta$ invariant:
\begin{equation}\nonumber
\mathbf{p}^{T}d\mathbf{q}\mapsto (\mathbf{gp})^{T}d(\mathbf{gq}) = \mathbf{p}^{T}\mathbf{g}^{T}\mathbf{g}d\mathbf{q} = \mathbf{p}^{T}d\mathbf{q} .
\end{equation}
Then, because $\theta$ is also a symplectic potential potential (i.e., $\omega=d\theta$), calculations analogous to (\ref{invpotential1}) and (\ref{invpotential2}) show that $h_{A} = X_{A}\lrcorner\theta, \,\, \forall A\in so(3)$ is the Hamiltonian. The Lie algebra $so(3)$ is spanned by the $3\times 3$ matrices $L^{i}, \,\, i\in \{1,2,3\}$, where $(L^{i})^{jk} = -\epsilon^{ijk}$ \cite{hall2015lie}. Notice that this basis satisfies $[L^{i},L^{j}] = \tensor{\epsilon}{^i^j_k}L^{k}$. To find the explicit form of the vectors $X_{A}$, note that the action of $SO(3)\ni e^{tA}, \,\, A \in so(3)$ in the $((\mathbf{p}^{T})_{a},(\mathbf{q})^{b})$ coordinates can be written as
\begin{equation}\nonumber
e^{tA}(\mathbf{p}^{T},\mathbf{q}) = ((e^{t\mathbf{A}}\mathbf{p})^{T},e^{t\mathbf{A}}\mathbf{q}) = (\mathbf{p}^{T},\mathbf{q}) + t(-(\mathbf{p})^{T}\mathbf{A},\mathbf{Aq}) + O(t^2),
\end{equation}
which we recognise as the flow of
\begin{equation}\nonumber
X_{A} = -p_{b}\tensor{A}{^b_a}\frac{\partial}{\partial p_{a}} + \tensor{A}{^a_b}q^{b}\frac{\partial}{\partial q^{a}}.
\end{equation}
Hence,
\begin{equation}\nonumber
h_{A} = \left(-p_{b}\tensor{A}{^b_a}\frac{\partial}{\partial p_{a}} + \tensor{A}{^a_b}q^{b}\frac{\partial}{\partial q^{a}}\right)\lrcorner p_{c}dq^{c} = \tensor{A}{^a_b}q^{b}p_{a}.
\end{equation}
In particular,
\begin{equation}\nonumber
h_{(L^{i})} = \tensor{(L^{i})}{^a_b}q^{b}p_{a} = \epsilon_{iab}q^{a}p^{b} = (\mathbf{q}\times\mathbf{p})_{i},
\end{equation}
thus the momentum map, in the specific case of rotational symmetry in $T^*\mathbb{R}^3$, recovers uniquely the familiar expression for the angular momentum.

We turn now to the construction of the elementary systems with symmetry $SO(3)$. First notice that, for any $g\in SO(3)$, $det(\mathbf{g}) = 1$ and $\mathbf{g}^{T}\mathbf{g} = \mathbf{1}$ imply, by Cramer's formula for the inverse,
\begin{equation}\nonumber
\begin{split}
\tensor{(g)}{_k^l} = \tensor{(g^{-1})}{^l_k} = \frac{1}{(3-1)!}\epsilon_{ki_{2}i_{3}}\epsilon^{lj_{2}j_{3}}\tensor{g}{^{i_{2}}_{j_{2}}}\tensor{g}{^{i_{3}}_{j_{3}}} \\
\Rightarrow \epsilon^{kmn}\tensor{g}{_k^l} = \frac{1}{2}\epsilon^{kmn}\epsilon_{ki_{2}i_{3}}\epsilon^{lj_{2}j_{3}}\tensor{g}{^{i_{2}}_{j_{2}}}\tensor{g}{^{i_{3}}_{j_{3}}} = \epsilon^{lj_{2}j_{3}}\tensor{g}{^m_{j_{2}}}\tensor{g}{^n_{j_{3}}}.
\end{split}
\end{equation}
We employ this in the following: let us map $so(3) \ni A = a_{i}L^{i}\mapsto (a_{i})\in\mathbb{R}^{3}$. Then, under this identification, the adjoint action looks like
\begin{equation}\nonumber
\begin{split}
(ad_{g}(a_{i}L^{i}))^{jk} &= \left(\frac{d}{dt}[ge^{t(a_{i}L^{i})}g^{-1}]\Big|_{t=0} \right)^{jk}= (g(a_{i}L^{i})g^{T})^{jk} = -\tensor{g}{^j_m}(a_{i}\epsilon^{imn})\tensor{g}{^k_n} \\
&= -a_{i}\epsilon^{ljk}\tensor{g}{_l^i} = ((\mathbf{ga})_{l}L^{l})^{jk}\\
& \,\,\,\, \Rightarrow  \,\,\,\, ad_{g}(a_{i}L^{i}) = (\mathbf{ga})_{i}L^{i},
\end{split}
\end{equation}
so that this choice of basis identifies the adjoint action on $so(3)$ with the action on $\mathbb{R}^{3}$ by rotations that we have just discussed. By introducing a dual basis in $so(3)^{*}$, we see that the coadjoint action can be pictured in the same way. It becomes clear that this mapping sends the coadjoint orbits in $so(3)^{*}$ to spheres centered at the origin in $\mathbb{R}^{3}$. Indeed, the sphere is a standard example of a symplectic manifold, the symplectic form given by the volume form divided by its radius, and the $SO(3)$-action is canonical and transitive over it.

From the physical point of view, we will see shortly that the sphere of radius $s$ is the classical phase space for the rotational degrees of freedom of an elementary particle of spin $s$. It will then be instructive to first apply geometric quantization to the sphere. With this goal in mind, we now use its interpretation as a coadjoint orbit in $so(3)^{*}$ to reconstruct it as a reduction by using proposition \ref{reductionorbit}. The reason to do this is that we will end up with a parametrization of the sphere in complex coordinates which provides a natural choice of a \emph{polarization}, a necessary ingredient for quantization (see subsection \ref{subsecquant}).

First we point out that the coadjoint orbits of the action of $G$ on $\mathfrak{g}^{*}$ are completely determined by the corresponding Hamiltonian action of $\mathfrak{g}$ on $\mathfrak{g}^{*}$, as far as our definitions go. This was already hinted at, for example, by the way in which we defined the symplectic structure on the orbits in equation (\ref{symplorbit}). An implication is that, if two Lie groups have isomorphic Lie algebras, the coadjoint orbits are the same, even though the groups themselves might not be isomorphic. Therefore, recall that the Lie algebra of $SU(2)$, which is given by the $2\times 2$ anti-hermitian\footnote{We use the mathematicians' convention that the exponential map from $Lie(G)$ to $G$ is $e^{tA}$. If we used the physicists convention $e^{itA}$, the Lie algebra of $SU(2)$ would consist of \emph{hermitian} matrices.} matrices, is generated by $\{i\sigma^{j}\}$, where $\sigma^{j}$ are the Pauli matrices

\begin{equation}\nonumber
\sigma^{1} = \left( \begin{array}{cc}
0 & 1 \\
1 & 0
\end{array} \right), \qquad
\sigma^{2} = \left( \begin{array}{cc}
0 & -i \\
i & 0
\end{array} \right), \qquad
\sigma^{3} = \left( \begin{array}{cc}
1 & 0 \\
0 & -1
\end{array} \right).
\end{equation}

It is straightforward to verify that 
\begin{equation}\label{sigmaproduct}
\sigma^{i}\sigma^{j} = \delta^{ij}\mathbf{1} + i\tensor{\epsilon}{^i^j_k}\sigma^{k} \qquad \text{and hence that}\qquad
\left[\frac{-i\sigma^{i}}{2}, \frac{-i\sigma^{j}}{2}\right] = \tensor{\epsilon}{^i^j_k}\left(\frac{-i\sigma^{k}}{2}\right),
\end{equation}
so that the linear map $\varphi: so(3)\rightarrow su(2)$ which satisfies $\varphi(L^{j}) = -i\sigma^{j}/2$ is a Lie algebra isomorphism. We conclude that the coadjoint orbits of $SO(3)$ are equal to those of $SU(2)$. Then proposition \ref{reductionorbit} allows one to construct the coadjoint orbits as the reduction of the group manifold $SU(2)$ itself.

For any Lie group $G$, each $g\in G$ determines two diffeomorphisms $G\rightarrow G$, given by
\begin{equation}\nonumber
\rho_{g}: g'\mapsto g'g\; \text{(right translation)} \quad \text{and} \quad \lambda_{g}:g'\mapsto gg' \; \text{(left translation)}.
\end{equation}
One can then define the left-invariant vector fields by
\begin{equation}\nonumber
L_{A}(e) = A, \qquad \lambda_{g*}(L_{A}) = L_{A}, \quad \forall A \in \mathfrak{g} = T_{e}G, \; \forall g\in G,
\end{equation}
where $e$ is the group identity. Then the flow of each $L_{A}$ is the one-parameter subgroup $(g,t)\mapsto ge^{tA}$. In the case of a matrix Lie group this can be seen from
\begin{equation}\nonumber
\frac{d}{dt}(ge^{tA}) = ge^{tA}A = \lambda_{ge^{tA}*}A = L_{A}(\lambda_{ge^{tA}}e) = L_{A}(ge^{tA}).
\end{equation}
If we make use of the relation
\begin{equation}\nonumber
e^{sA}e^{tB} \thicksim e^{st[A,B]}e^{tB}e^{sA},
\end{equation}
which holds up to second order in $s$ and $t$ as a consequence of the Baker-Campbell-Hausdorff formula, we may evaluate $[L_{A},L_{B}]$. Let $g\in G$ and $f: G\rightarrow \mathbb{R}$. Then
\begin{equation}\nonumber
\begin{split}
([L_{A},L_{B}](f))(g) &= \frac{d}{ds}\frac{d}{dt}[f(ge^{st[A,B]}e^{tB}e^{sA}) - f(ge^{sB}e^{tA})]\Big|_{t=s=0} = \frac{d}{dr}f(ge^{r[A,B]})\Big|_{r=0} \\
&= (L_{[A,B]}(f))(g).
\end{split}
\end{equation}
As this holds for any $f$ and $g$, we conclude that $\mathfrak{g}$ acts on $G$ by $A\mapsto L_{A}$. Pick an element $f\in \mathfrak{g}^{*}$. It determines a right-invariant one-form $\theta_{f}\in \Omega^{1}(G)$ by
\begin{equation}\nonumber
\theta_{f}(e) = f, \qquad \rho_{g}^{*}(\theta_{f}) = \theta_{f}, \quad \forall g\in G.
\end{equation}
The claim is that the coadjoint orbit through $f\in su(2)^{*}$ is the reduction of $(SU(2), d\theta_{f})$. To see how this follows from proposition \ref{reductionorbit}, consider the map $\pi : SU(2)\rightarrow (M_{f},\omega_{f}): g\mapsto ad_{g}^{*}f$, where $(M_{f},\omega_{f})$ is the coadjoint orbit through $f\in su(2)^{*}$. This is clearly surjective and, for any $B\in su(2)^{*}$, 
\begin{equation}\nonumber
\begin{split}
(\pi_{*}L_{A}|_{g})(B) &= ([\pi\circ\lambda_{g}]_{*}A)(B) = \frac{d}{dt}(ad_{ge^{tA}}^{*}f)\Big|_{t=0}(B) = \frac{d}{dt}f[ad_{ge^{tA}}(B)]\Big|_{t=0} \\
&= f(g[A,B]g^{-1}) = [ad_{g}^{*}f]([A,B]) = X_{A}\Big|_{ad_{g}^{*}f}(B),
\end{split}
\end{equation}
so that the action of $su(2)$ on $SU(2)$ does project to the coadjoint action on the orbits. Here we have used again the identification of $T_{f}su(2)^{*}$ with $su(2)^{*}$. To verify item \textbf{(iii)}, note that
\begin{equation}\nonumber
L_{A}|_{g} = \lambda_{g*}A = (\rho_{g}\circ\rho_{g^{-1}})_{*}\lambda_{g*}A = \rho_{g*}(\rho_{g^{-1}}\circ\lambda_{g})_{*}A = \rho_{g*}ad_{g}A,
\end{equation}
so that
\begin{equation}\nonumber
(L_{A}\lrcorner\theta_{f})(g) = f(ad_{g}A) = (ad_{g}^{*}f)(A) = [\pi(g)](A), \quad \forall A\in su(2), \; \forall g\in G.
\end{equation}
To summarize, this shows that the spheres centered at the origin are symplectic diffeomorphic to the reductions of $(SU(2),d\theta_{f})$, for $f\in su(2)^{*}$. To make the correspondence more explicit, recall that we have chosen bases in $so(3)$ and $su(2)$ such that
\begin{equation}\label{r3su2}
\mathbb{R}^{3}\ni \mathbf{a}\mapsto a_{i}L^{i}\mapsto \frac{-ia_{j}\sigma^{j}}{2} \in su(2).
\end{equation}
We then see the dual space of $\mathbb{R}^{3}$ as $\mathbb{R}^{3}$, with the pairing given by $\langle\mathbf{a},\mathbf{b}\rangle = \mathbf{a}^T\mathbf{b}$, and map this to $su(2)^{*}$. Now, the manifold $SU(2)$ can be embeded in $\mathbb{C}^{2}$ as the 3-sphere by taking $(z^{0},z^{1})$ in
\begin{equation}\nonumber
SU(2)\ni g = \left( \begin{array}{cc}
z^{0} & z^{1} \\
-\bar{z}^{1} & \bar{z}^{0}
\end{array} \right), \qquad z^{0}\bar{z}^{0} + z^{1}\bar{z}^{1} = 1
\end{equation}
as holomorphic coordinates. Combining this with (\ref{r3su2}), we send $\mathbb{R}^{3}$ to the tangent to the sphere at $(1,0)\in\mathbb{C}^{2}$. Explicitly,
\begin{equation}\nonumber
\begin{split}
e^{\frac{-ita_{j}\sigma^{j}}{2}} &= \cos\left(\frac{t|\mathbf{a}|}{2}\right)\mathbf{1} - \frac{i}{|\mathbf{a}|}\sin\left(\frac{t|\mathbf{a}|}{2}\right)a_{j}\sigma^{j} \\
&=\left(\begin{array}{cc}
\cos\left(\frac{t|\mathbf{a}|}{2}\right) - \frac{ia_{3}}{|\mathbf{a}|}\sin\left(\frac{t|\mathbf{a}|}{2}\right) &
-\frac{(a_{2}+ia_{1})}{|\mathbf{a}|}\sin\left(\frac{t|\mathbf{a}|}{2}\right) \\
\frac{(a_{2}-ia_{1})}{|\mathbf{a}|}\sin\left(\frac{t|\mathbf{a}|}{2}\right) &
\cos\left(\frac{t|\mathbf{a}|}{2}\right) + \frac{ia_{3}}{|\mathbf{a}|}\sin\left(\frac{t|\mathbf{a}|}{2}\right)
\end{array}\right),
\end{split}
\end{equation}
which we see as a curve $(z^{0}(t),z^{1}(t))$ through $(1,0)$. Taking the derivative, we find the vector corresponding to $\mathbf{a}$. The 2-sphere of radius $s$ is the orbit in $\mathbb{R}^{3}$ of the vector $(0,0,-s)$. Thus it is the reduction of $(SU(2),d\theta_{f})$, where the value of $\theta_{f}$ at $(1,0)$ (the identity in $SU(2)$) is fixed by demanding that it be a real one-form such that the diagram
\begin{equation}\nonumber
\begin{tikzcd}
\mathbb{R}^{3}\ni \mathbf{a} \arrow[d, mapsto, "{\cdot (0,0,-s)}"'] \arrow[r, mapsto] & \frac{-ia_{3}}{2}\frac{\partial}{\partial z^{0}} +\frac{ia_{3}}{2}\frac{\partial}{\partial \bar{z}^{0}} - \frac{(a_{2}+ia_{1})}{2}\frac{\partial}{\partial z^{1}} - \frac{(a_{2} - ia_{1})}{2}\frac{\partial}{\partial\bar{z}^{1}} \in T_{(1,0)}S^{3}\arrow[dl, mapsto, "{\lrcorner\theta_{f}\big|_{(1,0)}}"]\\
-sa_{3}&
\end{tikzcd}
\end{equation}
commutes. This gives $\theta_{f}|_{(1,0)} = is(d\bar{z}^{0} - dz^{0})$, and we determine its value on the rest of $S^{3}$ by remembering that it is right-invariant, so that $\theta_{f}|_{g \in SU(2)} = \rho_{g}^{*}\theta_{f}|_{e\in SU(2)}$. In the defined complex coordinates, multiplication on the right by the element of $SU(2)$ with coordinates $(z^{0},z^{1})$ acts by
\begin{equation}\nonumber
(w^{0},w^{1})\mapsto (z^{0}w^{0} - \bar{z}^{1}w^{1}, z^{1}w^{0} + \bar{z}^{0}w^{1}),
\end{equation}
and thus, transforming the components of $\theta_{f}$ by the inverse of the Jacobian, we find that, at any $(z^{0},z^{1})\in S^{3}$,
\begin{equation}\label{potencialsu2}
\begin{split}
\theta_{f} = is(z^{0}d\bar{z}^{0}+z^{1}d\bar{z}^{1}-\bar{z}^{0}dz^{0}-\bar{z}^{1}dz^{1}) \\
\Rightarrow d\theta_{f} = 2is(dz^{0}\wedge d\bar{z}^{0} + dz^{1}\wedge d\bar{z}^{1}).
\end{split}
\end{equation}
Finally, we have that the sphere of radius $s$ is the reduction of the presymplectic manifold 
\begin{equation}\label{s3tos2}
(S^{3},2is(dz^{0}\wedge d\bar{z}^{0} + dz^{1}\wedge d\bar{z}^{1})).
\end{equation}
This $S^3\to S^2$ is the famous Hopf fibration \cite{hopf1964abbildungen}.
\end{su2}

\begin{poincare}\label{poincareorbits}\emph{(Poincar\'e symmetry)}

Free relativistic particles correspond to the elementary systems with Poincar\'e symmetry $P$. Although it can also be seen as a matrix group, we will follow the approach in \cite{woodhouse1997geometric} and parametrize the Poincar\'e group by isometries $\rho :\mathbb{M}\rightarrow\mathbb{M}$ of Minkowski space $(\mathbb{M},\eta)$, where $\eta$ is the constant metric of signature $+---$ and we take the group composition to be such that $P$ acts on $\mathbb{M}$ on the right\footnote{See also \cite{souriau1970strucure,bacry1967space,penrose1968twistor,arens1971classical}.}. Thus one can think of its Lie algebra as the $\eta$-preserving vector fields, ie,
\begin{equation}\label{poincarealgebra}
\mathfrak{p} = \{X\in T\mathbb{M}|\mathcal{L}_{X}g = 0\}.
\end{equation}
For a coordinate expression, notice that
\begin{equation}\nonumber
0 = \mathcal{L}_{X}g_{ab} = \nabla_{a}X_{b} - \nabla_{b}X_{a} \,\,\,\,\Rightarrow\,\,\,\, X^{a} = q^{b}\tensor{L}{_b^a}+T^{a},
\end{equation}
where $L$ and $T$ are constants and $L_{ba} = -L_{ab}$. We have used the flatness of the metric. Then, if $X = (q^{b}\tensor{L}{_b^a}+T^{a})\partial_{a}$ and $Y = (q^{b}\tensor{M}{_b^a}+N^{a})\partial_{a}$,
\begin{equation}\nonumber
\begin{split}
&rX+sY = [q^{b}(r\tensor{L}{_b^a}+s\tensor{M}{_b^a}) + (rT^{a}+sN^{a})]\partial_{a} \in \mathfrak{p}, \quad \forall r,s \in \mathbb{R} \\
&[X,Y] = \mathcal{L}_{X}Y = (X^{b}\partial_{b}Y^{a} - Y^{b}\partial_{b}X^{a})\partial_{a} = [q^{c}(\tensor{L}{_c^b}\tensor{M}{_b^a} - \tensor{M}{_c^b}\tensor{L}{_b^a}) + (T^{b}\tensor{M}{_b^a} - N^{b}\tensor{L}{_b^a})]\partial_{a} \in \mathfrak{p},
\end{split}
\end{equation}
 and these vectors form a Lie subalgebra of $V(\mathbb{M})$ (see \ref{lievector}). Furthermore, the second equation implies that $[\mathfrak{p},\mathfrak{p}] = \mathfrak{p}$, because every antisymmetric matrix is the bracket of two antisymmetric matrices of the same order and every vector $(P^{a})\in \mathbb{R}^{n}$ can be writen as $MT-LN$ for fixed antisymmetric matrices $M$ and $L$ by choosing suitable vectors $(T^{a}),(N^{b})\in \mathbb{R}^{n}$. Although $\mathfrak{p}$ is not semisimple, it is still true that $H^{2}\mathfrak{p} = 0$ (for a proof, see \cite{woodhouse1997geometric}) and, by \ref{corsesi}, there is always one and only one way of associating a moment to a given canonical action of the Poincar\'e algebra on a symplectic manifold. Moreover, if the action is transitive, there is a unique canonical map of this symplectic manifold to one of the coadjoint orbits. Hence the coadjoint orbits give all of the elementary systems with Poincar\'e symmetry. The physical interpretation of a classical system having such a transitive action of the group of spacetime isometries is that it does not have any structure other than its spacetime structure \cite{souriau1970strucure}. These are the elementary relativistic particles.

The most general linear funcional acting on $\mathfrak{p}$ can be written as
\begin{equation}\nonumber
f(q^b\tensor{L}{_b^a}+T^{a}) = -\frac{1}{2}M^{ab}L_{ab}-p_{a}T^{a}
\end{equation}
for some constants $M^{ab} = -M^{ba}$ and $p_{a}$. If we recquire that this pairing be invariant under Lorentz transformations and translations of the origin in $\mathbb{M}$ (under which $L$ and $T$ transform in the obvious way), i.e.,
\begin{equation}\nonumber
\frac{1}{2}\tilde{M}^{ab}\tilde{L}_{ab}+\tilde{p}_{a}\tilde{T}^{a} = \frac{1}{2}M^{ab}L_{ab}+p_{a}T^{a},
\end{equation}
then we discover that the components of $M$ and $p$ transform as tensors under Lorentz transformations, in the way suggested by their indices, but under a change of origin $x\mapsto x+K$ one must take
\begin{equation}\label{Mtransf}
(M^{ab},p_{c})\mapsto (M^{ab}+p^{a}K^{b}-K^{a}p^{b},p_{c}),
\end{equation}
where we recognize the transformation law for the components of the total angular momentum $M$ if we take $p$ to be the four momentum. Furthermore, this shows that we may adopt a characterisation of $f\in \mathfrak{p}$ which is independent of the choice of origin in $\mathbb{M}$ if we trade $M$ and $T$ for the tensor
\begin{equation}\label{tensorpoincare}
f^{ab} = M^{ab} + p^{a}x^{b} - x^{a}p^{b},
\end{equation}
so that its value at $x\in \mathbb{M}$ gives the total angular momentum about $x$. This aligns well with our intuition about what the moment map should be in this case. One sees that no information is lost since
\begin{equation}\nonumber
\begin{split}
f(0) = M \\
\frac{1}{3}\nabla_{b}f^{ab} = \frac{1}{3}(p^{a}\nabla_{b}x^{b} - p^{b}\nabla_{b}x^{a}) = p^{a},
\end{split}
\end{equation}
so we might just as well start from $f$ and recover $M$ and $p$. Thus we think of $\mathfrak{p}$ as the vector fields of the form (\ref{poincarealgebra}) and $\mathfrak{p}^{*}$ as the tensor fields of the form (\ref{tensorpoincare}) on $\mathbb{M}$, with the pairing given by
\begin{equation}\label{pairf}
f(X) = -\frac{1}{2}f^{ab}\nabla_{a}X_{b} - \frac{1}{3}X_{a}\nabla_{b}f^{ab},
\end{equation}
avoiding explicit reference to the origin in $\mathbb{M}$ (as this expression for the pairing is constant throughout $\mathbb{M}$). In this approach, one can find the coadjoint action in the following way. Let $\rho_{t}$ be the flow generated by $X\in \mathfrak{p}$. Then vector field $X$ is invariant under the isometry $\rho_{t}$ for each $t$ since, for any $m\in\mathbb{M}$,
\begin{equation}\nonumber
\rho_{t*}X(m) = \frac{d}{ds}\rho_{t}(\rho_{s}(m))\Big|_{s=0} = \frac{d}{ds}\rho_{s}(\rho_{t}(m))\Big|_{s=0} = X(\rho_{t}(m)),
\end{equation}
ie., $\rho_{t*}X = X$. Now the adjoint action of $P$ on itself acts on this isometry by $\rho_{t}\mapsto \rho\rho_{t}\rho^{-1},\,\, \forall \rho\in P$. This is true for any real $t$, so one may speak of the flow $\rho\rho_{t}\rho^{-1}$, which is generated by the vector field $(\rho_{*}^{-1})(X)$. Indeed, this vector field is invariant under the flow, as
\begin{equation}\nonumber
(\rho\rho_{t}\rho^{-1})_{*}(\rho_{*}^{-1}X) = \rho_{*}^{-1}\rho_{t*}\rho_{*}\rho_{*}^{-1}X = \rho_{*}^{-1}X.
\end{equation}
The inversion on the order of composition of maps arises because we assumed from the begining that $P$ acts on $\mathbb{M}$ on the right. We conclude that the derivative of the adjoint action in the group is given by $ad_{\rho}X = \rho_{*}^{-1}X$. Hence by (\ref{pairf}) the coadjoint action should be given simply by
\begin{equation}\nonumber
ad_{\rho}^{*}f = \rho_{*}f.
\end{equation}
Here, the $*$ symbol means dual map on the LHS while it means the differential map on the RHS. We shall consider first the coadjoint orbits with respect to the identity component $P_{0}$ of the Poincar\'e group, leaving the discrete transformations of parity and time-reversal to be implemented later. Thus at this stage the coajoint orbits are seen to be of the form $\{\rho_{*}f|\rho\in P_{0}\}$.

Because of the tensor character of the coefficients of $f$ under the coadjoint action, we see that $m^{2}:=p_{a}p^{a}$ is constant throughout each orbit. Consider the case $m^{2}>0$ (massive particles). Then the Pauli-Lubanski vector,
\begin{equation}\nonumber
S^{a} = \frac{1}{2}\epsilon^{abcd}p_{b}M_{cd}
\end{equation}
is orthogonal to $p$ ($\epsilon$ denotes the Levi-Civita symbol). Again, because of the way in which the group acts on $f\in\mathfrak{p}^{*}$, the length of $S$ is invariant (the antisymmetrization of indices of M in the definition of $S$ guarantees that this also holds for translations (\ref{Mtransf})). This implies that the spin $s$, defined through $m^{2}s^{2} = -S_{a}S^{a}$, is also a constant on each orbit. The simplest situation is when $s=0$ (scalar particle), which happens when the Pauli-Lubanski vector vanishes ($S$ can't be light-like since $p_{a}p^{a}>0$). This means that $p^{[a}f^{bc]} = 0$ on all of $\mathbb{M}$. Since $\mathbb{M}$ has trivial topology, this implies $f^{ab} = 2p^{[a}w^{b]}$ for some four-vector field $w$. Thus we have
\begin{equation}\nonumber
M^{ab}+2p^{[a}x^{b]} = f^{ab} = 2p^{[a}w^{b]} \Rightarrow 2p^{[a}(w-x)^{b]} = M^{ab} = \text{constant},
\end{equation}
which is solved if there exists a constant four-vector $q^{a}$ such that $w - x + q = \lambda(x)p, \,\forall x$, since then $2p^{[a}(w-x)^{b]} = -2p^{[a}q^{b]}$. Thus there is a unique timelike geodesic, namely $\{x = q + \lambda p| \, \lambda \in \mathbb{R}\}$, where $f$ = 0:
\begin{equation}\nonumber
f^{ab}|_{x = q + \lambda p} = 2p^{[a}w^{b]}|_{x = q + \lambda p} = 2p^{[a}(x - q + \tilde{\lambda} p)^{b]}|_{x = q + \lambda p} = 0.
\end{equation}
We call this the centre-of-mass world-line, since it is the locus of vanishing total angular momentum. Notice that, given $m^{2}>0$ and the centre-of-mass worldline, $p$ can be recovered as the tangent four-vector normalized so that $p_{a}p^{a} = m^2$ and $f$ can be recovered as $f^{ab} = 2p^{[a}x^{b]} - 2p^{[a}q^{b]}$, where $q$ is an arbitrary point on the centre-of-mass worldline. However, this fixes only the direction of $p$, but not its orientation (whether it is future- or past-pointing), so there is a two-to-one relation between the coadjoint orbits in $\mathfrak{p}^{*}$ with positive $m^{2}$ and zero spin and the timelike geodesics in $\mathbb{M}$. The action of $P_{0}$ on the coadjoint orbits becomes its natural action on $\mathbb{M}$, and we see that any timelike geodesic can be related to any other through one of its elements, but the causality of $p$ cannot be changed. Thus there are two orbits with spin zero for each $m^{2}>0$. We denote them by $M_{0m}^{+}$($p$ is future-pointing) and $M_{0m}^{-}$($p$ is past-pointing).

Finally, we employ proposition \ref{reductionorbit}, just as in the example of the sphere. First, each centre-of-mass worldline allows for a parametrization $(p_{a},q^{b})$ subject to the equivalence relation $(p_{a},q^{b})\sim (p_{a},q^{b}+\lambda p^{b}), \,\, \forall \lambda\in\mathbb{R}$. Hence consider the projection
\begin{equation}\nonumber
\pi : C_{0m} = \{(p,q)\in T^{*}\mathbb{M}| p_{a}p^{a} = m^{2}\} \longrightarrow C_{0m}/\sim.
\end{equation}
Because of the above discussion, the 7-dimensional hypersurface $C_{0m}$ has two components depending on the time orientation of $p$. The restriction of $\pi$ to one of the components gives the surjection of proposition \ref{reductionorbit} onto one of the orbits. To see how this works, note that we defined $P_{0}$ by its action on $\mathbb{M}$: $q\mapsto \rho (q)$ so that it acts naturally on $T^{*}\mathbb{M}$ by $(p, q)\mapsto (\rho_{*}p,\rho (q))$. But this agrees with how $P_{0}$ acts on $\mathfrak{p}^{*}$ when $\mathfrak{p}^{*}$ is parametrized by $(p,q)$ as above. Thus the infinitesimal generators $X \in \mathfrak{p}$ lift to 
\begin{equation}\nonumber
X' = X^{a}\frac{\partial}{\partial q^{a}} - p_{b}\nabla_{a}X^{b}\frac{\partial}{\partial p_{a}}
\end{equation}
on $T^*\mathbb{M}$, as is clear from remembering that $\nabla_{a}X^{b} = \tensor{L}{_a^b}$, and this is tangent to $C_{0m}$ since $e^{X}\in P_{0}$ preserves the length of $p$. Therefore $\mathfrak{p}$ acts on $C_{0m}$ by $X\mapsto X'$ and $\pi _{*}X'$ gives the correct action in the coadjoint orbit. Finally, let us evaluate the pairing $f(X)$ at a point in the centre-of-mass worldline. There $f^{ab} = 0$, $\frac{1}{3}\nabla_{b}f^{ab} = p^{a}$ and thus, by (\ref{pairf}),
\begin{equation}\nonumber
f(X) = -p_{a}X^{a} = X'\lrcorner\theta ',
\end{equation}
where $\theta ' = -p_{a}dq^{a}|_{C_{0m}}$. Thus we conclude that the orbits $M_{0m}^{+}$ and $M_{0m}^{-}$ are the reductions of the two components of the presymplectic manifold $(C_{0m}, dq^{a}\wedge dp_{a})$. From the physical point of view, this is the classical phase space of a relativistic massive scalar particle.

Let us now repeat the procedure when $m^{2}, s^{2}>0$. In this case, there is no geodesic on which $f=0$ (which one interprets as meaning that a spinning particle has nonzero angular momentum around any event). However, one can define the centre-of-mass worldline by $x^{a} = m^{-2}M^{ab}p_{b}+\lambda p^{a}$, since over it
\begin{equation}\nonumber
f^{ab}p_{b}|_{CM} = M^{ab}p_{b}+p^{a}(m^{-2}M^{bc}p_{c}+\lambda p^{b})p_{b}-p^{b}(m^{-2}M^{ac}p_{c}+\lambda p^{a})p_{b} = 0,
\end{equation}
remembering that $p_{a}p^{a} = m^{2}$. This means that the orbital component of the angular momentum vanishes on this geodesic. On this line, $f$ becomes
\begin{equation}\nonumber
f^{ab}|_{CM} = M^{ab}+\frac{p^{a}p_{c}M^{bc}}{m^{2}}-\frac{p^{b}p_{c}M^{ac}}{m^{2}} = \frac{p^{e}p_{c}M^{fg}}{2m^{2}}(\epsilon^{abcd}\epsilon_{defg}) = \frac{\epsilon^{abcd}p_{c}S_{d}}{m^{2}},
\end{equation}
which is a constant. Therefore, after fixed the centre-of-mass worldline, the only freedom left to fix in order to specify a point $f\in\mathfrak{p}^{*}$ is the direction of $S$ (remember that $S_{a}S^{a}=-m^{2}s^{2}$). The only constraint on the direction of $S_{a}$ is that $p_{a}S^{a}=0$, so that $S$ can be any vector in the three-space of the centre-of-mass rest frame (where $p = (m,0,0,0)$). Now, since $S$ is transforms as a four-vector under the coadjoint action, it follows that, even after $p$ is fixed to be in the time direction, there is always a Lorentz transformation which rotates the direcion of $S$ into any other, so all of the elements on the dual of the Poincar\'e Lie algebra which differ only by the direction of $S$ lie in the same coadjoint orbit. We see that each orbit is specified by the centre-of-mass worldline, by the sign of $p_{0}$, and by the direction of $S$. Therefore there are again two orbits for each value of $m$, $s$, which we denote $M_{sm}^{+(-)}$, which have the structure of bundles over $M_{0m}^{+(-)}$ with fibre $S^{2}$. This connects with example \ref{sphereso3} in that we model the phase space corresponding to the spin degrees of freedom as a sphere of radius $s$.

The connection can be made more explicit using a spinor parametrization of the orbits \cite{woodhouse1997geometric}. For this one uses the following mapping from Minkowski space $\mathbb{M}$ to $2\times 2$ hermitian matrices, which in turn are equivalent to $\mathbb{S}\otimes\bar{\mathbb{S}}$, where $\mathbb{S}$ is the space of two-spinors: it takes the four-vector $(X_{\mu})$ to
\begin{equation}\nonumber
\begin{split}
(X_{A\bar{A}}) =\left(\begin{array}{cc}
X_{0}+X_{3} &
X_{1}-iX_{2} \\
X_{1}+iX_{2} &
X_{0}-X_{3}
\end{array}\right).
\end{split}
\end{equation}
This then can be extended to a spinor representation of tensors of any order\footnote{The spinor indices $A, \bar{A}$ transform, respectively, under the $SL(2,\mathbb{C})$ spinor representation of the Lorentz group and its conjugate representation.}. We refer to \cite{penrose1984spinors} for the extensions of this story and proofs of the necessary results. Note that
\begin{equation}\nonumber
f^{ab}|_{CM} = \frac{\epsilon^{abcd}p_{c}S_{d}}{m^{2}} = \left(\star\frac{2pS}{m^{2}}\right)^{ab},
\end{equation}
where $\star$ is the Hodge dual, $(\star f)^{ab} = \frac{1}{2}\epsilon^{abcd}f_{cd}$. Now, it is a result from spinor theory that any bivector has a spinor equivalent of the form
\begin{equation}\nonumber
\phi^{AB}\epsilon^{\bar{A}\bar{B}}+\psi^{\bar{A}\bar{B}}\epsilon^{AB},
\end{equation}
with $\phi^{AB}$ and $\psi^{AB}$ symmetric. In case the bivector is real, $\psi^{\bar{A}\bar{B}} = \bar{\phi}^{\bar{A}\bar{B}}$. Moreover, the dual bivector is then given by
\begin{equation}\nonumber
-i\phi^{AB}\epsilon^{\bar{A}\bar{B}}+i\psi^{\bar{A}\bar{B}}\epsilon^{AB}.
\end{equation}
We further note that any $n$-index symmetric spinor can be written as the symmetrized product of $n$ one-index spinors. Therefore, we may express the spinor equivalent of the real bivector $2pS/m^{2}$ as
\begin{equation}\label{fsp1}
-sz^{(A}w^{B)}\epsilon^{\bar{A}\bar{B}}-s\bar{z}^{(\bar{A}}\bar{w}^{\bar{B})}\epsilon^{AB},\,\,\,\text{with}\,\,\, w^{A} = \pm \frac{\sqrt{2}}{m}p^{A\bar{A}}\bar{z}_{\bar{A}},
\end{equation}
where the sign in the expression for $w^{A}$ is the same as the sign of $p_{0}$. Since $f^{ab}|_{CM}$ is the dual of this, it is given by
\begin{equation}\label{fsp2}
isz^{(A}w^{B)}\epsilon^{\bar{A}\bar{B}}-is\bar{z}^{(\bar{A}}\bar{w}^{\bar{B})}\epsilon^{AB},
\end{equation}
and in particular one can check that equations (\ref{fsp1}) and (\ref{fsp2}) give the correct normalization $S_{a}S^{a}=-m^{2}s^{2}$. Additionally, note that although $w$ is fixed by (\ref{fsp1}) there is an additional freedom in choosing $z$ reflected by the fact that $(z,w)$ and $(\lambda z,\lambda^{-1}w)$ give the same $f$. We fix this by also imposing $z_{A}w^{A}=1$.

Hence, a specified orbit with $m^{2},s^{2}>0$ is given by a timelike geodesic (just as in the scalar case) and a two-component spinor $z^{A}$, defined up to a phase. Therefore it is the quotient of $T^{*}\mathbb{M}\times\mathbb{S}$ by the equivalence relation $(p_{a},q^{b},z^{C})\sim (p_{a}, q^{b}+\lambda p^{b},e^{i\phi}z^{C}), \,\, \forall \lambda, \phi \in \mathbb{R}$. Again, we consider the projection
\begin{equation}\label{csm}
\pi : C_{sm} = \{(p,q, z)\in T^{*}\mathbb{M}\times\mathbb{S}| p_{a}p^{a} = m^{2}, \sqrt{2}p_{A\bar{A}}z^{A}\bar{z}^{\bar{A}}=\pm m\} \longrightarrow C_{sm}/\sim.
\end{equation}
This time the hypersurface $C_{sm}$ is $9$-dimensional, and has two components which differ by the time orientation of $p$. To apply proposition \ref{reductionorbit}, we recall that $P_{0}$ has a natural action on $T^{*}\mathbb{M}\times\mathbb{S}$ by $(p, q, z)\mapsto (\rho_{*}p,\rho (q),\rho_{*}z)$. Again, this agrees with the coadjoint action of $P_{0}$ when $\mathfrak{p}^{*}$ is parametrized by $(p,q,z)$ as above. The infinitesimal form of this action is given by the generators  
\begin{equation}\nonumber
X' = X^{a}\frac{\partial}{\partial q^{a}} - p_{b}\nabla_{a}X^{b}\frac{\partial}{\partial p_{a}} + z^{A}\frac{1}{2}\nabla_{A\bar{B}}X^{B\bar{B}}\frac{\partial}{\partial z^{B}}+\bar{z}^{\bar{A}}\frac{1}{2}\bar{\nabla}_{\bar{A}B}\bar{X}^{\bar{B}B}\frac{\partial}{\partial\bar{z}^{\bar{B}}},
\end{equation}
which is clear if one remembers that $\nabla_{a}X^{b} = \tensor{L}{_a^b}$. The action preserves the lengths of both $p$ and $z$, so it preserves $C_{sm}$ and projects to the coajoint action on the orbits. To obtain the symplectic structure from proposition \ref{reductionorbit}, we only need the potential satisfying condition {\bf (iii)}. This is the restriction to $TC_{sm}$ of 
\begin{equation}\label{thetaspin}
\theta ' =\pm \frac{\sqrt{2}is}{m}p_{A\bar{A}}(z^{A}d\bar{z}^{\bar{A}}-\bar{z}^{\bar{A}}dz^{A})-p_{a}dq^{a}.
\end{equation}
To see this, recall that the pairing on $\mathfrak{p}^{*}\times\mathfrak{p}$ is given by $f(X) = -\frac{1}{2}f^{ab}\nabla_{a}X_{b} - \frac{1}{3}X_{a}\nabla_{b}f^{ab}$. The second term is given by $-p_{a}X^{a} = X'\lrcorner (-p_{a}dq^{a})$, just as in the scalar case, while the first term is given, on the centre-of-mass worldline, by
\begin{equation}\nonumber
\begin{split}
-\frac{1}{2}f^{ab}\nabla_{a}X_{b}|_{CM} & \sim \pm \frac{1}{2}\frac{is\sqrt{2}}{m}p_{A\bar{A}}\left[z^{A}\bar{z}^{\bar{C}}\bar{\nabla}_{\bar{C}B}\bar{X}^{\bar{A}B}-\bar{z}^{\bar{A}}z^{C}\nabla_{C\bar{B}}X^{A\bar{B}}\right] \\
&= \left( z^{A}\frac{1}{2}\nabla_{A\bar{B}}X^{B\bar{B}}\frac{\partial}{\partial z^{B}}+\bar{z}^{\bar{A}}\frac{1}{2}\bar{\nabla}_{\bar{A}B}\bar{X}^{\bar{B}B}\frac{\partial}{\partial\bar{z}^{\bar{B}}}\right) \lrcorner \left[ \pm \frac{\sqrt{2}is}{m}p_{A\bar{A}}(z^{A}d\bar{z}^{\bar{A}}-\bar{z}^{\bar{A}}dz^{A})\right],
\end{split}
\end{equation}
using (\ref{fsp1}) and (\ref{fsp2}). Thus we find that the orbits $M_{sm}^{+(-)}$ are the reductions of the two components of the presymplectic manifold $(C_{sm}, d\theta '|_{C_{sm}})$ with $\theta '$ given by equation (\ref{thetaspin}). From the physical point of view, this is the classical phase space of a relativistic massive spinning particle.

Consider now a spining particle at rest, i.e., take the subspace of $M_{sm}^{+}$ in which $q$ is fixed and $(p_{\mu})=(m,0,0,0)$. Substituting this in (\ref{csm}), we see that it should be the reduction of the submanifold labelled by the $z^{A}$ such that
\begin{equation}\nonumber
\begin{split}
m = \sqrt{2}p_{A\bar{A}}z^{A}\bar{z}^{\bar{A}} = 
\sqrt{2}
\left(\begin{array}{cc}
z^{0} &
z^{1}
\end{array}\right)
\frac{1}{\sqrt{2}}
\left(\begin{array}{cc}
m &
0 \\
0 &
m
\end{array}\right)
\left(\begin{array}{cc}
\bar{z}^{0} \\
\bar{z}^{1}
\end{array}\right)
\,\,\,\,\Rightarrow\,\,\,\, z^{0}z^{1}+\bar{z}^{0}\bar{z}^{1}=1.
\end{split}
\end{equation}
Similarly, the restriction to this submanifold of $d\theta '$ is
\begin{equation}\nonumber
\frac{\sqrt{2}is}{m}p_{A\bar{A}}(dz^{A}\wedge d\bar{z}^{\bar{A}}-d\bar{z}^{\bar{A}}\wedge dz^{A}) = 2is(dz^{0}\wedge d\bar{z}^{0}+dz^{1}\wedge d\bar{z}^{1}).
\end{equation}
This connects the phase space of a spinning particle of spin $s$ to the two-sphere of radius $s$, and simplifies the `reduction' procedure to that of the previous example.

The third type of orbit we look at is the one in which $p_{a}p^{a}=0$ and $S^{a}=s p^{a}$ for some constant $s$, which we call helicity of the orbit (note that, although the sign of the spin is non-physical, because it is $s^{2}$ which parametrizes the orbits, the sign of the helicity is a relevant, labelling distinct orbits). Once again, we use the same technique as in the cases above. For a fixed value of $s$, there are again two orbits depending on the sign of $p_{0}$. In the one where $p_{0}>0$, the proportionality of $S^{a}$ and $p^{a}$ gives a constraint
\begin{equation}\label{ps}
\epsilon^{abcd}p_{b}M_{cd}=2s p^{a}
\end{equation}
which, in turn, implies that the spinor representation of $f^{ab}=M^{ab}+p^{a}x^{b}-p^{b}x^{a}$ can be written as\footnote{See chapter 6 of \cite{penrose1984spinors}.}
\begin{equation}\label{ftwistor}
iz^{(A}\bar{\pi}^{B)}\epsilon^{\bar{A}\bar{B}}-i\bar{z}^{(\bar{A}}\pi^{\bar{B})}\epsilon^{AB},
\end{equation}
where $\bar{\pi}_{A}\pi_{\bar{A}}=p_{A\bar{A}}$ and $z^{A}=\omega^{A}-ix^{A\bar{A}}\pi_{\bar{A}}$, with $\omega^{A}$ a constant spinor. Finally, relation (\ref{ps}) implies the normalization $z^{A}\bar{\pi}_{A}+\bar{z}^{\bar{A}}\pi_{\bar{A}}=2s$. Hence the two spinors $\omega^{A}$ and $\pi^{\bar{A}}$ specify the tensor field $f^{ab}$. Conversely, equation (\ref{ftwistor}) specifies $(\omega^{A},\pi_{\bar{A}})$ up to the equivalence $(\omega^{A},\pi_{\bar{A}})\sim (e^{i\phi}\omega^{A},e^{i\phi}\pi_{\bar{A}}), \,\,\, \phi\in\mathbb{R}$. So we want to use the projection
\begin{equation}\nonumber
\pi : C_{s0}=\{(\omega,\pi)\in\mathbb{S}\times\bar{\mathbb{S}}|\omega^{A}\bar{\pi}_{A}+\bar{\omega}^{\bar{A}}\pi_{\bar{A}}=2s\}\Longrightarrow C_{s0}/\sim .
\end{equation}
Again we apply proposition \ref{reductionorbit}: the adjoint action of $P_{0}$ on the orbits, which is given by $\rho_{*}f$, becomes, in the parametrization chosen, $ad_{\rho}^{*}(\omega^{A},\pi_{\bar{A}}) = (\rho_{*}\omega^{A},\rho_{*}\pi_{\bar{A}})$. Notice, however, that although under Lorentz transformations it is given by $\tensor{L}{_a^b}=\nabla_{a}X^{b}, \,\, X\in \mathfrak{p}$ just as before, when considering translations by some four-vector $V^{a}$, $z^{A}$ is invariant, while the definition of $\omega$ gives $\omega^{A'}=z^{A'}+i(x+T)^{A\bar{A}}\pi_{\bar{A}} = \omega^{A}+iT^{A\bar{A}}\pi_{\bar{A}}$. Therefore, the infinitesimal form of this action on $\mathbb{S}\times\bar{\mathbb{S}}$ is given by
\begin{equation}\nonumber
X'=2\Re\left(\omega^{A}\frac{1}{2}\nabla_{A\bar{B}}X^{B\bar{B}}\frac{\partial}{\partial \omega^{B}}-\pi_{\bar{A}}\frac{1}{2}\bar{\nabla}_{\bar{A}B}\bar{X}^{\bar{B}B}\frac{\partial}{\partial \pi_{\bar{B}}}+iT^{A\bar{A}}\pi_{\bar{A}}\frac{\partial}{\partial\omega^{A}}\right),
\end{equation}
where $T^{a}=X^{a}(0)$ is the translation part of $X\in\mathfrak{p}$. Again, these induce flows preserving $C_{s0}$ and
\begin{equation}\nonumber
f(X) = -\frac{1}{2}f^{ab}\nabla_{a}X_{b}-\frac{1}{3}X_{a}\nabla_{b}f^{ab} \sim -i\omega^{A}\bar{\pi}^{B}\frac{1}{2}\nabla_{A\bar{B}}X^{B\bar{B}}+i\bar{\omega}^{\bar{A}}\pi^{\bar{B}}\frac{1}{2}\bar{\nabla}_{\bar{A}B}\bar{X}^{\bar{B}B}-\pi_{\bar{A}}\bar{\pi}_{A}T^{A\bar{A}},
\end{equation}
which is given by $X'\lrcorner\theta'$ for
\begin{equation}\nonumber
\theta' = i\bar{\pi}_{A}d\omega^{A}-i\pi_{\bar{A}}d\bar{\omega}^{\bar{A}}-i\omega^{A}d\bar{\pi}_{A}+i\bar{\omega}^{\bar{A}}d\pi_{\bar{A}}.
\end{equation}
Thus we conclude that this orbit, denoted $M_{s0}^{+}$, is the symplectic reduction of 
\begin{equation}\nonumber
(C_{s0},-id\omega^{A}\wedge d\bar{\pi}_{A}+id\bar{\omega}^{\bar{A}}\wedge d\pi_{\bar{A}}|_{C_{s0}}).
\end{equation}

There are other coadjoint orbits in $\mathfrak{p}^{*}$ which do not fit in any of the types investigated. These describe particle dymanics of types not observed in nature, some of them having, for example, negative $m^{2}$ and thus moving faster than light.

In order to obtain elementary systems with respect to the full Poincar\'e group $P$, it is necessary to implement the discrete symmetries of time reversal and spatial reflection. In doing this, we choose the action so as to make it agree at the quantum level with the usual conventions from quantum field theory. In particular, it should happen that some of these transformations should become anti-unitary operators, which is the case if we take the classical action to be anti-canonical, ie., such that $\rho^{*}\omega = -\omega$. In light of this, we update the definition of elementary systems to mean a transitive action of the symmetry group by either canonical or anti-canonical transformations. For the case of the elementary particles, we then adopt the definition
\begin{equation}\label{pacts}
\text{ad}_{\rho}^{*}f = \xi\rho_{*}, \,\, \rho\in P,
\end{equation}
where $\xi = 1$ if $\rho$ preserves the arrow of time and $\xi = -1$ if it reverses. In the case of massive particles of arbitrary spin (scalar particles included), this is preserves both $M_{sm}^{+}$ and $M_{sm}^{-}$, with the transformations that reverse time acting anti-canonically. An additional symmetry, however, is also preserved by quantization: consider the symplectic manifold obtaining by changing the sign of the symplectic structure on $M_{sm}^{-}$, so that this orbit is given by the reduction of the component on which $p_{0}<0$ of $(C_{sm},-d\theta '|_{C_{sm}})$. In this case,
\begin{equation}\nonumber
C:M_{sm}^{+}\to M_{sm}^{-}:f\mapsto -f
\end{equation}
is a symplectic diffeomorphism. Upon quantization, this is recognized as the charge quantization symmetry, so that $M_{sm}^{+}$ and $M_{sm}^{-}$ are seen to describe the phase spaces of a massive spinning relativistic particle and its associated antiparticle, respectively, the two being canonically equivalent and interchanged by charge conjugation. Therefore we shall consider the total symplectic manifold $M_{sm}$, given by the two-component reduction of the presymplectic manifold $(C_{sm},\omega ')$, where
\begin{equation}\label{omegaspin}
\omega ' =
\begin{cases}
d\theta ' &, \,\,\,\,\, \text{if} \,\,\,\, p_{0}>0\\
-d\theta ' &, \,\,\,\,\, \text{if} \,\,\,\, p_{0}<0
\end{cases}
,
\end{equation}
in both the massive scalar and spinning cases, which is an elementary system with respect to $P\times\mathbb{Z}_{2}$, the group generated by the isometries of flat space-time and charge conjugation.

In the massless case, we again define the action of the whole of $P$ by equation (\ref{pacts}). However, charge conjugation does not give rise to an independent quantum symmetry, so the actual total phase space should be taken to be the two-component symplectic manifold $M_{s0}=M_{s0}^{+}\cup M_{-s0}^{+}$, on which again the elements of $P$ act transitively by canonical or anti-canonical transformations as determined by whether or not they preserve time-orientation, the transformations that reverse only space orientation exchanging the two components. Physically, the two components are given by the two helicity states of the particle differing by the sign of $s$.

\end{poincare}

\pagebreak

\subsection{Prequantization}
Given the classical phase space and the functions which make up the observables of interest, how do we construct the Hilbert space of quantum states and the operators which are the quantum counterparts of the relevant physical observables? This is the question of quantization, and it was Dirac who first laid out the rules which provide the guidelines for possible answers to this question. \cite{dirac1981principles}. A geometrical interpretation of his axioms is the following. One starts with a symplectic manifold $(M,\omega)$, where the symplectic structure endows the smooth functions on $M$ with the structure of a Lie algebra (the Poisson bracket, proposition \ref{poissonalgebra}), which we denote by $C^{\infty}(M)$. Hamilton's equation then provides a morphism from this to the algebra of Hamiltonian vector fields, $V^{H}(M)$ which generate symplectomorphisms of phase space. We can then recast equation (\ref{hamiltonpoisson}) as describing the exact sequence
\begin{equation}\label{algebraexact}
0 \longrightarrow \mathbb{R} \longrightarrow C^{\infty}(M) \longrightarrow V^{H}(M) \longrightarrow 0,
\end{equation}
where $\mathbb{R}$ is seen as the abelian Lie algebra of constant functions. In this way classical observables generate \emph{classical symmetries}: flows on $M$ which preserve the symplectic structure. Conversely, in the quantum system the states are normalised vectors in a Hilbert space, $\mathcal{H}$ and the observables form a subalgebra $\mathcal{O}$ of $\text{gl}(\mathcal{H})$, consisting of operators which generate \emph{quantum symmetries}: flows on $\mathcal{H}$ preserving the Hermitian structure. One then asks that there be an association $\mathcal{Q}: C^{\infty}(M)\rightarrow \mathcal{O}$ such that
\begin{align} \label{dirac}
\begin{split}
\bullet \, & \mathcal{Q} : C^{\infty}(M)\rightarrow \mathcal{O} \text{ is $\mathbb{R}$-linear}. \\ 
\bullet \, & [\mathcal{Q}(f),\mathcal{Q}(g)] = -i\hbar \mathcal{Q}(\{f,g\})\\
\bullet \, & f \text{ is a constant function} \Rightarrow \mathcal{Q}(f) = f\mathds{1}_{\mathcal{H}} \text{ acts by multiplication by $f$}, \\
\end{split}
\end{align}
where $[\cdot,\cdot]$ denotes the commutator of linear maps. We will very often use the notation $\hat{f} := \mathcal{Q}(f)$. Note that, even though the classical dynamics is essentially governed by $V^{H}(M) = C^{\infty}(M)/\mathbb{R}$, it is not enough to `quantize' this algebra, as the action of the constants is relevant at the quantum level (for example, to implement the uncertainty principle).

As we will briefly discuss, Dirac's quantization rules cannot be fully realized for all classical observables and there are a number of subtleties involved, but in geometric quantization the procedure of \emph{prequantization} gives a first step towards the solution. It starts by constructing a line bundle with hermitian structure and compatible connection over the symplectic manifold $(M,\omega)$, whose curvature form is given by $\omega$. Then sections of this bundle will work as wavefunctions, giving the quantum states, and the action of the quantum observables on such states can be guessed from the action of the classical observables on the base manifold. This subsection will discuss how this prescription solves a few of Dirac's requirements.

Notice, however, that the so-constructed wavefunctions depend on all coordinates of phase space (both position and momentum, for example). This is at odds with our experience of quantum mechanics where the wavefunction depends only on position of momentum. This choice of determining sections which depend on half of the coordinates is the next step of quantization, and will be taken up in subsection \ref{subsecquant}.

We start from a few properties of Hermitian line bundles with connection. For simplicity we will often restrict the discussion to the case when the base manifold is connected and simply connected. For generalisations of the procedure we refer to \cite{kostant1970quantization}. Let $\pi : B\rightarrow M$ be a complex line bundle with connection $\nabla$ and Hermitian structure $(\cdot ,\cdot )$. For $X\in V(M)$ and $s:M\rightarrow B$ a smooth section, we will often think of the connection as $\nabla _{X}s = X\lrcorner Ds$, where $D: C^{\infty}(B) \rightarrow C^{\infty}(T^{*}M \otimes B)$ is such that
\begin{equation}\nonumber
\begin{split}
D(s + s') = Ds + Ds'\\
D(fs) = df\otimes s + fDs
\end{split}\,\,\, ,
\end{equation}
for any $f\in C^{\infty}(M),\,\, s, s' \in C^{\infty}(B)$. Given a local section $s\in C^{\infty}(B|_U)$, we can define a local potential one-form $\Gamma _{s}$ by $Ds = \Gamma _{s}\otimes s$. We also require that the connection should preserve the Hermitian structure, meaning that
\begin{equation}\nonumber
\nabla _{X}(s, s') = (\nabla _{X}s, s')+(s, \nabla _{X}s'), \,\, \forall X\in V(M), \,\, \forall s, s'\in C^{\infty}(B).
\end{equation}
This implies that, for $s$ a section of modulus one, the potential is pure imaginary,
\begin{equation}\nonumber
\Gamma _{s} = -i\frac{\theta_{s}}{\hbar}
\end{equation}
for some $\theta _{s} \in T^{*}M$. Even though the conection potentials are only defined locally, the curvature form $-i\omega / \hbar$, where $\omega=d\theta_{s}$, is independent of the local section $s$.
\begin{automorphism}
Let $B\rightarrow M$, $L\rightarrow N$ be line bundles with connection and compatible Hermitian structure. A morphism between $B$ and $L$ is a pair $(f,r)$, where $f: M\rightarrow N$ is differentiable and $r:m\mapsto r(m)\in \text{Hom}(L_{f(m)},B_m)$ depends smoothly on $m\in M$. If $s$ is a section of $L$, then the pullback $f^{*}s$ is the section of $B$ defined by
\begin{equation}\nonumber
f^{*}s(m) = r(m)s(f(m)).
\end{equation}
The morphism is said to preserve the connection if
\begin{equation}\nonumber
(\nabla _{X}f^{*}s)(m) = r(m)(\nabla _{f_{*}X}s)(f(m)),\,\, \forall m\in M,
\end{equation}
and to preserve the Hermitian structure if
\begin{equation}\nonumber
(f^{*}s, f^{*}s')(m) = (s, s')(f(m)),\,\, \forall m\in M,
\end{equation}
for any $X \in V(M)$ and $s, s'\in C^{\infty}(L)$. Note that if $f: B\rightarrow B$ is a fibre-preserving diffeomorphism and on each fibre $f:B_{m} \rightarrow B_{f(m)}$ is a linear isomorphism (in local charts, multiplication by a nonzero complex number) then $f$ determines a bundle morphism $(\bar{f},r)$ by $\bar{f}(m) = \pi f(s(m))$ for some section $s$ and $r(m) = (f|_m)^{-1}$. In this case, $f$ is called an \emph{automorphism} of $B$.
\end{automorphism}
Obviously, any automorphism preserving the connection also preserves the curvature form. In the case of interest, when $\omega$ is nondegenerate, it defines a symplectic structure on the base manifold $M$ and hence we see that the automorphisms of $B$ preserving the connection give rise to symplectomorphisms of the base $(M,\omega)$. Physically, it determines a canonical transformation of the classical system. In fact, the following holds:
\begin{exactgroup}
Let $\pi :B\rightarrow M$ be a Hermitian line bundle with compatible connection and nondegenerate curvature form over a connected, simply connected base $M$, $H$ be the group of symplectomorphisms of the base $(M,\omega)$ and $P$ the group of automorphisms of $B$ preserving the connection and the Hermitian structure. Then there is an exact sequence of group morphisms
\begin{equation}\label{groupexact}
1 \longrightarrow S^{1} \longrightarrow P \longrightarrow H \longrightarrow 1 .
\end{equation}

\end{exactgroup}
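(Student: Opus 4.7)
The plan is to construct both maps explicitly, to verify injectivity at $S^{1}$ and exactness at $P$ essentially by unwinding the definitions, and then to treat the surjectivity of $P\to H$ separately, which is where the topological assumption on $M$ is used.

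First I would define the two maps. The inclusion $\iota\colon S^{1}\to P$ sends $e^{i\alpha}$ to the automorphism acting on every fibre by multiplication by $e^{i\alpha}$; a constant scalar commutes with any covariant derivative and has unit modulus, so this lies in $P$, and the assignment is a group homomorphism that is clearly injective. The projection $\pi_{*}\colon P\to H$ sends $f$ to its induced base diffeomorphism $\bar f$. Because $f$ preserves the connection it preserves the curvature two-form, which in any local unitary gauge equals $-i\omega/\hbar$, so $\bar f^{*}\omega=\omega$ and $\bar f\in H$; and $\pi_{*}\circ\iota$ is constant at $\mathrm{id}_{M}$, giving $\operatorname{Im}\iota\subseteq\ker\pi_{*}$.

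For exactness at $P$ the remaining task is $\ker\pi_{*}\subseteq\operatorname{Im}\iota$. Given $f\in\ker\pi_{*}$, the map $f$ preserves every fibre and so acts on $B_{m}$ by multiplication by some scalar $\phi(m)\in\mathbb{C}^{\times}$, defining a smooth function $\phi\colon M\to\mathbb{C}^{\times}$. The unitarity condition $(fs,fs)=(s,s)$ on any section forces $|\phi|\equiv 1$. Preservation of the connection, combined with $\bar f_{*}X=X$, yields $\nabla_{X}(\phi s)=\phi\,\nabla_{X}s$ for every section $s$ and vector field $X$; expanding by Leibniz gives $(X\phi)\,s=0$ pointwise, hence $d\phi=0$. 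Since $M$ is connected, $\phi$ is a global constant in $S^{1}$, so $f=\iota(\phi)$.

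The main obstacle is surjectivity of $\pi_{*}$. Given $\rho\in H$, the plan is to form the pullback Hermitian line bundle with connection $\rho^{*}B\to M$; its curvature is $\rho^{*}(-i\omega/\hbar)=-i\omega/\hbar$ by the symplecticity of $\rho$, so it matches that of $B$. I would then produce an isomorphism $\varphi\colon B\to\rho^{*}B$ of Hermitian line bundles with connection and compose it with the canonical fibrewise-linear isometry $\rho^{*}B\to B$ covering $\rho$; this yields the desired $f\in P$ with $\bar f=\rho$. The existence of $\varphi$ is the classification statement that, on a connected and simply connected base, a Hermitian line bundle with compatible connection is determined up to isomorphism by its curvature. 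This factors into two ingredients: first, since $\pi_{1}(M)=0$ implies $H^{2}(M,\mathbb{Z})$ is torsion-free, equality of curvatures in de Rham cohomology forces equality of first Chern classes and hence a topological isomorphism of the underlying line bundles; second, once the bundles are identified, two compatible connections with the same curvature differ by a closed 1-form on $M$, which by $\pi_{1}(M)=0$ is exact $d\chi$, so the globally defined unitary gauge transformation $e^{-i\chi/\hbar}$ intertwines them while preserving the Hermitian structure. This verifies the classification and thus surjectivity.
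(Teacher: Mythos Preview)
Your argument is correct but follows a genuinely different route from the paper's. The paper treats surjectivity and the kernel computation in a single constructive step: given a base symplectomorphism $f$ and a chosen value $r_{0}=r(m_{0})$, it \emph{builds} the automorphism $r$ by propagating $r_{0}$ along curves via parallel transport (which any connection-preserving map must respect), and then checks that this prescription is path-independent by spanning a loop with a surface and applying Stokes' theorem; the consistency condition is exactly $f^{*}\omega=\omega$. Thus the paper never invokes Chern classes or a classification theorem---the simply-connected hypothesis enters only to guarantee the spanning surface. Your approach, by contrast, separates the kernel computation (a clean gauge-theoretic argument that a connection-preserving vertical automorphism is a constant phase) from surjectivity, which you reduce to the classification statement that on a connected, simply connected base a Hermitian line bundle with compatible connection is determined up to isomorphism by its curvature. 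This is more structural and makes transparent why the result can fail when $\pi_{1}(M)\neq 0$ (the $H^{1}(M,S^{1})$ ambiguity in the classification), whereas the paper's parallel-transport argument is more elementary and self-contained, requiring no input from characteristic-class theory.
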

\begin{proof}
This result means that, as we said above, each automorphism of the line bundle with connection and Hermitian structure gives a symplectomorphism of the base manifold and, moreover, each automorphism is uniquely determined by a choice of one such symplectomorphism plus a complex number of modulus one. Because the bundle automorphisms will correspond to quantum symmetries and the base symplectomorphisms will correspond to classical symmetries, this is an important result for quantization.

The idea of the proof is to use the fact that an automorphism should preserve parallel transport to show how it is determined by the canonical transformation of the base and a complex number. First, recall the concept of parallel transport: let $\gamma :(-1,1)\rightarrow M$ be a smooth path on $M$. Then $\gamma ^{*}B$ is a line bundle over $(-1,1)$ with the connection defined by: if $s$ is a local section over an open $U \subset M$ with connection potential $\Gamma _{s}$, then we take the connection potential associated to $\gamma ^{*}s$ to be $\Gamma _{\gamma ^{*}s} = \gamma ^{*}\Gamma _{s}$. Note that, since $(-1,1)$ is one-dimensional, $\Gamma _{\gamma ^{*}s} = A_{\gamma ^{*}s}dt$ for some function $A_{\gamma ^{*}s}$ of the standard coordinate $t$ on $(-1,1)$. We then say that a section $s' = \psi s$ is parallel along $\gamma$ if $\nabla _{\partial / \partial t}s' = 0$,
\begin{equation}\nonumber
0 = \nabla _{\partial / \partial t}(\psi s) = \frac{\partial}{\partial t}\lrcorner [d(\psi\circ\gamma) \otimes \gamma ^{*}s +(\psi\circ\gamma) A_{\gamma ^{*}s}dt\otimes \gamma ^{*}s] = \left[\frac{d}{dt}\psi + A_{\gamma ^{*}s}\psi \right]\gamma ^{*}s,
\end{equation}
So that giving $\psi (\gamma(0))$ and asking for $\psi (\gamma (t)) \gamma ^{*}s$ to define a section parallel along $\gamma$ is equivalent to an initial value problem whose only solution is
\begin{equation}\nonumber
\begin{split}
\psi (\gamma (t)) &= \psi (\gamma (0))\exp\left(-\int _{0}^{t} A_{\gamma ^{*}s}(t')dt'\right) = \psi (\gamma (0))\exp\left(-\int _{\gamma(0)}^{\gamma(t)} \Gamma _{s}\right) \\
&=\psi (\gamma (0))\exp\left(\frac{i}{\hbar}\int _{\gamma(0)}^{\gamma(t)} \theta _{s}\right).
\end{split}
\end{equation}
The relevant fact for the moment is that the solution is unique.

Now, let $(f,r)$ be an automorphism of $B$, $\gamma :(-1,1)\rightarrow M$ be a smooth curve on $M$ and $f\circ\gamma $ its image curve. Since the map preserves the connection, it should take parallel transport along $\gamma $ to parallel transport along $f\circ \gamma $. Hence, if $\psi_{1}(t) \in B_{f\circ\gamma (t)}$ is the image of $\psi_{1}(0)\in B_{f\circ\gamma (0)}$ by parallel transport along $f\circ\gamma $, than
\begin{equation}\label{parallel}
\psi_{2}(t) = r(\gamma (t))\psi_{1}(t)\in B_{\gamma (t)}
\end{equation}
is the image of $\psi_{2}(0)\in B_{\gamma (0)}$ by parallel transport along $\gamma$. Hence the requirement that the connection is preserved fixes $r(\gamma(t))$ uniquely up to the choice of a nonzero scalar $r_{0} = r(\gamma(0))$. Since $M$ is connected, we can use this to define $r$ by giving its value on any one particular point $m_{0}\in M$: for any other $m\in M$, we join $m_{0}$ to $m$ by a curve $\gamma$ and define $r(m)$ by equation (\ref{parallel}). To do this consistently, one has to verify that the value $r(m)$ thus defined is independent of the path $\gamma$ chosen. Equivalently, for $s$ some local unit section and any closed curve $\gamma$ based at $\gamma(0) = m_{0}$, one must have the consistency condition
\begin{equation}\nonumber
\exp\left(\frac{i}{\hbar}\oint_{\gamma}\theta_{s}\right)r(m_{0})\psi(f\circ\gamma(0)) = r(m_{0})\exp\left(\frac{i}{\hbar}\oint_{f\circ\gamma}\theta_{s}\right)\psi(f\circ\gamma(0)),
\end{equation}
for arbitrary $\psi$. The way this condition is stated, it depends on the local trivialization $s$. A way to fix this is to take a surface $\sigma$ spanning $\gamma$ ($\partial \sigma = \gamma$), which exists since $M$ is simply connected, so that applying Stokes' theorem one gets the condition
\begin{equation}\label{omegainteg}
\exp\left(\frac{i}{\hbar}\int_{\sigma}\omega\right) = \exp\left(\frac{i}{\hbar}\int_{f\circ\sigma}\omega\right) = \exp\left(\frac{i}{\hbar}\int_{\sigma}f^{*}\omega\right).
\end{equation}
Since the surface $\sigma$ is arbitrary and $f$ is continuous, it implies $f^{*}\omega = \omega$, ie., $f$ defines a symplectomorphism of the base $(M, \omega)$.

Finally, note that the requirement that $(f,r)$ should preserve the hermitian structure implies that the scalar $r_{0} = r(m_{0})$ be of unit-modulus. Hence every automorphism of $B$ determines a canonical transformation of the base $(M,\omega)$ and any two automorphisms defining the same canonical transformation differ by a multiplication by a unit-modulus complex number.
\end{proof}

The main idea of prequantization is to identify the exact sequence (\ref{algebraexact}) with the `infinitesimal version' of (\ref{groupexact}), i.e., with the exact sequence induced between the Lie algebras, which then realizes the Poisson bracket-commutator correspondence.

\begin{isompreq}
Let $\pi :B\to M$ be a hermitian line bundle with compatible connection such that its curvature two-form $-i\omega /\hbar$ is nondegenerate. Then $Lie(P) \isoeq C^{\infty}(M)$, where $P$ is the group of automorphisms of $B$ and $C^{\infty}(M)$ is the Poisson algebra on $(M, \omega)$.
\end{isompreq}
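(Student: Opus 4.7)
The plan is to differentiate the exact sequence of groups (\ref{groupexact}) from the previous proposition at the identity, match the resulting short exact sequence of Lie algebras with the Poisson-algebra sequence (\ref{algebraexact}), and construct an explicit splitting whose Lie-bracket compatibility is checked using the curvature identity for the connection. At the Lie algebra level, the central $S^{1}$ differentiates to the constants $\mathbb{R}$, and the quotient $H$ of symplectomorphisms of $(M,\omega)$ differentiates to the infinitesimal symplectomorphisms, i.e., vector fields $X$ with $\mathcal{L}_{X}\omega=0$, which is $V^{LH}(M)$ by Proposition \ref{localderived}. Since $M$ is simply connected, the remark following that proposition gives $V^{LH}(M)=V^{H}(M)$, so the derived sequence
\begin{equation*}
0 \longrightarrow \mathbb{R} \longrightarrow \mathrm{Lie}(P) \longrightarrow V^{H}(M) \longrightarrow 0
\end{equation*}
already has the same shape as (\ref{algebraexact}).

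Next, I would build an explicit $\mathbb{R}$-linear map $\mathcal{Q}:C^{\infty}(M)\to\mathrm{Lie}(P)$. Guided by the parallel-transport construction in the proof of the previous proposition, where an automorphism was determined by its base projection together with a single fibre scalar propagated along integral curves, for each $f\in C^{\infty}(M)$ I would let $\mathcal{Q}(f)$ be the infinitesimal automorphism acting on sections $s$ of $B$ by
\begin{equation*}
\mathcal{Q}(f)\,s \;=\; \nabla_{X_{f}}s \;-\; \tfrac{i}{\hbar}\,f\,s .
\end{equation*}
By construction this projects to $X_{f}$ on the base, and restricted to constants $f=c$ it reduces to the central generator $-\tfrac{ic}{\hbar}\mathds{1}$, so $\mathcal{Q}$ is a well-defined $\mathbb{R}$-linear section of the short exact sequence.

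The technical heart of the argument is verifying
\begin{equation*}
[\mathcal{Q}(f),\mathcal{Q}(g)] \;=\; \mathcal{Q}(\{f,g\}) .
\end{equation*}
Expanding the commutator of these two first-order operators produces three pieces: $[\nabla_{X_{f}},\nabla_{X_{g}}]$, cross-terms of the form $[\nabla_{X_{f}},g]=X_{f}(g)=\{f,g\}$ (and its swap), and a vanishing $[f,g]$-piece. The first is rewritten using $[X_{f},X_{g}]=X_{\{f,g\}}$ from Proposition \ref{poissonalgebra} together with the curvature identity that $[\nabla_{X},\nabla_{Y}]-\nabla_{[X,Y]}$ equals multiplication by the curvature two-form $-i\omega/\hbar$ evaluated on $(X,Y)$. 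The remaining scalar contributions then combine, via the relation between $\omega(X_{f},X_{g})$ and $\{f,g\}$ established in Proposition \ref{poissonalgebra}, into precisely $-\tfrac{i}{\hbar}\{f,g\}$ multiplying $s$, matching $\mathcal{Q}(\{f,g\})s$. The delicate part is the bookkeeping of $i$'s, $\hbar$'s and the factor-of-two normalization relating $\omega$ as a two-form to its value as a bilinear form: it is exactly this constraint that forces the coefficient of $f$ in the definition of $\mathcal{Q}$, and I expect this is where most of the care in a careful write-up goes.

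Finally, injectivity is immediate: $\mathcal{Q}(f)=0$ forces $X_{f}=0$ by projection, hence $f$ is constant by (\ref{hamiltonpoisson}), and the fibre part then forces $f=0$. Surjectivity is a diagram chase on the two short exact sequences: any $\xi\in\mathrm{Lie}(P)$ projects to some $X_{h}\in V^{H}(M)$, so $\xi-\mathcal{Q}(h)$ lies in the kernel $\mathbb{R}\subset\mathrm{Lie}(P)$, i.e., is a central element of the form $-\tfrac{ic}{\hbar}\mathds{1}=\mathcal{Q}(c)$ for a constant $c$, whence $\xi=\mathcal{Q}(h+c)$. Thus $\mathcal{Q}$ is the required Lie-algebra isomorphism.
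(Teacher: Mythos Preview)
Your approach is correct but genuinely different from the paper's. The paper realises $\mathrm{Lie}(P)$ concretely as vector fields on the total space $B$: working in a local unit trivialisation $s$ with fibre polar angle $\phi$, it shows that a one-parameter family of automorphisms is generated by a vector field of the form $\xi_{f}=X_{f}+\hbar^{-1}(X_{f}\lrcorner\theta_{s}-f)\,\partial/\partial\phi$, and then checks $[\xi_{f},\xi_{g}]=\xi_{\{f,g\}}$ by a direct computation with Lie derivatives of the local potential $\theta_{s}$. You instead identify $\mathrm{Lie}(P)$ with first-order differential operators on sections and use the curvature identity $[\nabla_{X},\nabla_{Y}]-\nabla_{[X,Y]}=-\tfrac{i}{\hbar}\omega(X,Y)$ directly. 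Your route is cleaner and foreshadows the operator formula $\hat{f}=-i\hbar\nabla_{X_{f}}+f$ and the commutator computation that the paper carries out later in its proof of the Dirac conditions; the paper's route, by contrast, keeps the geometric picture of lifting Hamiltonian flows to $B$ explicit and avoids having to say what it means for an operator on sections to lie in $\mathrm{Lie}(P)$.

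Two small points. First, your explicit formula has a sign slip: with the paper's conventions ($2\omega(X_{f},X_{g})=\{f,g\}$ and curvature $-i\omega/\hbar$), the map that satisfies $[\mathcal{Q}(f),\mathcal{Q}(g)]=\mathcal{Q}(\{f,g\})$ is $\mathcal{Q}(f)=\nabla_{X_{f}}+\tfrac{i}{\hbar}f$, not $-\tfrac{i}{\hbar}f$; you correctly anticipated that this bookkeeping is the delicate part. Second, differentiating the group exact sequence to obtain a Lie-algebra exact sequence tacitly uses that the projection $P\to H$ is a submersion at the identity, which is not entirely automatic in this infinite-dimensional setting; your explicit construction of $\mathcal{Q}$ and the diagram chase sidestep this, so the argument still goes through.
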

\begin{proof}
Let $f_{t}$ be a one-parameter family of automorphisms of $B$ and let $s$ be a local unit section. Since each $f_{t}$ preserves the hermitian structure, $f_{t}^{*}s = e^{i\alpha_{t}}s$. Now, since it also preserves the connection, we must have
\begin{equation}\nonumber
\begin{split}
\left[f_{t}^{*}\left(-\frac{i}{\hbar}\theta_{s}\right)\right]& \otimes f_{t}^{*}s = f_{t}^{*}Ds = D(f_{t}^{*}s) = D(e^{i\alpha_{t}}s) = \left(id\alpha_{t} - \frac{i}{\hbar}\theta_{s}\right)\otimes f_{t}^{*}s \\
&\Rightarrow \,\,\,\, f_{t}^{*}\theta_{s} = -\hbar d\alpha_{t} + \theta_{s} \,\,\,\, \Rightarrow \,\,\,\, \mathcal{L}_{\xi}\theta_{s} = -\hbar d\dot{\alpha},
\end{split}
\end{equation}
where $\xi $ is the infinitesimal generator of $f_{t}$ and $\dot{\alpha} = (d\alpha_{t}/dt)|_{t=0}$. Hence, if one defines $\phi_{\xi} = \hbar\dot{\alpha}+\xi\lrcorner\theta_{s}$, then
\begin{equation}\nonumber
\xi\lrcorner\omega + d\phi_{\xi} = [\xi\lrcorner d\theta_{s} + d(\xi\lrcorner\theta_{s})] + \hbar d\dot{\alpha} = 0,
\end{equation}
so that $f_{t}$ projects to the hamiltonian flow generated by $\phi_{\xi}$ on $(M,\omega)$. Note that $\phi_{\xi}$ is independent of the unit trivialization $s$: if we substitute $s\mapsto e^{iu}s$, then $\theta_{s}\mapsto du + \theta_{s}$ while
\begin{equation}\nonumber
f_{t}^{*}(e^{iu}s) = e^{i(f_{t}^{*}u - u)}e^{i\alpha_{t}}(e^{iu}s),
\end{equation}
so that $\alpha\mapsto\alpha + \xi\lrcorner du$. One can prove that the map $\xi\mapsto\phi_{\xi}$ is bijective \cite{kostant1970quantization}. This is the correspondence between generators of automorphisms of the line bundle and functions on the base symplectic manifold.

From the perspective of quantization, it is more natural to think of the inverse map: to each $f\in C^{\infty}(M)$, one associates the vector
\begin{equation}\label{liftham}
\xi_{f} = X_{f} - \dot{\alpha}\frac{\partial}{\partial\phi} = X_{f} + \frac{(X_{f}\lrcorner\theta_{s} - f)}{\hbar}\frac{\partial}{\partial\phi},
\end{equation}
where we adopt polar coordinates $z = re^{i\phi}$ on the fibres (once in the trivialization determined by s). Compare this with the previous map $\xi\mapsto\phi_{\xi} = h\dot{\alpha}+\xi\lrcorner\theta_{s} = h\dot{\alpha}+X_{\phi_{\xi}}\lrcorner\theta_{s}$. Linearity is clear from expression (\ref{liftham}) and that it is also a Lie algebra morphism follows from
\begin{equation}\nonumber
\begin{split}
[\xi_{f},\xi_{g}] &= [X_{f},X_{g}] + \frac{X_{f}(X_{g}\lrcorner\theta_{s} - g) - X_{g}(X_{f}\lrcorner\theta_{s}-f)}{\hbar}\frac{\partial}{\partial \phi} \\
&= X_{\{f,g\}} +\frac{([X_{f},X_{g}]\lrcorner\theta_{s}+X_{g}\lrcorner\mathcal{L}_{X_{f}}\theta_{s}-\{f,g\}) - (X_{g}\lrcorner d(X_{f}\lrcorner\theta_{s}) - X_{g}\lrcorner(-X_{f}\lrcorner d\theta_{s}))}{\hbar}\frac{\partial}{\partial\phi} \\
&= X_{\{f,g\}} + \frac{X_{\{f,g\}}\lrcorner\theta_{s} - \{f,g\}}{\hbar}\frac{\partial}{\partial\phi} \\
&= \xi_{\{f,g\}}.\nonumber
\end{split}
\end{equation}
\end{proof}

The construction gives a representation of the Poisson algebra $C^{\infty}(M)$ as the generators of automorphisms of $B$. Now, denoting by $\xi^{f}_{t}$ the flow of the vector field $\xi_{f}$ on $B$, we may define an action $\hat{\rho}^{f}_{t}$ on the sections $C^{\infty}(B)$ by
\begin{equation}\label{flowsection}
\xi^{f}_{t}(\hat{\rho}^{f}_{t}s(m)) = s(\rho^{f}_{t}m),
\end{equation}
where $\rho^{f}_{t}$ denotes the flow of $X_{f}$ in $M$. Since sections of this line bundle are like wavefunctions, the infinitesimal action of the flow $\hat{\rho}^f_t$ should be identified with the \emph{quantum operator} $\hat{f}$.

\begin{prequantization}
Let $B\to M$ be a hermitian line bundle with compatible connection and nondegenerate curvature form $-i\omega/\hbar$. Then we define the Hilbert space $\mathcal{H}$ of prequantization to be the space of square integrable sections $s:M\to B$ with inner product
\begin{equation}\nonumber
\langle s, s' \rangle = \int_{M}(s,s')\epsilon,
\end{equation}
where $\epsilon = (\omega/2\pi\hbar)^{n}$ is the canonical volume-form on the $2n$-dimensional symplectic manifold $(M,\omega)$. Given a classical observable $f\in C^{\infty}(M)$, the quantum observable $\hat{f}$ is defined as the generator of the action $\hat{\rho}^{f}_{t}$ through
\begin{equation}\label{absf}
\frac{d\hat{\rho}^{f}_{t}}{dt} = \frac{i}{\hbar}\hat{\rho}^{f}_{t}\hat{f}.
\end{equation}
\end{prequantization}

\begin{preqdirac}
Explicitly, the quantum operators are given by
\begin{equation}\label{expquant}
\hat{f}s = -i\hbar\nabla_{X_{f}}s + fs.
\end{equation}
The map $f\mapsto \mathcal{Q}(f) = \hat{f}$ satisfies the Dirac's quantization conditions (\ref{dirac}).
\end{preqdirac}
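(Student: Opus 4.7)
The plan has two parts: first derive the explicit formula (\ref{expquant}) from the flow definition (\ref{absf}) by unpacking everything in a local trivialization, and then verify Dirac's three axioms directly from that formula, with the commutator identity doing the bulk of the work.

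For the formula, fix a local unit section $s_{0}$ with potential $\theta_{s_{0}}$ and write $s = \psi s_{0}$. According to (\ref{liftham}) the lift $\xi_{f}$ has horizontal part $X_{f}$ and vertical part $\hbar^{-1}[(X_{f}\lrcorner\theta_{s_{0}}) - f]\,\partial/\partial\phi$, so its flow acts on fibre coordinates by $z \mapsto z\, e^{i\Phi_{t}(m)}$ and on the base by $m \mapsto \rho^{f}_{t}m$, where $\Phi_{t}(m) = \int_{0}^{t}\hbar^{-1}[(X_{f}\lrcorner\theta_{s_{0}}) - f]\circ\rho^{f}_{t'}\, dt'$. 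Imposing (\ref{flowsection}) on $\hat\rho^{f}_{t}s(m) = \psi_{t}(m)s_{0}(m)$ yields $\psi_{t}(m) = \psi(\rho^{f}_{t}m)\,e^{-i\Phi_{t}(m)}$, and differentiating at $t = 0$ produces $(d/dt)|_{t=0}\hat\rho^{f}_{t}s = [X_{f}(\psi) - (i/\hbar)\psi(X_{f}\lrcorner\theta_{s_{0}})]\,s_{0} + (i/\hbar)fs$. The bracketed term is precisely $\nabla_{X_{f}}s$ in the $s_{0}$-trivialization, so (\ref{absf}) rearranges to $\hat{f}s = -i\hbar\nabla_{X_{f}}s + fs$.

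Turning to the axioms: $\mathbb{R}$-linearity in $f$ is immediate, since $X_{f}$ depends $\mathbb{R}$-linearly on $f$ through $X_{f}\lrcorner\omega = -df$ and multiplication by $f$ is also linear. For a constant $c$, $dc = 0$ together with nondegeneracy of $\omega$ forces $X_{c} = 0$, hence $\hat{c}s = cs$. For the commutator, expand $\hat{f}\hat{g}s - \hat{g}\hat{f}s$ from (\ref{expquant}) using the Leibniz rule for $\nabla$: the $fgs$ contributions cancel, leaving $[\hat{f},\hat{g}]s = -\hbar^{2}[\nabla_{X_{f}},\nabla_{X_{g}}]s - i\hbar[X_{f}(g) - X_{g}(f)]s$. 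Now apply the curvature identity $[\nabla_{X_{f}},\nabla_{X_{g}}]s - \nabla_{[X_{f},X_{g}]}s = -(i/\hbar)\omega(X_{f},X_{g})s$, together with $[X_{f},X_{g}] = X_{\{f,g\}}$, $\omega(X_{f},X_{g}) \propto \{f,g\}$, and $X_{f}(g) - X_{g}(f) = 2\{f,g\}$, all from Proposition \ref{poissonalgebra}. The two $\{f,g\}s$ terms (from the curvature and from the Leibniz piece) combine to $-i\hbar\{f,g\}s$ alongside $-\hbar^{2}\nabla_{X_{\{f,g\}}}s$, which factors as $-i\hbar\bigl[-i\hbar\nabla_{X_{\{f,g\}}}s + \{f,g\}s\bigr] = -i\hbar\widehat{\{f,g\}}s$.

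The main obstacle is the bookkeeping in the commutator computation: the curvature piece and the Leibniz piece both contribute $\{f,g\}s$ multiplication terms, and it is precisely the normalisation of the curvature as $-i\omega/\hbar$ that makes their combination equal to $-i\hbar\{f,g\}s$ rather than something else. A different coefficient in the curvature would leave a residual multiplication term that could not be absorbed into $\widehat{\{f,g\}}$, so the whole proof hinges on that one normalisation choice.
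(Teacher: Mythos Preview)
Your proposal is correct and follows essentially the same route as the paper: both derive (\ref{expquant}) by writing $\hat{\rho}^{f}_{t}$ explicitly in a local trivialization via the lifted vector field (\ref{liftham}) and differentiating at $t=0$, then verify the Dirac axioms directly from the formula, with the commutator handled by expanding, invoking the curvature identity $[\nabla_{X_{f}},\nabla_{X_{g}}]-\nabla_{[X_{f},X_{g}]}=-\tfrac{2i}{\hbar}\omega(X_{f},X_{g})$, and using $[X_{f},X_{g}]=X_{\{f,g\}}$ together with $2\omega(X_{f},X_{g})=\{f,g\}$. Your closing remark about the curvature normalisation being the crux mirrors the paper's own comment that this is precisely why the $-i/\hbar$ factor was introduced.
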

\begin{proof}
Let $s' = \psi s$. Then, by the definition (\ref{flowsection}),
\begin{equation}\label{flowsection2}
\begin{split}
\psi(\rho^{f}_{t}m)s(\rho^{f}_{t}m) &= s'(\rho^{f}_{t}m) = \xi^{f}_{t}(\hat{\rho}^{f}_{t}s'(m))\\
\Rightarrow [\hat{\rho}^{f}_{t}(\psi s)](m) &= \psi(\rho^{f}_{t}m)\exp\left(-i\int_{0}^{t}\dot{\alpha}dt'\right)s(m)\\
&=\psi(\rho^{f}_{t}m)\exp\left(-\frac{i}{\hbar}\int_{0}^{t}[(X_{f}\lrcorner\theta_{s} - f)(\rho^{f}_{t'}m)]dt'\right)s(m).
\end{split}
\end{equation}
Substituting in (\ref{absf}),
\begin{equation}\nonumber
\begin{split}
\frac{i}{\hbar}[\hat{\rho}^{f}_{t}\hat{f}(\psi s)](m) &= \frac{d}{dt}[\hat{\rho}^{f}_{t}(\psi s)(m)] = \frac{d}{dt}\left[\psi(\rho^{f}_{t}m)\exp\left(-\frac{i}{\hbar}\int_{0}^{t}[(X_{f}\lrcorner\theta_{s} - f)(\rho^{f}_{t'}m)]dt'\right)s(m)\right] \\
&= \left[X_{f}(\psi) - \frac{i}{\hbar}(X_{f}\lrcorner\theta_{s} - f)\psi\right](\rho^{f}_{t}m)\exp\left(-\frac{i}{\hbar}\int_{0}^{t}(X_{f}\lrcorner\theta_{s}-f)dt'\right)s(m)\\
&=\frac{i}{\hbar}\hat{\rho}^{f}_{t}\left \{\left[-i\hbar\left(X_{f}(\psi) - \frac{i}{\hbar}X_{f}\lrcorner\theta_{s}\psi\right) + f\psi\right]s\right \}(m)\\
&=\frac{i}{\hbar}\{\hat{\rho}^{f}_{t}[-i\hbar\nabla_{X_{f}}(\psi s) + f(\psi s)]\}(m),
\end{split}
\end{equation}
which proves equation (\ref{expquant}).

Let us consider the Dirac rules (\ref{dirac}), linearity is a consequence of the linearity of Hamilton's equation and of expression (\ref{expquant}); also, if $f = c$ is a constant, then
\begin{equation}\nonumber
X_{f}\lrcorner\omega = -df = 0 \,\, \Rightarrow \,\, X_{f} = 0 \,\, \Rightarrow \hat{f}s = fs.
\end{equation}
We evaluate the commutator explicitly, as this is the reason for introducing the $-i/\hbar$ factor in the expressions for the curvature form:
\begin{equation}\nonumber
\begin{split}
[\mathcal{Q}(f),\mathcal{Q}(g)]s &= \hat{f}\hat{g}s - \hat{g}\hat{f}s 
= -\hbar^{2}[\nabla_{X_{f}},\nabla_{X_{g}}]s -i\hbar [X_{f}(g)s - X_{g}(f)s + g\nabla_{X_{f}}s - f\nabla_{X_{g}}s] +\\
 & \hspace{6.7cm} + f[-i\hbar\nabla_{X_{g}}s + gs] - g[-i\hbar\nabla_{X_{f}}s + fs]\\
&= -\hbar^{2}\left(\nabla_{[X_{f},X_{g}]}s - \frac{2i\omega(X_{f},X_{g})}{\hbar}s\right)-i\hbar\{f,g\}s+i\hbar X_{g}\lrcorner df s\\
&=-i\hbar[-i\hbar\nabla_{X_{\{f,g\}}}s-X_{g}\lrcorner (X_{f}\lrcorner\omega + df)s+\{f,g\}s]\\
&=-i\hbar\mathcal{Q}(\{f,g\})s,
\end{split}
\end{equation}
where we wrote the curvature form in terms of $\omega$ in the second line and used Hamilton's equation for $f$ in the last step.

\end{proof}

Therefore, provided one can construct a prequantum bundle over phase space, the above procedure gives a solution of Dirac's axioms on the space of sections of this bundle. However, it is not true that one can find one such bundle for any given symplectic manifold. One extra necessary condition is the following. From the discussion preceding equation (\ref{omegainteg}), parallel transport along a closed curve $\gamma$ amounts to a transformation on the fibre given by multiplication by
\begin{equation}\nonumber
\exp \left(\frac{i}{\hbar}\oint_{\gamma}\theta_{s}\right),
\end{equation}
in a local frame $s$. If there are two surfaces $\sigma$ and $\sigma '$ intersecting only on the common boundary $\gamma$, this implies
\begin{equation}\nonumber
\begin{split}
\exp\left(\frac{i}{\hbar}\int_{\sigma}\omega\right) = \exp \left(\frac{i}{\hbar}\oint_{\gamma}\theta_{s}\right) = \exp\left(\frac{i}{\hbar}\int_{\sigma '}\omega\right) \,\,\,
\Rightarrow \,\,\, \exp\left(\frac{i}{\hbar}\oint_{\bar{\sigma} \cup \sigma '}\omega\right) = 1,
\end{split}
\end{equation}
where the bar in $\bar{\sigma}$ denotes inverting the orientation. Hence it is necessary that the integral of $\omega$ on any closed surface in $M$ be an integer multiple of $2\pi\hbar$, which is known in Physics as the \emph{Bohr-Sommerfeld quantization rule}. The more technical statement is

\begin{preqcrit}\label{critpreq}
Given a symplectic manifold $(M,\omega)$, there exists a prequantum bundle over $M$, i.e., a Hermitian line bundle with compatible connection whose curvature form is $-i\omega /\hbar$, if, and only if, $(2\pi\hbar)^{-1}\omega$ defines an integer 2nd cohomology class. Moreover, if this is satisfied, then the inequivalent choices of bundle and connection are parametrized by $H^{1}(M,S^{1})$.\footnote{A simple proof can be found in \cite{woodhouse1997geometric}.}
\end{preqcrit}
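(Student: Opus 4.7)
The plan is to recast the problem in \v{C}ech cohomology via a good cover of $M$, so that the prequantum data become transition functions on double overlaps together with local connection potentials on single opens. The necessity direction is essentially already established by the Bohr--Sommerfeld calculation preceding the statement: for any closed oriented surface $\Sigma \subset M$, triangulating and applying the pairwise consistency condition on surfaces sharing a boundary yields that $\int_{\Sigma}\omega$ is an integer multiple of $2\pi\hbar$. By de Rham's theorem (integer de Rham classes are characterized by integer periods on integral 2-cycles), this is equivalent to $[\omega/(2\pi\hbar)] \in H^{2}(M,\mathbb{R})$ lying in the image of $H^{2}(M,\mathbb{Z})$.

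For sufficiency I would fix a good cover $\{U_{i}\}$ (all finite intersections contractible) and, by the Poincar\'e lemma, write $\omega|_{U_{i}} = d\theta_{i}$ on each patch. On $U_{i}\cap U_{j}$ the one-form $\theta_{j}-\theta_{i}$ is closed, hence equals $df_{ij}$ for real functions $f_{ij}$; on triple overlaps the sum $z_{ijk} := f_{ij}+f_{jk}+f_{ki}$ is locally constant. The integrality hypothesis, read through the \v{C}ech--de Rham isomorphism, allows one to adjust the $f_{ij}$ by additive constants so that $z_{ijk}$ takes values in $2\pi\hbar\,\mathbb{Z}$. The $S^{1}$-valued functions $c_{ij} := \exp(i f_{ij}/\hbar)$ then satisfy the cocycle condition $c_{ij}c_{jk}c_{ki}=1$ and define a hermitian line bundle $B\to M$. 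Prescribing the local connection potential $\Gamma_{s_{i}} = -i\theta_{i}/\hbar$ on $U_{i}$ gives a well-defined compatible connection, since the transformation $\theta_{j}=\theta_{i}+df_{ij}$ is precisely what is needed for the potentials to glue under $c_{ij}$, and by construction the curvature is $-i\omega/\hbar$.

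For the parametrization, I would note that if $B$ and $B'$ are two such prequantum bundles with equal curvature, then $B\otimes (B')^{-1}$ is a hermitian line bundle carrying a flat compatible connection; flat $U(1)$-bundles are classified by their holonomy homomorphism $\pi_{1}(M)\to S^{1}$, or equivalently by $H^{1}(M,S^{1})$. Conversely, tensoring any prequantum bundle with a flat $S^{1}$-bundle produces another one with the same curvature, so the set of inequivalent prequantum structures is a torsor over $H^{1}(M,S^{1})$. The main obstacle is the middle step: making precise the passage from ``$\int\omega/(2\pi\hbar)\in\mathbb{Z}$ on every closed $2$-cycle'' to the existence of a choice of primitives $f_{ij}$ whose \v{C}ech coboundary $z_{ijk}$ lies in $2\pi\hbar\,\mathbb{Z}$. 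This is exactly the content of the \v{C}ech--de Rham double complex, and keeping track of the arithmetic of the constants through that reduction is what gives the result genuine content rather than making it a tautology.
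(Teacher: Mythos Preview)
The paper does not supply its own proof of this proposition; it defers entirely to \cite{woodhouse1997geometric}. Your \v{C}ech--de Rham argument is correct and is in fact the standard one carried out in that reference: local potentials $\theta_i$, primitives $f_{ij}$ on double overlaps, and the adjustment of the constant cocycle $z_{ijk}$ into $2\pi\hbar\,\mathbb{Z}$ via the integrality hypothesis, followed by the flat-bundle torsor argument for the parametrization by $H^1(M,S^1)$. You have also correctly flagged the only nontrivial step, namely the passage through the \v{C}ech--de Rham double complex that converts integer periods into an integer-valued \v{C}ech 2-cocycle.
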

Because many of our examples deal with symplectic reduction, the following test will be more straightforward to apply:

\begin{preqreduction}\label{reducequantum}
Let $(M,\omega)$ be the reduction of some presymplectic manifold $(M', d\theta ')$, $\theta '\in \Omega^{1}(M')$. If, for any closed curve $\gamma $ contained in a leaf of the characteristic foliation of $d\theta '$,
\begin{equation}\label{intred}
\frac{1}{2\pi\hbar}\int_{\gamma}\theta '\in\mathbb{Z},
\end{equation}
then $(M,\omega)$ admits a prequantum bundle. Moreover, if $M'$ is simply connected, then this is also a necessary condition.
\end{preqreduction}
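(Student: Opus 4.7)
My plan is to work upstairs on $M'$, where things are concrete because $d\theta'$ is globally exact, and then descend to the quotient by the characteristic foliation $K$. Since the symplectic form $\omega$ on $M = M'/K$ satisfies $\pi^{*}\omega = d\theta'$, a prequantum bundle on $M$ should pull back to a bundle on $M'$ with a connection whose curvature is $-i\pi^{*}\omega/\hbar = -id\theta'/\hbar$, and this is realised tautologically by the trivial bundle $B' = M'\times\mathbb{C}$ with the standard hermitian metric and the connection whose potential in the unit section $s_{0}\equiv 1$ is $-i\theta'/\hbar$.

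For the existence direction, I would then define $B$ as the quotient of $B'$ by the following equivalence relation: $(m_{1}',z_{1})\sim(m_{2}',z_{2})$ whenever $m_{1}',m_{2}'$ lie in the same leaf $L$ of $K$ and there is a path $\gamma$ in $L$ from $m_{1}'$ to $m_{2}'$ with $z_{2} = z_{1}\exp\bigl(\tfrac{i}{\hbar}\int_{\gamma}\theta'\bigr)$. Path-independence of this prescription is exactly the integrality condition (\ref{intred}): by the defining property of the characteristic foliation $d\theta'$ vanishes along $K$, so the connection is flat on each leaf, and two paths between the same endpoints differ by a closed loop whose holonomy $\exp\bigl(\tfrac{i}{\hbar}\oint\theta'\bigr)$ must equal $1$. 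Reducibility of the foliation (already part of the hypothesis that $M$ exists as a manifold) guarantees that $B = B'/\sim$ is a smooth line bundle over $M$. The hermitian structure passes to the quotient because parallel transport along $K$ preserves $|z|$, and the connection descends because its curvature $-id\theta'/\hbar = -i\pi^{*}\omega/\hbar$ is $\pi$-basic; the resulting connection on $B$ then has curvature $-i\omega/\hbar$, as required.

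For the converse, suppose $M'$ is simply connected and a prequantum bundle $B\to M$ exists. Pulling back, $\pi^{*}B\to M'$ has curvature $-id\theta'/\hbar$, which is exact. Applying Proposition \ref{critpreq} upstairs, the class $[(2\pi\hbar)^{-1}d\theta']$ in $H^{2}(M',\mathbb{Z})$ is trivial, and since $M'$ simply connected makes $H^{1}(M',S^{1})$ trivial, the pulled-back bundle is isomorphic to $(M'\times\mathbb{C},\,d-(i/\hbar)\theta')$. Under this isomorphism the parallel transport along any closed curve $\gamma$ contained in a leaf of $K$ must equal the identity on the fibre, because $\pi\circ\gamma$ is constant and $\pi^{*}B$ inherits its transport from $B$; this identifies the holonomy with $\exp\bigl(\tfrac{i}{\hbar}\oint_{\gamma}\theta'\bigr) = 1$, which is precisely (\ref{intred}).

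The main obstacle I expect is the descent step: establishing that $B = B'/\sim$ is genuinely a smooth line bundle with a smooth connection and hermitian metric, rather than just a set-theoretic quotient. This forces one to work in local charts adapted to the reducible foliation, trivialise $\pi$ locally, and check that the upstairs data assemble into the pullback of a well-defined downstairs structure under changes of slice. All the direct verifications with the curvature form and the hermitian pairing are then routine, but bookkeeping the transition between local sections of $B'$ and local sections of $B$ (and the associated gauge ambiguities) is where care is needed.
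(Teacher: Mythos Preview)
Your existence argument is essentially the paper's: the quotient $B'/\!\sim$ and the paper's fibre $B_{m}=\{\psi:\pi^{-1}(m)\to\mathbb{C}\text{ satisfying the transport rule}\}$ are the same object viewed dually, and the descent of the connection is checked in both cases by showing that $\nabla'_{X}$ preserves the condition $\nabla'_{Y}(\psi e)=0$ for $Y\in V_{K}(M')$, using that the curvature $-id\theta'/\hbar$ annihilates $K$. The paper carries out explicitly the curvature computation you defer, but the idea is identical.

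Your converse, however, takes a genuinely different route. You invoke Proposition~\ref{critpreq}: since $M'$ is simply connected, $H^{1}(M',S^{1})=0$, so the pulled-back bundle $\pi^{*}B$ must be isomorphic (as a bundle with connection) to the trivial one with potential $-i\theta'/\hbar$; then triviality of the holonomy follows because $\pi\circ\gamma$ is constant. The paper instead argues directly with Stokes: simple connectedness of $M'$ lets one span $\gamma$ by a surface $\sigma\subset M'$, and since $\gamma$ lies in a single leaf, $\pi(\sigma)$ is a \emph{closed} surface in $M$, whence
\[
\frac{1}{2\pi\hbar}\oint_{\gamma}\theta'=\frac{1}{2\pi\hbar}\int_{\sigma}d\theta'=\frac{1}{2\pi\hbar}\int_{\pi(\sigma)}\omega\in\mathbb{Z}
\]
by the integrality condition on $\omega$ already established for any prequantizable manifold. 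Your argument is cleaner in that it packages everything into the classification statement; the paper's is more self-contained, needing only Stokes and the Bohr--Sommerfeld condition rather than the full uniqueness clause of Proposition~\ref{critpreq}.
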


\begin{proof}
A direct proof of the implication can be given by explicitly constructing a prequantum bundle. Let $K_{m'} = \{X\in T_{m'}M'|X\lrcorner d\theta ' = 0\}$ be the characteristic foliation and $\pi :M'\to M$ be the reduction map. Then we define $B\to M$ to be the bundle whose fiber $B_{m}$ is the space of smooth complex functions $\psi :\pi^{-1}(m)\to \mathbb{C}$ such that, for any piecewise smooth path $\gamma :(-1,1)\times \pi^{-1}(m) \to \pi^{-1}(m): (t,m')\mapsto \gamma_{t}m'$,
\begin{equation}\label{leaves}
\psi (\gamma_{t}m') = \psi (m')\exp \left(\frac{i}{\hbar}\int_{m'}^{\gamma_{t}m'}\theta '\right).
\end{equation}
Clearly, evaluation at some point in the leaf $\pi^{-1}(m)$ provides an isomorphism $B_{m}\to \mathbb{C}$, so the fibres are one-dimensional. Note how this construction is well-defined because different choices of the path $\gamma $ differ by a factor of $\exp (i2\pi n), \,\, n \in \mathbb{Z}$ by assumption. Therefore sections of this bundle are given by colections of functions on all of the leaves of $K$, which then add up to complex functions $\psi : M'\to \mathbb{C}$, such that their restrictions to leaves of $K$ satisfy (\ref{leaves}). Now, complex functions on $M'$ are sections of the trivial bundle $M'\times \mathbb{C}$. Also, let $X\in V_{K}(M')$ be a vector tangent to the characteristic foliation and let $\gamma _{t}$ be its flow. Then, by (\ref{leaves}),
\begin{equation}\nonumber
[X(\psi)](m') = \frac{d}{dt}\psi(\gamma_{t}m') = \frac{d}{dt}\left[ \psi (m')\exp \left(\frac{i}{\hbar}\int_{m'}^{\gamma_{t}m'}\theta '\right) \right] = \frac{i}{\hbar}(X\lrcorner\theta ')\psi (m').
\end{equation}
This means that, if we take $\nabla '$ to be the connection on $M\times \mathbb{C}$ with potential $-i\theta '/\hbar $ in the trivialization $e : m'\mapsto (m', 1)$, then each section of our line bundle $B\to M$ is given by a function $\psi : M'\to \mathbb{C}$ such that
\begin{equation}\nonumber
\nabla '_{X}(\psi e) = \left[ X(\psi) - \frac{i}{\hbar}(X\lrcorner\theta ')\psi \right]e = 0, \,\,\, \forall X\in V_{K}(M').
\end{equation}

We define the connection on $B$ by using $\nabla '$. Let $\psi : M'\to \mathbb{C}$ be the function corresponding to a section $s\in C^{\infty}(B)$. Then $\nabla _{Y}s$ is the section corresponding to the function $\phi$ such that $(\phi e) = \nabla _{Z}'(\psi e)$, where $\pi _{*}Z = Y$. This is independent of the choice of $Z\in V(M')$ since
\begin{equation}\nonumber
\pi_{*}Z = \pi_{*}W \,\, \Rightarrow \,\, \exists X\in V_{K}(M')| W = Z + X \,\, \Rightarrow \,\, \nabla _{W}'(\psi e) = \nabla _{Z + X}'(\psi e) = \nabla_{Z}'(\psi e),
\end{equation}
since $\psi $ should define a section of $B$. To see that this definition maps sections of $B$ to sections of $B$, we use that the curvature of $\nabla '$ is $-i (d\theta ')/\hbar$, so that, if $X\in V_{K}(M')$, $Z\in V(M')$, and $\nabla_{X}(\psi e) = 0, \forall X\in V_{K}(M')$, then
\begin{equation}\nonumber
\nabla_{X}'[\nabla_{Z}' (\psi e) ] = \nabla_{Z}'[\nabla_{X}'(\psi e)] + \nabla_{[X,Z]}'(\psi e) - \frac{2i}{\hbar}d\theta '(X,Z)e = 0,
\end{equation}
since $X\in V_{K}(M')$ ($\Rightarrow X\lrcorner d\theta ' = 0$) and $[X,Z]\in V_{K}(M')$ as $K$ is integrable. Therefore $\nabla_{Z}(\psi e)$ also defines a section of $B$ if $(\psi e)$ does.

To calculate the curvature of $\nabla $ we need only that $\pi _{*}[Z,W] = [\pi_{*}Z, \pi_{*}W]$. For $Z, W\in V(M')$ and $s$ the section of $B$ associated with $\psi$,
\begin{equation}\nonumber
\nabla _{\pi_{*}Z}\nabla_{\pi_{*}W} s - \nabla _{\pi_{*}W}\nabla_{\pi_{*}Z} s - \nabla_{[\pi_{*}Z,\pi_{*}W]}s 
\end{equation}
should correspond to the function $\phi$, where
\begin{equation}\nonumber
\begin{split}
(\phi e) &= \nabla_{Z}'[\nabla_{W}' (\psi e) ] - \nabla_{W}'[\nabla_{Z}'(\psi e)] - \nabla_{[Z,W]}'(\psi e) = \frac{2i}{\hbar}d\theta '(Z,W)e \\
&= \frac{2i}{\hbar}\omega(\pi_{*}Z,\pi_{*}W)e,
\end{split}
\end{equation}
so the curvature is indeed given by the symplectic form on $(M,\omega)$. Finally, we take the hermitian structure on $B$ to be the one inherited from the standard one in $M'\times\mathbb{C}$, where $(e,e) = 1$.

Conversely, if $(M, \omega)$ admits a prequantum bundle, we have seen that necessarily the integral of $\omega$ on a closed surface is an integer multiple of $2\pi\hbar$. Let $\gamma$ be a closed curve on a leaf $\pi^{-1}(m)$ of $K$. If one assumes that $M'$ is simply connected, then one can find a surface $\sigma \subset M' $ such that $\partial \sigma = \gamma$. Now, the image of $\gamma \subset \pi^{-1}(m)$ is $m$, so $\pi(\sigma)\in M$ is closed. Thus
\begin{equation}\nonumber
\frac{1}{2\pi\hbar}\oint_{\gamma}\theta ' = \frac{1}{2\pi\hbar}\int_{\sigma}d\theta ' = \frac{1}{2\pi\hbar}\int_{\pi(\sigma)}\omega \in \mathbb{Z},
\end{equation}
so equation (\ref{intred}) follows.

\end{proof}

\begin{preqsu2}\emph{Quantization of spin}

The existence of a prequantum bundle implies a quantization of spin. Let us first take elementary systems with rotational symmetry. Recall from equations (\ref{potencialsu2}) and (\ref{s3tos2}), that each coadjoint orbit can be expressed as a sphere of radius $s$, which we think of as the symplectic reduction of 
\begin{equation}\nonumber
(S^{3},\omega_{f} = d\theta_{f}), \,\,\,\,\,\,\,\,\, \theta_{f} = is(z^{0}d\bar{z}^{0}+z^{1}d\bar{z}^{1}-\bar{z}^{0}dz^{0}-\bar{z}^{1}dz^{1}).
\end{equation}

First we find out what is the characteristic foliation $K$. Let $X^{z^{0}}\partial_{z^{0}} + X^{\bar{z}^{0}}\partial_{\bar{z}^{0}} + X^{z^{1}}\partial_{z^{1}} + X^{\bar{z}^{1}}\partial_{\bar{z}^{1}} \in V(\mathbb{C}^{2})$ be some vector tangent to $S^{3}$. Then the solution to
\begin{equation}\nonumber
\frac{d}{dt}z^{\alpha}(t) = X^{\alpha}(z(t))
\end{equation}
should be a curve in the sphere, ie.,
\begin{equation}\nonumber
\begin{split}
z^{0}(t)\bar{z}^{0}(t) + z^{1}(t)\bar{z}^{1}(t) = 1 & \,\,\,\, \Rightarrow \,\,\,\, 0 = \frac{d}{dt}(z^{0}\bar{z}^{0} + z^{1}\bar{z}^{1}) = 2Re(\bar{z}^{0}X^{z^{0}} + \bar{z}^{1}X^{z^{1}})\\
&\Rightarrow \,\,\,\, \bar{z}^{0}X^{z^{0}} + \bar{z}^{1}X^{z^{1}} = if,
\end{split}
\end{equation}
for some real $f$. Now, let $X\in K$, the characteristic foliation of $\omega_{f}$. This means that $(X\lrcorner\omega_{f})|_{TS^{3}} = 0$, ie, $Y\lrcorner(X\lrcorner\omega_{f}) = 0, \,\, \forall Y\in TS^{3}$. Taking into account the previous discussion, we may write both the vectors tangent to the sphere in the form
\begin{equation}\nonumber
\begin{split}
X &= \xi\frac{\partial}{\partial z^{0}} + \bar{\xi}\frac{\partial}{\partial \bar{z}^{0}} + \frac{(if - \bar{z}^{0}\xi)}{\bar{z}^{1}}\frac{\partial}{\partial z^{1}} - \frac{(if + z^{0}\bar{\xi})}{z^{1}}\frac{\partial}{\partial\bar{z}^{1}}\\
Y &= \chi\frac{\partial}{\partial z^{0}} + \bar{\chi}\frac{\partial}{\partial \bar{z}^{0}} + \frac{(ig - \bar{z}^{0}\chi)}{\bar{z}^{1}}\frac{\partial}{\partial z^{1}} - \frac{(ig + z^{0}\bar{\chi})}{z^{1}}\frac{\partial}{\partial\bar{z}^{1}}
\end{split}\,\,\,\, ,
\end{equation}
with $f$ and $g$ real. Hence the condition for $X$ to be in $K$ translates to
\begin{equation}\nonumber
\begin{split}
0 &= Y\lrcorner(X\lrcorner\omega_{f}) = Y\lrcorner2is\left[\xi d\bar{z}^{0} - \bar{\xi}dz^{0} + \frac{(if - \bar{z}^{0}\xi)}{\bar{z}^{1}}d\bar{z}^{1} + \frac{(if + z^{0}\bar{\xi})}{z^{1}}dz^{1}\right] \\
&= \frac{2is}{z^{1}\bar{z}^{1}}\left[\bar{\chi}(\xi - ifz^{0}) - \chi(\bar{\xi} + if\bar{z}^{0}) + ig(z^{0}\bar{\xi} + \bar{z}^{0}\xi)\right], \,\,\, \forall \chi, \, \forall g
\end{split}
\end{equation}
which we solve by $\xi = ifz^{0}$. So the vectors in the characteristic foliation are of the form
\begin{equation}\nonumber
X = ifz^{0}\frac{\partial}{\partial z^{0}} - if\bar{z}^{0}\frac{\partial}{\partial \bar{z}^{0}} + ifz^{1}\frac{\partial}{\partial z^{1}} - if\bar{z}^{1}\frac{\partial}{\partial \bar{z}^{1}}.
\end{equation}
Finally, solving $\frac{d}{dt}z^{\alpha}(t) = X^{\alpha}(z(t))$, we find that the leaves of $K$ are the circles\footnote{This is a well-known fact about the Hopf fibration.}
\begin{equation}\nonumber
e^{i\phi(t)}(z^{0},z^{1}), \,\,\,\,\,\, \phi (t) = \int_{0}^{t}f dt \,\,\,\, \text{real}.
\end{equation}

Therefore, since $S^{3}$ is simply connected, there is a prequantum bundle over the reduction of $(S^{3},d\theta_{f})$ if, and only if, $\oint_{\gamma}\theta_{f}$ is an integer multiple of $2\pi\hbar$ for any closed curve $\gamma$ contained in a leaf of $K$. Hence, taking the parametrization $\gamma = \{e^{it}(z^{0},z^{1}), t\in [0,2\pi]\}$, this translates to
\begin{equation}\nonumber
\begin{split}
\frac{1}{2\pi\hbar}\oint_{\gamma}\theta_{f} &= \frac{is}{2\pi\hbar}\oint_{\gamma}(z^{0}d\bar{z}^{0}+z^{1}d\bar{z}^{1}-\bar{z}^{0}dz^{0}-\bar{z}^{1}dz^{1}) \\
&= \frac{is}{2\pi\hbar}\int_{0}^{2\pi}[e^{it}z^{0}d(e^{-it}\bar{z}^{0}) + e^{it}z^{1}d(e^{-it}\bar{z}^{1}) - e^{-it}\bar{z}^{0}d(e^{it}z^{0}) - e^{-it}\bar{z}^{1}d(e^{it}z^{1})] \\
&= \frac{s}{\hbar}(z^{0}\bar{z}^{0} + z^{1}\bar{z}^{1} + \bar{z}^{0}z^{0} + \bar{z}^{1}z^{1}) = \frac{2s}{\hbar}\in\mathbb{Z},
\end{split}
\end{equation}
so we arrive at the conclusion that only certain spheres can be quantized, the ones with radius an integer multiple of $\hbar/2$. 

Consider now the phase-space of a massive particle of arbitrary spin $s$. As discussed in example \ref{poincareorbits}, this is the reduction of $C_{sm}$ with symplectic structure given by equation (\ref{omegaspin}). Fortunately, it is not necessary to make much effort to discover what the reduction map is, since we saw that it is explicitly given by the quotient
\begin{equation}\nonumber
\pi : C_{sm} = \{(p,q, z)\in T^{*}\mathbb{M}\times\mathbb{S}| p_{a}p^{a} = m^{2}, \sqrt{2}p_{A\bar{A}}z^{A}\bar{z}^{\bar{A}}=\pm m\} \longrightarrow C_{sm}/\sim
\end{equation}
with $(p_{a},q^{b},z^{C})\sim (p_{a}, q^{b}+\lambda p^{b},e^{i\phi}z^{C}), \,\, \forall \lambda, \phi \in \mathbb{R}$. It turns out that the only closed curves which pose some restriction on $M_{sm}$ are the circles in spinor space given by $\{e^{it}z^{C}|t\in [0,2\pi]\}$, which are of the same form as the ones in the case of $S^{3}$ above if we identify the spinors with $\mathbb{C}^{2}$. As was showed in example \ref{poincareorbits}, the restriction of the reduction $C_{sm}\to M_{sm}$ to these directions is the same as the reduction $S^{3}\to S^{2}$, so the same calculation as above implies that $M_{sm}$ is quantizable if, and only if, $s$ is an integer multiple of $\hbar/2$. Thus we recover quantization of spin as a topological obstruction to the construction of a prequantum bundle.

In the massless case, we apply the same criterion to the reduction of $(C_{s0},-id\omega^{A}\wedge d\bar{\pi}_{A}+id\bar{\omega}^{\bar{A}}\wedge d\pi_{\bar{A}}|_{C_{s0}})$. The reduction map was seen to be the quotient
\begin{equation}\label{2s}
\pi : C_{s0}=\{(\omega,\pi)\in\mathbb{S}\times\bar{\mathbb{S}}|\omega^{A}\bar{\pi}_{A}+\bar{\omega}^{\bar{A}}\pi_{\bar{A}}=2s\}\longrightarrow C_{s0}/\sim 
\end{equation}
by the relation $(\omega^{A},\pi_{\bar{A}})\sim (e^{i\phi}\omega^{A},e^{i\phi}\pi_{\bar{A}}), \,\,\, \phi\in\mathbb{R}$. By a similar calculation, one can check whether the symplectic potential
\begin{equation}\nonumber
\theta '' = -i\omega^{A}d\bar{\pi}_{A} + i\bar{\omega}^{\bar{A}}d\pi_{\bar{A}}
\end{equation}
gives an integer multiple of $2\pi\hbar$ when integrated on a path of the form $\gamma = \{(e^{it}\omega^{A},e^{it}\pi_{\bar{A}})|t\in [0,2\pi]\}$:
\begin{equation}\nonumber
\begin{split}
\frac{1}{2\pi\hbar}\oint_{\gamma}\theta '' &= \frac{-i}{2\pi\hbar}\int_{0}^{2\pi}[e^{it}\omega^{A}d(e^{-it}\bar{\pi}_{A}) - e^{-it}\bar{\omega}^{\bar{A}}d(e^{it}\pi_{\bar{A}})] = \frac{-1}{\hbar}(\omega^{a}\bar{\pi}_{A}+\bar{\omega}^{\bar{A}}\pi_{\bar{A}}) \\
&= -\frac{2s}{\hbar}\in\mathbb{Z},
\end{split}
\end{equation}
by equation (\ref{2s}). We conclude that the coadjoint orbits corresponding to massless particles which are quantizable are the ones with helicity $s$ which is an integer multiple of $\hbar/2$.

\pagebreak

\subsection{Quantization}\label{subsecquant}

Let us reflect on what we have so far. Dirac's rules ask us to find a representation of the algebra of classical observables $C^{\infty}(M)$ as operators in a Hilbert space, with the classical Poisson bracket mapping to the commutator under this correspondence. Furthermore, there should be some correspondence between the classical symmetries generated by classical observables and the corresponding quantum symmetries generated by their quantum counterparts. Such a representation is not guaranteed to exist, and is not guaranteed to be unique. As we saw above, prequantization partially solves these questions: it gives a topological criterion to diagnose whether a given classical phase space (and its algebra of observables) can be quantized, and gives an explicit construction of the Hilbert space as the space of sections of the prequantum bundle, with an explicit definition of how the quantized operators act on it.

Unfortunately, that's not enough. Prequantization solves Dirac's axioms at the cost of making the Hilbert space too large. For example, the constructed Hilbert space includes sections which, in local canonical coordinates, depend on \emph{all} coordinates \emph{and} momenta, which is not in accordance with Heisenberg's uncertainty principle. The next step, called \emph{Quantization}, introduces a geometric criterion to restrict the sections to those which depend on half of the canonical coordinates through the concept of a \emph{polarization}. This, on the other hand, has the tradeoff of naturally restricting the observables that can be quantized to the subalgebra of $C^{\infty}(M)$ of observables which preserve the polarization. Such a subtlety is well-known in quantum mechanics: in canonical quantization, for example, it appears as ordering ambiguities which make it impossible to realize Dirac's commutator correspondence for polynomials of arbitrary degree in the canonical variables.

For pedagogical reasons, it is natural to start with real polarizations.

\begin{realpolarization}
A \emph{real polarization}\footnote{$V_P(M)$ denotes vector fields in $M$ which are tangent to the polarization, that is $X|_m\in P_m,\,\, \forall m\in M$.} $V_P(M)$ on a symplectic manifold $(M,\omega)$ is a smooth distribution $m\mapsto P_{m}$ which is
\begin{description}
\item[(i)] Integrable: $X, Y\in V_{P}(M) \Rightarrow [X,Y]\in V_{P}(M)$ ,
\item[(ii)] Lagrangian: $\forall m\in M, \,\,\, P_{m} \text{ is a Lagrangian subspace of } T_{m}M$.
\end{description}
Given a polarization $P$, we denote by $C^{\infty}_{P}(M) = \{f\in C^{\infty}(M)|X(f) = 0, \forall X\in V_{P}(M)\}$ the set of \emph{polarized functions}.
\end{realpolarization}

This can be more concisely stated as follows. $P$ is a foliation and that each one of its leaves is a Lagrangian submanifold of $M$. The prototype is the vertical foliation of a cotangent bundle: let $M = T^{*}Q$ with the canonical coordinates $(p_{a},q^{b})$ and let $P$ be the distribution spanned by the vector fields $\partial /\partial p_{a}$. This is obviously integrable, the leaves being the cotangent spaces $T^{*}_{q}Q$. Each one of the leaves is isotropic since the restriction of the canonical two-form $\omega = dp_{a}\wedge dq^{a}$ to any one of the surfaces of constant $q$ vanishes. Because the dimension of the leaves is half the dimension of $M$, they are actually Lagrangian submanifolds. Hence this is a polarization. Note that if one considers only sections of a prequantum bundle which are constant along the leaves of this polarization, the local representation of the sections will be given by complex wavefunctions on the space of leaves of $P$, which is indeed the configuration space $Q$. This additional constraint is what will be required for quantisation. 

Note, the introduction of a real polarization in a general symplectic manifold $M$ can effectivelly be seen as a splitting of $M$ into position and momentum directions. To show this, we use that any real polarization comes with a natural flat affine connection on its leaves.

\begin{connectionleaf}
Let $P$ be a polarization of a symplectic manifold $(M,\omega)$. Then the map $\nabla : V_{P}(M)\times V_{P}(M)\to V_{P}(M) : (X,Y)\mapsto \nabla_{X}Y$ defined by
\begin{equation}\nonumber
(\nabla_{X}Y)\lrcorner\omega = \mathcal{L}_{X}(Y\lrcorner\omega)
\end{equation}
defines a flat torsionless affine connection on each leaf of $P$. That is, for every $X, Y, Z \in V_{P}(M)$, and any $f\in C^{\infty}(M)$,
\begin{description}
\item[(i)] $\nabla_{X}(fY) = f\nabla_{X}Y + X(f)Y$,
\item[(ii)] $\nabla_{X}Y - \nabla_{Y}X - [X,Y] = 0$,
\item[(iii)] $\nabla_{X}\nabla_{Y}Z - \nabla_{Y}\nabla_{X}Z - \nabla_{[X,Y]}Z = 0$.
\end{description}
\end{connectionleaf}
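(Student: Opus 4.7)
\emph{Plan.} Introduce the shorthand $\alpha_Y:=Y\lrcorner\omega$. By nondegeneracy of $\omega$, the assignment $Z\mapsto\alpha_Z$ is a linear isomorphism $V(M)\to\Omega^1(M)$, so the defining equation $\alpha_{\nabla_XY}=\mathcal{L}_X\alpha_Y$ specifies $\nabla_XY$ uniquely as an element of $V(M)$. My first step is to verify the two structural conditions needed to read $\nabla$ as an affine connection on the leaves of $P$: that $\nabla_XY\in V_P(M)$, and that the assignment is $C^\infty$-linear in $X$. Since $P$ is Lagrangian, $Z\in V_P(M)$ iff $W\lrcorner\alpha_Z=\omega(Z,W)=0$ for every $W\in V_P(M)$. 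Applying the Leibniz identity $W\lrcorner\mathcal{L}_X\alpha_Y=\mathcal{L}_X(W\lrcorner\alpha_Y)-[X,W]\lrcorner\alpha_Y$, the first term vanishes by Lagrangianness of $P$ (since $Y,W\in V_P(M)$), and the second vanishes since $[X,W]\in V_P(M)$ by integrability. For linearity in $X$, the identity $\mathcal{L}_{fX}\alpha_Y=f\mathcal{L}_X\alpha_Y+df\wedge(X\lrcorner\alpha_Y)$ combined with $X\lrcorner\alpha_Y=\omega(Y,X)=0$ yields $\nabla_{fX}Y=f\nabla_XY$.

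Given these preliminaries, properties \textbf{(i)} and \textbf{(iii)} are essentially immediate. For \textbf{(i)}, I would use the Leibniz rule for the Lie derivative of a 1-form:
\begin{equation}\nonumber
\alpha_{\nabla_X(fY)}=\mathcal{L}_X(f\alpha_Y)=X(f)\alpha_Y+f\mathcal{L}_X\alpha_Y=\alpha_{X(f)Y+f\nabla_XY},
\end{equation}
and nondegeneracy of $\omega$ closes the argument. For \textbf{(iii)}, I would apply the fundamental identity $[\mathcal{L}_X,\mathcal{L}_Y]=\mathcal{L}_{[X,Y]}$ to $\alpha_Z$:
\begin{equation}\nonumber
\alpha_{\nabla_X\nabla_YZ-\nabla_Y\nabla_XZ-\nabla_{[X,Y]}Z}=\mathcal{L}_X\mathcal{L}_Y\alpha_Z-\mathcal{L}_Y\mathcal{L}_X\alpha_Z-\mathcal{L}_{[X,Y]}\alpha_Z=0,
\end{equation}
so flatness follows directly from how Lie derivatives interact with Lie brackets.

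The real work sits in \textbf{(ii)}, torsion-freeness, and this is where both $d\omega=0$ and the Lagrangian property of $P$ enter in tandem. I would contract the 1-form $\mathcal{L}_X\alpha_Y-\mathcal{L}_Y\alpha_X-\alpha_{[X,Y]}$ with an arbitrary $Z\in V(M)$ and expand using the same Leibniz identity as before. Regrouping the terms and matching them against the Koszul formula
\begin{equation}\nonumber
d\omega(X,Y,Z)=X(\omega(Y,Z))-Y(\omega(X,Z))+Z(\omega(X,Y))-\omega([X,Y],Z)-\omega(Y,[X,Z])+\omega(X,[Y,Z]),
\end{equation}
I expect to find that the combination equals $d\omega(X,Y,Z)-Z(\omega(X,Y))$: the first summand vanishes because $\omega$ is closed, the second because $X,Y\in V_P(M)$ and $P$ is Lagrangian. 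Since $Z$ was arbitrary, nondegeneracy of $\omega$ forces $\nabla_XY-\nabla_YX-[X,Y]=0$. The main obstacle, such as it is, is the sign bookkeeping in matching the expansion to the Koszul formula, but the calculation is essentially one line once a sign convention is fixed.
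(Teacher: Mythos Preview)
Your proof is correct and, for the preliminaries, \textbf{(i)}, and \textbf{(iii)}, essentially identical to the paper's: both of you verify that $\nabla_XY\in V_P(M)$ via the Leibniz identity for $\mathcal{L}_X$ and integrability of $P$, both read \textbf{(i)} off the Leibniz rule for $\mathcal{L}_X(f\alpha_Y)$, and both get \textbf{(iii)} from $[\mathcal{L}_X,\mathcal{L}_Y]=\mathcal{L}_{[X,Y]}$. You additionally check $C^\infty$-linearity in the first slot, which the paper omits but which is genuinely needed for $\nabla$ to be an affine connection.

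The only real divergence is in \textbf{(ii)}. The paper stays at the level of $1$-forms and uses Cartan's formula $\mathcal{L}_X=d\iota_X+\iota_X d$ together with $d\omega=0$ to write $\mathcal{L}_X(Y\lrcorner\omega)-\mathcal{L}_Y(X\lrcorner\omega)$ as $[X,Y]\lrcorner\omega+2d[\omega(X,Y)]$, then kills the exact term by Lagrangianness. You instead contract with an arbitrary $Z\in V(M)$ and match the resulting scalar expression against the Koszul formula for $d\omega(X,Y,Z)$, isolating the combination as $d\omega(X,Y,Z)-Z(\omega(X,Y))$. Both routes are short; yours has the minor expository advantage of making the two hypotheses ($d\omega=0$ and $\omega|_P=0$) appear as two separate vanishing terms, while the paper's version avoids introducing the auxiliary $Z$ and the six-term Koszul expansion.
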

\begin{proof}
That the definition is unambiguous is due to the fact that $\omega$ is nondegenerate. Also, if $X,Y,Z\in V_{P}(M)$, then
\begin{equation}\nonumber
\begin{split}
\omega(\nabla_{X}Y,Z) &= \frac{1}{2}Z\lrcorner(\nabla_{X}Y\lrcorner\omega) = \frac{1}{2}Z\lrcorner\mathcal{L}_{X}(Y\lrcorner\omega) = \frac{1}{2}\{\mathcal{L}_{X}[Z\lrcorner(Y\lrcorner\omega)] - \mathcal{L}_{X}Z\lrcorner(Y\lrcorner\omega)\} \\
&= X\lrcorner d[\omega(Y,Z)] - \omega(Y,[X,Z]) = 0,
\end{split}
\end{equation}
since both $Z$ and $[X,Z]$ belong to $V_{P^{\perp}}(M)$, as $P = P^{\perp}$ is integrable. Therefore $\nabla_{X}Y\in V_{P}(M)$.
Now, {\bf (i)}-{\bf (iii)} follow easily from the definition:
\begin{description}
\item[(i)] $\nabla_{X}(fY)\lrcorner\omega = \mathcal{L}_{X}(fY\lrcorner\omega) = X(f)Y\lrcorner\omega + f\mathcal{L}_{X}(Y\lrcorner\omega) = (X(f)+f\nabla_{X}Y)\lrcorner\omega$,
\item[(ii)] $(\nabla_{X}Y - \nabla_{Y}X)\lrcorner\omega = X\lrcorner d(Y\lrcorner\omega) - Y\lrcorner d(X\lrcorner\omega) = X\lrcorner\mathcal{L}_{Y}\omega - Y\lrcorner\mathcal{L}_{X}\omega $

$= [X,Y]\lrcorner\omega + 2d[\omega(X,Y)] = [X,Y]\lrcorner\omega$,
\item[(iii)] $(\nabla_{X}\nabla_{Y}Z - \nabla_{Y}\nabla_{X}Z)\lrcorner\omega = (\mathcal{L}_{X}\mathcal{L}_{Y} - \mathcal{L}_{Y}\mathcal{L}_{X})(Z\lrcorner\omega) = \mathcal{L}_{[X,Y]}(Z\lrcorner\omega) = \nabla_{[X,Y]}Z\lrcorner\omega$.
\end{description}

\end{proof}

\begin{polarizationcot}

Let $P$ be a polarization of a symplectic manifold $(M,\omega)$ and $m\in M$. Then there exists a neighbourhood $U$ of $m$ and a symplectomorphism $\rho : U'\to U$, where $U'\subset T^{*}Q$ is an open neighbourhood of the zero section in the cotangent bundle of some manifold $Q$, such that $\rho^{*}P$ is the vertical foliation of $U'$ and $\rho^{-1}(m)$ lies in the zero section in $U'$.
\end{polarizationcot}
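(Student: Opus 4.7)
The plan is to construct local canonical coordinates $(p_a,q^b)$ around $m$ in which $\omega=dp_a\wedge dq^a$, the polarization $P$ is spanned by $\{\partial/\partial p_a\}$, and the slice $\{p_a=0\}$ is a Lagrangian submanifold $Q$ through $m$. The map $\rho$ is then the tautological identification of this chart with a neighbourhood of the zero section of $T^*Q$, with $Q$ the configuration space parametrised by the $q^b$ and $p_a\,dq^a$ the cotangent fibre; by construction this carries the vertical foliation onto $P$ and places $m$ on the zero section.

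To produce the transversal Lagrangian, apply Darboux's theorem \ref{darbweins} to obtain a chart $(r_a,s^b)$ around $m$ in which $\omega=dr_a\wedge ds^a$. Proposition \ref{symplbasis} furnishes a Lagrangian complement $F_m\subset T_mM$ to the Lagrangian subspace $P_m$. Because in this chart $\omega$ is the constant symplectic form on $\mathbb{R}^{2n}$, the affine $n$-plane through $m$ in the direction $F_m$ is a genuine Lagrangian submanifold, which I take to be $Q$; after shrinking the neighbourhood, $Q$ is transverse to $P$. Pick coordinates $q^1,\dots,q^n$ on $Q$ centred at $m$ and extend each $q^a$ to the neighbourhood by demanding that it be constant along the leaves of $P$, which is well-defined by transversality.

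Now build the "momentum" coordinates via commuting Hamiltonian flows. Since $dq^a$ annihilates $P$, the Hamiltonian field $X_{q^a}$ lies in $P^{\perp}=P$ and is tangent to the foliation; and $\{q^a,q^b\}=X_{q^a}(q^b)=0$ because $q^b$ is constant along $P$. The identity $[X_{q^a},X_{q^b}]=X_{\{q^a,q^b\}}=0$ yields a local $\mathbb{R}^n$-action on a neighbourhood of $Q$. Writing $\phi^a_t$ for the flow of $X_{q^a}$, I define
\begin{equation}\nonumber
(q;p_1,\dots,p_n)\;\longmapsto\;\phi^1_{-p_1}\circ\cdots\circ\phi^n_{-p_n}(q),\qquad q\in Q.
\end{equation}
This is a local diffeomorphism near $(m,0)$ because the $X_{q^a}|_m$ span $P_m$ (their images under $X\mapsto X\lrcorner\omega$ are the linearly independent $-dq^a|_m$) and $T_mQ\oplus P_m=T_mM$. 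The resulting coordinates obey $\partial/\partial p_a=-X_{q^a}$ and $p_a|_Q=0$.

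Finally, $\omega=dp_a\wedge dq^a$ is checked componentwise: $\omega(\partial/\partial p_a,\partial/\partial p_b)=\omega(X_{q^a},X_{q^b})=\tfrac12\{q^a,q^b\}=0$; $\omega(\partial/\partial p_a,\partial/\partial q^b)=-(X_{q^a}\lrcorner\omega)(\partial/\partial q^b)=dq^a(\partial/\partial q^b)=\delta^a_b$; and $\omega(\partial/\partial q^a,\partial/\partial q^b)$ is invariant under each $\phi^c_t$ (Hamiltonian flows are symplectic), so equals its value on $Q$, which vanishes because $Q$ is Lagrangian. I expect the main obstacle to be the initial construction of $Q$ as a Lagrangian submanifold rather than a merely transverse one: a priori only the linear subspace $F_m\subset T_mM$ is Lagrangian, and one needs Darboux's theorem to flatten $\omega$ so that the affine plane in direction $F_m$ can be promoted to an honest Lagrangian submanifold of $M$.
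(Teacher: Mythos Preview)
Your argument is correct. The route differs from the paper's, though the two are closely related under the hood.

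The paper leans on the preceding proposition (the flat affine connection on leaves of $P$): it picks a Lagrangian transversal $Q$, then uses the affine structure to identify each leaf $\Sigma_q$ with a neighbourhood of the origin in $P_q$, and finally uses $X\mapsto X\lrcorner\omega$ to identify $P_q$ with $T^*_qQ$. The symplectomorphism check is indirect: $\omega$ and the canonical two-form agree on $Q$, and both are preserved by the Hamiltonian flows of polarized functions (whose Hamiltonian vector fields are covariantly constant along $P$), so they agree everywhere.

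You instead build the momentum coordinates directly from the commuting Hamiltonian flows of the polarized functions $q^a$, and verify $\omega=dp_a\wedge dq^a$ componentwise. This bypasses any explicit appeal to the flat connection: the role of flatness is absorbed into the single observation $\{q^a,q^b\}=0\Rightarrow[X_{q^a},X_{q^b}]=0$, and the role of ``covariantly constant'' is played by the fact that $\partial/\partial q^b$ is by construction the pushforward of a fixed tangent vector on $Q$ under the symplectic flows $\phi^c_t$, which is what makes the Lagrangian check $\omega(\partial/\partial q^a,\partial/\partial q^b)=0$ propagate off $Q$. Your approach is more elementary and self-contained; the paper's is more geometric and makes the identification $\Sigma_q\simeq T^*_qQ$ intrinsic rather than coordinate-dependent. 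Your closing worry about $Q$ is already handled by your own use of Darboux to flatten $\omega$ before taking the affine plane.
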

\begin{proof}
By Darboux's theorem, one can find local canonical coordinates $(p_{a},q^{b})$ around $m$ such that the surface of constant $q^{b} = q^{b}(m)$ renders a Lagrangian submanifold. Up to a linear canonical transformation on these coordinates, we may assume that the submanifold thus constructed is transverse to $P$ at $m$, which then implies that it is transverse to $P$ in a neighbourhood of $m$. Let us denote this submanifold by $Q$. We further restrict the neighbourhood $U$ of $m$ so that each leaf of the induced polarization $P|_{U}$ is geodesically convex and intersects $Q$ in a unique point.

Let us denote by $\Sigma_{q}$ the leaf of $P|_{U}$ which intersects $Q$ at the point $q$. By the last proposition, there is a flat affine connection on each $\Sigma_{q}$, which gives it the structure of an affine space associated to $P_{q}$. Together with the definition that the point of intersection $q$ shoud be taken as the origin, each $\Sigma_{q}$ gains a vector space structure and hence may be identified with a neighbourhood of the origin in $P_{q}$.

Consider the map $X\mapsto X\lrcorner\omega$. Since $P$ is Lagrangian, $X\lrcorner\omega$ is zero on $P_{q}$. But $\omega$ is non-degenerate, so this should actually give a well-defined one form on the transverse space $T_{q}Q$. Hence we have an isomorphism $P_{q}\to (T_{q}Q)^{*} = T^{*}_{q}Q$. Therefore it is possible to identify $\Sigma_{q}$ with a neighbourhood of zero in $T^{*}_{q}Q$. Letting $q$ vary over $U$ gives the map $\rho : U'\to U$, where $U'$ is some neighbourhood of the zero section in $T^{*}Q$. Note that it follows immediately from the way we defined $\rho $ that $\rho^{*}P|_{U}$ gives the vertical foliation on $U'\subset T^{*}Q$ and that the linear structures coincide. It remains to show that $\rho$ preserves the symplectic structure.

That the symplectic $\omega$ coincide with the canonical two-form $\omega '$ of $T^{*}Q$ at points of $Q$ follows again from the linear form of Darboux's theorem. Now, given a polarized function $f\in C^{\infty}_{P}(U)$, which means that $f$ is constant in each of the leaves of $P|_{U}$, we define $X_{f}\in U$ and $X_{f}' \in U'$ by
\begin{equation}\nonumber
X_{f}\lrcorner\omega + df = 0, \,\,\,\,\,\, X_{f}'\lrcorner\omega ' + d(f\circ\rho) = 0.
\end{equation}
Note that $X_{f}' = \rho_{*}X_{f}$ on $Q$, since the symplectic structures coincide there. Now, because $f$ is constant along $P|_{U}$, $X_{f}$ is parallel to $P|_{U}$. But this implies that
\begin{equation}\nonumber
\nabla_{Y}X_{f}\lrcorner\omega = \mathcal{L}_{Y}(X_{f}\lrcorner\omega) = Y\lrcorner d(-df) = 0, \,\, \forall Y\in V_{P}(U),
\end{equation}
i.e., $X_{f}$ is covariantly constant along $P|_{U}$. Similarly, $X_{f}'$ is covariantly constant along $(\rho^{*}P)|_{U'}$. So actually $X_{f}' = \rho_{*}X_{f}$ everywhere. Furthermore, since $X_{f}$ and $X_{f}'$ are hamiltonian,
\begin{equation}\nonumber
\mathcal{L}_{X_{f}}\omega = 0, \,\,\, \mathcal{L}_{X_{f}'}\omega ' = 0 \,\,\,\, \Rightarrow \,\,\,\, \mathcal{L}_{X_{f}' }(\omega ' - \rho^{*}\omega) = \mathcal{L}_{X_{f}'}\omega ' - \rho^{*}\mathcal{L}_{X_{f}}\omega = 0.
\end{equation}
Since any point in $M$ can be connected to a point in $Q$ by the flow generated by some $f\in C^{\infty}_{P}(M)$, $\omega = \rho^{*}\omega $ everywhere, ie., $\rho$ is a symplectomorphism.

\end{proof}

\begin{polarizationcoord}
Let $P$ be a real polarization of a symplectic manifold $(M,\omega)$. Then in the neighbourhood of any $m\in M$ it is possible to find a canonical coordinate system $(p_{a},q^{b})$ such that $P$ is spanned by the vector fields $\partial/\partial p_{a}$, and a symplectic potential $\theta$ such that $\theta|_{P} = 0$. These coordinates and potential are said to be \emph{adapted} to $P$.
\end{polarizationcoord}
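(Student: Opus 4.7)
The plan is to reduce the statement to the model case of a cotangent bundle via the previous proposition \ref{polarizationcot}, where both conclusions are immediate from the standard formulas for the canonical one- and two-forms.

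First I would invoke proposition \ref{polarizationcot} to obtain an open neighbourhood $U \ni m$, a manifold $Q$, an open neighbourhood $U' \subset T^{*}Q$ of the zero section, and a symplectomorphism $\rho : U' \to U$ such that $\rho^{*}P|_{U}$ is the vertical foliation of $U'$ and $\rho^{-1}(m)$ lies on the zero section. Pick local coordinates $q^{1},\dots,q^{n}$ on $Q$ near $\pi(\rho^{-1}(m))$, where $\pi : T^{*}Q \to Q$ is the bundle projection, and use them to induce fibre coordinates $p_{1},\dots,p_{n}$ on $U'$ by writing each covector as $p_{a}\,dq^{a}$. As shown in the discussion following the definition of the canonical one-form, in these coordinates one has $\theta_{0} = p_{a}\,dq^{a}$ and $\omega_{0} = d\theta_{0} = dp_{a}\wedge dq^{a}$ on $U'$, and the vertical foliation is spanned by the fields $\partial/\partial p_{a}$.

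Next I would transport everything to $U$ via $\rho$: define coordinates on $U$ by $\tilde{p}_{a} = p_{a}\circ\rho^{-1}$ and $\tilde{q}^{b} = q^{b}\circ\rho^{-1}$. Because $\rho$ is a symplectomorphism, $\omega|_{U} = (\rho^{-1})^{*}\omega_{0} = d\tilde{p}_{a}\wedge d\tilde{q}^{a}$, so $(\tilde{p}_{a},\tilde{q}^{b})$ is a canonical coordinate system in the sense of the subsection on Mechanics in a manifold. Moreover, since $\rho^{*}P|_{U}$ is the vertical foliation, $P|_{U}$ is spanned by $\rho_{*}(\partial/\partial p_{a}) = \partial/\partial\tilde{p}_{a}$, which is the first assertion. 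For the second, set $\theta = (\rho^{-1})^{*}\theta_{0} = \tilde{p}_{a}\,d\tilde{q}^{a}$; then $d\theta = \omega$ and $(\partial/\partial\tilde{p}_{a})\lrcorner\theta = 0$, so $\theta$ vanishes on every vector tangent to $P$.

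There is no real obstacle here once proposition \ref{polarizationcot} is in hand: the whole content is that pulling the tautological one-form $p_{a}dq^{a}$ on $T^{*}Q$ back through the model symplectomorphism produces a local symplectic potential adapted to $P$. The only thing that needs verification is that the model's vertical vectors are annihilated by $p_{a}dq^{a}$, which is immediate, and that $\rho$ commutes with the operations of pushing forward the distinguished vectors and pulling back the distinguished forms, which it does because it is a diffeomorphism.
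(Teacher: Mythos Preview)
Your proposal is correct and is exactly the argument the paper has in mind: the statement is presented as an immediate corollary of proposition \ref{polarizationcot}, and your write-up simply spells out that pulling back the canonical coordinates and the tautological one-form $p_a\,dq^a$ through the model symplectomorphism $\rho$ yields the adapted coordinates and potential. Nothing further is needed.
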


To summarize, even though a general symplectic manifold is always locally symplectomorphic to a cotangent bundle, it is not necessarilly globally equivalent. Still, if it admits a real polarization then the concept of a polarized section allows us to generalise the idea of functions that depend only on either coordinates or momenta. Locally, this looks like the local equivalence with a cotangent bundle, but the importance of these definitions is in that a real polarization may exist even when $M$ is \emph{not} a cotangent bundle.

Up to this point, the polarization had only the role of selecting the relevant physical states. It also accomplishes the task of selecting the right observables to be quantized (in canonical quantization a similar choice has to be made in order to avoid ordering ambiguities). For example, one can restrict to operators which are constant along the leaves of the polarization, which are denoted by $C^{\infty}_P(M)$. More generally, the following definitions are useful.

\begin{polinomial}
Let $P$ be a real polarization of the symplectic manifold $(M,\omega)$. Then the \emph{polynomial observables} of degree $k$ on an open set $U\subset  M$, denoted $S_{P}^{k}(U)$, are defined recursively by: $S_{P}^{0}(U) = C_{P}^{\infty}(U)$ and
\begin{equation}\nonumber
S_{P}^{k}(U) = \{f\in C^{\infty}(U)| \{f,g\}\in S_{P}^{k-1}(U\cap V),\,\, \forall V\subset\circ M,\,\, \forall g\in C_{P}^{\infty}(V)\}.
\end{equation}
In particular, $S_{P}^{0}(M)$ are the generators of the Hamiltonian vector fields tangent to $P$ and $S_{P}^{1}(M)$ are the generators of the Hamiltonian flows that preserve $P$.
\end{polinomial}

Indeed, if we consider local canonical coordinates and take $P$ to be the polarization spanned by the $\partial/\partial p_a$, then $f\in S_{P}^{0}(U) = C_{P}^{\infty}(U)$ is given by a function of the $q^{a}$ only (it is constant along the leaves of $P$). Since in these canonical coordinates $\omega = dp_{a}\wedge dq^{a}$, it follows that $X_{f}$ is a linear combination of the $\partial/\partial p_{a}$, which span $P$. Regarding $S_{P}^{1}(U)$, we formalize the statement that the canonical flow generated by $f$ preserves $P$ by saying that the differential operator $X_{f}:C^{\infty}(U)\to C^{\infty}(U)$ preserves the polarized functions, ie., $X_{f}(C_{P}^{\infty}(U))\subset C_{P}^{\infty}(U)$. This is indeed the case since
\begin{equation}\nonumber
f\in S_{P}^{1}(U)\,\,\,\,\Rightarrow\,\,\,\, X_{f}(g) = \{f,g\}\in S_{P}^{0}(U) = C_{P}^{\infty}(U), \,\, \forall g\in C_{P}^{\infty}(U).
\end{equation}
The local expression of these observables justifies their name:

\begin{polinomialindeed}
For a given open neighborhood $U\subset\circ M$, $f\in S_{P}^{k}(U)$ if, and only if, $f$ is of the form
\begin{equation}\nonumber
f = \sum_{i=0}^{k}f_{i}^{a_{1}a_{2}...a_{i}}p_{a_{1}}p_{a_{2}}...p_{a_{i}},
\end{equation}
where each $f_{i}^{a_{1}a_{2}...a_{i}}$ is independent of the momentum coordinates $p_{a}$.
\end{polinomialindeed}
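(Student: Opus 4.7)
The plan is to proceed by induction on $k$, working throughout in a chart of canonical coordinates $(p_a,q^b)$ adapted to $P$, whose existence is guaranteed by the preceding corollary. In such a chart $\omega = dp_a\wedge dq^a$, the polarization $P$ is spanned by the $\partial/\partial p_a$, and $C_P^{\infty}(U)$ consists precisely of the functions depending only on the $q^b$. This already establishes the base case $k=0$, which matches the claim (only the $i=0$ term is present). The Poisson bracket in these coordinates takes the form
\begin{equation}\nonumber
\{f,g\} = \frac{\partial f}{\partial p_a}\frac{\partial g}{\partial q^a} - \frac{\partial f}{\partial q^a}\frac{\partial g}{\partial p_a},
\end{equation}
which will be the main tool.

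For the inductive step, assume the characterisation holds at level $k-1$ and consider $f\in S_P^{k}(U)$. The simplest choice of polarized test function is $g=q^b$, which lies in $C_P^{\infty}(U)$; from the Poisson bracket formula one computes $\{f,q^b\} = \partial f/\partial p_b$, and by definition of $S_P^k$ this belongs to $S_P^{k-1}(U)$. The inductive hypothesis then tells us that each $\partial f/\partial p_b$ is a polynomial in the $p_a$ of degree at most $k-1$ whose coefficients depend only on the $q^a$. Integrating these compatible polynomial expressions in the $p_b$ forces $f$ itself to be a polynomial of degree at most $k$ in the $p_a$ with $q$-dependent coefficients (the $p$-independent ``constant of integration'' supplies the $i=0$ term).

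Conversely, suppose $f=\sum_{i=0}^{k} f_i^{a_1\cdots a_i}(q)\,p_{a_1}\cdots p_{a_i}$ and let $g\in C_P^{\infty}(V)$ be an arbitrary polarized function on some open $V$. Then $\partial g/\partial p_a=0$, so on $U\cap V$
\begin{equation}\nonumber
\{f,g\} = \frac{\partial f}{\partial p_a}\frac{\partial g}{\partial q^a}.
\end{equation}
Each $\partial f/\partial p_a$ is a polynomial of degree at most $k-1$ in the $p$'s whose coefficients depend only on $q$, and $\partial g/\partial q^a$ depends only on $q$; hence $\{f,g\}$ is again of this polynomial form of degree at most $k-1$ with $q$-dependent coefficients, and by the inductive hypothesis belongs to $S_P^{k-1}(U\cap V)$. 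As $g$ was arbitrary, $f\in S_P^{k}(U)$.

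The main thing one has to be careful about is the reverse direction, where in principle the defining condition demands $\{f,g\}\in S_P^{k-1}$ for \emph{every} polarized $g$ on every open subset, but one only needs to exploit it on the specific test functions $g=q^b$; the adapted coordinates supply these globally on the chart, so no patching argument is required. The rest is bookkeeping: the forward direction reduces to observing that differentiating a $p$-polynomial in $p_a$ lowers the degree by one while preserving the property that the coefficients are $q$-functions, and the backward direction reduces to the elementary fact that a smooth function whose $p$-partial derivatives are all polynomial in $p$ is itself polynomial in $p$ of one higher degree.
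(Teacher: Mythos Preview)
Your proof is correct and follows essentially the same inductive argument as the paper: both establish the forward direction by noting that $\{f,g\}=-(\partial f/\partial p_a)(\partial g/\partial q^a)$ for polarized $g$, applying the inductive hypothesis, and integrating in $p$, with the paper using a generic $g\in C_P^\infty$ while you make the cleaner specific choice $g=q^b$. Your treatment of the converse is actually more explicit than the paper's, which simply asserts it ``can be checked by direct calculation.''
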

\begin{proof}
($\Rightarrow$) By induction on $k$:
\begin{description}
\item[(i)] $f\in S_{P}^{0}(U) = C_{P}^{\infty}(U) \,\, \Rightarrow \,\, f = f(q) = f_{0}$ ;
\item[(ii)] Suppose the proposition holds for $k\leq \tilde{k}$;
\item[(iii)] Take $f\in S_{P}^{\tilde{k}+1}(U)\,\, \Rightarrow \,\, \forall g\in C_{P}^{\infty}(U), \,\, \{f,g\} = : h \in S_{P}^{\tilde{k}}(U)$, so that, by the inductive hypotheses,
\begin{equation}\nonumber
\sum_{i=0}^{\tilde{k}}h_{i}^{a_{1}a_{2}...a_{i}}p_{a_{1}}p_{a_{2}}...p_{a_{i}}  = \frac{\partial f}{\partial q^{a}}\frac{\partial g}{\partial p_{a}} - \frac{\partial f}{\partial p_{a}}\frac{\partial g}{\partial q^{a}} = -\frac{\partial f}{\partial p_{a}}\frac{\partial g}{\partial q^{a}}(q),
\end{equation}
and the claim follows integrating both sides.
\end{description}
The converse can be checked by direct calculation.
\end{proof}

A generalization which will serve the same purpose in quantization as the real polarization does but which is sometimes easier to implement is that of a complex polarization.

\begin{cxpolarization}
A \emph{complex polarization} of a symplectic manifold $(M,\omega)$ is a smooth distribution $m\mapsto P_{m}\subset (T_{m}M)_{\mathbb{C}}$ such that it is
\begin{description}
\item[(i)] Integrable: $X, Y\in V_{P}(M) \Rightarrow [X,Y]\in V_{P}(M)$,
\item[(ii)] Lagrangian: $\forall m\in M, \,\,\, P_{m} \text{ is a Lagrangian subspace of } (T_{m}M)_{\mathbb{C}}$,
\item[(iii)] The distribution $D = P\cap \bar{P}\cap TM  \text{ is of constant dimension}\footnote{The bar denotes complex conjugation.} $.
\end{description}
\end{cxpolarization}

Here, the elements of $(T_{m}M)_{\mathbb{C}}$ are sums of the form $X+iY$, where $X,Y\in T_{m}M$, with the obvious rules for addition and multiplication by a complex scalar. The importance of the distribution $D$, the real directions in $P$, should become clear later. Complex polarizations appear naturally in complex geometry, where they arise from the complex structure:

\begin{complexlagrangian}\label{cxlagr}
Let $J:V\to V$ be a complex structure on a symplectic vector space $(V,\omega)$ which is compatible with $\omega$, meaning that it is a linear canonical transformation on $V$. Then $J$ determines a Lagrangian subspace $P_{J}\subset V_{\mathbb{C}}$ such that $P_{J}\cap\bar{P}_{J} = \{0\}$. Conversely, a Lagrangian subspace $P\subset V_{\mathbb{C}}$ satisfying $P\cap\bar{P} = \{0\}$ determines a symplectic structure $J$ on $V$ compatible with $\omega$ such that $P = P_{J}$.
\end{complexlagrangian}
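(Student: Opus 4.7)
The plan is to pass to the complexification $V_\mathbb{C}=V\otimes_\mathbb{R}\mathbb{C}$, extending $J$ $\mathbb{C}$-linearly and $\omega$ $\mathbb{C}$-bilinearly, and to exploit the spectral decomposition of $J$. For the forward direction, since $J^{2}=-1$ the extension of $J$ to $V_\mathbb{C}$ is diagonalizable with eigenvalues $\pm i$, and I will set $P_{J}$ equal to the $(+i)$-eigenspace. Because $J$ comes from a real map, complex conjugation on $V_\mathbb{C}$ commutes with $J$ while sending $i$ to $-i$, so $\bar P_{J}$ is exactly the $(-i)$-eigenspace and $P_{J}\cap\bar P_{J}=\{0\}$ is immediate.

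To see that $P_{J}$ is Lagrangian, I will argue that for $X,Y\in P_{J}$ one has
\begin{equation}\nonumber
\omega(X,Y)=\omega(-iJX,-iJY)=-\omega(JX,JY)=-\omega(X,Y),
\end{equation}
where the last equality uses that $J$ is a symplectic linear map, forcing $\omega|_{P_{J}}=0$. Since $V_\mathbb{C}=P_{J}\oplus\bar P_{J}$ and $\dim_\mathbb{C}V_\mathbb{C}=2n$, the eigenspace $P_{J}$ has complex dimension $n$, so it is a maximal isotropic, and hence Lagrangian, subspace of $V_\mathbb{C}$.

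For the converse, given a Lagrangian $P\subset V_\mathbb{C}$ with $P\cap\bar P=\{0\}$, a dimension count ($\dim_\mathbb{C}P=\dim_\mathbb{C}\bar P=n$, trivial intersection) gives $V_\mathbb{C}=P\oplus\bar P$. I will then define $J$ to act by multiplication by $+i$ on $P$ and by $-i$ on $\bar P$. The key check is that $J$ commutes with complex conjugation: conjugation swaps the two summands and simultaneously sends $i$ to $-i$, so the two operations compose consistently, which lets $J$ descend to a real endomorphism of $V$ satisfying $J^{2}=-1$. To verify $\omega(JX,JY)=\omega(X,Y)$ for real $X,Y$, I will decompose $X=X^{+}+\overline{X^{+}}$ and $Y=Y^{+}+\overline{Y^{+}}$ with $X^{+},Y^{+}\in P$, apply $J$, and expand; the same-eigenspace pieces $\omega(X^{+},Y^{+})$ and $\omega(\overline{X^{+}},\overline{Y^{+}})$ vanish because $P$ and $\bar P$ are Lagrangian, while the two cross terms match term by term on each side. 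By construction $P$ is the $(+i)$-eigenspace of the resulting $J$, so $P=P_{J}$.

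The main delicate point here is bookkeeping rather than anything conceptual: one must keep straight the two roles played by complex conjugation — as the involution that identifies real vectors inside $V_\mathbb{C}$, and as the symmetry that guarantees the $J$ constructed in the converse restricts to a real operator on $V$. Once the eigenspace picture is set up correctly, both directions reduce to checking isotropy on a single eigenspace and performing a dimension count.
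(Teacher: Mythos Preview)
Your proof is correct and takes essentially the same approach as the paper: the paper also realizes $P_{J}$ as the $(+i)$-eigenspace (written there as $\{X-iJX\mid X\in V\}$), checks isotropy via the identity $\omega(JX,Y)=-\omega(X,JY)$, and in the converse defines $J$ by $Z+\bar Z\mapsto iZ-i\bar Z$ on the splitting $V_{\mathbb{C}}=P\oplus\bar P$. If anything, your argument is slightly more complete, since you explicitly verify $\omega(JX,JY)=\omega(X,Y)$ in the converse direction, which the paper's proof leaves implicit.
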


\begin{proof}
Suppose $\exists J$ compatible with $\omega$. Use $J$ to define the inclusion $V_{(J)}\hookrightarrow V_{\mathbb{C}}: X\mapsto \frac{1}{2}(X-iJX)$. Then the image of the map $P_{J} = \{X-iJX|X\in V\}\subset V_{\mathbb{C}}$ is isotropic:
\begin{equation}\nonumber
\omega (X-iJX,Y-iJY) = \omega(X,Y) - \omega(JX,JY) - i[\omega(JX,Y) + \omega(X,JY)] = 0,
\end{equation}
since $\omega (X,JY) = \omega(JX,J^{2}Y) = -\omega(JX,Y)$. Since the dimension of $P_{J}$ is the same as that of $V$, which is half that of $V_{\mathbb{C}}$, it is Lagrangian.

Conversely, let $P\subset V_{\mathbb{C}}$ be the referred Lagrangian subspace, with $P\cap\bar{P} = \{0\}$. Since we also have $\dim(P) = \dim(\bar{P}) = \frac{1}{2}\dim(V_{\mathbb{C}})$, it follows that $V_{\mathbb{C}} = P\oplus\bar{P}$. Hence for any $X\in V\subset V_{\mathbb{C}}$, there is a unique $Z\in P$ such that $X = Z+\bar{Z}$. Use this decomposition to define 
\begin{equation}\nonumber
J:X=Z+\bar{Z}\mapsto iZ-i\bar{Z}.
\end{equation}
This obviously satisfies $J^{2} = -\text{id}$, so that it indeed gives a complex structure on $V$. Finally, note that
\begin{equation}\nonumber
P = \left\{\frac{1}{2}(Z+\bar{Z}) - i\frac{1}{2}(iZ-i\bar{Z})|Z\in P\right\} = \{X - iJX|X\in V\} = P_{J}\subset V_{\mathbb{C}}
\end{equation}
as claimed.
\end{proof}

A Kahler manifold $M$ realizes this at each point. Indeed, one way of defining a Kahler manifold is as a $2n$-dimensional manifold with a symplectic structure $\omega$ and a complex structure $J$ which is compatible with $\omega$ at each point. In this case, the two-form $g(X,Y) = 2\omega(X,JY), \forall X, Y\in TM$ is symmetric and nondegenerate, defining a semi-Riemanian structure on $M$. The complex structure allows the introduction of $n$ holomorphic (with respect to $J$) coordinates $z^{a}$, in which 
\begin{equation}\label{holomomega}
\omega = i\omega_{ab}dz^{a}\wedge d\bar{z}^{b}.
\end{equation}
Since $\omega$ is real, $\omega_{ba} = \bar{\omega}_{ab}$.
Consider the distribution $P$ spanned by the vector fields $\partial/\partial z^{a}$. It is obviously integrable and Lagrangian as follows from the expression (\ref{holomomega}). Also, $D = P\cap\bar{P}\cap TM = \{0\}$ (no `real directions'). So $P$ is a complex polarization, called the holomorphic polarization of $M$ and realizes proposition \ref{cxlagr} in each tangent space. For this reason, one refers to a Lagrangian subspace $P\subset V_{\mathbb{C}}$ of that type (i.e. one such that $P\cap\bar{P} = \{0\}$) as one of a Kahler type. Note also that the conjugate distribution $\bar{P}$, spanned by the vectors $\partial/\partial\bar{z}^{a}$, is also a complex polarization, which is called the antiholomorphic polarization.

An interesting feature of Kahler manifolds is that in the neighbourhood of any point of $M$ one can find a real function $K$ such that
\begin{equation}\label{kahlerscalar}
\omega = i\partial\bar{\partial}K = i\frac{\partial K}{\partial z^{a}\partial\bar{z}^{b}}dz^{a}\wedge d\bar{z}^{b},
\end{equation}
where $\partial := dz^{a}\wedge \partial/\partial z^{a}$ and analogously for $\bar{z}^{a}$. Note that this guarantees that there is a symplectic potential $\theta  = -i\partial K$ with the special property that $\bar{X}\lrcorner \theta = 0,\,\, \forall X\in P$, where $P$ is the holomorphic polarization. We say that such a potential is \emph{adapted} to $P$, as an extension of the real case. Likewise, the potential $\bar{\theta} = i\bar{\partial}K$ is adapted to the antiholomorphic polarization $\bar{P}$. In the more general non-Kahler case, one has the definitions

\begin{strongint}
Let $P$ be a complex polarization of a symplectic manifold $(M, \omega)$ and let $D = P\cap\bar{P}\cap TM$. $P$ is said to be \emph{strongly integrable} if $D^{\perp} = (P + \bar{P})\cap TM$ is integrable, and is \emph{admissible} if there is an adapted complex symplectic potencial in the neighbourhood of each point, that is, a potential such that $\bar{X}\lrcorner\theta = 0,\,\, \forall X\in V_{P}(M)$. The polarized functions $C_{P}^{\infty}(M)$ are defined as the complex smooth functions on $M$ such that $\bar{X}(f) = 0,\,\, \forall X\in V_{P}(M)$.
\end{strongint}

One can show that every strongly integrable polarization is admissible. Note that this is not so straightforward, as not every complex polarization is K\"ahler. In the other extreme, the complexification of a real polarization provides a complex polarization with $P = \bar{P}$. To understand the cases in between, let us look first at more general complex Lagrangian subspaces.

\begin{gencxlagr}\label{gclag}
Let $(V,\omega)$ be a real symplectic vector space and $P\subset V_{\mathbb{C}}$ a Lagrangian subspace. Then $P$ determines a unique complex structure $J'$ on $V' = [(P+\bar{P})\cap V]/(P\cap\bar{P}\cap V)$ which is compatible with $\omega$. Let $V'_{(J')}$ denote the resulting complex vector space. Then
\begin{equation}\nonumber
\langle\cdot , \cdot\rangle _{J'}:V'_{(J')}\times V'_{(J')}\to\mathbb{C}:(X,Y)\mapsto \langle X,Y\rangle _{J'} = 2\omega(X,J'Y) + 2i\omega(X,Y)
\end{equation}
defines a Hermitian inner product on $V'_{(J')}$.
\end{gencxlagr}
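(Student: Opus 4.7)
The plan is to reduce the statement to the Kahler-type case already handled by Proposition \ref{cxlagr}, by exhibiting $V'$ as a symplectic quotient on which the image of $P$ descends to a Lagrangian subspace of Kahler type.

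First I would observe that both $D_{\mathbb{C}} := P \cap \bar{P}$ and $E_{\mathbb{C}} := P + \bar{P}$ are complex subspaces of $V_{\mathbb{C}}$ that are closed under complex conjugation, hence are the complexifications of the real subspaces $D = D_{\mathbb{C}} \cap V$ and $E = E_{\mathbb{C}} \cap V$. The next ingredient is that $\omega$ descends to a nondegenerate two-form $\omega'$ on $V' = E/D$. For this I would use that when $\omega$ is extended $\mathbb{C}$-bilinearly to $V_{\mathbb{C}}$, the Lagrangian property of both $P$ and $\bar{P}$ gives $E_{\mathbb{C}}^{\perp} = (P + \bar{P})^{\perp} = P^{\perp} \cap \bar{P}^{\perp} = P \cap \bar{P} = D_{\mathbb{C}}$; restricting to real vectors then yields $E^{\perp} = D$ inside $(V,\omega)$. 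Combined with the vanishing of $\omega(X,Y)$ whenever $X \in D$ and $Y \in E$ (writing $Y = A + B$ with $A \in P$, $B \in \bar{P}$ and using that $X$ lies in both $P$ and $\bar{P}$), this shows that $\omega$ descends to a nondegenerate symplectic form $\omega'$ on $V'$.

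Having built $(V',\omega')$, I would next verify that the image $P' := P/D_{\mathbb{C}} \subset V'_{\mathbb{C}} = E_{\mathbb{C}}/D_{\mathbb{C}}$ is a Lagrangian subspace of Kahler type in $(V'_{\mathbb{C}}, \omega')$. Isotropy descends from that of $P$; the dimension count uses $\dim_{\mathbb{C}}(P + \bar{P}) = 2\dim_{\mathbb{C}} P - \dim_{\mathbb{C}}(P \cap \bar{P})$ to obtain $\dim_{\mathbb{C}} P' = \tfrac{1}{2}\dim_{\mathbb{C}} V'_{\mathbb{C}}$, and $P' \cap \bar{P'} = 0$ holds by construction. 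Proposition \ref{cxlagr} then furnishes a unique complex structure $J'$ on $V'$ compatible with $\omega'$ with $P' = P_{J'}$. The Hermitian properties of $\langle\cdot,\cdot\rangle_{J'}$ are then a short computation: conjugate symmetry $\langle Y,X\rangle_{J'} = \overline{\langle X,Y\rangle_{J'}}$ reduces to antisymmetry of $\omega$ combined with the $J'$-compatibility identity $\omega(J'X,Y) = -\omega(X,J'Y)$, while complex linearity in the second argument $\langle X,J'Y\rangle_{J'} = i\langle X,Y\rangle_{J'}$ is immediate from $(J')^{2} = -\mathrm{id}$.

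The main conceptual obstacle is the descent step: one must carefully relate the real subspaces $D, E \subset V$ to their complex counterparts $P \cap \bar{P}, P + \bar{P} \subset V_{\mathbb{C}}$, and pinpoint how the Lagrangian hypothesis enters precisely through the identification $E^{\perp} = D$. Once this is in place, everything else is largely bookkeeping: the core construction is just Proposition \ref{cxlagr} applied to the quotient, and the Hermitian identities are a one-line calculation.
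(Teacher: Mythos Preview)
Your proposal is correct and follows essentially the same route as the paper's proof: reduce to the symplectic quotient $V' = D^{\perp}/D$ (the paper invokes the coisotropic reduction of $D^{\perp}$ directly, while you equivalently verify $E^{\perp} = D$ via $(P+\bar P)^{\perp} = P\cap\bar P$), observe that the projection of $P$ is Lagrangian with $P'\cap\bar P' = 0$, apply Proposition~\ref{cxlagr} to obtain $J'$, and then check conjugate symmetry and complex linearity of $\langle\cdot,\cdot\rangle_{J'}$ by the same short computation. Your write-up is, if anything, a bit more explicit about the descent step than the paper's.
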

\begin{proof}
First recall that, given a coisotropic subspace $F\subset V$ of a symplectic vector space $(V,\omega)$, $F/F^{\perp}$ is a symplectic vector space with the symplectic structure given by the projection of $\omega$ along $F^{\perp}$. This is the linear analogue of the reduction procedure (see the discussion following definition \ref{pres}). Now, $D = P\cap\bar{P}\cap V$ is a real and isotropic suspace of $V$, so that
\begin{equation}\nonumber
D^{\perp}/D = [(P+\bar{P})\cap V]/(P\cap\bar{P}\cap V)
\end{equation}
is a real symplectic vector space. Consider its complexification $V'_{\mathbb{C}} = (P+\bar{P})/(P\cap\bar{P})$ and let $\pi$ be the projection $P+\bar{P}\to V'_{\mathbb{C}}$. Then $P' = \pi(P)$ is a Lagrangian subspace of $V'_{\mathbb{C}}$ and clearly $P'\cap\bar{P}' = \pi(P\cap\bar{P}) = \{0\}$, so that, by proposition \ref{cxlagr}, it determines a complex structure $J'$ on $V'$.

Nondegeneracy and linearity on the second entry of $\langle\cdot , \cdot\rangle_{J'}$ follow from the properties of $J'$ and $\omega$. Finally, it is conjugate linear in the first entry since
\begin{equation}\nonumber
\langle Y,X\rangle_{J'} = 2\omega(J'X, -Y) - 2i\omega(X,Y) = 2\omega(X,J'Y) - 2i\omega(X,Y) = \langle X,Y\rangle_{J'}^{*}.
\end{equation}
\end{proof}

We call the signature $(r,s)$ of $\langle\cdot  , \cdot\rangle_{J'}$ (being $r$ the number of positive eingenvalues and $s$ the number of negative eigenvalues) the \emph{type} of the Lagrangian subspace $P\subset V_{\mathbb{C}}$. In the case of Kahler subspaces, $V' = V$, so that $r+s= n$, where $2n$ is the real dimension of $V$. When $r = n$, we say that $P$ is positive, and when $s = 0$, $P$ is nonnegative. The case of real Lagrangian subspaces, ie., when $P\subset V_{\mathbb{C}}$ is the complexification of a real Lagrangian subspace of $V$, is the one when $r = s= 0$.

A first application of this procedure to the non-linear (manifold) case is the construction of a polarization through symplectic reduction, which we describe in an example below.

\begin{compatiblecoisotropic}
Let $P$ be a polarization of $(M,\omega)$, $C$ a coisotropic submanifold of $M$ and $(M',\omega ')$ the reduction of $(C,\omega|_{C})$. If $(P\cap T_{\mathbb{C}}C)+K_{\mathbb{C}}$ is integrable and $\dim (P'\cap\bar{P}')$ is constant, then $C$ is said to be compatible with $P$. Here, $K$ is the characteristic distribution of $C$ and $P'_{m'}\subset (T_{m'}M')_{\mathbb{C}}$ is the projection into $(T_{m'}M')_{\mathbb{C}} = (T_{m}C)_{\mathbb{C}}/K_{\mathbb{C}}$ of $P_{m}\cap (T_{m}C)_{\mathbb{C}}$, with $m$ in the preimage of $m'\in M'$ through reduction.
\end{compatiblecoisotropic}

This is satisfied, for example, if the submanifold $C$ is given by $f_{i} = 0, \,\, i\in \{1, ..., k\}$, where the $f_{i}$ are real functions inducing Hamiltonian flows which preserve $P$. The reason for this definition is that

\begin{reducpolarize}\label{redpol}
If $C$ is a coisotropic submanifold of $(M,\omega)$ compatible with a polarization $P$, then there is a well-defined polarization $P'$ on the reduction of $(C,\omega|_{C})$, called the reduction of $P$.
\end{reducpolarize}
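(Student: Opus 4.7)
The plan is to verify that $P'$ satisfies all the requirements of a complex polarization on the reduction $(M',\omega')$: it is a well-defined smooth distribution on $M'$, it is Lagrangian with respect to $\omega'$, and it is integrable; the dimension-constancy condition on $P'\cap\bar P'\cap TM'$ is part of the compatibility hypothesis.

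First, I would establish the Lagrangian property pointwise by a linear dimension count. Fix $m\in C$ with image $m'\in M'$, write $F = T_mC$ and $K_m = F^{\perp}\subset F$ (by coisotropicity), and extend $\omega_m$ complex-bilinearly to $V_\mathbb{C} = (T_mM)_\mathbb{C}$. Using that $P_m$ is Lagrangian, the standard identities for the symplectic complement (applied to the $\mathbb{C}$-bilinear extension) yield $(P_m\cap F_\mathbb{C})^{\perp} = P_m + K_{m,\mathbb{C}}$, whence
\[
\dim_\mathbb{C}(P_m\cap F_\mathbb{C}) = 2n - \dim_\mathbb{C}(P_m+K_{m,\mathbb{C}}) = n - k + \dim_\mathbb{C}(P_m\cap K_{m,\mathbb{C}}),
\]
with $2n=\dim M$ and $k = \dim K$. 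Since the projection $F_\mathbb{C}\to F_\mathbb{C}/K_{m,\mathbb{C}}$ restricted to $P_m\cap F_\mathbb{C}$ has kernel exactly $P_m\cap K_{m,\mathbb{C}}$, the image $P'_{m'}$ has complex dimension $n-k$, which is precisely half the complex dimension of $(T_{m'}M')_\mathbb{C}$. Isotropy of $P'_{m'}$ with respect to the reduced form $\omega'$ is inherited from the isotropy of $P_m$, because $\omega'(\pi X,\pi Y) = \omega(X,Y) = 0$ for $X,Y\in P_m\cap F_\mathbb{C}$.

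Next, I would show that $P'_{m'}$ does not depend on the representative $m$ chosen in the leaf $\pi^{-1}(m')$. Any two points in the same leaf are connected by the flow of a vector field $Y\in V_K(C)$, and infinitesimal invariance amounts to showing that, for a local section $X$ of $P\cap T_\mathbb{C}C$ on $C$, the Lie derivative $[Y,X]$ lies in $(P\cap T_\mathbb{C}C) + K_\mathbb{C}$. But this is exactly the content of the hypothesis that $(P\cap T_\mathbb{C}C) + K_\mathbb{C}$ is an integrable distribution on $C$. Combined with the constant-rank result from the previous paragraph, this yields a well-defined distribution $P'$ on $M'$ of constant complex rank $n-k$, and smoothness descends from the smoothness of $P$ together with the fact that local sections of $P'$ are produced by pushing forward local sections of the smooth distribution $(P\cap T_\mathbb{C}C) + K_\mathbb{C}$ along $\pi$.

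For integrability of $P'$, given two vector fields $X_1',X_2'\in V_{P'}(M')$ I would lift them to sections $\tilde X_1,\tilde X_2$ of $(P\cap T_\mathbb{C}C) + K_\mathbb{C}$ on (an open subset of) $C$ with $\pi_*\tilde X_i = X_i'$; by the compatibility assumption $[\tilde X_1,\tilde X_2]$ is again a section of $(P\cap T_\mathbb{C}C) + K_\mathbb{C}$, and pushing forward (using that $\pi_*$ annihilates $K_\mathbb{C}$) gives $[X_1',X_2']\in V_{P'}(M')$. Finally, constancy of $\dim(D')$ with $D' = P'\cap\bar P'\cap TM'$ follows from the hypothesis that $\dim(P'\cap\bar P')$ is constant, since $P'\cap\bar P'$ is closed under conjugation and is therefore the complexification of its real part $D'$. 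The main technical obstacle in this program is the well-definedness step: all the nontrivial geometric content of compatibility is used there to ensure that the pointwise construction $m\mapsto \pi_m(P_m\cap (T_mC)_\mathbb{C})$ genuinely descends to a smooth $K$-invariant distribution on $M'$.
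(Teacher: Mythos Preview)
Your proof is correct and follows the same strategy as the paper, but the paper gives only a two-sentence sketch: it observes that each $T_mC$ is coisotropic so that $T_{m'}M' = T_mC/K_m$ is symplectic, appeals to the linear construction of Proposition~\ref{gclag} to conclude that $P_m\cap (T_mC)_{\mathbb{C}}$ projects to a Lagrangian subspace of $(T_{m'}M')_{\mathbb{C}}$, and then simply asserts that the compatibility hypothesis makes this independent of the choice of $m\in\pi^{-1}(m')$. Your explicit dimension count, the invariance argument via the flow of $K$, and the verification of integrability of $P'$ are precisely the details the paper suppresses; in particular, your observation that $\dim_{\mathbb{C}}\bigl((P\cap T_{\mathbb{C}}C)+K_{\mathbb{C}}\bigr)=n$ is constant (so the hypothesis that this distribution is integrable is not vacuous on rank grounds) and that the image $P'_{m'}$ has dimension $n-k$ regardless of $\dim(P_m\cap K_{m,\mathbb{C}})$ is a clean way to make the sketch rigorous.
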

This can be seen as follows: on each point, $T_{m}C\subset T_{m}M$ is coisotropic, so $T_{m'}M' = T_{m}C/(T_{m}C)^{\perp} = T_{m}C/K_{m}$ is symplectic, where $m' = \pi(m)$ is the image of $m$ through reduction. Then, just as in proposition \ref{gclag}, $P_{m}\subset (T_{m}C)_{\mathbb{C}}$ projects to a Lagrangian subspace of $(T_{m'}M')_{\mathbb{C}}$. Now the hypothesis of compatibility of $C$ guarantees that this Lagrangian subspace is independent on the choice of $m'\in \pi^{-1}(m)$. This is the reason why we wrote $S^{2}$ as the reduction of $S^{3}\subset \mathbb{C}^{2}$ in the first place: we will see that $S^{3}$ is a coisotropic submanifold of $\mathbb{C}^{2}$ compatible with the holomorphic polarization of $\mathbb{C}^{2}$, so that this proposition guarantees the existence of an induced polarization on $S^{2}$.

Finally, we look into more general complex polarizations, which realize the general complex Lagrangian subspaces of propostion \ref{gclag} in each tangent space.

\begin{stronginteg}
Let $P$ be a strongly integrable polarization of a symplectic manifold $(M, \omega)$. Then, in a neighbourhood of each point one can define a coordinate system $(p_{a},q^{b},z^{\alpha})$, with $p_{a}$, $q^{b}$ real and $z^{\alpha}$ complex, in which $P$ is spanned by the vectors $\partial/\partial p_{a}$ and $\partial/\partial z^{\alpha}$ and
\begin{equation}\label{potstr}
\omega = d\left(p_{a}dq^{a}-i\frac{\partial K}{\partial z^{\alpha}}dz^{\alpha} - \frac{i}{2}\frac{\partial K}{\partial q^{a}}dq^{a}\right)
\end{equation}
for some $K(q,z,\bar{z})$.
\end{stronginteg}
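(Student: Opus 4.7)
The plan is to combine the real-polarization case (Proposition \ref{polarizationcot}) with the K\"ahler case, using the nested real foliations that $P$ defines: the isotropic distribution $D = P \cap \bar P \cap TM$ and its symplectic orthogonal $E = D^{\perp} = (P+\bar P) \cap TM$. First I would verify that $D$ is integrable: for real $X,Y\in V_D(M)$ the bracket $[X,Y]$ is real and lies in both $P$ and $\bar P$ by integrability of each, hence in $D$; integrability of $E$ is the strong-integrability hypothesis. At each point the quotient $E_m/D_m$ is symplectic of real dimension $2(n-k)$ with $k = \dim_{\mathbb{R}} D$, and the projection of $P$ there is a Lagrangian subspace disjoint from its conjugate, so by Proposition \ref{cxlagr} it carries a compatible complex (K\"ahler) structure. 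The target coordinate count is then forced: $a,b = 1,\dots,k$ for the real pairs $(p_a,q^b)$ and $\alpha = 1,\dots,n-k$ for the complex $z^{\alpha}$.

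On a small neighbourhood $U$ of a chosen point I would build the coordinates in three layers. Applying Frobenius to $E$ yields a local submersion onto a $k$-dimensional leaf space, whose coordinates pulled back give $q^1,\dots,q^k$. Each leaf of $E$ is a coisotropic submanifold whose characteristic distribution is $D$, and its symplectic reduction is a $2(n-k)$-dimensional manifold carrying the induced K\"ahler polarization (Proposition \ref{redpol}). Applying the K\"ahler local form (cf.\ equation (\ref{kahlerscalar})) on each such reduction produces holomorphic coordinates $z^{\alpha}$ and a local K\"ahler scalar; making smooth choices across $E$-leaves assembles these into functions $z^{\alpha}$ and a single scalar $K(q,z,\bar z)$ on $U$. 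The remaining $k$ coordinates come from the $D$-fibres: the map $X\mapsto X\lrcorner\omega$ identifies $D_m$ linearly with the annihilator of $E_m$ in $T^{*}_{m}M$, and an analogue of Proposition \ref{connectionleaf} applied to $D$ (which plays the Lagrangian role inside each $E$-leaf modulo the $D$-flow) provides a flat affine structure on each $D$-leaf; fixing a local transverse section as origin then defines the coordinates $p_a$.

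By construction $P$ is spanned by $\partial/\partial p_a$ (tangent to $D\subset P$) and $\partial/\partial z^{\alpha}$ (lifts of the holomorphic directions of the reduced K\"ahler polarization). To establish the explicit form (\ref{potstr}) I would expand $d\theta$ for the candidate $\theta = p_a\, dq^a - i(\partial K/\partial z^{\alpha})\,dz^{\alpha} - \tfrac{i}{2}(\partial K/\partial q^a)\,dq^a$ and match against $\omega$ term by term: the $dp_a\wedge dq^a$ piece is forced by nondegeneracy of the pairing between $D$ and the $q$-directions, the $i\,\partial\bar\partial K$ piece reproduces the K\"ahler form on each reduced quotient, and the mixed term $-\tfrac{i}{2}(\partial K/\partial q^a)\,dq^a$ is engineered precisely to cancel the unwanted $dq\wedge dz$ and $dq\wedge d\bar z$ cross-terms that the $q$-dependence of $K$ would otherwise generate. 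The main obstacle will be to coordinate all three coordinate choices simultaneously so that a single smooth $K$ has exactly the derivatives needed to absorb every cross-term; this is where strong integrability is essential, since it guarantees that the family of K\"ahler reductions across $E$-leaves fits together smoothly in $q$, allowing the Kähler scalars to be patched into a single function rather than only being defined leaf-by-leaf.
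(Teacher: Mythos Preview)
Your setup---the nested foliations $D\subset E$, the $q^a$ as leaf-space coordinates for $E$, the $z^\alpha$ and leafwise K\"ahler scalar from the reduction of each $E$-leaf, and the $p_a$ as fibre coordinates along $D$---is exactly the scaffolding the paper uses. So the architecture is right.

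The gap is in the last step. Your plan is to write down the candidate potential $\theta$, expand $d\theta$, and ``match against $\omega$ term by term''. But matching is not available: once the coordinates are in place, $\omega$ will in general contain terms $\zeta_{ab}\,dq^a\wedge dq^b$ and $\xi_{a\alpha}\,dq^a\wedge dz^{\alpha}$ (plus conjugate) that are not a priori of the form produced by $d\theta$. Nothing you have built so far forces $\zeta_{ab}=0$ or forces $\xi_{a\alpha}$ to equal $-\tfrac{i}{2}\,\partial^2 K/\partial q^a\partial z^{\alpha}$ for the \emph{same} $K$ that controls the $dz\wedge d\bar z$ part. The paper's proof handles this not by matching but by exploitation of freedom: it writes the most general $\omega$ compatible with the constraints already imposed, then uses $d\omega=0$ repeatedly to show that $\zeta_{ab}$ and $\xi_{a\alpha}$ are $p$-independent and come from potentials $\eta_a(q,z,\bar z)$ and $g(q,z,\bar z)$; it then \emph{redefines} the $p_a$ (twice, by shifting the transverse section of $D$) to kill $\zeta_{ab}$ and the real part of $g$, and uses the residual freedom $K\mapsto K+f(q,z)+\bar f(q,\bar z)$ in the K\"ahler scalar to identify $K$ with $\mathrm{Im}(g)$. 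Only after these adjustments does the explicit form (\ref{potstr}) drop out.

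You correctly flag this as ``the main obstacle'', but you misdiagnose its resolution: strong integrability is already spent in producing the $E$-foliation and hence the family of K\"ahler reductions; it does not by itself force the $q$-dependence of $\omega$ into the derivatives of a single $K$. What does the work is closedness of $\omega$ together with the slack in the choices of $p_a$ and of $K$. Your write-up needs to make those redefinitions explicit.
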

\begin{proof}
We present a proof by constructing such coordinates. Let $n = \dim D = \dim P\cap\bar{P}\cap TM$ be the number of real directions in $P$ and take $n$ independent real functions $q^{1}, ..., q^{n}$ which are constant along $E = D^{\perp} = (P+\bar{P})\cap TM$. Since $P$ is integrable, one can find $n' = \frac{1}{2}\dim M - n$ complex functions $z^{1},...,z^{n}$ such that $\bar{P}$ is spanned by the Hamiltonian vector fields $X_{q^{a}}$ and $X_{z^{\alpha}}$.

Restric the analysis to a neighbourhood sufficiently small so that both $D$ and $E$ can be taken to be integrable in it. Note that the functions $z^{\alpha}$ are constant along $D$. Furthermore, each leaf $\Lambda_{q}$ of E is coisotropic (since $D = E^{\perp}$ is isotropic) and compatible with $P$: the characteristic foliation $K$ of $E$ is $E^{\perp} = D$, so that $[P\cap (T\Lambda_{q})_{\mathbb{C}}]+K_{\mathbb{C}} = P$, which is integrable. Hence, by propostion \ref{redpol}, $P$ projects to a polarization $P'$ on the reduction of the leaf $\Lambda_{q}$ of $E$. But the reduction quotients out the real directions in $D$, so $P'$ determines a Kahler structure on the reduction of $\Lambda_{q}$. This means that there exists a function $K_{q}(z,\bar{z})$ such that
\begin{equation}\label{labdaq}
\omega|_{\Lambda_{q}} = i\frac{\partial^{2}K}{\partial z^{\alpha}\partial\bar{z}^{\beta}}dz^{\alpha}\wedge d\bar{z}^{\beta}.
\end{equation}
We then define $K(q,z,\bar{z}) = K_{q}(z,\bar{z})$. Note that the last equation only defines $K(q,z,\bar{z})$ up to the addition of $f(q,z)+\bar{f}(q,\bar{z})$ for $f$ holomorphic in $z$ ($\ast $).

Pick a section $C$ of $D$ and define $n$ real functions $p_{b}$ by: $p_{b}|_{C} = 0$, $X_{q^{a}}(p_{b}) = \delta^{a}_{b}$ ($\ast \ast)$. Since the fields $X_{q^{a}}$ span $D$, this determines the functions $p_{b}$ in a neighbourhood of $C$. Thus the functions $q, p, \text{Im} (z), \text{Re} (z)$ form a local coordinate system in $M$, in which $D$ is spanned by the vector fields $\partial/\partial p_{b}$ and $P$ is spanned by both the vector fields $\partial/\partial p_{b}$ and $\partial/\partial z^{\alpha}$.

Now, on representing the real two-form $\omega$ in these coordinates, one should use the information known about it to narrow down the possible terms in the coordinate basis of $\Omega^{2}(M)$. First, $\{q^{a},p_{b}\} = X_{q^{a}}(p_{b}) = \delta^{a}_{b}$ shows that $p$ and $q$ are conjugate variables. Second, $\omega|_{P} = 0$ and $P$ is spanned by $\{\partial/\partial p_{b}, \partial/\partial z^{\alpha}\}$. Third, $\omega|_{\Lambda_{q}}$ is given by equation (\ref{labdaq}). Therefore, the most general expression for $\omega$ is
\begin{equation}\nonumber
\omega = dp_{q}\wedge dq^{a} + \zeta_{ab}dq^{a}\wedge dq^{b} + \xi_{a\alpha}dq^{a}\wedge dz^{\alpha} + \bar{\xi}_{a\alpha}dq^{a}\wedge d\bar{z}^{\alpha} + i\frac{\partial^{2}K}{\partial z^{\alpha}\partial\bar{z}^{\beta}}dz^{\alpha}\wedge d\bar{z}^{\beta}.
\end{equation}
The linearly independent terms that appear upon imposing $d\omega = 0$ give
\begin{equation}\nonumber
\frac{\partial \zeta_{ab}}{\partial p_{c}}dp_{c}\wedge dq^{a}\wedge dq^{b} = 0, \,\,\,\, \frac{\partial \xi_{a\alpha}}{\partial p_{c}}dp_{c}\wedge dq^{a}\wedge dz^{\alpha} = 0,
\end{equation}
from which follows that $\zeta_{ab}$ and $\xi_{ab}$ don't depend on $p$, and also
\begin{equation}\nonumber
\frac{\partial \zeta_{ab}}{\partial q^{c}}dq^{c}\wedge dq^{a}\wedge dq^{b} = 0,
\end{equation}
which means that the two-form $\zeta_{ab}|_{(z, \bar{z} \text{ fixed) }}dq^{a}\wedge dq^{b}$ is closed. Because the calculations are all local, we can take this to be exact, so that there are functions $\eta_{a}(q,z,\bar{z})$ such that
\begin{equation}\nonumber
\zeta_{ab} = \frac{1}{2}\left(\frac{\partial \eta_{b}}{\partial q^{a}} - \frac{\partial\eta_{a}}{\partial q^{b}}\right).
\end{equation}
Now let us change the section $C$ used in ($\ast\ast $) so that we trade $p_{a}\mapsto p_{a}-\eta_{a}$, which implies
\begin{equation}\nonumber
dp_{a}\wedge dq^{q}+\zeta_{ab}dq^{a}\wedge dq^{b}\mapsto d(p_{a} - \eta_{a})\wedge dq^{a} + \frac{1}{2}\left(\frac{\partial \eta_{b}}{\partial q^{a}} - \frac{\partial\eta_{a}}{\partial q_{b}}\right)dq^{a}\wedge dq^{b} = dp_{a}\wedge dq^{b} + \text{ z-dependent terms},
\end{equation}
so we may actually take $\zeta_{ab} = 0$, meaning that we absorb the corresponding term in the definition of the $p_{b}$. Rewriting $\omega$ and again imposing $d\omega = 0$ gives
\begin{equation}\nonumber
\frac{\partial\xi_{a\alpha}}{\partial z^{\beta}}dz^{\beta}\wedge dz^{\alpha}\wedge dq^{a} = 0, \,\,\,\, \frac{\partial\xi_{a\alpha}}{\partial q^{b}}dq^{b}\wedge dq^{a}\wedge dz^{\alpha} = 0,
\end{equation}
which we again solve trivially, by using a potential $g(q,z,\bar{z})$ such that
\begin{equation}\nonumber
2\xi_{a\alpha} = -\frac{\partial^{2}g}{\partial q^{a}\partial z^{\alpha}},
\end{equation}
which then defines $g$ only up to the addition of an arbitrary $h(z, \bar{z})$ ($\ast \ast\ast$). Finally, an additional term in the equation $d\omega = 0$ gives
\begin{equation}\nonumber
\begin{split}
\left(\frac{\partial\xi_{a\alpha}}{\partial z^{\beta}} - \frac{\partial\bar{\xi}_{a\beta}}{\partial z^{\alpha}} + i\frac{\partial^{3}K}{\partial q^{a}\partial z^{\alpha}\partial\bar{z}^{\beta}}\right) & dq^{a}\wedge dz^{\alpha}\wedge d\bar{z}^{\beta} = 0 \Rightarrow \frac{\partial^{3}(g-\bar{g})}{\partial q^{a}\partial z^{\alpha}\partial\bar{z}^{\beta}} + 2i\frac{\partial^{3}K}{\partial q^{a}\partial z^{\alpha}\partial\bar{z}^{\beta}} = 0 \\
& \Rightarrow \frac{\partial^{3}}{\partial q^{a}\partial z^{\alpha}\partial\bar{z}^{\beta}}\left[ K - \frac{(g-\bar{g})}{2i}\right] = 0,
\end{split}
\end{equation}
so $K - \text{Im}(g)$ is at most something of the form $-f(q,z) - \bar{f}(q,\bar{z}) + h(z,\bar{z})$, with $f$ holomorphic in $z$. One can thus use the freedom in ($\ast$) to redefine $K\mapsto K+f+\bar{f}$ and ($\ast\ast\ast$) to redefine $g\mapsto g+h$. In the end of the day, we may assume that $K = \text{Im}(g)$.

Finally, replace $C$ once more in ($\ast\ast$) in order to take
\begin{equation}\nonumber
p_{a}\mapsto p_{a} - \frac{\partial}{\partial q^{a}}\left(\frac{g+\bar{g}}{4}\right),
\end{equation}
so that, in these coordinates,
\begin{equation}\nonumber
\begin{split}
\omega = d\left[ p_{a} - \frac{\partial}{\partial q^{a}}\left(\frac{g+\bar{g}}{4}\right)\right]\wedge dq^{a} - \frac{1}{2}\frac{\partial^{2}}{\partial q^{a}\partial z^{\alpha}}\left[\left(\frac{g+\bar{g}}{2}\right)+iK\right]dq^{a}\wedge dz^{\alpha} + \\
- \frac{1}{2}\frac{\partial^{2}}{\partial q^{a}\partial\bar{z}^{\beta}}\left[\left(\frac{g+\bar{g}}{2}\right)-iK\right]dq^{a}\wedge d\bar{z}^{\beta}+i\frac{\partial^{2}K}{\partial z^{\alpha}\partial\bar{z}^{\beta}}dz^{\alpha}\wedge d\bar{z}^{\beta} \,\,\,\,\Rightarrow \\
\omega = dp_{a}\wedge dq^{a} - \frac{i}{2}\frac{\partial^{2}K}{\partial q^{a}\partial z^{\alpha}} dq^{a}\wedge dz^{\alpha} + \frac{i}{2}\frac{\partial^{2}K}{\partial q^{a}\partial\bar{z}^{\beta}}dq^{a}\wedge d\bar{z}^{\beta} + i\frac{\partial^{2}K}{\partial z^{\alpha}\partial\bar{z}^{\beta}}dz^{\alpha}\wedge d\bar{z}^{\beta}.
\end{split}
\end{equation}
\end{proof}

Note that the expression between brackets in equation (\ref{potstr}) is a symplectic potential adapted to $P$. Hence every strongly integrable polarization is admissible, as mentioned before.

In quantization, the situation one typically has is: $P$ is a strongly integrable polarization on a symplectic manifold $(M,\omega)$ and $B\to M$ is a prequantum bundle over $M$.
\begin{polsection}
A smooth section $s: M\to B$ is said to be polarized along $P$ if $\nabla_{\bar{X}}s = 0, \,\,\, \forall X\in V_{P}(M)$. We denote these sections by $C^{\infty}_{P}(B)$.
\end{polsection}

Since the curvature of $\nabla$ is proportional to $\omega$, its restriction to $P$ is flat, so it is always possible to find local polarized sections. One can then use the potential $\theta$, defined in a simply connected open neighbourhood $U\subset M$ and adapted to $P$, to define a section $s: U\to B$ by picking initial $m\in U$ and $b_{0}\in B_{m}$ and taking
\begin{equation}\label{sfromtheta}
s(\gamma_{t}m) = b\exp\left(-\frac{i}{\hbar}\int_{m}^{\gamma_{t}m}\theta\right),
\end{equation}
for any piecewise smooth path $\gamma_{t}$, where $b\in B_{\gamma_{t}m}$ is obtained from $b_{0}$ by parallel transport along $\gamma$. Then $Ds = -i\frac{\theta}{\hbar}\otimes s$, so that $\theta = \theta_{s}$. In using this frame to write any other section as $s' = \phi s$, the fact that $\theta_{s}$ is adapted to $P$ implies
\begin{equation}\nonumber
s' = \phi s\in C^{\infty}_{P}(B)\Leftrightarrow 0  = \nabla_{\bar{X}}s' = \left[\bar{X}(\phi) - \frac{i}{\hbar}(\bar{X}\lrcorner\theta_{s})\right]s = \bar{X}(\phi)s , \forall X\in V_{P}(M) \Leftrightarrow \phi \in C^{\infty}_{P}(M),
\end{equation}
so that polarized sections are represented by polarized functions. In the case of a positive Kahler polarization, we have the adapted potential $\theta = -i\partial K$, where $K$ is defined through equation (\ref{kahlerscalar}). In the constructed frame, the polarized sections are given by holomorphic functions of $z$, the coordinates holomorphic with respect to the complex structure determined by $P$. We see that one can use this section to determine the trivialization of $B$ in the neighbourhood of any point, so that the transition functions of $B$ are holomorphic. Hence a Kahler polarization $P$ ends up giving $B$ the structure of a holomorphic line bundle. One can show then that the space $\mathcal{H}_{P}\subset \mathcal{H}$ of square-integrable polarized sections of $B$ is a Hilbert subspace of the prequantum Hilbert space $\mathcal{H}$.

In the more general case of a strongly integrable polarization, equation (\ref{potstr}) says that 
\begin{equation}\label{potentialK}
\theta = p_{a}dq^{a}-i\frac{\partial K}{\partial z^{\alpha}}dz^{\alpha} - \frac{i}{2}\frac{\partial K}{\partial q^{a}}dq^{a}
\end{equation}
is an adapted symplectic potential. In the corresponding frame, polarized sections are represented by elements of $C^{\infty}_{P}(M)$, which are functions $\phi(q,z)$, holomorphic in $z$. The definition of $\mathcal{H}_{P}$, though, is generally not as straightforward as in the positive case, as the existence of square-integrable polarized sections might not even be guaranteed.

Note that, since $\nabla$ is compatible with the Hermitian structure on the fibres,
\begin{equation}\nonumber
d(s,s) = (Ds, s) + (s, Ds) = \frac{i}{\hbar}(\theta - \bar{\theta})(s,s) \Rightarrow \frac{\theta - \bar{\theta}}{2i} = d\left[\frac{\hbar}{2}\ln (s,s)\right].
\end{equation}
But, from equation (\ref{potentialK}), $\text{Im} (\theta) = -\frac{1}{2}dK$, so that, for a section $s' = \phi s$,
\begin{equation}\label{ints}
(s',s') = \bar{\phi}\phi(s,s) = \bar{\phi}\phi \exp\left[\frac{2}{\hbar}\int\left(\frac{\theta - \bar{\theta}}{2i}\right)\right] = \bar{\phi}\phi e^{-K/\hbar},
\end{equation}
by adding the integration constant to $K$. Expression (\ref{ints}) gives the Hermitian structure in terms of the holomorphic functions representing the sections in the local frame.

\begin{quantumsphere}\emph{Quantization of the Sphere}

In the case we have been discussing of elementary systems with rotational symmetry, which are spheres of radii $s = N\frac{\hbar}{2}$, the symplectic manifold is the reduction of 
\begin{equation}\nonumber
(S^{3},\omega_{f} = d\theta_{f}), \,\,\,\,\,\,\,\,\, \theta_{f} = is(z^{0}d\bar{z}^{0}+z^{1}d\bar{z}^{1}-\bar{z}^{0}dz^{0}-\bar{z}^{1}dz^{1}),
\end{equation}
where $z^{1},z^{2}$ are holomorphic coordinates on $\mathbb{C}^{2}$. As we verified before, for these radii proposition \ref{reducequantum} guarantees the existence of a prequantum bundle, also providing a way to construct it. Following the same procedure used in the proof, each smooth section of the prequantum bundle $B\to M$ is a smooth function $\psi :S^{3}\to\mathbb{C}$ such that
\begin{equation}\nonumber
\nabla_{X}(\psi e) = 0, \,\, \forall X\in K \,\,\,\,\Rightarrow \,\,\,\, X\lrcorner d\psi = \frac{i}{\hbar}(X\lrcorner\theta_{f})\psi,\,\, \forall X\in K,
\end{equation}
where $K$ is the characteristic foliation of $(S^{3}, d\theta_{f})$. Integrating this relation along one of the circles which make up the leaves of $K$ gives
\begin{equation}\label{hom}
\psi(e^{it}z^{\alpha},e^{-it}\bar{z}^{\beta}) = \exp\left( \frac{i}{\hbar}\int_{0}^{t}\theta_{f}\right)\psi(z^{\alpha},\bar{z}^{\beta}) = \exp\left(\frac{i}{\hbar}2st\right)\psi(z^{\alpha},\bar{z}^{\beta}) = (e^{it})^{N}\psi(z^{\alpha},\bar{z}^{\beta}),
\end{equation}
by using the explicit form of $\theta_{f}$.

Now, $S^{3}\subset\mathbb{C}^{2}$ is coisotropic, as can be seen from the fact that $(TS^{3})^{\perp}$ is one-dimensional and hence isotropic. Moreover, it is given by $0 = f := z^{0}\bar{z}^{0} + z^{1}\bar{z}^{1} - 1$. The Hamiltonian flow generated by $f$ is found from
\begin{equation}\nonumber
\sum_{j=0}^{1}(z^{j}d\bar{z}^{j} + \bar{z}^{j}dz^{j}) = df = -X_{f}\lrcorner\omega = -\left(X_{f_{j}}\frac{\partial}{\partial z^{j}} + X_{\bar{f_{j}}}\frac{\partial}{\partial \bar{z}^{j}}\right),
\end{equation}
so that $f$ generates the flow
\begin{equation}\nonumber
\frac{dz^{j}(\rho_{t}(z,\bar{z}))}{dt} = X_{f_{j}}(\rho_{t}(z,\bar{z})) = \frac{i}{\hbar}z^{j}(\rho_{t}(z,\bar{z})) \,\,\,\, \Rightarrow \,\, \,\, \rho_{t}z^{j} = e^{it/\hbar}z^{j}.
\end{equation}
Now take an arbitrary $g\in S_{P}^{0}(\mathbb{C}^{2})$, where $P$ is the holomorphic polarization of $\mathbb{C}^{2}$. Then
\begin{equation}\nonumber
\frac{\partial g}{\partial \bar{z}^{j}} = 0, \,\, j\in \{0,1\} \,\,\,\, \Rightarrow \,\,\,\,\frac{\partial}{\partial \bar{z}^{j}}\{f,g\} = \frac{\partial}{\partial \bar{z}^{j}}\left[\frac{d}{dt}g(e^{it/\hbar}z)\Big|_{t=0}\right] = 0,
\end{equation}
so $\{f,g\}\in S_{P}^{0}(M)$, which implies that the flow of $f$ preserves the polarization. Therefore $S^{3}$ is compatible with the holomorphic polarization of $\mathbb{C}^{2}$ so that, by proposition \ref{redpol}, this defines a polarization in the reduction of $S^{3}$. We use this polarization to restric the space of states, so that the only admissible sections are represented by functions on $S^{3}$ which are holomorphic in $(z^{0},z^{1})$. We conclude that the states are functions $\psi(z^{0},z^{1})$, holomorphic in both the coordinates, such that $\psi(e^{it}z^{\alpha}) = (e^{it})^{N}\psi(z^{\alpha})$ for any real $t$. Hence $\psi $ is a homogeneous function of the coordinates $(z^{0},z^{1})$ of degree $N$, which therefore is of the form
\begin{equation}\nonumber
\psi = \psi_{A_{1}A_{2}...A_{N}}z^{A_{1}}z^{A_{2}}...z^{A_{N}},
\end{equation}
where the indices $A_{i}$ are summed over $\{0,1\}$. Therefore a quantum state is given by a symmetric $N$-index spinor $\psi_{A_{1}A_{2}...A_{N}}$, where $s = \frac{\hbar}{2}N$. Lastly, we show that the hermitian structure on the space of states coincides with the inner product of spinors.

Note that, since reduction is, in this case, quotient by the circles $e^{it}(z^{0},z^{1}), \,\, t\in \mathbb{R}$, the holomorphic coordinate $z = z^{0}/z^{1}$ is well defined on the reduction (but for the point at infinity). We claim that $\omega_{f}$ projects to
\begin{equation}\label{omegasphere}
\omega = \frac{i\hbar Ndz\wedge d\bar{z}}{(1+z\bar{z})^{2}}.
\end{equation}
Indeed, if we treat $z^{0},z^{1}$ in $z=z^{0}/z^{1}$ as independent but for the relation $z^{0}\bar{z}^{0}+z^{1}\bar{z}^{1} = 1$ (which implies $z^{0}d\bar{z}^{0}+\bar{z}^{0}dz^{0}+\bar{z}^{1}dz^{1}+z^{1}d\bar{z}^{1} = 0$),
\begin{equation}\nonumber
\begin{split}
\frac{i\hbar Ndz\wedge d\bar{z}}{(1+z\bar{z})^{2}} &= \frac{2is(z^{1}\bar{z}^{1})^{2}}{(z^{1}\bar{z}^{1}+z^{0}\bar{z}^{1})^{2}}\left(\frac{dz^{0}}{z^{1}}-\frac{z^{0}}{(z^{1})^{2}}dz^{1}\right)\wedge\left(\frac{d\bar{z}^{0}}{\bar{z}^{1}}-\frac{\bar{z}^{0}}{(\bar{z}^{1})^{2}}d\bar{z}^{1}\right)\\
& = 2is[z^{1}\bar{z}^{1}dz^{0}\wedge d\bar{z}^{0} - \bar{z}^{0}dz^{0}\wedge(-z^{0}d\bar{z}^{0}-\bar{z}^{0}dz^{0}-\bar{z}^{1}dz^{1}) + \\
& - \bar{z}^{1}dz^{1}\wedge(-\bar{z}^{0}dz^{0}-z^{1}d\bar{z}^{1}-\bar{z}^{1}dz^{1})+z^{0}\bar{z}^{0}dz^{1}\wedge d\bar{z}^{1}]|_{TS^{3}}\\
&= 2is(dz^{0}\wedge d\bar{z}^{0}+dz^{1}\wedge d\bar{z}^{1})|_{TS^{3}} = \omega_{f}|_{TS^{3}}.
\end{split}
\end{equation}
Note that equation (\ref{omegasphere}) also implies that $K = \hbar N \ln (1+z\bar{z})$ is a Kahler scalar, i.e., that $\omega = i\partial\bar{\partial}K$. 

Therefore, over the region where the $z$ coordinate is well-defined,
\begin{equation}\nonumber
\psi = \psi_{A_{1}A_{2}...A_{N}}z^{A_{1}}z^{A_{2}}...z^{A_{N}} = (z^{1})^{N}\sum_{k=0}^{N}\binom{N}{k}\psi_{k}z^{k},
\end{equation}
where $\psi_{k} = \psi_{A_{1}A_{2}...A_{N}}$ with $A_{i} = 1 \, \Leftrightarrow \, 1\leq i \leq k$ (remember that $\psi_{A_{1}A_{2}...A_{N}}$ is totally symmetric). Thus, in the trivialization $s$ specified by the potential $-i\partial K$, a state is of the form $s' = \psi s$, with
\begin{equation}\nonumber
\psi (z) = \sum_{k=0}^{N}\binom{N}{k}\psi_{k}z^{k},
\end{equation}
and, since $K = \hbar N\ln (1+z\bar{z})$ and $\omega/(2\psi\hbar)$ is a natural volume form,
\begin{equation}\label{calcspinor}
\begin{split}
\langle\psi , \psi\rangle &= \int\psi\bar{\psi}e^{-K/\hbar}\omega = \int\psi\bar{\psi}(1+z\bar{z})^{-N}\frac{i\hbar Ndz\wedge d\bar{z}}{(2\pi\hbar)(1+z\bar{z})^{2}} \\
&= \frac{iN}{2\pi}\int\sum_{k=0}^{N}\binom{N}{k}\psi_{k}z^{k}\sum_{l=0}^{N}\binom{N}{l}\bar{\psi}_{l}\bar{z}^{l}\frac{dz\wedge d\bar{z}}{(1+z\bar{z})^{N+2}} \\
&= -2N\sum_{k=0}^{N}\binom{N}{k}^{2}\bar{\psi}_{k}\psi_{k}\int_{0}^{\infty}\frac{r^{2k+1}dr}{(1+r^{2})^{N+2}}\\
&=N\sum_{k=0}^{N}\binom{N}{k}^{2}\bar{\psi}_{k}\psi_{k}\int_{0}^{1}t^{N-k}(1-t)^{k}dt, \,\,\,\,\,\, t = (1+r^{2})^{-1}\\
&=N\sum_{k=0}^{N}\binom{N}{k}^{2}\bar{\psi}_{k}\psi_{k}B(N-k+1,k+1) = N\sum_{k=0}^{N}\binom{N}{k}^{2}\bar{\psi}_{k}\psi_{k}\frac{(N-k)!k!}{(N+1)!} \\
&= \frac{N}{N+1}\sum_{k=0}^{N}\binom{N}{k}\bar{\psi}_{k}\psi_{k} = \frac{N}{N+1}\sum_{A_{1}A_{2}...A_{N}}\bar{\psi}_{A_{1}A_{2}...A_{N}}\psi_{A_{1}A_{2}...A_{N}},
\end{split}
\end{equation}
where $B(x,y)$ is the beta function. Hence the Hermitian structure is given by the inner product of spinors\footnote{On the relation between such spinor spaces and the representation theory of the rotation groups, see for example \cite{chevalley1996algebraic,lounesto2001clifford,penrose1984spinors}.}.
\end{quantumsphere}

\begin{wavequation}\label{wavesss}\emph{The relativistic wave equations}

We apply the same procedure to the phase spaces of the free particles. For a massive particle of spin $s=N\frac{\hbar}{2}$, we have seen that the phase space is the reduction of $(C_{sm},\omega ')$. We can accomodate the sign difference in the definition of $\omega '$ by writing it as $\omega ' = d\theta '$ in terms of the potential
\begin{equation}\label{potwave}
\theta ' = \frac{is\sqrt{2}}{m}\left(p_{A\bar{A}}z^{A}d\bar{z}^{\bar{A}} - p_{A\bar{A}}\bar{z}^{\bar{A}}dz^{A}\right) + \xi q^{a}dp_{a},
\end{equation}
where $\xi = p_{0}/|p_{0}|$ is the sign of $p_{0}$. For these values of spin, proposition \ref{reducequantum} implies that there is a prequantum bundle, and its sections can be constructed as smooth functions $\psi :C_{sm}\to\mathbb{C}$ which are covariantly constant along the leaves of the characteristic foliation in $C_{sm}$. Since reduction is given by the quotient by $(p_{a},q^{b},z^{C})\sim (p_{a}, q^{b}+\lambda p^{b},e^{i\phi}z^{C}), \,\, \forall \lambda, \phi \in \mathbb{R}$, the characteristic foliation is spanned by vector fields which generate the flow $(p_{a}, q^{b}+\lambda p^{b},e^{i\phi}z^{C})$. These are of the form
\begin{equation}\nonumber
p^{b}\frac{\partial}{\partial q^{b}}, \,\,\,\, \text{ and } \,\,\,\, iz^{A}\frac{\partial}{\partial z^{A}} - i\bar{z}^{\bar{A}}\frac{\partial}{\partial \bar{z}^{\bar{A}}}
\end{equation}
(compare the second with the characteristic foliation of $S^{3}$). Using the potential (\ref{potwave}), $\nabla_{X}(\psi e) = 0, \forall X\in K$ gives
\begin{equation}\label{secredpoinc}
\begin{split}
0 &= X\lrcorner d\psi - \frac{i}{\hbar}(X\lrcorner\theta ')\psi \,\,\,\,\Rightarrow \\
0=p^{b}\frac{\partial\psi}{\partial q^{b}},  \,\,\,\, 0= iz^{A}\frac{\partial\psi}{\partial z^{A}} &- i\bar{z}^{\bar{A}}\frac{\partial\psi}{\partial\bar{z}^{\bar{A}}} - \frac{2is\sqrt{2}p_{A\bar{A}}z^{A}\bar{z}^{\bar{A}}}{\hbar m}\psi = i\left[z^{A}\frac{\partial\psi}{\partial z^{A}} - \bar{z}^{\bar{A}}\frac{\partial\psi}{\partial\bar{z}^{\bar{A}}} - n\psi\right].
\end{split}
\end{equation}
Again there is a polarization of $T^{*}\mathbb{M}\times\mathbb{S}$ which is compatible with the coisotropic submanifold $C_{sm}$. On the component in which $\xi =1$ it is $\text{span}\{\partial/\partial q^{a}, \partial/\partial z^{A}\}$ and on the component in which $\xi = -1$ it is $\text{span}\{\partial/\partial q^{a}, \partial/\partial\bar{z}^{A}\}$. Hence it projects to a polarization on the reduction of $C_{sm}$. On the component $M_{sm}^{+}$, polarization further restricts the functions $\psi\in C^{\infty}_{\mathbb{C}}(C_{sm})$ by
\begin{equation}\label{polwave}
\frac{\partial\psi}{\partial q^{a}} = 0 = \frac{\partial\psi}{\partial\bar{z}^{\bar{A}}}.
\end{equation}
Therefore, from equations (\ref{secredpoinc}) and (\ref{polwave}), we see that each state in $\mathcal{H}_{P}$ is given by a $\psi\in C^{\infty}_{\mathbb{C}}(C_{sm})$ independent of $q$, holomorphic in $z$, and such that $z^{A}\partial_{z^{A}}\psi = n\psi$, where $s = n\frac{\hbar}{2}$. Thus its dependence on the $z^{A}$ coordinates should be that of a homogeneous polinomial of degree $n$, so that
\begin{equation}\nonumber
\psi(p,z) = \psi_{A_{1}A_{2}...A_{n}}(p)z^{A_{1}}z^{A_{2}}...z^{A_{n}}
\end{equation}
for some $n$-index spinor field $\psi_{A_{1}...A_{n}}$ on the future-pointing component $H_{m}^{+}$ of the mass shell $\{p_{a}p^{a}=m^{2}\}$. Note that the spinor indices transform correctly, since they are contrated with the spinors $z^{A}$. We denote this half of the Hilbert space $\mathcal{H}_{P}$ by $\mathcal{H}_{sm}^{+}$. Repeating the calculation of equation (\ref{calcspinor}), the inner product in $\mathcal{H}_{sm}^{+}$ becomes, in terms of the spinor components,
\begin{equation}\nonumber
\langle\psi e,\psi e\rangle = \int_{H_{m}^{+}}(\psi , \psi)d\tau, \,\,\,\, \text{ where }\,\,\,\, (\psi,\psi) = p^{A_{1}\bar{A}_{1}}p^{A_{2}\bar{A}_{2}}...p^{A_{n}\bar{A}_{n}}\bar{\psi}_{\bar{A}_{1}\bar{A}_{2}...\bar{A}_{n}}\psi_{A_{1}A_{2}...A_{n}},
\end{equation}
and $d\tau $ is the natural volume element in $H_{m}^{+}\subset\mathbb{M}$ invariant under the Poincar\'e group,
\begin{equation}\nonumber
d\tau = \frac{1}{\hbar^{2}}\frac{dp_{1}\wedge dp_{2}\wedge dp_{3}}{|p_{0}|}.
\end{equation}

Consider now the massive wave equation
\begin{equation}\label{massivewave}
(\Box + \mu^{2})\phi_{A_{1}A_{2}...A_{n}} = 0
\end{equation}
for $\mu = m/\hbar$. A plane wave of the form $e^{-ip_{a}x^{a}/\hbar}$, where $p_{a}p^{a} = m^{2}$, is obviously a solution. In fact, all `well-behaved' solutions are linear combinations of these, a fact summarized in the Fourier transform
\begin{equation}\nonumber
\phi_{A_{1}A_{2}...A_{n}}(x) = \left(\frac{1}{2\pi}\right)^{\frac{3}{2}}\int_{H_{m}}\psi_{A_{1}A_{2}...A_{n}}(p)e^{-ip_{a}x^{a}/\hbar}d\tau,
\end{equation}
for some spinor-valued function $\psi_{A_{1}A_{2}...A_{n}}(p)$ on $H_{m}=\{p_{a}p^{a}=m^{2}\}$. So we can use the Fourier transform to associate each element of $\mathcal{H}_{sm}^{+}$ to an element of the `space of solutions of (\ref{massivewave}) with well-defined Fourier transform'. Note that this sends $\mathcal{H}_{sm}^{+}$ to positive frequency solutions: the ones for which the Fourier transform vanishes on $H_{m}^{-}$. That this association is invariant under the Lorentz group is clear from the spinor indices and the form of the Fourier transform. Under a translation $x\mapsto x+y$,
\begin{equation}\nonumber
\begin{split}
\left(\frac{1}{2\pi}\right)^{\frac{3}{2}}\int_{H_{m}^{+}}\tilde{\psi}_{A_{1}A_{2}...A_{n}}(p)e^{-ip_{a}x^{a}/\hbar}d\tau &= \tilde{\phi}_{A_{1}...A_{n}}(x) = \phi_{A_{1}...A_{n}}(x-y) \\
&= \left(\frac{1}{2\pi}\right)^{\frac{3}{2}}\int_{H_{m}^{+}}\psi_{A_{1}...A_{n}}e^{-ip_{a}(x-y)^{a}/\hbar}d\tau,
\end{split}
\end{equation}
so we must have $\psi_{A_{1}...A_{n}}\mapsto e^{ip_{a}y^{a}/\hbar}\psi_{A_{1}...A_{n}}$. Indeed, by equation (\ref{potwave}), the symplectic potential transforms by $\theta '\mapsto \theta ' + d(y^{a}p_{a})$, which implies that the trivialization of the prequantum bundle determined by it (equation (\ref{sfromtheta})) transforms as
\begin{equation}\nonumber
s(\gamma_{t}m)\mapsto  b\exp\left(-\frac{i}{\hbar}\int_{m}^{\gamma_{t}m}\theta ' +d(y^{a}p_{a})\right) = e^{-ip_{a}y^{a}/\hbar}s(\gamma_{t}m),
\end{equation}
and, therefore, for some section $(\psi s)$ one must have $\psi_{A_{1}...A_{n}}\mapsto e^{ip_{a}y^{a}/\hbar}\psi_{A_{1}...A_{n}}$.

On the other component, $C_{sm}^{-}$, the polarization condition reads
\begin{equation}\nonumber
\frac{\partial\psi}{\partial q^{a}} = 0 = \frac{\partial\psi}{\partial\bar{z}^{\bar{A}}},
\end{equation}
so that, together with (\ref{secredpoinc}), it says that the states in $\mathcal{H}_{sm}^{-}$ are elements of $C^{\infty}_{\mathbb{C}}(C_{sm}^{-})$ which do not depend on $q$, are antiholomorphic in $z^{A}$, being furthermore homogeneous of degree $n$ in the $\bar{z}^{\bar{A}}$. We write
\begin{equation}\label{cxonj}
\psi(p,\bar{z}) = \bar{\psi}_{\bar{A}_{1}\bar{A}_{2}...\bar{A}_{n}}(p)\bar{z}^{\bar{A}_{1}}\bar{z}^{\bar{A}_{2}}...\bar{z}^{\bar{A}_{n}},
\end{equation}
and this time $\psi_{A_{1}A_{2}...A_{n}}$ is spinor-valued function on the other component $H_{m}^{-}$ of $H_{m}$. Again the inner product in $\mathcal{H}$ corresponds to the inner product of spinors and the Fourier transform gives a well-defined correspondence between $\mathcal{H}_{sm}^{-}$ and the negative frequency (the ones whose Fourier transform vanish on $H_{m}^{+}$) solutions of the wave equation (\ref{massivewave}).

Note that the complex structure 
\begin{equation}\nonumber
\mathcal{H}_{P}=\mathcal{H}_{sm}^{+}\oplus\mathcal{H}_{sm}^{-}\ni (\psi_{A_{1}...A_{n}}|_{H_{m}^{+}},\psi_{A_{1}...A_{n}}|_{H_{m}^{-}})\mapsto (i\psi_{A_{1}...A_{n}}|_{H_{m}^{+}},i\psi_{A_{1}...A_{n}}|_{H_{m}^{-}})
\end{equation}
is not mapped to the complex structure $\phi_{A_{1}...A_{n}}\mapsto i\phi_{A_{1}...A_{n}}$ on the space of solutions of (\ref{massivewave}). Rather, it is mapped antilinearly to the complex structure $J$ which multiplies the positive frequency part of $\phi_{A_{1}...A_{n}}$ by $-i$ and the negative frequency part by $i$, because of the complex conjugate in equation (\ref{cxonj}). Therefore, one can identify $\mathcal{H}_{P}$ with $\bar{V}_{(J)}$, the dual (as a complex vector space) of the space of solutions $V$ of the linear equation (\ref{massivewave}) with complex structure $J$. 
\end{wavequation}

The case of a massless particle of helicity $s=N\frac{\hbar}{2}$ is very similar. The phase space the reduction of $(C_{s0},d\theta')$, where
\begin{equation}\label{potwave2}
\theta ' = -i\omega^{A}d\bar{\pi}_{A}+i\bar{\omega}^{\bar{A}}d\pi_{\bar{A}}.
\end{equation}
Sections of the prequantum bundle are smooth functions $\psi :C_{sm}\to\mathbb{C}$ which are covariantly constant along the leaves of the characteristic foliation in $C_{s0}$. Vectors generating this foliation are of the form
\begin{equation}\nonumber
 i\omega^{A}\frac{\partial}{\partial \omega^{A}} - i\bar{\omega}^{\bar{A}}\frac{\partial}{\partial\bar{\omega}^{\bar{A}}}, \,\,\,\, \text{ and } \,\,\,\, i\pi^{\bar{A}}\frac{\partial}{\partial \pi^{\bar{A}}} - i\bar{\pi}^{A}\frac{\partial}{\partial \bar{\pi}^{A}}.
\end{equation}
The space comes with a polarization spanned by the projections of $\partial/\partial\omega^{A}$ and $\partial/\partial\bar{\omega}^{\bar{A}}$. Therefore, using the potential (\ref{potwave2}), and this polarization, we see that the elements of $\mathcal{H}_{P}$ are given by complex functions on $C_{s0}$ of the form $\psi(\pi_{\bar{A}},\bar{\pi}_{A})$ and such that
\begin{equation}\nonumber
\pi_{\bar{A}}\frac{\partial\psi}{\partial\pi_{\bar{A}}}-\bar{\pi}_{A}\frac{\partial\psi}{\partial\bar{\pi}_{A}} = -\frac{2n}{\hbar}\psi,
\end{equation}
remembering that $\omega^{A}\bar{\pi}_{A}+\bar{\omega}^{\bar{A}}\pi_{\bar{A}} = 2s$ on $C_{s0}$. For $n>0$, these are mapped to the positive frequency solutions of the massless wave equation, that is, equation (\ref{massivewave}) for $m^{2}=0$ and for $n<0$ they are mapped to negative frequency solutions. The correspondence is given by the Fourier transform
\begin{equation}\nonumber
\phi_{\bar{A}_{1}\bar{A}_{2}...\bar{A}_{n}}(x) = \left(\frac{1}{2\pi}\right)^{\frac{3}{2}}\int_{H_{0}}\psi(p)\pi_{\bar{A}_{1}}\pi_{\bar{A}_{2}}...\pi_{\bar{A}_{n}}e^{-ip_{a}x^{a}/\hbar}d\tau,
\end{equation}
where $H_{0}$ is the light-cone $\{p_{a}p^{a}=0\}$ and $p_{A\bar{A}} = \bar{\pi}_{A}\pi_{\bar{A}}$. Just like in the previous example, $\mathcal{H}_{P}=\mathcal{H}_{s0}\oplus\mathcal{H}_{-s0}$ is identified with $\bar{V}_{(J)}$, where $V$ is the space of well-behaved solutions of the massless wave equation and $J$ is the same complex structure.

\pagebreak

\section{Free Fields}\label{fields}
We can now examine the next step, passing from relativistic wave equations to quantum fields, in terms of geometric quantization. This provides us the motivation to study in detail the quantization of a vector space, which will subsequently lead to Fock space quantization.

\subsection{The Space of Solutions}

Up to this point, we have been thinking of symplectic manifolds as arising from the phase spaces of physical systems. There is a similar symplectic geometry of the Lagrangian formalism, which is more useful when speaking of quantum fields.

A (classical) field will mean here a smooth section of a vector bundle $F\to Q$ over spacetime $Q$ (assumed to have a semi-Riemannian structure) which vanishes sufficiently rapidly at infinity so that all the integrals we will write converge. We assume that the collection of all these fields forms a manifold\footnote{Clearly, this manifold will in general be infinite-dimensional, which introduces a number of complications in defining the various quantities we will be using. We will ignore these entirely and focus on the main ideas and applications instead. We refer to \cite{chernoff2006properties} for some of the technical details.} $\mathcal{F}$. Then Hamilton's principle is implemented by the action, which is a function $S_{D}:\mathcal{F}\to\mathbb{R}$ for a given compact oriented $D\subset Q$, together with boundary conditions on the boundary $\partial D$. One can think of boundary conditions in terms of a foliation of $\mathcal{F}$: for a given hypersurface $\sigma\subset Q$, let $P_{\sigma}$ be the foliation of $\mathcal{F}$ such that each leaf is composed of the fields which have the same boundary data on $\sigma$.

Let us denote by $\sigma_{\alpha},\sigma_{\beta}$ two arbitrary Cauchy surfaces in $Q$ which bound an oriented region $D_{\alpha\beta}$. Then the space of solutions of Hamilton's variational principle, $\mathcal{M}\subset\mathcal{F}$ is defined by
\begin{equation}\nonumber
\mathcal{M}=\{\phi\in\mathcal{F}|X\lrcorner dS_{D_{\alpha\beta}}=0,\,\, \forall X\in (P_{\alpha}\cap P_{\beta})_{\phi},\,\, \forall \sigma_{\alpha},\sigma_{\beta}\},
\end{equation}
where $P_{i}=P_{\sigma_{i}}$. To understand this definition note that, if $X\in (P_{\alpha}\cap P_{\beta})_{\phi}\subset T_{\phi}\mathcal{F}$, then $X$ can be seen as an infinitesimal perturbation in the field $\phi$ which is tangent to the the leaves through $\phi $ of both the foliations determined by $\sigma_{\alpha}$ and $\sigma_{\beta}$, so that it is compatible with the boundary conditions on both Cauchy surfaces. Hence $\mathcal{M}$ is the space of fields at which $S$ is stationary with respect to variations compatible with the boundary conditions. In the case where $S$ is the integral of a Lagrangian density, this is obviously equivalent to the Euler-Lagrange equations.

\begin{eulerlagrange}
If there is a first-order Lagrangian density $L=L(\phi(x),\nabla\phi(x),x)$ such that
\begin{equation}\nonumber
S_{D}=\int_{D}L\epsilon 
\end{equation}
and the boundary conditions are that the values of $\phi$ on $\partial D$ should be kept fixed when the action is varied, then the manifold of solutions is
\begin{equation}\nonumber
\mathcal{M}=\left\{\phi\in\mathcal{F}\Bigg|\frac{\partial L}{\partial\phi^{\alpha}}-\nabla_{a}\left(\frac{\partial L}{\partial(\nabla_{a}\phi^{\alpha})}\right)=0\right\},
\end{equation}
where $\nabla$ is the Levi-Civita connection and $\epsilon$ is a volume form on $Q$.
\end{eulerlagrange}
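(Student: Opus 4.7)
The plan is to unfold the definition of $\mathcal{M}$ and turn the condition $X\lrcorner dS_{D_{\alpha\beta}}=0$ into the standard Euler--Lagrange calculation. A tangent vector $X\in T_{\phi}\mathcal{F}$ at a field $\phi$ can be modeled as a section of the pullback bundle $\phi^{*}F$, i.e.\ an infinitesimal variation $X^{\alpha}(x)$ of $\phi$. The leaves of $P_{\sigma_{\alpha}}$ through $\phi$ consist of fields agreeing with $\phi$ on $\sigma_{\alpha}$, so $(P_{\alpha}\cap P_{\beta})_{\phi}$ consists precisely of variations $X$ which vanish on both Cauchy surfaces $\sigma_{\alpha}$ and $\sigma_{\beta}$, hence on $\partial D_{\alpha\beta}$.

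Next I would compute $dS_{D_{\alpha\beta}}$ by differentiating along a one-parameter family $\phi_{t}$ with $\dot{\phi}_{0}=X$:
\begin{equation}\nonumber
X\lrcorner dS_{D_{\alpha\beta}}=\int_{D_{\alpha\beta}}\left[\frac{\partial L}{\partial\phi^{\alpha}}X^{\alpha}+\frac{\partial L}{\partial(\nabla_{a}\phi^{\alpha})}\nabla_{a}X^{\alpha}\right]\epsilon.
\end{equation}
An integration by parts, using that $\nabla$ is the Levi-Civita connection (so $\nabla\epsilon=0$) and Stokes' theorem on $D_{\alpha\beta}$, rewrites this as
\begin{equation}\nonumber
X\lrcorner dS_{D_{\alpha\beta}}=\int_{D_{\alpha\beta}}\left[\frac{\partial L}{\partial\phi^{\alpha}}-\nabla_{a}\!\left(\frac{\partial L}{\partial(\nabla_{a}\phi^{\alpha})}\right)\right]X^{\alpha}\epsilon+\oint_{\partial D_{\alpha\beta}}\frac{\partial L}{\partial(\nabla_{a}\phi^{\alpha})}X^{\alpha}\,n_{a}\,d\Sigma,
\end{equation}
and the boundary integral vanishes because $X$ is in $P_{\alpha}\cap P_{\beta}$.

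The two directions of the equivalence then follow. If $\phi$ satisfies the Euler--Lagrange equation pointwise, the bulk integrand is identically zero, so $X\lrcorner dS_{D_{\alpha\beta}}=0$ for every $X\in(P_{\alpha}\cap P_{\beta})_{\phi}$ and every pair $\sigma_{\alpha},\sigma_{\beta}$; hence $\phi\in\mathcal{M}$. For the converse I would argue by contradiction: if the Euler--Lagrange expression is nonzero at some $x_{0}\in Q$, pick Cauchy surfaces $\sigma_{\alpha},\sigma_{\beta}$ sandwiching $x_{0}$ and, using a bump-function variation supported in a small neighbourhood of $x_{0}$ contained in the interior of $D_{\alpha\beta}$, construct an $X\in(P_{\alpha}\cap P_{\beta})_{\phi}$ for which the bulk integral is nonzero. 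This is the standard fundamental lemma of the calculus of variations.

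The main obstacle is a technical one: making the bump-function argument rigorous in the infinite-dimensional setting, i.e.\ showing that $(P_{\alpha}\cap P_{\beta})_{\phi}$ really contains enough compactly supported variations to separate points of the Euler--Lagrange expression, and that the manifold-of-fields picture is well-defined for the decay class one has in mind. These are exactly the analytic subtleties flagged in the footnote deferred to \cite{chernoff2006properties}; once one grants them, the computation above is essentially the one-line classical derivation.
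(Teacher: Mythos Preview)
Your proposal is correct and follows essentially the same route as the paper: differentiate the action along a one-parameter family, integrate by parts, drop the boundary term because the variation vanishes on $\partial D_{\alpha\beta}$, and invoke the fundamental lemma of the calculus of variations. The paper's version is terser---it simply takes $X$ compactly supported in the interior of $D$ and says ``since $X$ is arbitrary''---while you spell out both directions and the bump-function argument more carefully, but the content is the same.
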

\begin{proof}
In this case we may take $X$ to be an arbitrary field on $Q$ which is supported on a compact subset strictly contained in $D$ and the condition for $\phi\in\mathcal{M}$ is then rewritten as
\begin{equation}\nonumber
\begin{split}
0&=\frac{d}{dt}\left\{\int_{D}L[(\phi+tX)(x),(\nabla\phi+t\nabla X)(x), x]\epsilon\right\}_{t=0}=\int_{D}\left(\frac{\partial L}{\partial\phi^{\alpha}}X^{\alpha}+\frac{\partial L}{\partial(\nabla_{a}\phi^{\alpha})}\nabla_{a}X^{\alpha}\right)\epsilon \\
&= \int_{D}\left(\frac{\partial L}{\partial \phi^{\alpha}}-\nabla_{a}\frac{\partial L}{\partial (\nabla_{a}\phi^{\alpha})}\right)X^{\alpha}\epsilon,
\end{split}
\end{equation}
where we have integrated by parts and dropped the boundary term since $X$ vanishes on $\partial D$. Now, since $X$ is  arbitrary, we are left with the condition
\begin{equation}\nonumber
\left[\frac{\partial L}{\partial \phi^{\alpha}}-\nabla_{a}\frac{\partial L}{\partial (\nabla_{a}\phi^{\alpha})}\right](\phi(x),\nabla\phi(x),x)=0, \,\,\, \forall x\in Q.
\end{equation}
\end{proof}
In this case (the action is the integral of some Lagrangian) one can also find an equation which characterizes the tangent vectors. Let $\phi+tX$ be a curve in $\mathcal{M}$ generated by $X\in T_{\phi}\mathcal{M}$. Then
\begin{equation}\nonumber
0 = \frac{d}{dt}\left\{\left[\frac{\partial L}{\partial\phi^{\alpha}}-\nabla_{a}\frac{\partial L}{\partial(\nabla_{a}\phi^{\alpha})}\right](\phi+tX,\nabla\phi+t\nabla X,x)\right\},
\end{equation}
so that $X$ should be a solution of the linearized equation of motion around the point $\phi$,
\begin{equation}\nonumber
\frac{\partial^{2}L}{\partial\phi^{\beta}\phi^{\alpha}}X^{\beta}+\frac{\partial^{2}L}{\partial(\nabla_{b}\phi^{\beta})\partial\phi^{\alpha}}\nabla_{b}X^{\beta}=\nabla_{a}\left[\frac{\partial^{2}L}{\partial\phi^{\beta}\partial(\nabla_{a}\phi^{\alpha})}X^{\beta}+\frac{\partial^{2}L}{\partial(\nabla_{b}\phi^{\beta})\partial(\nabla_{a}\phi^{\alpha})}\nabla_{b}X^{\beta}\right],
\end{equation}
where it is understood that all the coefficients are evaluated at $\phi$.


There is a standard way in which the action principle introduces a symplectic structure on $\mathcal{M}$. First let $\sigma_{\alpha}$ and $\sigma_{\beta}$ be two disjoint Cauchy surfaces in $Q$. Then since $dS$ vanishes on directions tangent to both $P_{\alpha}$ and $P_{\beta}$, one can decompose
\begin{equation}\label{stheta}
dS_{D_{\alpha\beta}}=\theta_{\alpha}-\theta_{\beta},
\end{equation}
where $X\lrcorner\theta_{i}=0,\,\,\forall X\in P_{i}$. Hence the restriction of, say, $\theta_{\alpha}$ to $\mathcal{M}$ gives a one-form and its exterior derivative is a closed two-form $\omega$ on $\mathcal{M}$. Note that $\omega$ does not depend on the choice of $\sigma_{\alpha}$. For example, we might just as well take $\theta_{\beta}$, as the difference between the two one-forms is exact. If we use the Euler-Lagrange equations to define $\mathcal{M}$, then $\theta_{\alpha}$ can be defined as
\begin{equation}\label{whoistheta}
X\lrcorner\theta_{\alpha}=\int_{\sigma_{\alpha}}X^{\gamma}\frac{\partial L}{\partial(\nabla_{c}\phi^{\gamma})}n_{c}d\sigma,
\end{equation}
where $n^{c}$ is the unit vector normal to $\sigma_{\alpha}$. Indeed, this is consistent with equation (\ref{stheta}):
\begin{equation}\nonumber
\begin{split}
X\lrcorner\theta_{\alpha}-X\lrcorner\theta_{\beta}&=\int_{\sigma_{\alpha}}X^{\gamma}\frac{\partial L}{\partial(\nabla_{c}\phi^{\gamma})}n_{c}d\sigma -\int_{\sigma_{\beta}}X^{\gamma}\frac{\partial L}{\partial(\nabla_{c}\phi^{\gamma})}n_{c}d\sigma = \int_{\partial D_{\alpha\beta}}X^{\gamma}\frac{\partial L}{\partial(\nabla_{c}\phi^{\gamma})}n_{c}d\sigma \\
&= \int_{D_{\alpha\beta}}\nabla_{c}\left(\frac{\partial L}{\partial(\nabla_{c}\phi^{\gamma})}X^{\gamma}\right)\epsilon + \int_{D_{\alpha\beta}}\left[\frac{\partial L}{\partial \phi^{\gamma}}-\nabla_{c}\left(\frac{\partial L}{\partial(\nabla_{c}\phi^{\gamma})}\right)\right]X^{\gamma}\epsilon \\
&= X\lrcorner d\left(\int_{D_{\alpha\beta}}L(\phi,\nabla\phi,x)\epsilon\right) = X\lrcorner dS_{D_{\alpha\beta}},
\end{split}
\end{equation}
where we used Stokes' theorem and then added a term which is zero by the Euler-Lagrange equations. Using this form of the symplectic potentials $\theta_\alpha$, one finds

\begin{omegafield}
The closed two-form $\omega=d\theta_{\alpha}$ is given by
\begin{equation}\nonumber
\omega(X,Y)=\frac{1}{2}\int_{\sigma_{\alpha}}\left[\frac{\partial^{2}L}{\partial\phi^{\beta}\partial(\nabla_{c}\phi^{\gamma})}(X^{\beta}Y^{\gamma}-Y^{\beta}X^{\gamma})+\frac{\partial^{2}L}{\partial(\nabla_{b}\phi^{\beta})\partial(\nabla_{c}\phi^{\gamma})}(Y^{\gamma}\nabla_{b}X^{\beta}-X^{\gamma}\nabla_{b}Y^{\beta})\right]n_{c}d\sigma,
\end{equation}
for $X,Y\in T_{\phi}\mathcal{F}$.
\end{omegafield}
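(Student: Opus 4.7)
The plan is to evaluate $\omega(X,Y) = d\theta_\alpha(X,Y)$ by direct computation from the explicit formula~(\ref{whoistheta}) for $\theta_\alpha$. I would begin by exploiting the vector-space structure underlying $\mathcal{F}$: a tangent vector $X \in T_\phi\mathcal{F}$ is just a field variation $X^\alpha$ (a section of the appropriate pullback bundle), and any such $X$ extends canonically to a $\phi$-independent (``constant'') vector field on $\mathcal{F}$. For two constant extensions the Lie bracket $[X, Y]$ vanishes identically. The identity~(\ref{a116}) specialised to a 1-form (with the paper's $\tfrac{1}{2}$-antisymmetrisation convention) then reads
\begin{equation}\nonumber
d\theta_\alpha(X, Y) = \tfrac{1}{2}\bigl[X(\theta_\alpha(Y)) - Y(\theta_\alpha(X))\bigr] - \tfrac{1}{2}\theta_\alpha([X,Y]),
\end{equation}
so after choosing constant extensions it suffices to compute one directional derivative and antisymmetrise.

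To evaluate $X(\theta_\alpha(Y))$, observe that the Cauchy surface $\sigma_\alpha$, its unit normal $n_c$, and the components $Y^\gamma$ are all independent of $\phi$; the only $\phi$-dependence in the integrand of~(\ref{whoistheta}) sits in $\partial L/\partial(\nabla_c\phi^\gamma)$, and the chain rule yields
\begin{equation}\nonumber
X\!\left(\frac{\partial L}{\partial(\nabla_c\phi^\gamma)}\right) = \frac{\partial^2 L}{\partial\phi^\beta\,\partial(\nabla_c\phi^\gamma)}\, X^\beta + \frac{\partial^2 L}{\partial(\nabla_b\phi^\beta)\,\partial(\nabla_c\phi^\gamma)}\, \nabla_b X^\beta.
\end{equation}
Substituting back into $X(\theta_\alpha(Y)) = \int_{\sigma_\alpha} Y^\gamma\, X\!\bigl(\partial L/\partial(\nabla_c\phi^\gamma)\bigr)\, n_c\, d\sigma$, subtracting the analogous expression with $X \leftrightarrow Y$, and halving reproduces the announced formula term by term.

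The only real obstacle is analytic rather than geometric: justifying the interchange of the $\mathcal{F}$-directional derivative with the integral over $\sigma_\alpha$, and verifying that constant-field extensions are admissible in the infinite-dimensional setting. These are exactly the technicalities the paper defers in its earlier footnote to \cite{chernoff2006properties}. Granting them, the computation is purely algebraic and uses no special property of $\mathcal{M}$, consistent with the fact that $\omega$ is defined by this formula on all of $\mathcal{F}$ and only subsequently restricted to the solution manifold.
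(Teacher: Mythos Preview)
Your proposal is correct and follows essentially the same route as the paper: both invoke the identity $d\theta_\alpha(X,Y) = \tfrac{1}{2}\bigl[X(\theta_\alpha(Y)) - Y(\theta_\alpha(X)) - \theta_\alpha([X,Y])\bigr]$ and then differentiate the integrand of~(\ref{whoistheta}) via the chain rule. The only cosmetic difference is that the paper keeps the $[X,Y]$ term for arbitrary extensions (where it cancels against the Leibniz contributions $X(Y^\gamma)$ and $Y(X^\gamma)$), whereas your choice of constant extensions kills both of these simultaneously.
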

\begin{proof}
Formally,
\begin{equation}\nonumber
\begin{split}
\omega(X,Y) &=\frac{1}{2}Y\lrcorner(X\lrcorner d\theta_{\alpha})=\frac{1}{2}Y\lrcorner(\mathcal{L}_{X}\theta - d(X\lrcorner\theta))=\frac{1}{2}\{X(Y\lrcorner\theta)-Y(X\lrcorner\theta)-[X,Y]\lrcorner\theta\} \\
&= \frac{1}{2}\left\{ X\left(\int_{\sigma_{\alpha}}Y^{\gamma}\frac{\partial L}{\partial(\nabla_{c}\phi^{\gamma})}n_{c}d\sigma\right) - Y\left(\int_{\sigma_{\alpha}}X^{\gamma}\frac{\partial L}{\partial(\nabla_{c}\phi^{\gamma})}n_{c}d\sigma\right) - \int_{\sigma_{\alpha}}[X,Y]^{\gamma}\frac{\partial L}{\partial(\nabla_{c}\phi^{\gamma})}n_{c}d\sigma \right\} \\
& = \frac{1}{2}\int_{\sigma_{\alpha}}\left[\frac{\partial^{2}L}{\partial\phi^{\beta}\partial(\nabla_{c}\phi^{\gamma})}(X^{\beta}Y^{\gamma}-Y^{\beta}X^{\gamma})+\frac{\partial^{2}L}{\partial(\nabla_{b}\phi^{\beta})\partial(\nabla_{c}\phi^{\gamma})}(Y^{\gamma}\nabla_{b}X^{\beta}-X^{\gamma}\nabla_{b}Y^{\beta})\right]n_{c}d\sigma.
\end{split}
\end{equation}
\end{proof}

Although closure is obvious since $\omega$ is defined as $d\theta_{\alpha}$, nondegeneracy is not guaranteed. In particular examples, it can be shown by using the properties of the spaces of solutions of PDE's of certain types. In our (hyperbolic) examples, it will be nondegenerate, hence giving a symplectic strucuture\footnote{A famous example of the geometric quantization of the space of solutions is the case of Chern-Simons theory \cite{witten1989quantum}. We also refer to the notes \cite{nairnotes} for a discussion on this example.} on $\mathcal{M}$.

Because it takes a nice geometric meaning in this formalism, let us look now at Noether's theorem. Let $V\in V(Q)$ be a vector field in $Q$ and $\rho: Q\times\mathbb{R}\to Q$ be its flow. Then choose a lift $V'\in E$ of $V$ to the vector bundle $E$ which projects to $Q$ under $E\to Q$, and denote by $\rho '$ its flow. This allows us to define a flow $\varrho$ in the sections of $E$ by
\begin{equation}\nonumber
(\varrho_{t}\phi)(x)=\rho_{t}'[\phi(x)], \,\,\, \forall x\in Q.
\end{equation}
The vector $V\in V(Q)$ is said to be a symmetry of the variational problem if there is a lift of $V$ to $V(E)$ such that the induced flow in $\mathcal{F}$ preserves the `variational data', by which we mean
\begin{equation}\label{ssym}
S_{\rho_{t}(D)}(\varrho_{t}\phi) = S_{D}(\phi), \,\,\,\, P_{\rho_{t}(\sigma)} = \varrho_{t*}P_{\sigma}, \,\,\, \forall t.
\end{equation}
From the perspective of the space of motions, this will imply that $\varrho$ gives a canonical flow in $(\mathcal{M},\omega)$. Indeed, it implies that both variational problems have the same solutions, $\varrho_{t}(\mathcal{M})=\mathcal{M}$, and also that $\varrho_{t}^{*}\theta_{\rho_{t}(\sigma)}=\theta_{\sigma}$ (compare with expression (\ref{whoistheta})). But then $\varrho_{t}^{*}\omega = \varrho_{t}^{*}d\theta_{\rho_{t}(\sigma)} = d\theta_{\sigma}=\omega$, so the flow is canonical. In fact, we can then use the symplectic structure in $\mathcal{M}$ to find the Hamiltonian function generating this flow, which is the usual constant of motion following from Noether's theorem.

To see this let $X\in V(\mathcal{F})$ be the vector field generating $\varrho :\mathcal{F}\times\mathbb{R}\to\mathcal{F}$ and $\theta_{t}=\theta_{\rho_{t}(\sigma)}$ for some fixed $\sigma$. Then,
\begin{equation}\nonumber
0=\lim_{t'\to 0}\frac{\varrho_{t'}^{*}\theta_{t'+t}-\theta_{t}}{t'} = \mathcal{L}_{X}\theta_{t} = X\lrcorner d\theta_{t}+d(X\lrcorner\theta_{t})+\partial_{t}\theta_{t} = X\lrcorner\omega+d(X\lrcorner\theta_{t})+\partial_{t}\theta_{t},
\end{equation}
where, if we define the functions $\tilde{\theta}^{i}(x,t)$ to be, for each $t$, the components of $\theta_{t}$, then $\partial_{t}\theta_{t}$ is the one-form whose components are $\partial_{t}\tilde{\theta}^{i}$. To calculate this, we use (\ref{stheta}),
\begin{equation}\nonumber
\partial_{t}\theta_{t} = \lim_{t'\to t}\left(\frac{\theta_{t'}-\theta_{t}}{t'-t}\right) = \lim_{t'\to t}\left(\frac{dS_{D_{tt'}}}{t'-t}\right) = \frac{d}{dt'}\left(\int_{D_{tt'}}L\epsilon\right)_{t'=t} = \int_{\sigma_{t}}LV^{c}n_{c}d\sigma,
\end{equation}
where $D_{tt'}$ is bounded by $\sigma_{t}=\rho_{t}(\sigma)$ and $\sigma_{t'}=\rho_{t'}(\sigma)$ (remember that $\rho$ is generated by $V$). From the last two equations (evaluating at $t=0$), $X\lrcorner\omega+dh=0$, where
\begin{equation}\label{hdef}
h = X\lrcorner\theta_{0}+\int_{\sigma}LV^{c}n_{c}d\sigma = \int_{\sigma}\left(X^{\gamma}\frac{\partial L}{\partial(\nabla_{c}\phi^{\gamma})}+LV^{c}\right)n_{c}d\sigma,
\end{equation}
seen as a function on $\mathcal{M}$. An important example of such a symmetry is when $V\in V(Q)$ is a Killing vector and $V'\in V(E)$ is defined by Lie dragging, so that the flow $\varrho$ in $\mathcal{F}$ is given infinitesimally by
\begin{equation}\nonumber
X^{\gamma} = \frac{d}{dt}\left(\varrho_{t}\phi^{\gamma}\right)_{t=0} = -\mathcal{L}_{V}\phi^{\gamma}.
\end{equation}
Then, because $\rho_{t}$ preserves the metric and the connection, $\varrho_{t}(\nabla\phi^{\gamma}) = \nabla(\varrho_{t}\phi^{\gamma})$ and, if the flow preserves the Lagrangian, i.e., if $L(\phi,\nabla\phi,x)=L(\varrho_{t}\phi,\nabla(\varrho_{t}\phi),\rho_{t}x)$, then it satisfies equation (\ref{ssym}), thus giving a symmetry of the variational problem. Moreover, a simple calculation shows that, in this case,
\begin{equation}\nonumber
\begin{split}
\int_{\partial D}X^{\gamma}\frac{\partial L}{\partial(\nabla_{c}\phi^{\gamma})}n_{c}d\sigma &= \int_{D}\nabla_{c}\left(X^{\gamma}\frac{\partial L}{\partial(\nabla_{c}\phi^{\gamma})}\right)\epsilon + \int_{D}\left[\frac{\partial L}{\partial\phi^{\gamma}}-\nabla_{c}\left(\frac{\partial L}{\partial(\nabla_{c}\phi^{\gamma})}\right)\right]\epsilon \\
&=\int_{D}\left(\frac{\partial L}{\partial\phi^{\gamma}}X^{\gamma}+\frac{\partial L}{\partial(\nabla_{c}\phi^{\gamma})}\nabla_{c}X^{\gamma}\right)\epsilon = \frac{d}{dt}\left[\int_{D}L(\varrho_{t}\phi,\varrho_{t}\nabla\phi,x)\epsilon\right]\\
&= \frac{d}{dt}\left(\int_{\rho_{t}^{-1}(D)}[L(\varrho_{t}\phi,\varrho_{t}\nabla\phi,x)\circ\rho_{t}]\rho_{t}^{*}\epsilon\right) = \frac{d}{dt}\left(\int_{\rho_{t}^{-1}(D)}L(\varrho_{t}\phi,\nabla(\varrho_{t}\phi),\rho_{t}x)\epsilon\right)\\
&= \frac{d}{dt}\left(\int_{\rho_{t}^{-1}(D)}L(\phi,\nabla\phi,x)\epsilon\right) = -\int_{\partial D}LV^{c}n_{c}d\sigma \,\,\, \Rightarrow \\
\Rightarrow \,\,\,\,\,\, 0 = \int_{\partial D} & \left(X^{\gamma}\frac{\partial L}{\partial(\nabla_{c}\phi^{\gamma})}+V^{c}L\right)n_{c}d\sigma = \int_{D}\nabla_{c}\left(X^{\gamma}\frac{\partial L}{\partial(\nabla_{c}\phi^{\gamma})}+V^{c}L\right)\epsilon,
\end{split}
\end{equation}
for arbitrary compact $D$. Therefore, the last bracketed expression is the divergenceless current implied by Noether's theorem. Furthermore, the last line says that the same integral by which we defined $h$, when calculated over an arbitrary closed surface $\partial D$ (we assume that $H^{d-1}(Q)=\{0\}$), vanishes. Hence $h$ is actually independent of the Cauchy surface $\sigma$ in expression (\ref{hdef}).

We see that the existence of a symmetry on $\mathcal{M}$ implies the existence of the conserved quantity $h$, Noether's conserved charge. Moreover, the presence of a symplectic structure on $\mathcal{M}$ allows us to identify the conserved quantity as the corresponding Hamiltonian generating the symmetry.

\begin{classical}\emph{Classical mechanics}

A simple example is just classical mechanics. Here, spacetime $Q$ is the time axis $\mathbb{R}$ with metric $dt^{2}$ and the fields are the coordinates $q^{a}(t)$. $\mathcal{M}$ is the space of solutions $q^{a}(t)$ of the Euler-Lagrange equations
\begin{equation}\nonumber
\frac{\partial L}{\partial q^{a}}-\frac{d}{dt}\left(\frac{\partial L}{\partial \dot{q}^{a}}\right)=0.
\end{equation}
And the symplectic structure is given by
\begin{equation}\nonumber
\omega(X,Y) = \frac{1}{2}\left[\frac{\partial^{2}L}{\partial q^{a}\partial\dot{q}^{b}}(X^{a}Y^{b}-Y^{a}X^{b})+\frac{\partial^{2}L}{\partial\dot{q}^{a}\partial\dot{q}^{b}}(Y^{b}\dot{X}^{a}-X^{b}\dot{Y}^{a})\right],
\end{equation}
where $X$ and $Y$ are solutions of the linearized form of the Euler-Lagrange equations. Note that this does not have the integration sign because a Cauchy surface is simply a point $t'\in\mathbb{R}$.

If the Lagrangian is time independent then $-\partial/\partial t$ is a Killing vector of the metric and a symmetry according to the above criteria. In this case, the flow on the space of solutions is $R_{s}:q(t)\mapsto q(t+s)$ and the corresponding conserved quantity is
\begin{equation}\nonumber
h=\dot{q}^{a}\frac{\partial L}{\partial\dot{q}^{a}} - L,
\end{equation}
which we recognize as the energy Hamiltonian. Hence one recovers the standard connection between symmetry under time translations and conservation of energy.
\end{classical}

\begin{fieldsymplectic}\emph{Fields of particles}

One can also consider the spaces of solutions of the relativistic wave equations derived in the previous chapter, which we found to be the quantum Hilbert spaces corresponding to free relativistic particles. Applying the theory above we see that these are themselves symplectic manifolds as well. The symplectic structures arise from the fact that the wave equations can be seen as Euler-Lagrange equations for certain Lagrangians. For example, let $\phi$ be a complex function on spacetime $Q$ and consider the Lagrangian
\begin{equation}\nonumber
L=\frac{1}{2}(\nabla_{a}\phi\nabla^{a}\bar{\phi} - \mu^{2}\phi\bar{\phi}).
\end{equation}
The corresponding equation of motion is the wave equation for a scalar particle,
\begin{equation}\nonumber
(\Box + \mu^{2})\phi = 0.
\end{equation}
Furthermore, taking advantage of the linear structure of the space of solutions $V$ to identify $T_{\phi}\mathcal{M}$ at any $\phi$ with $\mathcal{M}$ itself, one can write the two-form $\omega$ explicitly as
\begin{equation}\nonumber
\omega(\phi,\phi ') = \frac{1}{4}\int_{\sigma}(\phi '\nabla_{a}\bar{\phi}+\bar{\phi}'\nabla_{a}\phi - \phi\nabla_{a}\bar{\phi}' - \bar{\phi}\nabla_{a}\phi ')n^{a}d\sigma .
\end{equation}

Likewise, the solution spaces of all the wave equations we saw are infinite-dimensional symplectic vector spaces. The quantization of such vector spaces is called \emph{second quantization} and leads to \emph{quantum fields} as we now discuss.

\end{fieldsymplectic}

\pagebreak

\subsection{Fock Space}

In this section we consider the geometric quantization of a symplectic vector space, which in the case of the space of solutions of the free particle wave equations will lead to the Fock space picture of a quantum field. Let $(V,\omega)$ be a $2n$-dimensional symplectic vector space with a symplectic frame $(p_{a},q^{b})$ and consider the one-form $\theta_{0}$ invariantly defined as
\begin{equation}\nonumber
(X\lrcorner\theta_{0})(Y)=-\omega(X,Y), \,\,\, \forall X,Y\in V,
\end{equation}
implicitly using the fact that $V$ is a vector space to identify $T_{X}V=V, \,\, \forall X\in V$. In the frame chosen it is given by $\theta_{0}=\frac{1}{2}(p_{a}dq^{a}-q^{a}dp_{a})$.

Now, the trivial topology of $(V,\omega)$ guarantees the existence of a unique prequantum bundle $B\to V$ by proposition \ref{critpreq}. Let $s:V\to B$ be the section specified by $\theta_{0}$, ie., such that $Ds=-\frac{i}{\hbar}\theta_{0}\otimes s$. Then, one can write any other section as $s'=\psi(p,q)s$ for some $\psi:V\to\mathbb{C}$. Each element $X\in V$, seen as a constant vector field in the manifold $V$, generates a hamiltonian flow: let
\begin{equation}\nonumber
W=u_{a}\frac{\partial}{\partial p_{a}}+v^{b}\frac{\partial}{\partial q^{b}}\,\, \in V(M=V), \,\,\,\, u_{a}, v^{b}\in\mathbb{R}.
\end{equation}
Then
\begin{equation}\nonumber
W\lrcorner\omega = \left( u_{a}\frac{\partial}{\partial p_{a}}+v^{b}\frac{\partial}{\partial q^{b}}\right)\lrcorner dp_{a}\wedge dq^{a} = -d(v^{a}p_{a}-u_{b}q^{b}) =:-df(p,q),
\end{equation}
so that $W=X_{f}$ for $f=v^{a}p_{a}-u_{b}q^{b}$. This definition can be made coordinate independent by $f:V\to\mathbb{R}:X\mapsto 2\omega(X,W)$. Note, however, that this $f$ does not define a moment for the translation action of the Abelian group $V$ on the manifold $V$. Indeed, let $W, Z\in V$ and define $f,g\in C^{\infty}(V)$ by $W=X_{f},Z=X_{g}$ as above, and write $f=v^{a}p_{a}-u_{b}q^{b}, \,\, g=d^{a}p_{a}-c_{b}q^{b}$. Then
\begin{equation}\nonumber
\{f,g\} = -v^{a}c_{a}-u_{b}d^{b},
\end{equation}
which is in general not zero while, since $X_{f}=W, X_{g}=Z$ are then seen as constant vector fields, $[X_{f},X_{g}]=[W,Z]=\mathcal{L}_{W}Z=0\neq X_{\{f,g\}}$.

Now, using the general prescription (\ref{expquant}), the observable $f$ generating $W$ is quantized to
\begin{equation}\nonumber
\begin{split}
\hat{f}(\psi s) &= -i\hbar\left[X_{f}(\psi)-\frac{i}{\hbar}(X_{f}\lrcorner\theta_{0})\psi\right]s+f\psi s = [-i\hbar X(\psi)-(X\lrcorner\theta_{0})\psi + f\psi ]s \\
&= \left[-i\hbar W(\psi)-\left(u_{a}\frac{\partial}{\partial p_{a}}+v^{b}\frac{\partial}{\partial q^{b}}\right)\lrcorner\frac{1}{2}(p_{a}dq^{a}-q^{b}dp_{b})\psi + (v^{a}p_{a}-u_{b}q^{b})\psi\right]s \\
&= \left[-i\hbar W(\psi)+\frac{1}{2}f\psi\right] s.
\end{split}
\end{equation}

Again, just as $W\mapsto f$ did not define a momentum map, $\hat{f}$ does not give rise to a unitary representation of the Abelian group $V$. Indeed, if $f$ is such that $X_{f}=W$, then translation by $W$ is given by $\rho^{f}_{1}:V\to V$, where $\rho^{f}:V\times\mathbb{R}\to V$ is the Hamiltonian flow defined by $f$. Recall that this flow can be lifted to a unitary flow $\hat{\rho}^{f}:C^{\infty}(B)\times\mathbb{R}\to C^{\infty}(B)$ on the sections of the prequantum bundle $B\to M=V$, and this flow is generated by $\hat{f}$ according to equation (\ref{absf}). Thus one might try to make the translation by $W$ act on the space of sections by $\hat{W}:=\hat{\rho}^{f}_{1}$. Using the explicit form of the lifted flow, given in (\ref{flowsection}),
\begin{equation}\nonumber
\begin{split}
(\hat{W}\psi)(X) &= (\hat{\rho}^{f}_{1}\psi)(X) = \psi(\rho^{f}_{1}X)\exp\left(-\frac{i}{\hbar}\int_{0}^{1}[X_{f}\lrcorner\theta_{0}-f]dt'\right) \\
&= \psi(X+W)\exp\left(-\frac{i}{2\hbar}\int_{0}^{1}[u_{a}q^{a}(t')-v^{b}p_{b}(t')]dt'\right) \\
&= \psi(X+W)\exp\left(-\frac{i}{2\hbar}\int_{0}^{1}[u_{a}(q^{a}+v^{a}t')-v^{b}(p_{b}+u_{b}t')]dt'\right) \\
&= \psi(X+W)e^{-i(u_{a}q^{a}-v^{b}p_{b})/2\hbar} \\
&= \psi(X+W)e^{-i\omega(W,X)/\hbar},\,\,\,\, \forall X\in V,
\end{split}
\end{equation}
where we have simplified the notation denoting the section $\psi s$ by just $\psi$. These operators actually compose as the elements of the Heisenberg group $V\ltimes S^{1}$ with the circle group acting by multiplication. Indeed, if $(W,w),(Z,z)\in V\times S^{1}$, then one has the consistent representation
\begin{equation}\nonumber
\begin{split}
[(W,w)(Z,z)\psi](X) &= [(W,w)z\hat{Z}\psi](X) = (W,w)z\psi(X+Z)e^{i\omega(Z,X)} = wz\hat{W}[\psi(X+Z)e^{i\omega(Z,X)}] \\
&= wz\psi(X+W+Z)e^{i\omega(Z,X+W)}e^{i\omega(W,X)} \\
&= wz e^{i\omega(Z,W)}\psi[X+(W+Z)]e^{i\omega[(W+Z),X]} \\
&= [(W+Z,wz e^{i\omega(Z,W)})\psi](X) = \{[(W,w)\circ_{H}(Z,z)]\psi\}(X), \,\,\,\, \forall X\in V.
\end{split}
\end{equation}
The appearance of the Heisenberg group is a well-known fact in both quantum theory and geometric representation theory \cite{kirillov2004lectures}.

Moving on to quantization, it is especially convenient to use a positive K\"ahler polarization, which is equivalent to a complex structure on $TV$. Again, by identifying $T_{X}V=V$, one only needs to specify a positive complex structure on $V$. These can be simply described.
\begin{cxg}
Let $(V,\omega)$ be a $2n$-dimensional symplectic vector space. Then $J$ is a positive symplectic structure compatible with $\omega$ if, and only if, there is a symplectic frame ${X^{a},Y_{b}}$ and a real symmetric positive definite matrix $(g_{ab})$ such that
\begin{equation}\nonumber
JX^{a}=g^{ab}Y_{b}, \,\,\,\,\,\,\, JY_{a}=-g_{ab}X^{b},
\end{equation}
where $g^{ab}g_{bc}=\delta^{a}_{c}$.
\end{cxg}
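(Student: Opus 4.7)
The plan is to verify the "if" direction by a short direct computation and then produce the harder "only if" direction by using the Riemannian metric associated to $J$ to select an adapted frame from an arbitrary Lagrangian subspace.

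For the direction assuming such a frame $\{X^a,Y_b\}$ and positive-definite symmetric $(g_{ab})$ exists, I would first check $J^2 = -\mathrm{id}$ by computing $J^2 X^a = g^{ab} J Y_b = -g^{ab} g_{bc} X^c = -X^a$ and similarly for $JY_a$. Compatibility with $\omega$ amounts to verifying $\omega(JU,JV)=\omega(U,V)$ on basis pairs: the diagonal pairings vanish on both sides because $\omega(X^a,X^b)=\omega(Y_a,Y_b)=0$, and $2\omega(JX^a,JY_b) = -g^{ac}g_{bd}\cdot 2\omega(Y_c,X^d)=g^{ac}g_{bc}=\delta^a_b=2\omega(X^a,Y_b)$. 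Finally, positivity of the induced bilinear form $g(\cdot,\cdot)=2\omega(\cdot,J\cdot)$ follows because its matrix in this frame is the block diagonal $\operatorname{diag}(g^{ab},g_{ab})$, which is positive definite since $(g_{ab})$ is.

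For the converse, suppose $J$ is a positive complex structure compatible with $\omega$. The first step is to observe that the bilinear form $g(U,V):=2\omega(U,JV)$ is symmetric (from $\omega(U,JV)=-\omega(JU,V)=\omega(V,JU)$, using $J^2=-\mathrm{id}$ and $J$-invariance of $\omega$) and positive definite by the positivity hypothesis. Next, choose any Lagrangian subspace $L\subset V$ (this exists by Lemma \ref{evenlagr}) and any basis $\{X^a\}$ of $L$. Write $h^{ab}:=2\omega(X^a,JX^b)=g(X^a,X^b)$; this is the Gram matrix of $\{X^a\}$ in the positive-definite inner product $g$, hence is symmetric and positive definite, and its inverse $(h_{ab})$ is well defined.

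The key construction is then to define $Y_b := h_{bc}\,JX^c$. A direct check gives $\omega(Y_a,Y_b)=h_{ac}h_{bd}\omega(JX^c,JX^d)=h_{ac}h_{bd}\omega(X^c,X^d)=0$ because $L$ is Lagrangian, and $2\omega(X^a,Y_b)=h_{bc}\cdot 2\omega(X^a,JX^c)=h_{bc}h^{ac}=\delta^a_b$, so $\{X^a,Y_b\}$ is a symplectic frame. Finally, $JY_a = h_{ac}J^2 X^c = -h_{ac}X^c$ and, solving $Y_b=h_{bc}JX^c$ for $JX^a$, one obtains $JX^a = h^{ab}Y_b$. Setting $g_{ab}:=h_{ab}$ (and $g^{ab}:=h^{ab}$) realizes exactly the claimed form, with $(g_{ab})$ symmetric and positive definite.

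The only genuinely substantive point is the converse, and within it the single non-routine step is the choice to build $Y_b$ out of $JX^c$ weighted by the inverse Gram matrix of $\{X^a\}$ in the metric $g$; once that ansatz is in hand the verification is algebraic. The availability of a Lagrangian subspace (proved earlier) is what makes the $\omega(Y_a,Y_b)=0$ identity automatic, so no further obstacle arises.
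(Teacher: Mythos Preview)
Your argument is correct. The ``if'' direction matches the paper's treatment (a direct verification), while your converse takes a genuinely different route from the paper's.

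The paper proves the converse by regarding $V$ as the complex vector space $V_{(J)}$ and choosing a $\langle\cdot,\cdot\rangle_J$--orthonormal complex basis $\{Y_a\}$; the $X^a$ are then obtained from the $JY_b$. This is essentially a spectral/diagonalisation argument in the Hermitian picture and produces a frame with (effectively) $g_{ab}=\delta_{ab}$. Your approach is more ``real'': you start from an arbitrary Lagrangian $L$ and an arbitrary basis $\{X^a\}$ of it, define $g^{ab}$ as the Gram matrix of these vectors in the Riemannian metric $2\omega(\cdot,J\cdot)$, and set $Y_b=g_{bc}JX^c$. This avoids invoking the Hermitian form of Proposition~\ref{gclag} and the passage to $V_{(J)}$, makes transparent where the matrix $(g_{ab})$ comes from (it is the Gram matrix, hence automatically symmetric positive definite), and shows that \emph{any} Lagrangian with \emph{any} basis yields such a frame. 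The paper's route, on the other hand, ties directly into the complex/K\"ahler language used in the rest of the section. Both are short; yours is slightly more self-contained.
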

\begin{proof}
($\Rightarrow$) That $J$ is linear, canonical, and $J^{2}=- 1$ follow directly from expressing vectors of $V$ in the referred symplectic frame. Proposition \ref{gclag} implies that the signature of $g$ gives the type of $J$, which is then positive.

($\Leftarrow$) Suppose $J$ is a positive compatible complex structure on $(V,\omega)$. Then we can find a basis $\{Y_{1},...,Y_{n}\}$ of $V_{(J)}$ diagonalizing the bilinear form $\langle\cdot , \cdot\rangle_{J}$ defined in proposition \ref{gclag}, so that
\begin{equation}\nonumber
\langle Y_{a}, Y_{b}\rangle_{J} = \delta_{ab}.
\end{equation}
Now, define $X^{a} = -g^{ab}JY_{b}$, where $g^{ab}g_{bc}=\delta^{a}_{b}$. Then $\{X^{a},Y_{b}\}$ is the referred symplectic frame and $(g_{ab})$ the positive definite matrix.
\end{proof}

Therefore, take such a complex structure $J$, so that $V_{(J)}$ is a flat Kahler manifold and choose coordinates $(p_{a},q^{b})$ such that
\begin{equation}\nonumber
J\frac{\partial}{\partial p_{a}} = g^{ab}\frac{\partial}{\partial q^{b}}, \,\,\,\,\,\,\,\,\, J\frac{\partial}{\partial q^{a}} = -g_{ab}\frac{\partial}{\partial p_{b}},
\end{equation}
with $J$ ($\Rightarrow g$) positive. Note that the coordinates $z^{a}=g^{ab}p_{b}+iq^{a}$ are then holomorphic since
\begin{equation}\nonumber
J\frac{\partial}{\partial z^{a}} = J\left[\frac{1}{2}\left(g_{ba}\frac{\partial}{\partial p_{b}} - i\frac{\partial}{\partial q^{a}}\right)\right] = i\frac{\partial}{\partial z^{a}},
\end{equation}
and likewise $J\partial/\partial\bar{z}^{a} = -i\partial/\partial\bar{z}^{a}$. In these coordinates, $K=\frac{1}{2}g_{ab}z^{a}\bar{z}^{b}$ is a Kahler scalar, since
\begin{equation}\nonumber
i\partial\bar{\partial}K = i\partial\left(\frac{1}{2}g_{ab}z^{a}d\bar{z}^{b}\right) = \frac{i}{2}g_{ab}dz^{a}d\bar{z}^{b} = \frac{i}{2}g_{ab}d(g^{ac}p_{c}+iq^{a})\wedge d(g^{bd}p_{d}-iq^{b}) = dp_{a}\wedge dq^{a} = \omega.
\end{equation}
From this equation it is also obvious that $\theta = -\frac{i}{2}g_{ab}\bar{z}^{a}dz^{b}$ is a symplectic potential, and it is adapted to the holomorphic polarization $P = \text{span}\{\partial/\partial z^{a}\}$. We can use this potential to represent sections of the prequantum bundle $B\to V$ by $s' = \phi \tilde{s}$, where $D\tilde{s} = -\frac{i}{\hbar}\theta\otimes \tilde{s}$. By definition, one such section is polarized along the holomorphic polarization $P$ if, and only if,
\begin{equation}\nonumber
0 = \nabla_{\bar{X}}(\phi\tilde{s}) = \left[\bar{X}(\phi)-\frac{i}{\hbar}(\bar{X}\lrcorner\theta)\phi\right]\tilde{s} = \bar{X}(\phi)\tilde{s}, \,\, \forall X\in V_{P}(M=V),
\end{equation}
that is, if, and only if, $\phi$ is an entire holomorphic function of the coordinates $z^{a}$. Note that we now have at hand two interesting symplectic potantials, $\theta_{0}$, which determines the frame $s:V\to B$, and $\theta$, which is adapted to $P$ and determines the frame $\tilde{s}:V\to B$. It will be usefull to know how to translate from one trivialization to the other, so let $s' = \psi s = \phi\tilde{s}$. The potentials are related by
\begin{equation}\nonumber
\begin{split}
\theta +\frac{i}{2}dK &= -\frac{i}{2}g_{ab}\bar{z}^{b}dz^{a}+\frac{i}{4}g_{ab}\bar{z}^{b}dz^{a}+\frac{i}{4}g_{ab}z^{a}d\bar{z}^{b} \\
&= \frac{i}{4}g_{ab}[-ig^{ac}p_{c}dq^{b}+ig^{bd}q^{a}dp_{d}-ig^{bd}p_{d}dq^{a}+ig^{ac}q^{b}dp_{c}] \\
&= \frac{1}{2}(p_{a}dq^{a}-q^{b}dp_{b}) = \theta_{0},
\end{split}
\end{equation}
where we used $g_{ab}=g_{ba}$. Now, each potential defines the corresponding trivialization by equation (\ref{sfromtheta}), so that
\begin{equation}\nonumber
s(\gamma_{t}m) = b\exp\left(-\frac{i}{\hbar}\int_{m}^{\gamma_{t}m}\theta_{0}\right) = b\exp\left(-\frac{i}{\hbar}\int_{m}^{\gamma_{t}m}\theta+\frac{i}{2}dK\right) = \tilde{s}(\gamma_{t}m)e^{K/2\hbar}.
\end{equation}
For the notation, see the discussion preceding equation (\ref{sfromtheta}). So, if $s' = \psi s = \phi\tilde{s} = \phi e^{-K/2\hbar}s$, then $\psi = \phi e^{-K/2\hbar}$. In particular, we conclude that the sections of $B$ which are polarized along the holomorphic polarization $P$ have the local expressions
\begin{equation}\nonumber
s' = \psi(z,\bar{z}) s= \phi(z)e^{-z_{a}\bar{z}^{a}/4\hbar}s
\end{equation}
in the frame specified by $\theta_{0}$, where $\phi$ is holomorphic on the coordinates $z^{a}$ and $z_{a}\bar{z}^{a}:=g^{ab}z_{a}\bar{z}_{b}$.

The Hermitian structure on $C^{\infty}_{P}(V)$ is given by equation (\ref{ints}), so
\begin{equation}\nonumber
\langle\psi s,\psi ' s\rangle = \langle\phi\tilde{s},\phi '\tilde{s}\rangle = \int_{V}(\phi\tilde{s},\phi '\tilde{s})\epsilon = \int_{V}\bar{\phi}\phi ' e^{-K/\hbar}\epsilon = \int_{V}\bar{\phi}\phi ' e^{-g_{ab}z^{a}\bar{z}^{b}/2\hbar}\epsilon = \int_{V}\bar{\psi}\psi '\epsilon,
\end{equation}
where we have written $(g_{ab})$ explicitly to illustrate why it was assumed from the beginning that $J$ was a positive complex structure: this is what guarantees that $g$ is positive definite and hence that the integral converges for a wide class of wavefunctions.

Since $P$ is Kahler, $f\in C^{\infty}(V)$ preserves $P$ if, and only if, it is real and linear in the holomorphic coordinates $z^{a}$. So the most general such observable can be written
\begin{equation}\label{classicalf}
f = \frac{i}{2}\bar{w}_{a}z^{a}-\frac{i}{2}w_{a}\bar{z}^{a}+\frac{1}{2}U_{ab}z^{a}\bar{z}^{b}+c,
\end{equation}
with $c\in\mathbb{R}$ and $\bar{U}_{ab}=U_{ba}$. We find the expression for the quantum analogue of this general observable in the $\theta_{0}$ `gauge':
\begin{equation}\nonumber
X_{f}\lrcorner\left(\frac{i}{2}g_{ab}dz^{a}\wedge d\bar{z}^{b}\right) = -\frac{i}{2}\bar{w}_{a}dz^{a}+\frac{i}{2}w_{a}d\bar{z}^{a}-\frac{1}{2}U_{ab}z^{a}d\bar{z}^{b}-\frac{1}{2}U_{ab}\bar{z}^{b}dz^{a}
\end{equation}
\begin{equation}\nonumber
\Rightarrow
\begin{cases}
X_{f} = \left(w^{a}+i\tensor{U}{_b^a}z^{b}\right)\frac{\partial}{\partial z^{a}} - \left(\bar{w}^{a}+i\tensor{U}{^a_b}\bar{z}^{b}\right)\frac{\partial}{\partial \bar{z}^{a}} \\
X_{f}\lrcorner\theta_{0} = -\frac{i}{4}(w^{a}\bar{z}_{a}+\bar{w}^{a}z_{a})+\frac{1}{2}U_{ab}z^{a}\bar{z}^{b}
\end{cases}
\end{equation}
\begin{equation}\label{quantumf}
\Rightarrow
\hat{f}(\phi e^{-z_{a}\bar{z}^{a}/4\hbar}s) = \left(-i\hbar w^{a}\frac{\partial\phi}{\partial z^{a}} + \frac{i}{2}\bar{w}_{a}z^{a}\phi + \hbar\tensor{U}{_a^b}z^{a}\frac{\partial\phi}{\partial z^{b}} + c\phi\right)e^{-z_{a}\bar{z}^{a}/4\hbar} s.
\end{equation}

Remember that, for $W\in V$, the observable $f$ generating $W$ is real and linear, so that $\hat{W}(\mathcal{H}_{P})=\mathcal{H}_{P}$, where $\mathcal{H}_{P}$ is the space of polarized sections. In the local representation of the elements of $\mathcal{H}_{P}$ it is given by
\begin{equation}\label{wpsi}
\begin{split}
\hat{W}[\phi(z)e^{-z_{a}\bar{z}^{a}/4\hbar}] &= \phi(z+w)e^{-(z_{a}+w_{a})(\bar{z}^{a}+\bar{w}^{a})/4\hbar}e^{\frac{i}{\hbar}\left[\frac{i}{4}g_{ab}(w^{a}\bar{z}^{b}-z^{a}\bar{w}^{b})\right]} \\
&= \phi(z+w)e^{-\frac{1}{4\hbar}(2\bar{w}_{a}z^{a}+w_{a}\bar{w}^{a}+z_{a}\bar{z}^{a})},
\end{split}
\end{equation}
where $w^{a}$ are the holomorphic coordinates of $W\in V$.

We define the vacuum state $\psi_{0}s$ to be the one represented by $\phi_{0}(z)=1$ and the coherent state based at the point $W\in V$ as the translation of the vacuum by $W$, that is,
\begin{equation}\label{coherentexplicit}
\begin{split}
\psi_{W}s &= -\hat{W}(\phi_{0}(z)e^{-z_{a}\bar{z}^{a}/4\hbar}s) = \phi_{0}(z-w) e^{-\frac{1}{4\hbar}(-2\bar{w}_{a}z^{a}+\bar{w}_{a}w^{a}+z_{a}\bar{z}^{a})} \\
&= [1 e^{-\frac{1}{4\hbar}(-2\bar{w}_{a}z^{a}+w_{a}\bar{w}^{a})}]e^{-z_{a}\bar{z}^{a}/4\hbar}\,\,\,
 \Rightarrow \,\,\, \phi_{W}(z)=e^{(2\bar{w}_{a}z^{a}-w_{a}\bar{w}^{a})/4\hbar}.
\end{split}
\end{equation}
These are localized states which span $\mathcal{H}_{P}$. To see this, let $\tilde{\psi}s = \tilde{\phi}e^{-z_{a}\bar{z}^{a}/4\hbar}s := \hat{W}(\psi s)$. Then we have
\begin{equation}\label{coherent}
\begin{split}
\langle\psi_{W}s, \psi s\rangle &= \langle (-\hat{W})\psi_{0}s,\psi s\rangle = \langle\psi_{0} s,\hat{W}\psi s\rangle = \int_{V_{(J)}}\bar{\phi}_{0}\tilde{\phi}e^{-K/\hbar}\epsilon \\
&= \int_{\mathbb{C}^{n-1}}\left[\int_{\mathbb{C}}\tilde{\phi}(z_{1},z_{2},...)e^{-z_{1}\bar{z}_{1}/2\hbar}\frac{d^{2}z_{1}}{(2\pi\hbar)}\right]e^{-\sum_{i=2}^{n}z_{i}\bar{z}_{i}/2\hbar}\frac{d^{2(n-1)}z}{(2\pi\hbar)^{n-1}} \\
&= \int_{\mathbb{C}^{n-1}}\left[\int_{0}^{\infty}\int_{0}^{2\pi}\tilde{\phi}\frac{d\alpha}{2\pi}e^{-r^{2}/2\hbar}\frac{r}{\hbar}dr\right]e^{-\sum_{i=2}^{n}z_{i}\bar{z}_{i}/2\hbar}\frac{d^{2(n-1)}z}{(2\pi\hbar)^{n-1}} \\
&= \int_{\mathbb{C}^{n-1}}\tilde{\phi}(0,z_{2},...)e^{-\sum_{i=2}^{n}z_{i}\bar{z}_{i}/2\hbar}\frac{d^{2(n-1)}z}{(2\pi\hbar)^{n-1}} \\
&= \tilde{\phi}(0) = \phi(w)e^{-w_{a}\bar{w}^{a}/4\hbar} = \psi(W),
\end{split}
\end{equation}
where the identification of $V_{(J)}$ with $\mathbb{C}^{n}$ amounts symply to the choice of a frame in which $g_{ab}=\delta_{ab}$. We used polar coordinates in $\mathbb{C}$ and Cauchy's theorem $n$ times, then used equation (\ref{wpsi}) in the last step. It follows that any $\psi s\in \mathcal{H}_{P}$ can be expressed in terms of the coherent states as
\begin{equation}\label{projincoherent}
\psi s = \left(\int_{V}\frac{\psi(W)}{\langle\psi ,\psi\rangle}\psi_{W}dW\right)s.
\end{equation}

Because the functions $\phi(z)$ are entire holomorphic, their Laurent series are of the form
\begin{equation}\label{phiseries}
\phi (z) = \sum_{k=0}^{\infty}\phi_{a_{1}a_{2}...a_{k}}z^{a_{1}}z^{a_{2}}...z^{a_{k}},
\end{equation}
for some constants $\phi_{a_{1}...a_{k}}\in\mathbb{C}$ (sum in the $a_{i}$ is implicit). One can then perform a calculation similar to (\ref{coherent}) and use the referred property of the coherent states to write the inner product in $\mathcal{H}_{P}$ in terms of the $\phi_{a_{1}...a_{k}}$.
\begin{equation}\label{innerphi}
\begin{split}
\langle\psi s,\psi s\rangle &= \int_{V_{(J)}}\bar{\psi}(W)\psi(W)\epsilon_{W} = \int_{V_{(J)}}\langle\psi s, \psi_{W}s\rangle\langle\psi_{W}s,\psi s\rangle\epsilon_{W} \\
&= \int_{V_{(J)}}\left[\int_{V_{(J)}}\bar{\phi}(z)\phi_{W}(z)e^{-K(z)/\hbar}\epsilon_{z}\right]\left[\int_{V_{(J)}}\bar{\phi}(y)\phi(y)e^{-K(y)/\hbar}\epsilon_{y}\right]\epsilon_{W}\\
&=\int_{V_{(J)}}\left[\int_{V_{(J)}}\left(\sum_{i=0}^{\infty}\bar{\phi}_{a_{1}...a_{i}}\bar{z}^{a_{1}}...\bar{z}^{a_{i}}\right)\left(e^{\bar{w}_{a}z^{a}/2\hbar}e^{-w_{a}\bar{w}^{a}/4\hbar}\right)e^{-z_{a}\bar{z}^{a}/2\hbar}\epsilon_{z}\right]\times \\
& \times \left[\int_{V_{(J)}}\left(e^{w_{a}\bar{y}^{a}/2\hbar}e^{-w_{a}\bar{w}^{a}/4\hbar}\right)\left(\sum_{j=0}^{\infty}\phi_{b_{1}...b_{j}}y^{b_{1}}...y^{b_{j}}\right)e^{-y_{a}\bar{y}^{a}/2\hbar}\epsilon_{y}\right]\epsilon_{W} \\
&= \int_{V_{(J)}}\left(\sum_{i,j=0}^{\infty}\bar{\phi}_{a_{1}...a_{i}}\bar{w}^{a_{1}}...\bar{w}^{a_{i}}\phi_{b_{1}...b_{j}}w^{b_{1}}...w^{b_{j}}\right)e^{-w_{a}\bar{w}^{a}/2\hbar}\epsilon_{W}\\
&=\sum_{i=0}^{\infty}(2\hbar)^{i}i!\bar{\phi}_{a_{1}...a_{i}}\phi^{a_{1}...a_{i}}.
\end{split}
\end{equation}
We used the explicit form of the coherent states (\ref{coherentexplicit}), expanded some of the exponentials, and used Cauchy's theorem.

Thus we can picture $\mathcal{H}_{P}$ in another way: let $\mathsf{H}_{1}=(V_{(J)})^{*}$, the (complex) dual vector space. Equation (\ref{phiseries}) implies that the $\phi_{a}$'s transform as the components of an element of $\mathsf{H}_{1}$. Likewise, for each $i$, $\phi_{a_{1}...a_{i}}$ can be seen as the components of an element of the symmetrization of $\otimes^{i}\mathsf{H}_{1}$, which we denote by $\mathsf{H}_{i}$. Therefore the decomposition (\ref{phiseries}) provides an identification of $\mathcal{H}_{P}$ with
\begin{equation}\nonumber
\mathsf{F}:=\oplus_{i=0}^{\infty}\mathsf{H}_{i},
\end{equation}
where $\mathsf{H}_{0}:=\mathbb{C}$. Finally, after a rescalling of the components in $\mathsf{H}_{i}$ the inner product of $\mathcal{H}_{P}$ is mapped to the natural inner product in $\mathsf{F}$, as shown in equation (\ref{innerphi}). We conclude that the quantum states are elements of the \emph{Fock Space} $\mathsf{F}$.

\begin{sho}\label{shorr}
The simplest example of Fock space quantization is the simple harmonic oscillator. In this case we take $V=\mathbb{R}^{2}$ with coordinates $(p,q)$, $\omega = dp\wedge dq$, and symplectic structure given by 
\begin{equation}\nonumber
\begin{cases}
J(\partial/\partial p) = \partial/\partial q\\
J(\partial/\partial q) = -\partial/\partial p.
\end{cases}
\end{equation}
Then $z=p+iq$ is a holomorphic coordinate, $P=\text{span}\{\partial/\partial z\}$ and, in the trivialization $s$ specified by $\theta_{0}=\frac{1}{2}(pdq - qdp)$, an element of $\mathcal{H}_{P}$ is of the form $\phi(z)e^{-z\bar{z}/4\hbar}$, $\phi$ entire. The Hamiltonian is
\begin{equation}\nonumber
H = \frac{1}{2}(p^{2}+q^{2}) = \frac{1}{2}z\bar{z},
\end{equation}
which is in the form (\ref{classicalf}), with only the quadratic part $(U_{ab}) = (1)$. Hence the application of (\ref{quantumf}) gives
\begin{equation}\nonumber
\hat{H}[\phi(z)e^{-z\bar{z}/4\hbar}s] = \hbar z\frac{\partial\phi}{\partial z}(z)e^{-z\bar{z}/4\hbar}s,
\end{equation}
so, in the trivialization $\tilde{s}$ chosen by $\theta = -\frac{i}{2}\bar{z}dz$, it acts as $\hat{H}:\phi\tilde{s}\mapsto\hbar z\frac{\partial\phi}{\partial z}\tilde{s}$. We consider also the functions $z, \bar{z}$, which do not preserve $P$ and hence are not quantized by the rule (\ref{quantumf}). Nevertheless, we can repeat the derivation in the $\tilde{s}$ frame
\begin{equation}\nonumber
\begin{cases}
X_{z}\lrcorner\left(\frac{i}{2}dz\wedge d\bar{z}\right) = -dz\\
X_{\bar{z}}\lrcorner\left(\frac{i}{2}dz\wedge d\bar{z}\right) = -d\bar{z}
\end{cases}
\Rightarrow
\begin{cases}
X_{z} = -2i\frac{\partial}{\partial\bar{z}}\\
X_{\bar{z}} = 2i\frac{\partial}{\partial z}
\end{cases}
\Rightarrow
\begin{cases}
\hat{z}[\phi(z)\tilde{s}] = z\phi(z)\tilde{s}\\
\hat{\bar{z}}[\phi(z)\tilde{s}] = 2\hbar\frac{\partial\phi}{\partial z}(z)\tilde{s}.
\end{cases}
\end{equation}
\end{sho}

The indentification with Fock space $\mathsf{F}$ is given by the expression of $\phi(z)$ as a polinomial in $z$ so that $\mathsf{H}_{i}$ corresponds to the monomials of degree $i$. Therefore each space $\mathsf{H}_{n}$ is an eigenspace of the hamiltonian and the eigenvalue is given by
\begin{equation}\nonumber
\hat{H}|_{\mathsf{H}_{n}} = n\hbar 1,
\end{equation}
which is shifted from the corrected value by $\frac{1}{2}\hbar$! This shift will be corrected in the next section. Also, we recognize $\hat{z}$ and $\hat{\bar{z}}$ as the raising and lowering operators of the SHO. In particular, $\bar{z}(\mathsf{H}_{n}) = \mathsf{H}_{n+1}$ and $\hat{\bar{z}}(\mathsf{H}_{n}) = \mathsf{H}_{n-1}$.

\begin{fockfield}\emph{Quantum fields}

In the case of the spaces of solutions of field equations, the above manipulations are formal, but one use the analogy with the finite-dimensional case to borrow the well-defined Fock space as its quantization. This depends on the introduction of a positive K\"ahler polarization, which again we fix using a given positive compatible complex structure on $V$. Interestingly, the obvious complex structure $\phi_{A_{1}...A_{n}}\mapsto i\phi_{A_{1}...A_{n}}$ is not positive, while the complex structure $J$ that multiplies the positive frequency part by $-i$ and the negative frequency part by $i$ is. Then, using this complex structure $J$, quantization leads to
\begin{equation}\nonumber
\mathsf{F}=\oplus_{i=0}^{\infty}\mathsf{H}_{i},
\end{equation}
where $\mathsf{H}_{i}$ is the symmetrization of the i-th tensor power of $(V_{(J)})^{*}$. But we saw in example \ref{wavesss} that this is exactly the Hilbert space of the one-particle wavefunctions. Hence each $\mathsf{H}_{i}$ is the Hilbert space correspoding to $i$ identical quantum relativistic particles of the corresponding type. The operators analogous to the raising and lowering operators of the SHO are the creation and anihilation operators. Thus one recovers the particle interpretation of a quantum field.

\end{fockfield}

\pagebreak

\end{preqsu2}

\section{The metaplectic correction}\label{metaplectic}

As we saw above, prequantization gives a general and concrete geometric construction of wave functions and quantum operators which satisfies Dirac's quantization rules.  It accomplished that, however, at the cost of introducing nonphysical states in the Hilbert space, a problem which is resolved by the introduction of a polarization in the second step. Quantization gives a geometrical way of choosing the right states and, consequently, also selecting the correct subalgebra of the Poisson algebra to be quantized. An interesting question to ask is whether this whole process is possible, and if yes, whether it is unique. The conditions for the existance and uniqueness of the prequantum bundle were mentioned to be given by the topology of the symplectic structure in $(M,\omega)$. The situation with the polarization is much more subtle. In particular, making the constructed Hilbert space and quantum operators independent of the choice of polarization leads to a whole new (and final) step of geometric quantization: the metaplectic correction,  which we now address. As we will see in the examples, this step is far from being a mere mathematical technicality but has physical consequences: for example, correcting the spectrum of the simple harmonic oscillator, explaining the transformation law of the dilaton field under T-duality symmetry of the partition function of a bosonic string, and even motivating mirror symmetry \cite{tyurin2000special,gukov2011quantization}. We also note that, although not addressed here, the question on the existence of a polarization in a given symplectic manifold is also very interesting \cite{karasev1984pseudodifferential}.

\subsection{Metaplectic representation}

Recall the quantization of a symplectic vector space $(V,\omega)$ performed in the previous section. For each positive complex structure $J$, geometric quantization constructed an associated Fock Space $\mathsf{F}_{J}$, which is an irreducible representation of the Heisenberg group $V\ltimes S^{1}$. Stone von-Neuman's theorem says that this representation is unique up to a unitary transformation, so that all of the representations $\mathsf{F}_{J}$ should be unitarily related. This is indeed the case.

\begin{projectj}
Let $(V,\omega)$ be a symplectic vector space, $\mathsf{F}_{\alpha}$ be the Fock Space constructed from $(V,\omega)$ and a positive compatible complex structure $J_{\alpha}$ on $V$ by geometric quantization, and $\pi_{\alpha\beta}:\mathsf{F}_{\beta}\to\mathsf{F}_{\alpha}$ be the restriction to $\mathsf{F}_{\beta}$ of the orthogonal projection $\mathcal{H}\to\mathsf{F}_{\alpha}$ in the prequantum Hilbert space $\mathcal{H}$. Then 
\begin{description}
\item[(i)] $\hat{X}\circ\pi_{\alpha\beta} = \pi_{\alpha\beta}\circ\hat{X}, \,\, \forall X\in V$,
\item[(ii)] The rescaled projection $\Delta_{\alpha\beta}\pi_{\alpha\beta}:\mathsf{F}_{\beta}\to\mathsf{F}_{\alpha} \text{ is unitary}$,
\end{description}
where
\begin{equation}\nonumber
 \Delta_{\alpha\beta} = \sqrt[4]{\det\frac{1}{2}(J_{\alpha}+J_{\beta})}.
\end{equation}
\end{projectj}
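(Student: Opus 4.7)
The plan is to handle (i) directly and reduce (ii) to a computation on the vacuum via the Stone--von Neumann theorem and Schur's lemma. For (i), I would note that for any $X\in V$ the Heisenberg operator $\hat X$ is the unitary implementation of translation by $X$, and a translation preserves every linear polarisation of $V$ (constant vector fields are tangent to the translation-invariant Lagrangian distributions defining each $P_\alpha$). Hence $\hat X$ maps polarised sections to polarised sections, i.e., $\hat X(\mathsf F_\alpha)\subset\mathsf F_\alpha$ for every $\alpha$. Being unitary on $\mathcal H$, it also stabilises $\mathsf F_\alpha^\perp$, and therefore commutes with the orthogonal projection $P_\alpha$. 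Restricting $P_\alpha$ to $\mathsf F_\beta$ yields (i).

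For (ii), the text has already recalled that each $\mathsf F_\alpha$ carries an irreducible unitary representation of the Heisenberg group $V\ltimes S^1$ (Stone--von Neumann uniqueness). Part (i) makes $\pi_{\alpha\beta}$ a bounded intertwiner between two such irreducible unitary representations, so Schur's lemma forces $\pi_{\alpha\beta}^\dagger\pi_{\alpha\beta}=|c|^2\,\mathrm{id}_{\mathsf F_\beta}$ for some scalar $c\geq 0$. Since $P_\alpha$ is self-adjoint, $\pi_{\alpha\beta}^\dagger=\pi_{\beta\alpha}$, and applying Schur symmetrically gives $\pi_{\alpha\beta}\pi_{\alpha\beta}^\dagger=|c|^2\,\mathrm{id}_{\mathsf F_\alpha}$ with the same $|c|$. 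Hence $|c|^{-1}\pi_{\alpha\beta}$ is unitary, and the proof reduces to showing $|c|=\Delta_{\alpha\beta}^{-1}$.

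To extract $|c|$, I would use that $|c|^2=\|\pi_{\alpha\beta}\psi\|^2/\|\psi\|^2$ is independent of the nonzero test vector $\psi$, and evaluate on $\psi=\psi_0^\beta$. In the $\theta_0$-trivialisation, both vacua are Gaussians $\psi_0^J=\exp(-K_J/2\hbar)\,s$ with $K_J(X)=\tfrac12\omega(X,JX)$, so $\|\psi_0^\beta\|^2=\int_V e^{-K_\beta/\hbar}\epsilon$ is a Gaussian integral with covariance $\omega J_\beta$. The numerator $\|\pi_{\alpha\beta}\psi_0^\beta\|^2$ is obtained from the reproducing-kernel identity $(P_\alpha\phi)(W)=\langle\psi_W^\alpha,\phi\rangle_{\mathcal H}$, which, combined with the coherent-state representation of the inner product in $\mathsf F_\alpha$, gives $\|\pi_{\alpha\beta}\psi_0^\beta\|^2=\int_V|\langle\psi_W^\alpha,\psi_0^\beta\rangle|^2\epsilon(W)$, a double Gaussian whose effective covariance is $\omega\cdot\tfrac12(J_\alpha+J_\beta)$.

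Using that every positive compatible complex structure satisfies $\det J=1$ and that $\omega$ is symplectic, the Gaussian determinants in the two integrals cancel all but $\det\tfrac12(J_\alpha+J_\beta)$, leaving $|c|^2=(\det\tfrac12(J_\alpha+J_\beta))^{-1/2}=\Delta_{\alpha\beta}^{-2}$ as required. The main obstacle is the paragraph-3 computation: one must resist the temptation to identify $\|\pi_{\alpha\beta}\psi_0^\beta\|^2$ with the bare vacuum overlap $|\langle\psi_0^\alpha,\psi_0^\beta\rangle|^2$, since $P_\alpha\psi_0^\beta$ is generically a ``squeezed'' state in $\mathsf F_\alpha$ rather than a scalar multiple of $\psi_0^\alpha$, and the Cauchy--Schwarz bound only recovers a lower estimate. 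Equivalently, one may view the projection as a holomorphic (Bergman) projection of the $(K_\alpha-K_\beta)$-Gaussian in $z_\alpha$-coordinates, where the normalisation factor that emerges from integrating out the anti-holomorphic variables is precisely $\Delta_{\alpha\beta}^{-2}$.
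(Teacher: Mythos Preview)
Your argument is correct and follows a genuinely different route from the paper for part~(ii). For part~(i) both you and the paper argue identically: translations preserve every constant polarisation, so $\hat X$ stabilises each $\mathsf F_\alpha$ and, being symmetric on $\mathcal H$, commutes with the orthogonal projector.

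For part~(ii) the paper does not invoke Schur's lemma. Instead it states a computational lemma giving $\pi_{\alpha\beta}(\psi_{0,\beta})$ explicitly as a squeezed Gaussian and the overlap $\langle\pi_{\alpha\beta}(\psi_{W,\beta}),\pi_{\alpha\beta}(\psi_{0,\beta})\rangle=\Delta_{\alpha\beta}^{-2}e^{-K_\beta(W)/2\hbar}$, then uses the Heisenberg action to translate this to arbitrary pairs of coherent states, obtaining $\langle\pi_{\alpha\beta}\psi_{X,\beta},\pi_{\alpha\beta}\psi_{Y,\beta}\rangle=\Delta_{\alpha\beta}^{-2}\langle\psi_{X,\beta},\psi_{Y,\beta}\rangle$ directly. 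Density of coherent states then gives the isometry up to $\Delta_{\alpha\beta}$, and surjectivity is checked by the symmetric argument $\pi_{\beta\alpha}$. Your route compresses all of this into the single observation that $\pi_{\beta\alpha}\pi_{\alpha\beta}$ is a self-intertwiner of an irreducible Heisenberg module, hence a scalar, so only the value of that scalar needs to be computed on the vacuum. Your identification $\pi_{\alpha\beta}^{\dagger}=\pi_{\beta\alpha}$ is correct and makes the surjectivity automatic once $|c|>0$.

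What each approach buys: yours is shorter and more conceptual for this proposition in isolation, and the Gaussian sketch leading to $|c|^{2}=\bigl(\det\tfrac12(J_\alpha+J_\beta)\bigr)^{-1/2}$ is right (the $Z$-integral in $\langle\psi_W^\alpha,\psi_0^\beta\rangle$ already produces the quadratic form $\omega(\,\cdot\,,(J_\alpha+J_\beta)\,\cdot\,)$, and $\det J_\beta=1$ kills the normalising denominator). The paper's longer route pays off downstream: the explicit form of $\pi_{\alpha\beta}(\psi_{0,\beta})$ and the coherent-state overlaps are reused verbatim in the next proposition to compute the cocycle $\tau_{\alpha\beta\gamma}$ governing $\pi_{\gamma\beta}\circ\pi_{\beta\alpha}$, and indeed a Schur-type uniqueness argument (any intertwining unitary is a phase times $\Delta\pi$) appears there rather than here. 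So your shortcut is fine for the present statement, but be aware that you will still need the explicit Gaussian lemma when you come to compose projections.
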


\begin{proof}
The first assertion follows simply from the fact that, as we have seen, each $X\in V$ generates a flow which preserves both $J_{\alpha(\beta)}$, so that $\hat{X}$ acts on both subspaces $\mathsf{F}_{\alpha(\beta)}$. Since it is a symmetric operator on the whole of $\mathcal{H}$, $\langle \hat{X}s, s'\rangle = \langle s, \hat{X}s'\rangle $ for arbitrary $s\in\mathsf{F}_{\alpha}$ and $s'\in\mathsf{F}_{\beta}$, so that $\hat{X}\circ\pi_{\alpha\beta} = \pi_{\alpha\beta}\circ\hat{X}$.

For the second statement, remember that every element of $\mathsf{F}_{\alpha}$ is a section of $B\to V$ of the form $\phi e^{-K_{\alpha}/2\hbar}s$, where $\phi$ is holomorphic with respect to $J_{\alpha}$, $K_{\alpha}$ is the corresponding Kahler scalar, and $Ds = -\frac{i}{\hbar}\theta_{0}\otimes s$. We will not introduce holomorphic coordinates with respect to one of the complex structures since we have an interest in working with all of them interchangeably. Thus the holomorphicity of $\phi$ with respect to $J_{\alpha}$ should mean to us that $(J_{\alpha}X-iX)\lrcorner d\phi=0, \,\forall X\in V$. Indeed,
\begin{equation}\nonumber
J_{\alpha}(J_{\alpha}X-iX) = -X-iJ_{\alpha}X = -i(J_{\alpha}X-iX),
\end{equation}
so these vectors are all antiholomorphic with respect to $J_{\alpha}$. It is not difficult to see that all antiholomorphic vector fields are of this form. Likewise, we should write $K_{\alpha}(X) = \omega(X,J_{\alpha}X)$. As a check, note that this gives the standard formula in coordinates holomorphic with respect to $J_{\alpha}$:
\begin{equation}\nonumber
\begin{split}
\omega(X,JX) &= \frac{1}{2}JX\lrcorner \left[(z^{a}\partial_{z^{a}}+\bar{z}^{b}\partial_{\bar{z}^{b}})\lrcorner\frac{i}{2}dz^{c}\wedge d\bar{z}^{c}\right] = \frac{i}{4}(iz^{a}\partial_{z^{a}}-i\bar{z}^{b}\partial_{\bar{z}^{b}})\lrcorner (z^{c}d\bar{z}_{c} - \bar{z}^{d}dz_{d}) \\
&= \frac{1}{2}g_{ab}z^{a}\bar{z}^{b} = K(X).
\end{split}
\end{equation}
Furthermore, the vacuum state in an arbitrary $\mathsf{F}_{\alpha}$ is $\psi_{0,\alpha}s = e^{-K_{\alpha}/2\hbar}s$ and the coherent states $\psi_{X,\alpha}s$ are
\begin{equation}\nonumber
\begin{split}
\psi_{X,\alpha}(Z) &= (-\hat{X})e^{-\omega(Z,J_{\alpha}Z)/2\hbar} = e^{-\omega(Z-X,J_{\alpha}(Z-X))/2\hbar}e^{i\omega(-X,Z)/\hbar} \\
&= \exp\left\{\frac{1}{2\hbar}[2\omega(X,(J_{\alpha}+i)Z)-\omega(X,J_{\alpha}X)-\omega(Z,J_{\alpha}Z)]\right\}.
\end{split}
\end{equation}
We first look at how these states project, as they are related to the projection map itself: remember that equation (\ref{projincoherent}) implies that the projection of a general (normalised) $\psi s\in \mathcal{H}$ in $\mathsf{F}_{\alpha}$ is
\begin{equation}\nonumber
\left(\int_{V}\langle(\psi_{W,\alpha}s),(\psi s)\rangle\psi_{W,\alpha}dW\right)s.
\end{equation}

A lengthy but otherwise straightforward calculation gives

\begin{projectcoherent}
The following formulas hold:
\begin{equation}\nonumber
\begin{cases}
\pi_{\alpha\beta}(\psi_{0,\beta}s) = \Delta_{\alpha\beta}^{-2}\Phi_{\alpha\beta}e^{-K_{\alpha}/2\hbar}s \\
\langle\pi_{\alpha\beta}(\psi_{W,\beta}s),\pi_{\alpha\beta}(\psi_{0,\beta}s)\rangle = \Delta_{\alpha\beta}^{-2}e^{-K_{\beta}(W)/2\hbar}, \,\, \forall W\in V
\end{cases},
\end{equation}
where
\begin{equation}\nonumber
\Phi_{\alpha\beta}(X) = \exp\left[\frac{1}{2\hbar}\omega(X,J_{\alpha}L_{\alpha\beta}X-iL_{\alpha\beta}X)\right], \,\,\,\, L_{\alpha\beta}=(J_{\alpha}+J_{\beta})^{-1}(J_{\alpha}-J_{\beta}).
\end{equation}
\end{projectcoherent}

Then one can use this to consider the projection of two arbitrary coherent states:
\begin{equation}\nonumber
\begin{split}
\langle\pi_{\alpha\beta}(\psi_{X,\beta}s),\pi_{\alpha\beta}(\psi_{Y,\beta}s)\rangle &= \langle\pi_{\alpha\beta}(\psi_{X,\beta}s),(-\hat{Y})\pi_{\alpha\beta}(\psi_{0,\beta}s)\rangle \\
&= \langle\pi_{\alpha\beta}\hat{Y}(-\hat{X})(\psi_{0,\beta}s),\pi_{\alpha\beta}(\psi_{0,\beta}s)\rangle \\
&= \langle\pi_{\alpha\beta}(\psi_{X-Y,\beta}s),\pi_{\alpha\beta}(\psi_{0,\beta}s)\rangle e^{-i\omega(X,Y)/\hbar} \\
&= \Delta_{\alpha\beta}^{-2}\exp\left[-\frac{1}{2\hbar}(2i\omega(X,Y)+K_{\beta}(X-Y))\right] = \Delta_{\alpha\beta}^{-2}\psi_{Y,\beta}(X) \\
&= \Delta_{\alpha\beta}^{-2}\langle(\psi_{X,\beta}s),(\psi_{Y,\beta}s)\rangle, \,\,\, \forall X,Y\in V.
\end{split}
\end{equation}
But $\mathsf{F}_{\beta} = \text{span}\{\psi_{X,\beta}s|X\in V\}$, so this extends linearly to $\langle\pi_{\alpha\beta}(\psi_{\beta}s),\pi_{\alpha\beta}(\psi_{\beta}s)\rangle = \Delta_{\alpha\beta}^{-2}\langle(\psi_{\beta}s),(\psi_{\beta}s)\rangle$ for any state $\psi_{\beta}s\in\mathsf{F}_{\beta}$. Finally, this projects to all of $\mathsf{F}_{\alpha}$: suppose $\psi_{\alpha}s\in(\pi_{\alpha\beta}\mathsf{F}_{\beta})^{\perp}\subset\mathsf{F}_{\alpha}$. Then $\langle(\psi_{\beta}s),(\psi_{\alpha}s)\rangle=0$ for any $\psi_{\beta}s\in\mathsf{F}_{\beta}$. But this implies that $\pi_{\beta\alpha}(\psi_{\alpha}s)=0$, where $\pi_{\beta\alpha}:\mathsf{F}_{\alpha}\to\mathsf{F}_{\beta}$ is also given by the orthogonal projection. In turn, this gives
\begin{equation}\nonumber
0 = \langle\pi_{\beta\alpha}(\psi_{\alpha}s),\pi_{\beta\alpha}(\psi_{\alpha}s)\rangle = \Delta_{\beta\alpha}^{-2}\langle(\psi_{\alpha}s),(\psi_{\alpha}s)\rangle \,\, \Rightarrow \,\, \psi = 0.
\end{equation}
We conclude that $(\pi_{\alpha\beta}\mathsf{F}_{\beta})^{\perp}=0\Rightarrow\mathsf{F}_{\alpha} = \pi_{\alpha\beta}(\mathsf{F}_{\beta})$. Hence the rescaled map $\Delta_{\alpha\beta}\pi_{\alpha\beta}$ is unitary.

\end{proof}

The rescaled projection provides then a unitary intertwiner between the two representations.
Since we shall use these projection maps to construct the metaplectic representation, it is important to know how they compose.

\begin{composepi}\label{compi}
Let $J_{1}$, $J_{2}$ and $J_{3}$ be positive complex structures on $(V,\omega)$ and $\mathsf{F}_{1}$, $\mathsf{F}_{2}$, $\mathsf{F}_{3} \subset \mathcal{H}$ the corresponding Fock Spaces. Then
\begin{equation}\nonumber
\Delta_{12}\Delta_{23}\pi_{32}\circ\pi_{21} = \tau_{123}\Delta_{13}\pi_{31},
\end{equation}
where $\pi_{\alpha\beta}$ is the projection $\mathsf{F}_{\beta}\to\mathsf{F}_{\alpha}$,
\begin{equation}\nonumber
\Delta_{\alpha\beta} = \sqrt[4]{\det\frac{1}{2}(J_{\alpha}+J_{\beta})},
\end{equation}
and $\tau_{\alpha\beta\gamma}\in S^{1}$. Moreover, $\tau_{112}=1$, and $\tau$ is symmetric under even permutations of its subscripts, and goes to its inverse under odd permutations.
\end{composepi}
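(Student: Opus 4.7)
The plan is to reformulate the identity in terms of the rescaled projections $U_{\alpha\beta} := \Delta_{\alpha\beta}\pi_{\alpha\beta}: \mathsf{F}_\beta \to \mathsf{F}_\alpha$, which are unitary by part (ii) of the previous proposition. Since $\Delta_{\alpha\beta} = \Delta_{\beta\alpha}$ is symmetric in its indices and $\pi_{\alpha\beta}^* = \pi_{\beta\alpha}$ (as both are compressions of the same orthogonal projectors in $\mathcal{H}$), the unitary adjoint satisfies $U_{\alpha\beta}^* = U_{\beta\alpha}$ and in particular $U_{\alpha\beta}U_{\beta\alpha} = \mathrm{id}_{\mathsf{F}_\alpha}$. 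The statement to prove then becomes $U_{23}U_{12} = \tau_{123}U_{13}$.

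For the existence of $\tau_{123}$ I would appeal to Stone--von Neumann: the Heisenberg group $V\ltimes S^1$ acts irreducibly on each $\mathsf{F}_\alpha$, and by part (i) of the previous proposition each $\pi_{\alpha\beta}$ (hence each $U_{\alpha\beta}$) intertwines these actions. Therefore $U_{23}U_{12}$ and $U_{13}$ are two unitary intertwiners between the irreducible representations on $\mathsf{F}_1$ and $\mathsf{F}_3$, and Schur's lemma yields a scalar $\tau_{123}$ with the stated identity; unitarity of both sides forces $|\tau_{123}|=1$. The normalisation $\tau_{112}=1$ then follows by a direct substitution with $J_2 = J_1$: since $J_1$ is $\omega$-compatible and $J_1^2=-I$, it belongs to $\mathrm{Sp}(V,\omega)$ and hence $\det J_1 = 1$, giving $\Delta_{11}=1$; together with $\pi_{11}=\mathrm{id}$, the defining identity collapses to $U_{12} = \tau_{112}U_{12}$.

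The permutation symmetry reduces to checking one transposition and one $3$-cycle, as these generate $S_3$. Taking the Hilbert adjoint of $U_{23}U_{12} = \tau_{123}U_{13}$ gives $U_{21}U_{32} = \bar\tau_{123}U_{31}$; comparing with the defining identity for the index string $(3,2,1)$ yields $\tau_{321} = \bar\tau_{123}$, i.e.\ transpositions act by complex conjugation. For the $3$-cycle $(1,2,3)\mapsto(2,3,1)$, I would multiply the analogous relation $U_{31}U_{23} = \tau_{231}U_{21}$ on the right by $U_{12}$ and exploit $U_{21}U_{12}=\mathrm{id}$:
\begin{equation}\nonumber
\tau_{231}\,\mathrm{id} = \tau_{231}U_{21}U_{12} = U_{31}U_{23}U_{12} = \tau_{123}U_{31}U_{13} = \tau_{123}\,\mathrm{id},
\end{equation}
so $\tau_{231}=\tau_{123}$ and even permutations preserve $\tau$. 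The only genuine obstacle is the Schur step, which is not an identity of finite-dimensional matrix algebra but a use of the Stone--von Neumann theorem for the Heisenberg group; granting that standard result, the remaining manipulations are algebraic and essentially automatic from the unitarity and adjoint properties of the $U_{\alpha\beta}$.
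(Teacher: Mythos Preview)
Your proof is correct, and in fact cleaner in one key place than the paper's argument. The overall architecture is the same---a Schur-type uniqueness for Heisenberg intertwiners gives $\tau_{123}\in S^1$, and then the permutation identities follow from the inverse relation $U_{\alpha\beta}U_{\beta\alpha}=\mathrm{id}$---but the two proofs obtain that inverse relation differently.

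You observe directly that $\pi_{\alpha\beta}^*=\pi_{\beta\alpha}$ (both being compressions of orthogonal projectors in $\mathcal{H}$) and that $\Delta_{\alpha\beta}=\Delta_{\beta\alpha}$ is real, so $U_{\alpha\beta}^*=U_{\beta\alpha}$; combined with the unitarity already established in the previous proposition this gives $U_{\alpha\beta}U_{\beta\alpha}=\mathrm{id}$ for free. The paper instead computes $\langle\psi_1,\pi_{12}\pi_{21}\psi_1\rangle$ by an explicit Gaussian integral (invoking an omitted ``computational lemma'') to obtain $\pi_{12}\pi_{21}=\Delta_{12}^{-2}\,\mathrm{id}$, and from this deduces $\tau_{121}=1$ and the inverse relation. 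Your route bypasses that computation entirely and gets $\tau_{112}=1$ immediately from $U_{11}=\mathrm{id}$.

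For the Schur step itself, the paper does not quote Stone--von~Neumann as a black box but proves the needed uniqueness by hand: it characterises the vacuum in $\mathsf{F}_\beta$ up to scalar by the antiholomorphicity of $w\mapsto e^{\bar w_a w^a/4\hbar}\hat W\psi_0$, shows any intertwiner must send vacuum to a multiple of vacuum, and then propagates this to all coherent states. Your appeal to the abstract theorem is perfectly valid and shorter; the paper's version is more self-contained within the coherent-state machinery already developed. Either way the algebraic derivation of the permutation symmetries from $U_{\alpha\beta}^{-1}=U_{\beta\alpha}$ is essentially identical in both proofs.
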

\begin{proof}

First we prove that if $U:\mathsf{F}_{\beta}\to\mathsf{F}_{\alpha}$ is unitary and $\hat{X}\circ U = U\circ\hat{X}$ for every $X\in V$, then $U$ is given by $\tau\Delta_{\beta\alpha}\pi_{\alpha\beta} :\mathsf{F}_{\beta}\to\mathsf{F}_{\alpha}$ for some $\tau\in S^{1}$.
To see this, let $U' = \Delta_{\beta\alpha} U^{-1}\circ\pi_{\alpha\beta}$. Then $U':\mathsf{F}_{\beta}\to\mathsf{F}_{\beta}$ is unitary and $\hat{X}\circ U = U\circ\hat{X}, \, \forall X\in V$. Now, the ground state $\psi_{0}$ in $\mathsf{F}_{\beta}$, seen as a function on $V$, is determined up to a constant factor by the relation
\begin{equation}\nonumber
(\hat{W}\psi_{0})(Z) = \psi_{0}(z)e^{-(2\bar{w}_{a}z^{a}+\bar{w}_{a}w^{a})/4\hbar}, \,\, \forall W\in V,
\end{equation}
where $z$ and $w$ are the complex coordinates of $Z,W\in V_{(J)}$, respectivelly. It follows that the constant multiples of $\psi_{0}$ are distinguished from the other elements of $\mathsf{F}_{\beta}$ by the property that $e^{\bar{w}_{a}w^{a}/4\hbar}\hat{W}\psi_{0}$ is an antiholomorphic function of the coordinates $w$. But $e^{\bar{w}_{a}w^{a}/4\hbar}\hat{W}U'\psi_{0} = U' e^{\bar{w}_{a}w^{a}/4\hbar}\hat{W}\psi_{0}$ also depends antiholomorphically on $w$, and therefore $U'\psi_{0}=\tau\psi_{0}$ for some constant $\tau\in\mathbb{C}$. Therefore $U'\psi_{W} = U'(-\hat{W})\psi_{0} = \tau\psi_{W}$, so that $U'$ acts as $\tau\text{id}$ on all the coherent states $\psi_{W}$. Since $\mathsf{F}_{\beta} = \text{span}\{\psi_{W}|W\in V\}$, $U' = \tau\text{id}$. Finally, the map is unitary, so $\tau\in S^{1}$. Hence $\Delta_{\beta\alpha} U^{-1}\circ\pi_{\alpha\beta} = \tau\text{id}\Rightarrow U=\tau^{-1}\Delta_{\beta\alpha}\pi_{\alpha\beta}$.

With this in mind, note that both $\Delta_{12}\Delta_{23}\pi_{32}\circ\pi_{21}:\mathsf{F}_{1}\to\mathsf{F}_{3}$ and $\Delta_{13}\pi_{31}:\mathsf{F}_{1}\to\mathsf{F}_{3}$ are unitary and commute with all $\hat{X}$, so there should be some $\tau_{123}\in S^{1}$ such that $\Delta_{12}\Delta_{23}\pi_{32}\circ\pi_{21} = \tau_{123}\Delta_{13}\pi_{31}$. To evaluate a particular value, let $\psi_{1}$ be the ground state in $\mathsf{F}_{1}$. Then\footnote{For brevity, we ommited one of the computational lemmas necessary here. We refer to \cite{woodhouse1997geometric} for the full proof.}
\begin{equation}\nonumber
\begin{split}
\langle\psi_{1},\pi_{12}\pi_{21}\psi_{1}\rangle &= \langle\psi_{1},\pi_{21}\psi_{1}\rangle = \int_{V_{(J)}}\bar{\psi}_{1}(W)(\pi_{21}\psi_{1})(W)\epsilon \\
&= \int_{V(J)}e^{-w'\bar{w}'/4\hbar}\Delta_{12}^{-2}e^{[(\lambda^{2}-1)\bar{w}'^{2} - (\lambda^{2}+1)w'\bar{w}']/4\hbar(\lambda^{2}+1)}\epsilon \\
&= \Delta_{12}^{-2}\int_{V_{(J)}}f(\bar{w})e^{-w'\bar{w}'/2\hbar}\epsilon = \Delta_{12}^{-2}f(0) = \Delta_{12}^{-2},
\end{split}
\end{equation}
that is, $\pi_{12}\pi_{21} = \Delta_{12}^{-2}\text{id} \,\Rightarrow\, \tau_{121}\text{id}=\tau_{121}\Delta_{11}\pi_{11} = \Delta_{12}^{2}\pi_{12}\pi_{21}=\text{id}$, so $\tau_{121}=1$. This then implies that $(\Delta_{12}\pi_{21})\circ (\Delta_{21}\pi_{12}) = \text{id}$, so $\Delta_{21}\pi_{12} = (\Delta_{12}\pi_{21})^{-1}$. This can be used to show the ciclicity properties of $\tau$:
\begin{equation}\nonumber
\begin{split}
\tau_{123}\text{id} &= \tau_{123}\Delta_{23}\pi_{32}(\Delta_{23}\pi_{32})^{-1} = (\tau_{123}\Delta_{13}\pi_{31})(\Delta_{23}\Delta_{31}\pi_{13}\pi_{32})(\Delta_{23}\pi_{32})^{-1} \\
&= (\Delta_{12}\Delta_{23}\pi_{32}\pi_{21})(\tau_{231}\Delta_{21}\pi_{12})(\Delta_{23}\pi_{32})^{-1} = (\tau_{231}\Delta_{23}\pi_{32})(\Delta_{23}\pi_{32})^{-1} \\
&= \tau_{231}\text{id},
\end{split}
\end{equation}
and, similarly, the behaviour of $\tau$ under odd permutations of its indices is seen from
\begin{equation}\nonumber
\tau_{213}\text{id} = (\Delta_{21}\Delta_{32}\pi_{12}\pi_{23})(\tau_{213}\Delta_{23}\pi_{32})(\Delta_{21}\pi_{12})^{-1} = (\tau_{123}\Delta_{13}\pi_{31})^{-1}(\Delta_{13}\pi_{31}) = \tau_{123}^{-1}\text{id}.
\end{equation}
\end{proof}

Now we are in a position to present the starting point for the construction of the Metaplectic representation. For short, let us denote the set of all the positive compatible complex structures on a symplectic vector space $(V,\omega)$ by $L^{+}V$.

\begin{metasymplectic}
Let $J_{0}\in L^{+}V$ and $\mathsf{F}_{0}$ the corresponding subspace of $\mathcal{H}$. Then $SP(V,\omega)\ni\rho\mapsto\tilde{\rho}$, where $\tilde{\rho} = \pi\circ\hat{\rho}$ with $\pi:\mathcal{H}\to\mathsf{F}_{0}$ the orthogonal projection, is a projective representation of $SP(V,\omega)$.
\end{metasymplectic}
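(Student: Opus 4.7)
My plan is to verify the projective representation property through three steps: first, check that $\tilde{\rho}$ is well-defined and that a mild rescaling is unitary; second, establish that $\tilde{\rho}$ intertwines the Heisenberg action on $\mathsf{F}_{0}$ in a controlled way; third, invoke Schur's lemma to compare $\widetilde{\rho_{1}\rho_{2}}$ with $\tilde{\rho}_{1}\tilde{\rho}_{2}$.

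For step one, since $V$ is simply connected, Proposition \ref{exactgroup} lifts each $\rho\in SP(V,\omega)$ to a unitary automorphism $\hat{\rho}$ of the prequantum bundle $B\to V$, unique up to a phase; this in turn acts unitarily on $\mathcal{H}$ and satisfies the group law $\widehat{\rho_{1}\rho_{2}}=\hat{\rho}_{1}\hat{\rho}_{2}$ up to a $U(1)$ cocycle inherited from the choice of lift. Because $\rho$ is a linear symplectomorphism it carries the Lagrangian subspace associated to $J_{0}$ to the one associated to the conjugated complex structure $J_{\rho}:=\rho_{*}^{-1}J_{0}\rho_{*}$, which is again positive and $\omega$-compatible. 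Hence $\hat{\rho}$ restricts to a unitary isomorphism $\mathsf{F}_{0}\to\mathsf{F}_{J_{\rho}}$, and $\tilde{\rho}=\pi\circ\hat{\rho}|_{\mathsf{F}_{0}}$ coincides with $\pi_{0,J_{\rho}}\circ\hat{\rho}|_{\mathsf{F}_{0}}$. The preceding proposition gives that $\Delta_{0,J_{\rho}}\pi_{0,J_{\rho}}$ is unitary, so $\Delta_{0,J_{\rho}}\tilde{\rho}$ is unitary on $\mathsf{F}_{0}$; in particular $\tilde{\rho}$ is invertible.

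For step two, the automorphism $\hat{\rho}$ of $B\to V$ covers $\rho$ on the base, so in the prequantum representation of the Heisenberg group $V\ltimes S^{1}$ on $\mathcal{H}$ constructed in the previous subsection one has $\hat{\rho}\,\hat{X}\,\hat{\rho}^{-1}=\mu(\rho,X)\,\hat{\rho_{*}X}$ for a phase $\mu(\rho,X)\in S^{1}$. Combining this with part \textbf{(i)} of the preceding proposition, which says that $\pi_{0,J_{\rho}}$ commutes with every $\hat{Y}$, yields
\begin{equation}\nonumber
\tilde{\rho}\,\hat{X}\;=\;\mu(\rho,X)\,\hat{\rho_{*}X}\,\tilde{\rho}.
\end{equation}
Applying this twice shows that both $\widetilde{\rho_{1}\rho_{2}}$ and $\tilde{\rho}_{1}\tilde{\rho}_{2}$ intertwine the Heisenberg action on $\mathsf{F}_{0}$ through the same symplectic automorphism $(\rho_{1}\rho_{2})_{*}$ of $V$, up to an overall phase depending only on $\rho_{1},\rho_{2}$.

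For step three, by the Stone--von Neumann theorem the Heisenberg representation on $\mathsf{F}_{0}$ is irreducible, so any two invertible intertwiners implementing the same twisted representation must be proportional by a nonzero scalar. Applied to the unitary operators $\Delta_{0,J_{\rho_{1}\rho_{2}}}\widetilde{\rho_{1}\rho_{2}}$ and $\Delta_{0,J_{\rho_{1}}}\Delta_{0,J_{\rho_{2}}}\tilde{\rho}_{1}\tilde{\rho}_{2}$ this yields a unimodular scalar relating them, which after absorbing the positive real $\Delta$-factors gives $\widetilde{\rho_{1}\rho_{2}}=c(\rho_{1},\rho_{2})\,\tilde{\rho}_{1}\tilde{\rho}_{2}$ with $c\in\mathbb{C}^{\times}$---the defining identity of a projective representation. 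The hard part will be the phase bookkeeping: the $S^{1}$ ambiguity in $\hat{\rho}$, the cocycle $\tau$ from Proposition \ref{compi} governing how the projections compose, and the Heisenberg phases $\mu(\rho,X)$ all feed into $c$. What makes the resulting cocycle non-trivial (and impossible to trivialize by rescaling) is the residual sign ambiguity in $\Delta_{0,J_{\rho}}=\sqrt[4]{\det\tfrac{1}{2}(J_{0}+J_{\rho})}$; this is precisely the obstruction to lifting $\rho\mapsto\tilde{\rho}$ to a genuine representation, and is what the metaplectic double cover in the sequel is designed to resolve.
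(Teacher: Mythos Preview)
Your Schur/Stone--von Neumann argument is correct in spirit and does establish the projective property, but it takes a genuinely different route from the paper. The paper first fixes the canonical lift $\hat{\rho}(\psi s)=(\psi\circ\rho)s$ in the $\theta_{0}$-trivialization (so there is no phase ambiguity in $\hat{\rho}$ and no extra $\mu(\rho,X)$---your intertwining relation is exactly $\hat{\rho}\hat{X}\hat{\rho}^{-1}=\widehat{\rho^{-1}X}$), observes that $\hat{\rho}:\mathsf{F}_{0}\to\mathsf{F}_{\rho^{-1}J_{0}\rho}$, and then does a short diagram chase: because $\hat{\rho}_{2}$ is unitary on all of $\mathcal{H}$ it commutes with orthogonal projections, so $\tilde{\rho}_{2}\tilde{\rho}_{1}=\pi_{02}\pi_{23}\hat{\rho}_{2}\hat{\rho}_{1}$, and then Proposition~\ref{compi} (the composition law for the $\pi_{\alpha\beta}$) gives $\tilde{\rho}_{2}\tilde{\rho}_{1}=\chi_{320}\,\widetilde{\rho_{2}\rho_{1}}$ with the \emph{explicit} cocycle $\chi_{320}=\tau_{320}\Delta_{03}/(\Delta_{32}\Delta_{20})$. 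Your abstract argument reproduces the conclusion but not this formula, and the explicit $\chi$ is exactly what the paper needs next to compute the curvature $\tfrac{1}{2}\Omega$ of the Fock bundle and motivate the half-form correction. Two small clean-ups: the phases you introduce in steps one and two are all trivial once you use the canonical lift, and your closing remark that the obstruction is a ``sign ambiguity in $\Delta$'' is not quite right---the nontriviality sits in the phase $\tau$ and the fourth root, and is only pinned down after the curvature computation that follows.
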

\begin{proof}

Let us represent the states in $\mathsf{F}_{0}$ in the trivialization $s$ associated with the symplectic potential $\theta_{0} = \frac{1}{2}(p_{a}dq^{a}-q^{a}dp_{a})$ in canonical coordinates $(p_{a},q^{b})$. When presenting the construction of Fock space, we showed that this is invariant under $SP(V,\omega)$. Therefore, taking and arbitrary one-parameter family of symplectomorphisms $\rho_{t}\in SP(V,\omega)$ generated by some hamiltonian vector field $X$,
\begin{equation}\nonumber
0 = \mathcal{L}_{X}\theta_{0} = X\lrcorner\omega + d(X\lrcorner\theta_{0}),
\end{equation}
so that $X$ has hamiltonian $f=X\lrcorner\theta_{0}$. The action of $\hat{\rho}_{t}$ on the elements of $\mathcal{H}$ was understood in equation (\ref{flowsection}), from which it follows that, in this frame,
\begin{equation}\nonumber
\begin{split}
[\hat{\rho}_{t}(\psi s)](m) &=\psi(\rho_{t}m)\exp\left(-\frac{i}{\hbar}\int_{0}^{t}[(X_{f}\lrcorner\theta_{0} - f)(\rho_{t'}m)]dt'\right)s(m)\\
&=[(\psi\circ\rho_{t})s](m),
\end{split}
\end{equation}
so we can alternativelly look at the action of $SP(V,\omega)$ on $\mathsf{F}_{0}\subset\mathcal{H}$ as a change in coordinates on the base symplectic manifold. Furthermore, the change in coordinates leads to a change in the complex structure by $J\mapsto \rho_{t}^{-1}J\rho_{t}$. We represent this schematically in the commuting diagram
\begin{equation}\nonumber
\begin{tikzcd}
L^{+}V\ni J_{\beta} \arrow[d] \arrow[r] & J_{\alpha} = \rho^{-1}J\rho \arrow[d] & \\
\mathsf{F}_{\beta} \arrow[r, "\hat{\rho}"] & \mathsf{F}_{\alpha} = \{\psi\circ\rho |\psi\in\mathsf{F}_{\beta}\}
\end{tikzcd},
\end{equation}
where the vertical lines represent the quantization procedure. To find the composition law for the action $\tilde{\rho}$, on $\mathsf{F}_{0}$, consider
\begin{equation}\nonumber
\begin{tikzcd}
\mathsf{F}_{0} \arrow[r, "\hat{\rho}_{1}"] \arrow[d, "\tilde{\rho}_{1}"] & \mathsf{F}_{1} \arrow[dl, "\pi_{01}"] \arrow[r, "\hat{\rho}_{2}"] & \mathsf{F}_{3} \arrow[dl, "\pi_{23}"] \\
\mathsf{F}_{0} \arrow[d, "\tilde{\rho}_{2}"] \arrow[r, "\hat{\rho}_{2}"] & \mathsf{F}_{2} \arrow[dl, "\pi_{02}"] & \\
\mathsf{F}_{0} &&
\end{tikzcd}.
\end{equation}
Here, $\mathsf{F}_{\alpha}$ is constructed from $J_{\alpha} = \rho_{\alpha}^{-1}J_{0}\rho_{\alpha}$ and $\rho_{3} = \rho_{2}\circ\rho_{1}$. The diagram commutes: the two triangular subdiagrams by the definition of $\tilde{\rho}_{\alpha} = \pi_{0\alpha}\circ\hat{\rho}_{\alpha}$ and the rombus shaped because $\hat{\rho}_{2}$ is a unitary transformation on $\mathcal{H}$ and so it commutes with the orthogonal projection. This implies, by proposition \ref{compi}, that
\begin{equation}\nonumber
\tilde{\rho}_{2}\circ\tilde{\rho}_{1} = \pi_{02}\circ\pi_{23}\circ\hat{\rho}_{2}\circ\hat{\rho}_{1} = \chi_{320}\pi_{03}\circ\hat{\rho}_{3} = \chi_{320}\tilde{\rho}_{3},
\end{equation}
where
\begin{equation}\nonumber
\chi_{320} = \frac{\tau_{320}\Delta_{03}}{\Delta_{32}\Delta_{20}},
\end{equation}
so $\rho\mapsto\tilde{\rho}$ gives a representation of $SP(V,\omega)$ on $\mathsf{F}_{0}$ up to a constant factor.
\end{proof}

The goal of this subsection is to make this into a representation. From the previous proposition we see that one can approach this problem by asking how the Fock Space $\mathsf{F}_{\alpha}$ changes as one moves in the space of complex structures $L^{+}V$. The first step is to understand better the geometry of $L^{+}V$.

\begin{lagrangiangrassmanian}
Given a symplectic vector space $(V,\omega)$, the space $L^{+}V$ of positive complex structures of $V$ compatible with $\omega$, called the \emph{Positive Lagrangian Grassmanian}, is a K\"ahler manifold with K\"ahler scalar 
\begin{equation}\nonumber
K = -\ln\det (y),
\end{equation}
where, if $\{X^{a},Y_{b}\}$ is a symplectic frame in $V$ and $P_{J} = \text{span}\{X^{a}-z^{ab}Y_{b}\}\subset V_{\mathbb{C}}$, for some symmetric matrix $z$, is the Lagrangian subspace determined by $J$, then $y = \text{Im} (z)$.
\end{lagrangiangrassmanian}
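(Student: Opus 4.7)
The plan is to parametrize $L^{+}V$ explicitly as an open subset of a complex vector space (the Siegel upper half space) and then compute the K\"ahler structure directly from the stated potential. First fix a symplectic frame $\{X^{a},Y_{b}\}$ as in proposition \ref{symplbasis} and set $F=\text{span}\{Y_{b}\}$, $G=\text{span}\{X^{a}\}$. Given $J\in L^{+}V$, proposition \ref{cxlagr} associates to it a Lagrangian $P_{J}\subset V_{\mathbb{C}}$ with $P_{J}\cap\bar{P}_{J}=\{0\}$; in a neighbourhood of any such $J$ one can arrange $P_{J}\cap F_{\mathbb{C}}=\{0\}$ (modifying the frame if necessary), so $P_{J}$ is the graph of a linear map $G_{\mathbb{C}}\to F_{\mathbb{C}}$, that is, $P_{J}=\text{span}\{e^{a}\}$ with $e^{a}=X^{a}-z^{ab}Y_{b}$ for some complex matrix $z^{ab}$. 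Using $2\omega(X^{a},Y_{b})=\delta^{a}_{b}$ and $\omega(X^{a},X^{b})=\omega(Y_{a},Y_{b})=0$, direct computation gives $2\omega(e^{a},e^{c})=z^{ac}-z^{ca}$ and $2\omega(\overline{e^{a}},e^{c})=-2iy^{ac}$, where $y:=\operatorname{Im}(z)$. Hence the Lagrangian condition forces $z=z^{T}$, the condition $P_{J}\cap\bar{P}_{J}=\{0\}$ becomes $\det y\neq 0$, and positivity of $\langle\cdot,\cdot\rangle_{J}$ from proposition \ref{gclag} translates to $y$ being positive definite.

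This identifies $L^{+}V$ with the Siegel upper half space
\begin{equation*}
\mathcal{S}=\{z\in M_{n}(\mathbb{C})\,:\, z=z^{T},\;\operatorname{Im}(z)\text{ positive definite}\},
\end{equation*}
an open subset of the complex vector space of $n\times n$ symmetric complex matrices; the entries $z^{ab}$ ($a\leq b$) are therefore global holomorphic coordinates and give $L^{+}V$ a natural complex manifold structure. A change of the auxiliary Lagrangian transversal $F$ implements a holomorphic fractional linear transformation on $\mathcal{S}$, so the complex structure on $L^{+}V$ is intrinsic.

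Next I check that $K=-\ln\det y$ is a K\"ahler potential for this complex manifold. Using $d\ln\det y=\operatorname{tr}(y^{-1}dy)$ together with $dy=(dz-d\bar{z})/(2i)$ gives $\partial K=-\tfrac{1}{2i}(y^{-1})_{ab}\,dz^{ab}$ and $\bar{\partial}(y^{-1})_{ab}=\tfrac{1}{2i}(y^{-1})_{ac}(y^{-1})_{bd}\,d\bar{z}^{cd}$. Combining these,
\begin{equation*}
i\partial\bar{\partial}K \;=\; \tfrac{i}{4}(y^{-1})_{ac}(y^{-1})_{bd}\,dz^{ab}\wedge d\bar{z}^{cd},
\end{equation*}
which is a closed real $(1,1)$-form, hence compatible with the complex structure on $\mathcal{S}$. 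Its associated sesquilinear form on $T^{1,0}L^{+}V$, evaluated on a tangent vector parametrised by a symmetric complex matrix $W$, reads $\tfrac{1}{4}\operatorname{tr}(y^{-1}Wy^{-1}\overline{W})$; setting $A=y^{-1/2}Wy^{-1/2}$ and using $y$ positive definite yields $\tfrac{1}{4}\operatorname{tr}(AA^{*})=\tfrac{1}{4}\|A\|^{2}>0$ for $W\neq 0$ (since $A$ is then symmetric and $\overline{A}=A^{*}$). Thus $i\partial\bar{\partial}K$ is positive-definite and nondegenerate, i.e., a K\"ahler form, so $K$ is the asserted K\"ahler potential.

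The main technical obstacle is bookkeeping the symmetry of $z^{ab}$ when differentiating: treating entries with $a\leq b$ as independent introduces factors of $2$ on off-diagonal pieces which must be tracked consistently, so the tensor expressions above are to be read with appropriate symmetrisation in $(ab)$ and $(cd)$. A secondary check is that the construction is intrinsic, not depending on the frame or Lagrangian transversal chosen to introduce the $z$ coordinates; this follows from the transitive action of $SP(V,\omega)$ on the relevant data together with invariance of $-\ln\det y$ under the corresponding fractional linear transformations on $\mathcal{S}$, a well-known feature of the Siegel upper half space.
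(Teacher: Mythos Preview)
Your argument is correct but takes a somewhat different route from the paper. You go directly: parametrise $L^{+}V$ by the Siegel upper half space, compute $i\partial\bar{\partial}K$ from the stated potential, and check positivity of the resulting $(1,1)$-form by hand. The paper instead first introduces an \emph{a priori} $SP(V,\omega)$-invariant closed two-form $\Omega(T,T')=\tfrac{1}{8}\operatorname{tr}(TJT')$ on $L^{+}V$, identifies the almost complex structure on $T_{J}(L^{+}V)$ with $T\mapsto TJ$, computes the associated metric $2\Omega(T,TJ)=\tfrac{1}{2}\operatorname{tr}[(uy^{-1})^{2}+(vy^{-1})^{2}]$, and only then recognises $K=-\ln\det y$ as its potential. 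The payoff of the paper's detour is that this invariant trace form $\Omega$ is exactly what reappears later as (twice) the curvature of the Fock bundle $F\to L^{+}V$; your approach is cleaner for the bare statement but does not set up that identification.

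One small correction to your closing remark: the potential $-\ln\det y$ is \emph{not} invariant under the $SP(V,\omega)$ action by fractional linear transformations on $\mathcal{S}$; under $z\mapsto(Az+B)(Cz+D)^{-1}$ it shifts by $2\operatorname{Re}\ln\det(Cz+D)$, a pluriharmonic function. What is invariant is the K\"ahler form $i\partial\bar{\partial}K$, which is all you need for intrinsicness.
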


\begin{proof}

Recalling the definitions from subsection \ref{subsecquant}, an element $J\in L^{+}V$ determines a positive K\"ahler Lagrangian subspace $P_{J}\subset V_{\mathbb{C}}$. Hence the map $J\mapsto P_{J}$ identifies $L^{+}V$ with a complex submanifold of $\text{Gr}(n,V_{\mathbb{C}})$, the Grassmanian of $n$-dimensional complex subspaces of $V_{\mathbb{C}}$ (thus the name of $L^{+}V$).

We shall use two different parametrizations of $L^{+}V$ interchangeably. The first is given by the matrix $(z^{ab})$ such that $P_{J} = \text{span}\{X^{a}-z^{ab}Y_{b}\}$ for a fixed symplectic frame on $(V,\omega)$. These are holomorphic coordinates because every $P_{J}$ is Kahler, and also symmetric because every $P_{J}$ is Lagrangian: for any $A, B\in P_{J}$,
\begin{equation}\nonumber
0=\omega(A,B) = \sum_{ab}A_{a}B_{b}\omega(X^{a}-z^{ac}Y_{c},X^{b}-z^{bd}Y_{d}) = \sum_{ab}A_{a}B_{b}(z^{ac}\delta^{b}_{c}-z^{bd}\delta^{a}_{d}).
\end{equation}

Alternatively, we may parametrize $L^{+}V$ by the entries of the matrix $(J_{ab})$ representing $J$ in the same symplectic frame. The coordinates $(z^{ab}) = x+iy$ and $(J_{ab})$ of a point $J\in L^{+}V$ are related by
\begin{equation}\label{nrelate}
(J_{ab}) = NJ_{0}N^{-1}, \,\,\,\, \text{ where }\,\,\,\, N = \left( \begin{array}{cc}
1 & 0 \\
-x & y
\end{array} \right) \,\,\,\, \text{ and } \,\,\,\, J_{0} = \left(\begin{array}{cc}
0 & -1 \\
1 & 0
\end{array} \right).
\end{equation}

We define a vector $T\in T_{J}(L^{+}V)$ to be a linear map $T:V\to V$ such that $J+tT$ is a positive compatible complex structure in $V$ to first order in the parameter $t$. This implies 
\begin{equation}\label{antit}
-1 = (J+tT)^{2} = J^{2}+t(JT+TJ)+O(t^{2}) \Rightarrow TJ+JT=0
\end{equation}
and
\begin{equation}\nonumber
\begin{split}
\omega(Y,X) = \omega[(J+tT)Y,(J+tT)X] = \omega(Y,X) + t[\omega(TY,JX)+\omega(JY,TX)] + O(t^{2})\\
\Rightarrow \omega(JTY,X) = \omega(J^{2}TY,JX) = -\omega(TY,JX) = \omega(JY,TX) = \omega(JTX,Y) \\
\Rightarrow \omega(JTY,X) = \omega(JTX,Y), \,\, \forall X,Y\in V.
\end{split}
\end{equation}
Conversely, we can think of $T$ as given by the vector tangent to the curve $z+tw$ in the $z$ coordinates and represent it by some symmetric $w = u+iv$. Then equation (\ref{nrelate}) implies that the two expressions for $T\in T_{J}(L^{+}V)$ are related by
\begin{equation}\nonumber
T = (WJ_{0} - JW)N^{-1}, \,\,\,\, \text{ where } \,\,\,\, W = \left( \begin{array}{cc}
0 & 0 \\
-u & v
\end{array}\right).
\end{equation}

There is a symplectic structure given by
\begin{equation}\nonumber
\Omega :T_{J}(L^{+}V)\times T_{J}(L^{+}V)\to\mathbb{R}:(T,T')\mapsto \frac{1}{8}\text{tr}(TJT').
\end{equation}
That this is antisymmetric follows from (\ref{antit}). Note also that it is invariant under the action $J\mapsto\rho^{-1}J\rho$ of $SP(V,\omega)$ due to the cyclic property of the trace. Closure will follow from the expression of $\omega$ in terms of a Kahler scalar $K$. To evaluate this, note that the complex structure at $T_{J}(L^{+}V)$ with respect to which the $z$ coordinates are holomorphic is given by $J$ itself. More precisely,
\begin{equation}\nonumber
TJ = [(WJ_{0}-JW)N^{-1}](NJ_{0}N^{-1}) = [(WJ_{0})J_{0} - J(WJ_{0})]N^{-1}.
\end{equation}
But
\begin{equation}\nonumber
WJ_{0} = \left(\begin{array}{cc}
0 & 0 \\
-u & v
\end{array}\right) \left(\begin{array}{cc}
0 & -1 \\
1 & 0
\end{array}\right) = \left(\begin{array}{cc}
0 & 0 \\
v & u
\end{array}\right),
\end{equation}
which corresponds to the vector $iw = -v + iu$. Hence the natural metric $g(\cdot , \cdot) =2\omega(\cdot, J\cdot)$ is given by
\begin{equation}\nonumber
\begin{split}
2\Omega(T,TJ) &= \frac{1}{4}\text{tr}(TJTJ) = \frac{1}{4}\text{tr}(T^{2}) \\
&= \frac{1}{4}\text{tr}[(WJ_{0}N^{-1})^{2} - (WJ_{0}N^{-1}JWN^{-1}) - (JWN^{-1}WJ_{0}N^{-1}) - (JWN^{-1}JWN^{-1})] \\
&= \frac{1}{2}\text{tr}[(WJ_{0}N^{-1})^{2}+(WN^{-1})^{2}] = \frac{1}{2}\text{tr}[(uy^{-1})^{2}+(vy^{-1})^{2}],
\end{split}
\end{equation}
and hence in the $z$ coordinates this metric is given by
\begin{equation}\nonumber
\frac{1}{2}\frac{\partial^{2}K}{\partial z^{ab}\partial\bar{z}^{cd}}w^{ab}\bar{w}'^{cd} = -\frac{1}{8}\frac{\partial^{2}(\ln\det(y))}{\partial y^{ab}\partial y^{cd}}(u^{ab}u^{cd}+v^{ab}v^{cd}) = \frac{1}{2}\text{tr}[(uy^{-1})^{2}+(vy^{-1})^{2}]
\end{equation}
for $K = -\ln\det (y)$. We conclude that $\Omega = i\partial\bar{\partial}K$, which is then obviously closed.
\end{proof}

A geometric way of proceeding is to consider the hermitian vector bundle $F\to L^{+}V:(J,\mathsf{F}_{J})\mapsto J$. Because $\mathsf{F}_{J}\subset\mathcal{H}, \,\,\forall J\in L^{+}V$, it can be embedded in the trivial bundle $L^{+}V\times\mathcal{H}$, from which it inherits a natural connection, parallel transport from $J_{\alpha}$ to $J_{\beta}$ being given, to first order in $J_{\alpha}-J_{\beta}$ by the projection $\mathsf{F}_{\alpha}\to\mathsf{F}_{\beta}$. This is seen to have a nonzero curvature, and the modification of the definition of the Fock spaces which makes this bundle flat renders a representation of (the double cover of) $SP(V,\omega)$ on the space of covariantly constant sections.

The main strategy is to calculate the curvature from the cocycle $\chi$, which is expressed in terms of $\tau$ and $\Delta$. To fill in the technical steps, let $\psi_{\alpha}$ be the ground state in $\mathsf{F}_{\alpha}$. Then
\begin{equation}\nonumber
\langle\psi_{\alpha},\psi_{\beta}\rangle = \langle\psi_{\alpha},\pi_{\alpha\beta}\psi_{\beta}\rangle = (\Delta_{\alpha\beta})^{-2}\Phi_{\alpha\beta}(0)e^{-K(0)/2\hbar} = (\Delta_{\alpha\beta})^{-2}.
\end{equation}
On the other hand,
\begin{equation}\nonumber
\begin{split}
\langle\psi_{\alpha},\psi_{\beta}\rangle &= \langle\psi_{\alpha},\pi_{\alpha\beta}\psi_{\beta}\rangle = \frac{1}{\chi_{\alpha 0\beta}}\langle\psi_{\alpha},\pi_{\alpha 0}\pi_{0\beta}\psi_{\beta}\rangle = \frac{\Delta_{0\alpha}\Delta_{0\beta}}{\Delta_{\alpha\beta}\tau_{0\alpha\beta}}\langle\pi_{0\alpha}\psi_{\alpha},\pi_{0\beta}\psi_{\beta}\rangle \\
&= \frac{\Delta_{0\alpha}\Delta_{0\beta}}{\Delta_{\alpha\beta}\tau_{0\alpha\beta}}\left(\frac{1}{\Delta_{0\alpha}\Delta_{0\beta}}\right)^{2}\int_{V}\exp\left[\frac{1}{2\hbar}\omega(X,J_{\alpha}L_{\alpha}X - iL_{\alpha}X)\right]^{*}\times\\
&\,\,\,\,\,\,\,\,\,\,\,\,\,\,\,\,\,\,\,\,\,\,\,\,\,\,\,\,\,\,\,\,\,\,\,\,\,\,\,\,\,\,\,\,\,\,\,\,\,\,\,\,\,\,\,\,\,\,\,\,\,\,\,\,\,\,\,\,\,\,\,\,\,\,\,\,\,\,\,\,\times\exp\left[\frac{1}{2\hbar}\omega(X,J_{\beta}L_{\beta}X - iL_{\beta}X)\right]e^{-K/\hbar}\epsilon,
\end{split}
\end{equation}
therefore, from the two equations,
\begin{equation}\nonumber
(\Delta_{\alpha\beta})^{-2} = \frac{1}{\Delta_{\alpha\beta}\Delta_{0\alpha}\Delta_{0\beta}\tau_{0\alpha\beta}}\int_{V}e^{-Q_{\alpha\beta}/\hbar}\epsilon ,
\end{equation}
where $Q_{\alpha\beta}(X) = \omega(X,A_{\alpha\beta}X)$, with
\begin{equation}\nonumber
A_{\alpha\beta} = J_{0} - \frac{1}{2}J_{0}(L_{\alpha}(1+iJ_{0})+L_{\beta}(1-iJ_{0})).
\end{equation}
Since $Q$ is a quadratic form with positive real part, the integral converges and we get
\begin{equation}\nonumber
\tau_{0\alpha\beta} = \left(\frac{\Delta_{\alpha\beta}}{\Delta_{0\alpha}\Delta_{0\beta}}\right)\left(\frac{1}{\det (A_{\alpha\beta})}\right)^{\frac{1}{2}}.
\end{equation}

To calculate the determinant of $(A_{\alpha\beta})$, consider the frame $Z^{a},\bar{Z}^{a}$, where $Z^{a} = X^{a} - iY_{a}$. Then $J_{0}Z^{a} = iZ^{a}$ and $J_{0}\bar{Z}^{a} = -i\bar{Z}^{a}$ so that the expression for $J_{0}$ in this frame is
\begin{equation}\nonumber
J_{0} = \left( \begin{array}{cc}
i & 0 \\
0 & -i
\end{array}\right).
\end{equation}
Likewise, the matrix $N$ becomes
\begin{equation}\nonumber
\begin{cases}
NX^{a} = X^{a}-xY_{a}\\
NY_{a} = yY_{a}
\end{cases}
\Rightarrow
\begin{cases}
NZ^{a} = \frac{1}{2}(1-iz)Z^{a}+\frac{1}{2}(1+iz)\bar{Z}^{a}\\
N\bar{Z}^{a} = \frac{1}{2}(1-i\bar{z})Z^{a} + \frac{1}{2}(1+i\bar{z})\bar{Z}^{a}
\end{cases}
\Rightarrow
N(z) = \frac{1}{2}\left(\begin{array}{cc}
1-iz & 1-i\bar{z}\\
1+iz & 1+i\bar{z}
\end{array}\right),
\end{equation}
so that, using equation (\ref{nrelate}),
\begin{equation}\nonumber
\begin{split}
L_{\alpha} = L(z_{\alpha}) = \left(\begin{array}{cc}
0 & i+\bar{z}_{\alpha} \\
i-z_{\alpha} & 0
\end{array}\right)\left(\begin{array}{cc}
i+z_{\alpha} & 0\\
0 & i-\bar{z}_{\alpha}
\end{array}\right)^{-1} \Rightarrow \\
\Rightarrow A_{\alpha\beta} =\left(\begin{array}{cc}
-1+iz_{\beta}&-i-\bar{z}_{\alpha}\\
-i+z_{\beta}&1+i\bar{z}_{\alpha}
\end{array}\right)\left(\begin{array}{cc}
i+z_{\beta}&0\\
0&i-\bar{z}_{\alpha}
\end{array}\right)^{-1} \Rightarrow \\
\Rightarrow \det(A_{\alpha\beta}) = \frac{2i(z_{\beta}-\bar{z}_{\alpha})}{\det(i+z_{\beta})\det(i-\bar{z}_{\alpha})} = \frac{\zeta_{\alpha\beta}}{\zeta_{0\beta}\zeta_{\alpha 0}}, \,\,\,\, \text{ where } \,\,\,\, \zeta_{\alpha\beta} = \det\frac{i}{2}(\bar{z}_{\alpha}-z_{\beta}).
\end{split}
\end{equation}
We additionally use the properties of $\tau$ and $\Delta$. For example, we know from proposition \ref{compi} that $\tau_{0\alpha\alpha}=1$, so the previous equations give
\begin{equation}\nonumber
1=(\tau_{0\alpha\alpha})^{4} = \left(\frac{\Delta_{\alpha\alpha}}{\Delta_{0\alpha}\Delta_{0\alpha}}\right)^{4}\left(\frac{\zeta_{0\alpha}\zeta_{\alpha 0}}{\zeta_{\alpha\beta}}\right)^{2} \Rightarrow (\Delta_{0\alpha})^{4} = (\Delta_{\alpha 0})^{4} = \frac{\zeta_{0\alpha}\zeta_{\alpha 0}}{\zeta_{\alpha\alpha}\zeta_{00}}.
\end{equation}
This, in turn, implies
\begin{equation}\nonumber
1 = |\tau_{0\alpha\beta}|^{4} = \Big|\frac{\Delta_{\alpha\beta}}{\Delta_{\alpha 0}\Delta_{0 \beta}}\Big|^{4}\Big|\frac{\zeta_{0\beta}\zeta_{\alpha 0}}{\zeta_{\alpha\beta}}\Big|^{2} \Rightarrow (\Delta_{\alpha\beta})^{4} = \frac{\zeta_{\alpha\beta}\zeta_{\beta\alpha}}{\zeta_{\alpha\alpha}\zeta_{\beta\beta}},
\end{equation}
since $\zeta_{\alpha\beta} = \zeta_{\beta\alpha}^{*}$. Now, we are finally able to calculate
\begin{equation}\nonumber
(\tau_{0\alpha\beta})^{4} = \left(\frac{\Delta_{\alpha\beta}}{\Delta_{0\alpha}\Delta_{0\beta}}\right)^{4}\left(\frac{\zeta_{0\beta}\zeta_{\alpha 0}}{\zeta_{\alpha\beta}}\right)^{2} = \frac{\zeta_{\alpha 0}\zeta_{\beta\alpha}\zeta_{0\beta}}{\zeta_{0\alpha}\zeta_{\alpha\beta}\zeta_{\beta 0}}.
\end{equation}
Since $\tau$ is a cocycle, ie., $\tau_{123}\tau_{301} = \tau_{230}\tau_{012}$, this allows to calculate a general $\tau_{\alpha\beta\gamma}$,
\begin{equation}\nonumber
(\tau_{\alpha\beta\gamma})^{4} = \frac{\zeta_{\beta\alpha}\zeta_{\gamma\beta}\zeta_{\alpha\gamma}}{\zeta_{\alpha\beta}\zeta_{\beta\gamma}\zeta_{\gamma\alpha}},
\end{equation}
and, finally,
\begin{equation}\nonumber
\chi_{123} = \frac{\tau_{123}\Delta_{31}}{\Delta_{12}\Delta_{23}} = \left( \frac{\zeta_{22} \zeta_{13}}{\zeta_{12}\zeta_{23}}\right)^{\frac{1}{2}},
\end{equation}
where we choose the square-root by: $\chi_{111} = 1$, $\chi(z_{1},z_{2},z_{3}) = \chi_{123}$ is continuous in $z_{1,2,3}$. This is well-defined because $L^{+}V$ is connected and simply-connected. Therefore the curvature $\Gamma$ of the bundle $(J,\mathsf{F}_{J})\mapsto J$ is
\begin{equation}\nonumber
\Gamma (z_{1}) = id_{3}\wedge d_{2}\ln\chi_{123}|_{z_{1}=z_{2}=z_{3}} = id_{3}\wedge d_{2}\ln(\zeta_{23})^{-\frac{1}{2}}|_{z_{1}=z_{2}=z_{3}} = \frac{1}{2}\left[-i\partial\bar{\partial}\ln\det\frac{i}{2}(\bar{z}-z)\right]\Bigg|_{z=z_{1}},
\end{equation}
so the curvature is $\frac{1}{2}\Omega$, where $\Omega$ is the symplectic structure in $L^{+}V$.

We wish to progress by tensoring $F\to L^{+}V$ with a bundle with connection with curvature $-\frac{1}{2}\Omega$. Additionally, this should be a line-bundle, since one should not effectively change the Fock spaces $\mathsf{F}_{J}$, but only the relative phases between them. A prequantum bundle is then seen to be naturally of use in the construction. Consider
\begin{equation}\nonumber
\pi :L=\{(J,\xi),J\in L^{+}V, \xi\in L_{J}\}\to L^{+}V,
\end{equation}
where $L_{J}:=\{\xi\in\wedge^{n}V_{\mathbb{C}}|X\wedge\xi=0,\forall X\in P_{J}\}$. These are one-dimensional since $\dim(P_{J})=n\Rightarrow\dim(\wedge^{n}P_{J})=\binom{n}{n}=1$. Using the holomorphic coordinates $z^{ab}$ in $P_{J}=\text{span}\{X^{a}-z^{ab}Y_{b}\}$ one can give it the structure of a holomorphic line bundle.

This bundle has a hermitian structure: in a given symplectic frame $(X^{a},Y_{b})$ in $(V,\omega)$, it is expressed as
\begin{equation}\nonumber
(\xi,\xi ')\Xi = i^{n}\bar{\xi}\wedge\xi ' , \,\,\,\, \text{ where } \,\,\,\, \Xi = \left(\overset{n}{\underset{i=1}{\wedge}}X^{i}\right)\wedge\left(\overset{n}{\underset{j=1}{\wedge}}Y_{j}\right).
\end{equation}
This hermitian structure then defines a compatible connetion: let $Z^{a}=X^{a}-z^{ab}Y_{b}$ be a basis for $P_{J}$, where $z^{ab}$ are the holomorphic coordinates of $J\in L^{+}V$, then this gives a holomorphic ($\Rightarrow$ polarized) section of $L$, and
\begin{equation}\label{calculxi}
\begin{split}
(\xi,\xi)\Xi &= i^{n}\bar{\xi}\wedge\xi = i^{n}\left[\overset{n}{\underset{k=1}{\wedge}}(X^{k}-\bar{z}^{ka}Y_{a})\right]\wedge\left[\overset{n}{\underset{l=1}{\wedge}}(X^{l}-z^{lb}Y_{b})\right] \\
&= i^{n}\overset{n}{\underset{k=1}{\wedge}}\left[(X^{k}-\bar{z}^{ka}Y_{a})\wedge (X^{k}-z^{kb}Y_{b})\right](-1)^{\frac{n(n-1)}{2}} \\
&= i^{n}\overset{n}{\underset{k=1}{\wedge}}\left[X^{k}\wedge(\bar{z}^{ka}-z^{ka})Y_{a}\right](-1)^{\frac{n(n-1)}{2}} = i^{n}\left(\overset{n}{\underset{k=1}{\wedge}}X^{k}\right)\wedge\left[\overset{n}{\underset{l=1}{\wedge}}(\bar{z}^{la}-z^{la})Y_{a}\right] \\
&= \left(\overset{n}{\underset{k=1}{\wedge}}X^{k}\right)\wedge i^{n}\det(\bar{z}-z)\left(\overset{n}{\underset{l=1}{\wedge}}Y_{l}\right) \\
&= \det(2y)\Xi,
\end{split}
\end{equation}
$z=x+iy$. Then the Kahler scalar can be calculated to be $K=-\ln(\xi,\xi) = -\ln\det(2y)\Rightarrow i\partial\bar{\partial}K = \Omega$. So, up to factors of $\hbar$, this is a prequantum bundle. 

We are interested in a bundle with the opposite curvature and so let us consider now the dual of this bundle, the \emph{canonical bundle} $K\to L^{+}V$, whose fibre over $J$ is $K_{J}=K_{P_{J}}$, where
\begin{equation}\nonumber
K_{P} = \{\mu\in\wedge^{n}V_{\mathbb{C}}^{*}|X\lrcorner\mu = 0, \forall X\in \bar{P}\},
\end{equation}
for any Lagrangian subspace $P\subset V_{\mathbb{C}}$. We calculate its curvature from the cocycle of the embedding of $K$ in the trivial bundle $L^{+}V\times\wedge^{n}V_{\mathbb{C}}^{*}$.

Similarly, the $L^{+}V\times\wedge^{n}V_{\mathbb{C}}^{*}$ has an indefinite inner product given by $i^{n}(\mu,\mu ') \epsilon = \bar{\mu}\wedge\mu '$, where $\epsilon = \omega^{n}/(2\pi\hbar)^{n}$. Because the Lagrangian subspaces $P_{J}$ are positive, the restriction of this to $K$ is positive definite. Take the holomorphic section
\begin{equation}\nonumber
\mu(J) = \left(\frac{1}{4\pi\hbar}\right)^{\frac{n}{2}}(\bar{Z}^{1}\lrcorner\omega)\wedge ...\wedge (\bar{Z}^{n}\lrcorner\omega),
\end{equation}
where the $Z^{a}$ span $P_{J}$. A calculation totally analogous to (\ref{calculxi}) gives $(\mu_{1},\mu_{2}) = \det\frac{i}{2}(\bar{z}_{2}-z_{1}) = \zeta_{21}$, where $\mu_{i}=\mu(J_{i})$ and $z_{i}$ are the holomorphic coordinates of $J_{i}\in L^{+}V$. Since the connection obtained from the embedding is such that parallel transport from $J_{\alpha}$ to $J_{\beta}$ is given, to first order in $J_{\alpha}-J_{\beta}$, by the orthogonal projection $K_{J_{\alpha}}\to K_{J_{\beta}}$ with respect to this inner product, the cocycle of the embedding is found to be
\begin{equation}\nonumber
\frac{(\mu_{2},\mu_{1})(\mu_{3},\mu_{2})}{(\mu_{3},\mu_{1})(\mu_{2},\mu_{2})} = \frac{\zeta_{12}\zeta_{23}}{\zeta_{13}\zeta_{22}} = (\chi_{123})^{-2},
\end{equation}
so that the curvature is
\begin{equation}\nonumber
id_{3}\wedge d_{2}\ln\chi_{123}^{-2}|_{z_{1}=z_{2}=z_{3}} = -2\left(\frac{1}{2}\Omega\right) = -\Omega.
\end{equation}

\begin{halform}
The half-form bundle $\delta\to L^{+}V$ is the line bundle $\sqrt{K}$. It has a connection and compatible Hermitian structure inherited from $K$. Likewise, we define the pairing,
\begin{equation}\nonumber
(\nu_{\alpha},\nu_{\beta}):=\sqrt{(\nu_{\alpha}^{2},\nu_{\beta}^{2})}, \,\,\,\, \forall \nu_{\alpha}\in\delta_{J_{\alpha}}, \nu_{\beta}\in\delta_{J_{\beta}},
\end{equation}
by using the pairing in $K$. Here, the sign of the square-root is fixed by continuity together with $(\nu,\nu)\geq 0$. A half-form on $P_{J}$ is an element of $\delta_{J}$.
\end{halform}

The definition of the half-form pairing implies that the corresponding cocycle is $(\chi_{123})^{-1}$, so that the curvature of the half-form bundle is $-\frac{1}{2}\Omega$, as we needed. The construction is heavily dependent on the fact that $K$ (and therefore $\delta$) are topologically trivial and that $L^{+}V$ is simply connected.

Therefore we should substitute the bundle $F\to L^{+}V$ by $\tilde{F}= F\otimes\delta$. The cocycle of the pairing 
\begin{equation}\nonumber
\langle s_{1}\otimes\nu_{1},s_{2}\otimes\nu_{2}\rangle = \langle s_{1},s_{2}\rangle (\nu_{1},\nu_{2}),
\end{equation}
where $s_{\alpha}\otimes\nu_{\alpha}\in\tilde{F}_{\alpha}$, is $(\chi_{123})(\chi_{123}^{-2})^{1/2}=1$, so the resulting bundle, with connection such that parallel transport is given by the projection $\tilde{\pi}$ defined by $\langle\tilde{s},\tilde{\pi}\tilde{s}'\rangle = \langle\tilde{s},\tilde{s}\rangle $, is flat. The action of the $\hat{X}, \,\, X\in V$ does not change because these preseve the complex structure $J$ on $V$, so that they act trivially on $K$, and hence on $\delta$. On the other hand, $SP(V,\omega)$ acts on $K$ by $\mu\mapsto\rho^{*}\mu$. This action preserves the pairing in $K$, 
\begin{equation}\nonumber
i^{n}(\rho^{*}\mu_{1},\rho^{*}\mu_{2})\epsilon = \rho^{*}\bar{\mu}_{1}\wedge\rho^{*}\mu_{2} = \rho^{*}(\bar{\mu}_{1}\wedge\mu_{2}) = i^{n}(\mu_{1},\mu_{2})\rho^{*}\epsilon = i^{n}(\mu_{1},\mu_{2})\epsilon,
\end{equation}
since $\epsilon \propto\omega^{n}$, which is preserved by the definition of $SP(V,\omega)$. This is consistent with the action of $SP(V,\omega)$ on the base $L^{+}V$, since the elements of $\rho^{*}K_{J}$ annihilate $\rho^{*}\bar{P}_{J}=\bar{P}_{\rho(J)}$, that is $\rho^{*}K_{J} = K_{\rho(J)}$. However, it is not possible to transfer this action to $\delta$, as we now motivate: consider the group $U_{J}=\{\rho\in SP(V,\omega)|\rho(J)=J\}$. It can be shown to be isomorphic to $U(n)$, and the relation is that, if $\rho\in U_{J}$ corresponds to $u_{\rho}\in U(n)$, then $\rho|_{K_{J}} = \det u_{\rho}\text{id}$. The natural way to proceed would be to define $\rho|_{\delta_{J}} = \sqrt{\det u_{\rho}}\text{id}$. The problem is that one cannot define $\sqrt{\det u}$ on all of $U(n)$ continuously. 

Conversely, the double cover $MP(V,\omega)$ of $SP(V,\omega)$ has a well-defined action on $\delta$. In fact, we can define this group through the way it acts on $\delta$: let $0\neq\mu\in C^{\infty}(K)$. For any $\rho\in SP(V,\omega)$, $\rho^{*}\mu=\lambda_{\rho}\mu$ for some $\lambda_{\rho}:L^{+}V\to\mathbb{C}$. Define $MP(V,\omega)=\{(\rho,\sqrt{\lambda_{\rho}})\}$ with composition rule $(\rho_{1},\sqrt{\lambda_{\rho_{1}}})\circ (\rho_{2},\sqrt{\lambda_{\rho_{2}}})=(\rho_{1}\circ\rho_{2},\sqrt{\lambda_{\rho_{1}}\lambda_{\rho_{2}}})$, where $\sqrt{\lambda_{\rho}}$ is one of the two square roots of $\lambda_{\rho}$. Then, for $\varrho = (\rho,\sqrt{\lambda_{\rho}})\in MP(V,\omega)$ and $\nu\in\delta$, define $\varrho^{*}\nu = \sqrt{\lambda_{\rho}}\nu$.

Finally, because the cocycle of the pairing is trivial, the projections $\tilde{\pi}:\tilde{F}_{\alpha}\to\tilde{F}_{\beta}$ are unitary and $\tilde{\pi}_{32}\circ\tilde{\pi}_{21}=\tilde{\pi}_{31}$, so that $\rho\mapsto\tilde{\pi}\circ(\hat{\rho}\otimes\rho^{*})$ gives a representation of $MP(V,\omega)$ on each $\mathsf{F}_{\alpha}$. All in all, we have

\begin{metadone}
The bundle $\tilde{F}=F\otimes\delta\to L^{+}V$, with 
\begin{equation}\nonumber
\langle s_{1}\otimes\nu_{1},s_{2}\otimes\nu_{2}\rangle = \langle s_{1},s_{2}\rangle (\nu_{1},\nu_{2}), \,\,\, \forall s_{\alpha}\otimes\nu_{\alpha}\in\tilde{F}_{\alpha}=\mathsf{F}_{\alpha}\otimes\delta_{J_{\alpha}},
\end{equation}
has a flat connection such that parallel transport from $J_{1}$ to $J_{2}$ is given by the orthogonal projection $\tilde{\pi}:\tilde{F}_{1}\to\tilde{\mathsf{F}}_{2}$. Each $\tilde{\mathsf{F}}_{\alpha}$ carries a representation $X\mapsto\tilde{X}, \,\, X\in V$ of the Heisenberg group, defined by $\tilde{X}(s\otimes\nu)=(\hat{X}s)\otimes\nu$, and the metaplectic representation of $MP(V,\omega)$ $\rho\mapsto\tilde{\rho}$, defined by $\tilde{\rho}(s\otimes\nu)=\tilde{\pi}(\hat{\rho}(s)\otimes\rho^{*}(\nu))$, with $\tilde{\pi}:\tilde{\mathsf{F}}_{\rho(J)}\to\tilde{\mathsf{F}}_{J}$ the orthogonal projection.
\end{metadone}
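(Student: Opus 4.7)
The plan is to verify the three claims — flatness of the connection on $\tilde{F}$ with parallel transport given by $\tilde{\pi}$, the Heisenberg action $\tilde{X}$, and the metaplectic action $\tilde{\rho}$ — by combining the cocycle computations already carried out for $F\to L^{+}V$ and $\delta\to L^{+}V$ with the composition law $\pi_{32}\circ\pi_{21}=\chi_{321}\pi_{31}$ from Proposition \ref{compi} (after adjustment by the $\Delta$ factors). First I would define the pairing on $\tilde{F}$ as stated; because it factorizes, the orthogonal projection decomposes as $\tilde{\pi}_{\alpha\beta}=\pi_{\alpha\beta}\otimes\pi^{\delta}_{\alpha\beta}$, and its cocycle is the product of $\chi_{\alpha\beta\gamma}$ and $\chi_{\alpha\beta\gamma}^{-1}$, hence trivial. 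So $\tilde{\pi}_{\gamma\beta}\circ\tilde{\pi}_{\beta\alpha}=\tilde{\pi}_{\gamma\alpha}$, and in particular $\tilde{\pi}_{\alpha\beta}\circ\tilde{\pi}_{\beta\alpha}=\mathrm{id}$. Since an orthogonal projection that is also invertible must be unitary, this gives path-independent parallel transport, i.e.\ the connection is flat, consistent with the curvature computation $\tfrac{1}{2}\Omega+(-\tfrac{1}{2}\Omega)=0$.

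For the Heisenberg action, the key observation is that each $X\in V$ generates translations on the base manifold $V$ and acts trivially on $L^{+}V$ (it preserves every compatible complex structure pointwise), so its natural lift to $K$ and hence to $\delta$ is the identity. Therefore $\tilde{X}:=\hat{X}\otimes\mathrm{id}$ is well-defined, and item \textbf{(i)} of Proposition \ref{projectj}, namely $\hat{X}\circ\pi_{\alpha\beta}=\pi_{\alpha\beta}\circ\hat{X}$, lifts immediately to $\tilde{X}\circ\tilde{\pi}_{\alpha\beta}=\tilde{\pi}_{\alpha\beta}\circ\tilde{X}$. Hence $\tilde{X}$ is covariantly constant with respect to the flat connection, so the Heisenberg representation on each $\tilde{\mathsf{F}}_{\alpha}$ is intertwined by the parallel transport, as required.

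The conceptual payoff — and the main step — is turning the projective representation $\rho\mapsto\tilde{\rho}$ of $SP(V,\omega)$ on $\mathsf{F}_{0}$ (Proposition \ref{metasymplectic}) into a true representation of $MP(V,\omega)$ on $\tilde{\mathsf{F}}_{0}$. I would proceed by first writing the composition $\tilde{\rho}_{2}\circ\tilde{\rho}_{1}$ using the metaplectic action $\tilde{\rho}(s\otimes\nu)=\tilde{\pi}(\hat{\rho}s\otimes\rho^{*}\nu)$, inserting the identity $\hat{\rho}_{2}\circ\hat{\rho}_{1}=\hat{\rho}_{3}$ (which holds on $\mathcal{H}$, not on $\mathsf{F}_{0}$), and commuting $\hat{\rho}_{2}$ through the projection. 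The $F$-part then contributes a factor of $\chi$ from the reshuffling of projections, exactly as in Proposition \ref{metasymplectic}, while the $\delta$-part contributes the inverse cocycle $\chi^{-1}$ determined by the half-form pairing, together with the sign $\sqrt{\lambda_{\rho_{1}}\lambda_{\rho_{2}}}/\sqrt{\lambda_{\rho_{1}\rho_{2}}}$ that the group law of $MP(V,\omega)$ was engineered to absorb consistently. The two factors cancel and yield $\tilde{\rho}_{2}\circ\tilde{\rho}_{1}=\tilde{\rho}_{3}$.

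The main obstacle I expect is precisely this last cancellation, because it requires tracking the choice of square-root branch globally. Locally the identity follows from $\chi_{123}^{2}=\zeta_{22}\zeta_{13}/(\zeta_{12}\zeta_{23})$ matching the half-form transition function by construction of $\delta=\sqrt{K}$; globally one must argue that the sign of $\sqrt{\lambda_{\rho}}$ on $MP(V,\omega)$ and the sign of $\chi$ fixed by continuity with $\chi_{111}=1$ are compatible. This compatibility is precisely why the passage from $SP(V,\omega)$ to its double cover is both necessary and sufficient: the obstruction to defining $\sqrt{\det u_{\rho}}$ continuously on $U_{J}\cong U(n)$ is the same $\mathbb{Z}_{2}$ obstruction that forces the projective phase in $\chi$, and by definition of $MP(V,\omega)$ the two are canonically identified.
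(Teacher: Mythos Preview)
Your proposal is correct and follows essentially the same route as the paper: the statement in the paper is a summary of the preceding discussion rather than a result with a separate proof, and you have accurately reconstructed that discussion --- the cocycle of $\tilde{F}=F\otimes\delta$ is $\chi_{123}\cdot\chi_{123}^{-1}=1$, giving flatness and unitarity of $\tilde{\pi}$; the Heisenberg generators act trivially on $\delta$ because translations preserve every $J$; and the projective phase on $F$ is cancelled by the half-form cocycle, with the residual $\mathbb{Z}_{2}$ sign ambiguity absorbed precisely by passing to $MP(V,\omega)$, which the paper defines through the choice of $\sqrt{\lambda_{\rho}}$ for exactly this purpose. Your final paragraph on the global sign compatibility is more explicit than the paper, which simply asserts that the construction of $MP(V,\omega)$ via its action on $\delta$ makes $\varrho^{*}\nu=\sqrt{\lambda_{\rho}}\,\nu$ well-defined by fiat.
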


\begin{correctsho}\emph{Corrected SHO}

One does not have to go far to find an example of a physics application. In fact, it is the metaplectic correction which fixes the spectrum of the simple harmonic oscillator, discussed earlier in example \ref{shorr}. We take $V=\{(p,q)\in\mathbb{R}^{2}\}$ and $\omega=dp\wedge dq$. Then $SP(V,\omega)=SP(1,\mathbb{R})$. But it is well known that $SP(1,\mathbb{R})=SL(2,\mathbb{R})$ (in two dimensions, symplectic means `area-preserving'), so
\begin{equation}\nonumber
\text{Lie}(SP(V,\omega))=sl(2,\mathbb{R})=\text{span}\left\{A_{1}=\left(
\begin{array}{cc}
0&-1\\
1&0
\end{array}\right), A_{2}=\left(
\begin{array}{cc}
-1&0\\
0&1
\end{array}\right), A_{3}=\left(
\begin{array}{cc}
0&1\\
1&0
\end{array}
\right)\right\}.
\end{equation}
The vector fields generating the corresponding flows are then
\begin{equation}\nonumber
\begin{cases}
e^{tA_{1}}\left(\begin{array}{cc}
p\\
q
\end{array}\right) = \left(\begin{array}{cc}
p\\
q
\end{array}\right)+t\left(\begin{array}{cc}
-q\\
p
\end{array}\right)+O(t^{2})\\
e^{tA_{2}}\left(\begin{array}{cc}
p\\
q
\end{array}\right) = \left(\begin{array}{cc}
p\\
q
\end{array}\right)+t\left(\begin{array}{cc}
-p\\
q
\end{array}\right)+O(t^{2})\\e^{tA_{1}}\left(\begin{array}{cc}
p\\
q
\end{array}\right) = \left(\begin{array}{cc}
p\\
q
\end{array}\right)+t\left(\begin{array}{cc}
q\\
p
\end{array}\right)+O(t^{2})
\end{cases}\Rightarrow\begin{cases}
X_{1}=-q\frac{\partial}{\partial p}+p\frac{\partial}{\partial q}\\
X_{2}=-p\frac{\partial}{\partial p}+q\frac{\partial}{\partial q}\\
X_{3}=q\frac{\partial}{\partial p}+p\frac{\partial}{\partial q}
\end{cases}.
\end{equation}
And, by Hamilton's equation $df=-X\lrcorner\omega$, these are generated by the functions 
\begin{equation}\nonumber
f_{1}=\frac{1}{2}(p^{2}+q^{2}), \,\,\,\, f_{2}=pq, \,\,\,\, f_{3}=\frac{1}{2}(p^{2}-q^{2}).
\end{equation}
As we can see, choosing this basis of $\text{Lie}(SP(V,\omega))$ one immediately recognizes the Hamiltonian of the SHO as one of the generators of the action of the symplectic group on $(V,\omega)$.

As explained in example \ref{shorr}, the coordinate $z=p+iq$ is holomorphic with respect to $J_{0}$ and $\mathsf{F}_{0}=\{\phi_{z}e^{-z\bar{z}/4\hbar}s,\,\,\, \phi \text{ holomorphic}\}$, where $Ds=-\frac{i}{\hbar}\theta_{0}\otimes s$ for $\theta_{0}=\frac{1}{2}(pdq-qdp)$. The projective representation of $SP(V,\omega)$ on $\mathsf{F}_{0}$ is generated by $\tilde{f}_{i}=\pi\circ\hat{f}_{i}$. Since we saw that $f_{1}$ preserves the polarization $J_{0}$,
\begin{equation}\nonumber
\tilde{f}_{1} = \pi\circ\hat{f}_{1}=\text{id}\circ\hat{f}_{1}:\phi\mapsto\hbar z\frac{\partial\phi}{\partial z}.
\end{equation}

And since the half-form bundle has one-dimensional fibres, after choosing a holomorphic section $\nu$, which we take as 
\begin{equation}\nonumber
\nu=\sqrt{\mu}, \,\,\,\, \mu = \frac{1}{\sqrt{4\pi\hbar}}dz,
\end{equation}
the other polarized sections of $\delta$ are fixed to be of the form $\varphi(z)\nu$, with $\varphi(z)$ holomorphic. Therefore $\tilde{\mathsf{F}}_{0}=\{\phi(z)e^{-z\bar{z}/4\hbar} s\otimes\nu\}$. This space carries the metaplectic representation of $MP(V,\omega)$, and we wish to see how the generator $f_{1}$ acts in it. Note that
\begin{equation}\nonumber
\mathcal{L}_{X_{1}}dz = d(X_{1}\lrcorner dz) = d\left[\left(p\frac{\partial}{\partial q}-q\frac{\partial}{\partial p}\right)\lrcorner (dp+idq)\right]=idz
\end{equation}
and $(\mu,\mu)=1$ ($\Rightarrow (\nu,\nu)=1$), since
\begin{equation}\nonumber
i(\mu,\mu)\frac{dp\wedge dq}{2\pi\hbar}=\bar{\mu}\wedge\mu=\frac{1}{4\pi\hbar}d\bar{z}\wedge dz=\frac{idp\wedge dq}{2\pi\hbar}.
\end{equation}
Therefore $(\mu,\mathcal{L}_{X_{1}}\mu)=(\mu,i\mu)=i$ and thus
\begin{equation}\nonumber
(\nu,\mathcal{L}_{X_{1}}\nu)=\frac{1}{2}2(\nu,\nu)(\nu,\mathcal{L}_{X_{1}}\nu)=\frac{1}{2}(\nu^{2},\mathcal{L}_{X_{1}}\nu^{2}) = \frac{1}{2}(\mu,\mathcal{L}_{X_{1}}\mu)=\frac{i}{2},
\end{equation}
which implies $\mathcal{L}_{X_{1}}\nu=\frac{i}{2}\nu$. We use this to evaluate the correction to $\tilde{f}_{1}$ proposed by the metaplectic prescription. Analogously to equation (\ref{absf}) $\tilde{f}$ should generate the action $\tilde{\rho}(s\otimes\nu)=\hat{\rho}(s)\otimes\rho^{*}\nu$ by
\begin{equation}\nonumber
\frac{d\tilde{\rho}_{t}}{dt} = \frac{i}{\hbar}\tilde{\rho}_{t}\tilde{f},
\end{equation}
which gives
\begin{equation}\nonumber
\begin{split}
\tilde{f}_{1}(\psi s\otimes\nu)&=-i\hbar\frac{d}{dt}\tilde{\rho}_{1,t}(\psi s\otimes\nu)\Big|_{t=0}=-i\hbar\frac{d}{dt}\hat{\rho}_{1,t}(\psi s)\otimes\rho^{*}_{1,t}(\nu)\Big|_{t=0}\\
&=-i\hbar\frac{d}{dt}\left\{\left[(\psi s)+t\frac{i}{\hbar}\hat{f}_{1}(\psi s)+O(t^{2})\right]\otimes [\nu+t\mathcal{L}_{X_{1}}\nu +O(t^{2})]\right\}\Big|_{t=0} \\
&= -i\hbar\left[\frac{i}{\hbar}\hat{f}_{1}(\psi s)\otimes\nu+(\psi s)\otimes\mathcal{L}_{X_{1}}\nu\right] \\
&= -i\hbar\left[\frac{i}{\hbar}\left(\hbar z\frac{\partial\phi}{\partial z}e^{-z\bar{z}/4\hbar}\right)s\otimes\nu+\phi e^{-z\bar{z}/4\hbar}s\otimes\frac{i}{2}\nu\right] \\
&= \left[\hbar\left(z\frac{\partial}{\partial z}+\frac{1}{2}\right)\phi\right]e^{-z\bar{z}/4\hbar}s\otimes\nu,
\end{split}
\end{equation}
which agrees with the physical expectation. In one of the eigenspaces $\mathsf{H}_{n}$,
\begin{equation}\nonumber
\tilde{f}_{1}|_{\mathsf{H}_{n}} = \hbar \left(n+\frac{1}{2}\right)\text{id}.
\end{equation}
\end{correctsho}

\pagebreak

\subsection{Half-form quantization}

Many generalizations are needed to extend the metaplectic correction to the non-linear case. We comment on some of the results\cite{woodhouse1997geometric,hall2013quantum,tuynman2016metaplectic,lyris2021importance,wu2015projective}. First, the positive Lagrangian Grassmanian generalizes to the the non-negative Lagrangian Grassmanian $LV$ of a symplectic vector space $(V,\omega)$, made up of all non-negative Lagrangian subspaces of $V_{\mathbb{C}}$. We should try to repeat the construction of the last subsection on each tangent space $T_{m}M$ in a way that can be extended to all of $(M,\omega)$ in the presence of a polarization. This is called a metaplectic structure on a manifold.

\begin{metastruc}
Let $(M,\omega)$ be a $2n$-dimensional symplectic manifold and let $LM:=\{(m,P)|m\in M, P\subset (T_{m}M)_{\mathbb{C}} \text{ is a non-negative Lagrangian subspace}\}$. $LM$ has the structure of a bundle over $M$ with projection $\pi :(m,P)\mapsto m$ and fibre $L_{m}M=L(T_{m}M)$. The canonical bundle $K\to LM$ over it is the line bundle whose fibre over $(m,P)$ is $K_{P}$. A \emph{metaplectic structure} on $M$ is a square-root of $K$. That is, a line bundle $\delta\to LM$ such that $\delta^{2}=K$.
\end{metastruc}

This definition of a metaplectic structure and the choice of polarization are related by the following.

\begin{metapol}
A square-root $\delta_{P_{0}}$ of the canonical bundle $K_{P_{0}}$ of a non-negative polarization $P_{0}$ determines a metaplectic structure. Conversely, a metaplectic structure determines a square-root $\delta_{P}$ of $K_{P}$ in a natural way for any other non-negative polarization $P$.
\end{metapol}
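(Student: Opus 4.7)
\emph{Plan.} I would prove the two directions separately, exploiting the bundle $\pi : LM \to M$ whose fibre at $m$ is the Lagrangian Grassmannian $L(T_{m}M)$ whose half-form geometry was analyzed in the previous subsection.

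For the converse direction, note that a non-negative polarization $P$ on $(M,\omega)$ is, by definition, a smooth assignment $m \mapsto P_{m}$ of a non-negative Lagrangian subspace of $(T_{m}M)_{\mathbb{C}}$, i.e., precisely a smooth section $\sigma_{P} : M \to LM$ of $\pi$. Because $K \to LM$ has fibre $K_{Q}$ over the point $(m,Q)$, its pullback $\sigma_{P}^{*}K$ has fibre $K_{P_{m}}$ over $m$ and therefore coincides with $K_{P}$ as a bundle on $M$. I would then simply define $\delta_{P} := \sigma_{P}^{*}\delta$, which yields $(\delta_{P})^{2} = \sigma_{P}^{*}(\delta^{2}) = \sigma_{P}^{*}K = K_{P}$. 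Naturality in $P$ is immediate since pullback along $\sigma_{P}$ is functorial and involves no auxiliary choice.

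For the forward direction, given $\delta_{P_{0}}$ I would construct $\delta \to LM$ fibrewise using the linear theory and then glue. On each fibre $L_{m}M$ the restriction $K|_{L_{m}M}$ is precisely the canonical bundle of the symplectic vector space $T_{m}M$, and the linear construction of the previous subsection produces a square-root which is unique up to isomorphism once a value is prescribed at a single basepoint in $L_{m}M$ (the simple-connectedness of the Lagrangian Grassmannian trivializes the $H^{1}(\cdot,\mathbb{Z}/2)$-torsor of inequivalent square-roots). I would take the basepoint to be $(m,P_{0,m})$ and prescribe the fibre value there to be $\delta_{P_{0}}|_{m}$, obtaining a well-defined square-root of $K$ on every fibre, tautologically restricting to $\delta_{P_{0}}$ along $\sigma_{P_{0}}(M)$.

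The hard step is verifying that these fibrewise objects assemble into a \emph{smooth} line bundle on the total space $LM$. To handle it I would use Darboux's theorem to locally trivialize the symplectic tangent bundle $TM$ and apply the linear half-form construction inside each trivializing chart, yielding a local trivialization of $\delta$; the compatibility on overlaps then reduces to the naturality of the half-form bundle of a symplectic vector space with respect to symplectic linear isomorphisms, which is encoded in the metaplectic representation from the previous subsection. Smoothness of the prescribed basepoint value $m \mapsto \delta_{P_{0}}|_{m}$ guarantees that the chosen section survives these gluings, and the same simple-connectedness argument in each fibre shows that $\delta$ is determined up to isomorphism by the data $\delta_{P_{0}}$, closing the correspondence in both directions.
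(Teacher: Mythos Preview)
Your converse direction is identical to the paper's: a polarization $P$ is a section $\sigma_{P}:M\to LM$, and one sets $\delta_{P}=\sigma_{P}^{*}\delta$.

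For the forward direction your strategy is correct but packaged differently from the paper's. The paper argues via transition functions: since each fibre $L_{m}M$ is \emph{contractible}, the transition functions of $K\to LM$ can be chosen to be constant along the fibres of $LM\to M$; a square-root of $K$ is then entirely determined by a consistent square-root of these fibre-constant transition functions, i.e.\ by a square-root of $K|_{\Sigma}$ for $\Sigma=\sigma_{P_{0}}(M)$ the graph of any section. Your approach instead builds $\delta$ fibrewise from the linear theory, anchors each fibre at the basepoint $(m,P_{0,m})$ with value $(\delta_{P_{0}})_{m}$, and then glues across Darboux charts via the metaplectic representation. Both routes rest on the same topological fact about the fibres (you invoke simple-connectedness for uniqueness; the paper invokes contractibility to trivialize). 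The paper's transition-function argument is shorter and avoids your explicit Darboux-chart gluing step, while your version makes more transparent exactly \emph{how} the extension off $\sigma_{P_{0}}(M)$ is carried out and where the linear metaplectic theory enters. One small point: the linear half-form construction in the preceding subsection was carried out on $L^{+}V$, not on the full non-negative $LV$; both your argument and the paper's discussion tacitly assume this extends, so you are not introducing any gap the paper does not already have.
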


To understand this, remember that the construction of a metaplectic structure is equivalent to taking square-roots of the transition functions of the canonical bundle in a way that they still satisfy the cocycle conditions ($\Rightarrow$ still define a line bundle). Since one can show that each $L_{m}M=L(T_{m}M)$ is contractible, one can take the transition functions of $K$ to be constant on each fibre $L_{m}M$ of $LM$ and thus $\delta$ is defined by a square-root of $K|_{\Sigma}$, where $\Sigma=\sigma(M)$ is the graph of a smooth section $\sigma$ of $LM$.

A section $\sigma:M\to LM$ is a complex distribution on $M$ made up of nonnegative Lagrangian subspaces, so $\sigma^{*}K$ is the canonical bundle of this distribution. If $\sigma$ is also integrable, then it is a polarization. Conversely, any nonnegative polarization $P$ is a section $\sigma:M\to LM:m\mapsto(m,P_{m})$, so one may take $\delta_{P}=\sigma^{*}(\delta|_{\sigma(M)})$.

Neither the existence nor uniqueness of a metaplectic structure are guaranteed in symplectic manifolds which admit a prequantum bundle and a polarization, so the existence of a metaplectic structure imposes an additional constraint for a given symplectic manifold to be quantizable. More technically, it should happen that the classical phase space is not only a symplectic manifold, but also a metaplectic manifold\cite{asymptotics}.

\begin{stringdual}\emph{The dilaton shift}

As an interesting example we discuss how the transformation of the dilaton field in string theory under a T-duality relates to the inclusion of half-forms in the quantization procedure\cite{alfonsi2021double}. Let us briefly summarize the canonical interpretation of t-duality \cite{alvarez1995introduction,belov2007t,berman2015duality}. Consider the non-linear sigma model constructed on the space of maps $X:\Sigma\to M$, where $\Sigma = S^{1}\times\mathbb{R}$ is the string world-sheet and $M$ is the target space-time, which comes with a metric $G$, a closed $3$-form $H$ and a scalar field $\Phi$ (the dilaton). The dynamics is given by the action
\begin{equation}\nonumber
S=\frac{1}{4\pi\alpha '}\int_{\Sigma}d^{2}\xi [\sqrt{h}h^{\mu\nu}G_{ij}\partial_{\mu}X^{i}\partial_{\nu}X^{j}+i\epsilon^{\mu\nu}B_{ij}\partial_{\mu}X^{i}\partial_{\nu}X^{j}+\alpha '\sqrt{h}R^{(2)}\Phi(X)],
\end{equation}
where $h$ and $R^{(2)}$ are the metric and scalar curvature on $\Sigma$, respectively, and $B$ is the gauge potential of $H$ (locally, $H=dB$, although $B$ might not be globally defined). T-duality refers to the fact that, if the target space-time is a torus fibration $\mathbb{T} \hookrightarrow M \to\tilde{M}$, so that it has abelian isometries generated by translations along the torus directions $\vartheta^{i}$, there is a different (T-dual) background such that the above procedure gives the same quantum field theory on the space of maps $X:\Sigma\to M$. Specifically, if we consider T-duality with respect to the isometry generated by $\partial/\partial\vartheta$, where $(X^{I})=(\vartheta,X^{\alpha})$ is a coordinate system adapted to the $S^{1}\subset\mathbb{T}^{n}$ action, the background fields $(G,B,\Phi)$, $(\tilde{G},\tilde{B},\tilde{\Phi})$ should be related by the Buscher rules
\begin{equation}\label{buscher}
\begin{split}
\tilde{G}_{00}&=\frac{1}{G_{00}}\\
\tilde{G}_{0\alpha} &= \frac{B_{0\alpha}}{G_{00}}, \,\,\, \tilde{B}_{0\alpha} = \frac{G_{0\alpha}}{G_{00}}\\
\tilde{G}_{\alpha\beta} &= G_{\alpha\beta}-\frac{G_{0\alpha}G_{0\beta}-B_{0\alpha}B_{0\beta}}{G_{00}}\\
\tilde{B}_{\alpha\beta} &=B_{\alpha\beta}-\frac{G_{0\alpha}B_{0\beta}-G_{0\beta}G_{0\alpha}}{G_{00}}.
\end{split}
\end{equation}
Additionally, if one requires the dual model to also have conformal invariance, it is necessary that
\begin{equation}\nonumber
\tilde{\Phi} = \Phi - \frac{1}{2}\ln G_{00}.
\end{equation}

The most standard procedure to derive the relations (\ref{buscher}) is that of gauging the isometry group by introducing an auxiliary gauge field $A$, which is forced to be flat by introducing a Lagrange multiplier term $\lambda dA$. Then integrating out $\lambda$ gives back the original model, while integrating first $A$ gives the dual theory depending on $\lambda$, understood as the dual variable (\cite{buscher1987symmetry}). The dilaton shift from the perspective of this proceedure is a one loop effect that we see in the path integral formalism. For us, however, the canonical approach is more interesting: the background fields on the target define a Lagrangian on the space of embeddings of $\Sigma$ on $M$, which then gives it a symplectic structure (more precisely, on the tangent bundle of the loop space of $M$). Then the background fields are related by the Buscher rules if the induced symplectic manifolds are related by a specific type of canonical transformation.

Instead of giving the transformation now, we wish to approach it from the perspective of double field theory, as this will show one more piece of symplectic geometry which appears. The physical idea is to construct a version of the theory which has the duality as a manifest symmetry, by including both the original and dual coordinates. One can then recover not only the original and dual theories but also other equivalent backgrounds, which are related to each other by the infinite-order discrete group $O(n,n,\mathbb{Z})$ \cite{berman2015duality}.

Geometrically, let $\mathbb{T}^{n}\hookrightarrow M\to\tilde{M}$ be the torus fibration of target space $M$. After a Legendre transform, we get the phase space of the string model $T^{*}\mathcal{L}M$, where $\mathcal{L}M$ is the space of loops $X:S^{1}\hookrightarrow M$. As any cotangent bundle, this comes with the symplectic structure which at the point $X(\sigma)$ has the form
\begin{equation}\nonumber
\omega = \oint_{S^{1}}d\sigma (\delta P_{I}(\sigma)\wedge\delta X^{I}(\sigma)),
\end{equation}
where we think of the momentum $P = P_{I}(\sigma)\delta X^{I}(\sigma)$ as a section of the pullback of $T^{*}M$ to $S^{1}$ by $X(\sigma)$ and $\delta$ is the differential on $\mathcal{L}M$. The correct symplectic manifold for the sigma model with $H$-field, however, has the symplectic structure `twisted' by $H$,
\begin{equation}\nonumber
\omega_{M} = \omega + \oint_{S^{1}}d\sigma (\partial_{\sigma}X)\lrcorner H :=\omega + \frac{1}{2}\oint_{S^{1}}d\sigma \partial_{\sigma}X^{I}(\sigma)H_{IJK}(X(\sigma))\delta X^{J}(\sigma)\wedge\delta X^{K}(\sigma),
\end{equation}
where the components are defined by $H=H_{IJK}dX^{I}\wedge dX^{J}\wedge dX^{K}$ on $M$. Since locally $H=dB$, prequantization should construct a prequantum bundle over $T^{*}(\mathcal{L}M)$ with connection which can be expressed by $Ds = -\frac{i}{\hbar}\theta_{M}\otimes s$ for some local section $s$ and local symplectic potential
\begin{equation}\nonumber
\theta_{M} = \delta z + \oint_{S^{1}}d\sigma [p-(\partial_{\sigma}X)\lrcorner B],
\end{equation}
where $z$ is a local coordinate on the fibre.

In our case, $M$ is a principal torus bundle $\pi :M\to\tilde{M}$ with connection $\Theta$. This is given by a globally defined smooth one-form on $M$ with values in $\text{Lie}(\mathbb{T}^{n})=\mathbb{R}^{n}$ such that 
\begin{equation}\nonumber
\frac{\partial}{\partial\vartheta^{i}}\lrcorner\Theta = \text{id}\in \text{Lie}(\mathbb{T}^{n})^{*}\otimes \text{Lie}(\mathbb{T}^{n}) \,\,\,\, \text{ and } \,\,\,\, \mathcal{L}_{\partial/\partial\vartheta^{i}}\Theta = 0
\end{equation}
for the generators $\frac{\partial}{\partial\vartheta^{i}}\in C^{\infty}(TM\otimes Lie(\mathbb{T}^{n})^{*})$ of the torus action on the total space $M$. These two criteria imply that $d\Theta = \pi^{*}F$ for some two-form with integral periods on $\tilde{M}$ with values on $\text{Lie}(\mathbb{T})^{n}$. Furthermore, the space of maps $X:\Sigma\to M$ also has the structure of a bundle over the space of maps to the base $X:\Sigma\to\tilde{M}$: one takes the projection map to be composition with $\pi :M\to\tilde{M}$, 
\begin{equation}\nonumber
Map(\Sigma,M)\ni X\mapsto \pi\circ X\in \text{Map}(\Sigma,\tilde{M}),
\end{equation}
and then the fibre over $\pi\circ X$ is the space of smooth sections of the pullback by $\pi\circ X$ of the torus bundle $M\to\tilde{M}$, $C^{\infty}[(\pi\circ X)^{*}M]$. We use these to rewrite the symplectic potential in the case in which the sigma model is constructed over such a torus bundle
\begin{equation}\nonumber
\theta_{M}=\delta z + \oint_{S^{1}}d\sigma [p_{\alpha}\delta X^{\alpha}+\langle p,\Theta\rangle - (\partial_{\sigma}X+\nabla_{\sigma}\vartheta)\lrcorner B],
\end{equation}
where now $X:S^{1}\hookrightarrow\tilde{M}$ is a loop on the base $\tilde{M}$, $\vartheta\in C^{\infty}[X^{*}M]$ is the corresponding section of the pullback bundle over $S^{1}$ and $\nabla$ is the covariant derivative with respect to the pullback of the connection on $M\to\tilde{M}$. One can then use the fact that $B$ is a gerbe connection on $M$ to construct the prequantum bundle with the correct curvature.

The extended space is then constructed by the geometrization of the $2$-form $F'$ obtained from the contraction of $H$ with the vector fields $\partial/\partial\vartheta^{i}$, that is,
\begin{equation}\nonumber
F_{d}^{i}=\frac{\partial}{\partial\vartheta_{i}}\lrcorner H,
\end{equation}
which is a $2$-form in $\tilde{M}$ with values in $\text{Lie}(\mathbb{T}^{n})$ and integral periods. Specifically, one wants to think of it as the curvature of a connection $\Theta_{d}$ on a principal torus bundle $pr: N\to M$ over the total space of $M\to\tilde{M}$ with fibre $\mathbb{T}^{n}_{d}$, where we identify $\text{Lie}(\mathbb{T}^{n}_{d}) = \text{Lie}(\mathbb{T}^{n})^{*}$ (that is, $pr^{*}(F_{d}^{i})=d\Theta^{i}_{d}$). If $H$ satisfies some technical assumptions, this bundle will not only have an action of $\mathbb{T}^{n}_{d}$ but also one of the original torus $\mathbb{T}^{n}$ (note that, in principle, this acts only on $M$, but not necessarily on the total space of the new bundle $N\to M$). If this is the case, then $N$ itself is a principal double torus bundle over $\tilde{M}$ with fibre $\mathbb{T}^{n}\times\mathbb{T}^{n}_{d}$. We assume this is the case and denote by $\Theta_{D}$ a connection on $N$ which is compatible with the action of $\mathbb{T}^{n}\times\mathbb{T}^{n}_{d}$ and $F_{D}$ its curvature.

Up to some obstructions on $H$, one can proceed by substituting the phase space by $T^{*}(\mathcal{L}N)$ with symplectic structure $\omega_{N}=\delta\theta_{N}$, where
\begin{equation}\nonumber
\theta_{N} = \delta z + \oint_{S^{1}}d\sigma[p_{\alpha}\delta X^{\alpha} + \langle p,\Theta\rangle + \langle\Theta_{D},p_{D}\rangle - (\partial_{\sigma}X+\nabla_{\sigma}\vartheta)\lrcorner (B-\langle\Theta_{D},\Theta\rangle)].
\end{equation}
Now, if $\partial/\partial\vartheta_{i}$ and $\partial/\partial\vartheta^{i}_{D}$ generate the actions of $\mathbb{T}^{n}$ and $\mathbb{T}^{n}_{d}$ on $N$, then $\delta/\delta\vartheta_{i}$ and $\delta/\delta\vartheta^{i}_{D}$ generate the actions of $\mathbb{T}^{n}$ and $\mathbb{T}^{n}_{d}$ on $T^{*}(\mathcal{L}N)$. These actions are Hamiltonian
\begin{equation}\nonumber
\begin{cases}
\frac{\delta}{\delta\vartheta_{i}}\lrcorner\omega_{N}+\delta(p^{i}+\nabla_{\sigma}\vartheta^{i}_{D})=0\\
\frac{\delta}{\delta\vartheta^{i}_{D}}\lrcorner\omega_{N}+\delta(p_{D,i}+\nabla_{\sigma}\vartheta_{i})=0
\end{cases}\,\,\,,
\end{equation}
which, in particular, implies that they generate canonical flows on $T^{*}(\mathcal{L}N)$. We see that $\omega_{N}$ projects to a well-defined closed two-form on the quotient of $T^{*}(\mathcal{L}N)$ by the flow of the $\delta/\delta\vartheta^{i}_{D}$, though the projection is degenerate. Reducing the resulting pre-symplectic manifold gives back $T^{*}(\mathcal{L}M)$ with the symplectic structure $\omega_{M}$. However, one might just as well reduce with respect to the flows of the $\delta/\delta\vartheta_{i}$, which gives a manifold $M_{d}$ with symplectic structure $\omega_{M_{d}}$. The two are then automatically symplectomorphic, the symplectic diffeomorphism being generated by
\begin{equation}\nonumber
S = \frac{1}{2}\oint_{S^{1}}d\sigma(\vartheta^{i}_{D}(\sigma)\partial_{\sigma}\vartheta_{i}(\sigma)-\vartheta_{i}(\sigma)\partial_{\sigma}\vartheta^{i}_{D}(\sigma)).
\end{equation}
Additionally, this includes the $O(n,n,\mathbb{Z})$ structure, which acts by changing the subtorus of $\mathbb{T}^{n}\times\mathbb{T}^{n}_{d}$ with respect to which one does the symplectic reduction.

The function $S$ on $\mathcal{L}M\times\mathcal{L}M_{d}$ is to be understood as Hamilton's two-point function from classical mechanics. That is, one considers the graph of the one-form $\delta S$, which is $\Lambda = \{(p,q)\in T^{*}(\mathcal{L}M\times\mathcal{L}M_{d})|p=\delta S(q)\}$. Since $\delta S$ is closed, this submanifold is Lagrangian which, in turn, implies that the transformation $\rho:T^{*}(\mathcal{L}M)\to T^{*}(\mathcal{L}M_{d})$ defined implicitly by
\begin{equation}\nonumber
\Lambda = \{(p,p_{d},q,q_{d})\in T^{*}(\mathcal{L}M\times\mathcal{L}M_{d})=T^{*}(\mathcal{L}M)\times T^{*}(\mathcal{L}M_{d})|\rho(p,q)=(-p_{d},q_{d})\}
\end{equation}
is a symplectic diffeomorphism. We conclude that the transformation from $T^{*}(\mathcal{L}M)$ to $T^{*}(\mathcal{L}M_{d})$ is given by
\begin{equation}\nonumber
p = \frac{\delta S}{\delta q}, \,\,\,\, \text{ and } \,\,\,\, p_{d} = -\frac{\delta S}{\delta q_{d}}.
\end{equation}
If we restrict to one isometry, in the direction of $\vartheta$, this will then give
\begin{equation}\label{canon}
\begin{cases}
p_{\vartheta} = \frac{\delta S}{\delta\vartheta} = -\partial_{\sigma}\vartheta_{D}\\
p_{\vartheta}^{D} = -\frac{\delta S}{\delta\vartheta_{D}}=-\partial_{\sigma}\vartheta.
\end{cases}
\end{equation}

Let then $\rho$ be the composition of a Legendre transform on the $\vartheta$ variables, followed by the above transformation, and then by an inverse Legendre transform on the $\vartheta_{D}$ variables. An explicit computation shows that, if the fields $(G,B,\Phi)$ on $M$ and $(G_{D},B_{D},\Phi_{D})$ on $M_{D}$ are related by (\ref{buscher}) and $L$ and $L_{D}$ are the corresponding Lagrangians, then $L = L_{D}\circ\rho = \rho^{*}L_{D}$, which then implies that the symplectic structures are related by $\omega_{M} = \rho^{*}\omega_{M_{D}}$. Hence the connection between this type of canonical transformation and the Buscher rules.

As an imediate example, let us restrict to the $\vartheta$ coordinate, so that the Lagrangian becomes simply
\begin{equation}\nonumber
L=\frac{1}{2}G_{00}(\dot{\vartheta}^{2}-(\vartheta ')^{2}),
\end{equation}
where $\vartheta ' = \partial_{\sigma}\vartheta$. The Legendre transform gives
\begin{equation}\nonumber
H=\frac{P^{2}}{2G_{00}}+\frac{G_{00}(\vartheta ')^{2}}{2}, \,\,\,\,\,\,\,\,\,\,\,\,\,\,\,\, P = G_{00}\dot{\vartheta}.
\end{equation}
The canonical transformation (\ref{canon}) then gives
\begin{equation}\nonumber
H_{D} = \frac{(\vartheta_{D} ')^{2}}{2G_{00}}+\frac{G_{00}P_{D}^{2}}{2}.
\end{equation}
From Hamilton's equation, $\dot{\vartheta}_{D} = \delta H_{D}/\delta P_{D} = G_{00}P_{D}$, so the inverse Legendre transform gives
\begin{equation}\nonumber
L_{D} = \frac{1}{2G_{00}}(\dot{\vartheta}_{D}^{2}-(\vartheta_{D}')^{2}).
\end{equation}
So, indeed, if we had defined $L_{D}$ using $(G_{D})_{00} = G_{00}^{-1}$, we would have found $L=\rho^{*}L_{D}$.

Until now, all the symplectic geometry has appeared on the infinite-dimensional symplectic manifold $T^{*}(\mathcal{L}M)$. The dilaton, however, is a field on $M$, so it doesn't seem at first to be related to the inclusion of half-forms on the quantization procedure. The intermediate step is to look at the Fourier decomposition of the loops $X:\Sigma\to M$ \cite{kugo1992target}. Looking at the action of one of the $S^{1}\subset\mathbb{T}^{n}$, generated by translation in $\vartheta$, we write
\begin{equation}\nonumber
\begin{cases}
\vartheta(\sigma) = \vartheta_{0}+w\sigma + \text{ oscillators},\\
2\pi p_{\vartheta}(\sigma) = p + \text{ oscillators}.
\end{cases}
\end{equation}
Since the $\vartheta_{0}$ variable lies in the image circle $\vartheta(S^{1})$, one must have $w\in\mathbb{N}$ (we parametrize the coordinate on $S^{1}$ from $0$ to $2\pi$). We shall neglect the oscillators, as only the zero modes transform in a non-trivial way under T-duality. This has the effect of substituting $T^{*}(\mathcal{L}M)$ by a finite-dimensional manifold, because we parametrize each of the admissible loops simply by the coefficients of the zero modes. Substituting these in (\ref{canon}) and solving for $\vartheta_{D}$, $P_{D}$,
\begin{equation}\nonumber
\begin{cases}
\vartheta_{D}(\sigma) = \int_{0}^{\sigma}Pd\sigma = \vartheta_{D,0}+p\sigma+\text{ oscillators}\\
P_{D} = \frac{d}{d\sigma}\vartheta(\sigma) = w + \text{ oscillators},
\end{cases}
\end{equation}
where we have included an inversion $\sigma\mapsto -\sigma$. Hence it acts by swapping $p$ and $w$. Since we want to interpret $\vartheta_{D,0}$ as the coordinate conjugate to $w$, which has integer spectrum, it should take values in the interval $[0,2\pi]$ as well (note that this implies $p\in\mathbb{N}$ as well). One says that it is a coordinate on the dual circle. The action of the canonical transformation together with the Buscher rules preserves the Hamiltonian, which becomes
\begin{equation}\nonumber
H = \frac{p^{2}}{2G_{00}}+\frac{G_{00}w^{2}}{2} = \frac{p_{D}^{2}}{2(G_{D})_{00}}+\frac{(G_{D})_{00}w_{D}^{2}}{2} = H_{D}.
\end{equation}
Hence we take the phase space to be $T^{*}(\mathbb{T}^{2})=\{(p,w,\vartheta_{0},\vartheta_{D,0})\}$, where $\mathbb{T}^{2}$ is the torus made of the original and dual circles, with symplectic structure $G_{00}d\vartheta_{0}\wedge dp+(G_{D})_{00}d\vartheta_{D,0}\wedge dw$, where we treat $p$ and $w$ as continuous variables, with the understanding that their quantizations should be present in the resulting quantum theory.

Suppose now that the dilaton field defines a half-form on each of the symplectomorphic reductions, and that the two half-forms are related by one of the elements of the metaplectic group which correspond to the canonical transformation relating the two spaces. More specifically, we assume
\begin{equation}\nonumber
e^{\Phi}(d\vartheta_{0}\wedge dp)^{1/2} = \rho^{*}[e^{\Phi_{D}}(d\vartheta_{D,0}\wedge dw)^{1/2}],
\end{equation}
and use the fact that the two reductions of the symplectic structure on $T^{*}(\mathbb{T}^{2})$ should be mapped to each other, so that
\begin{equation}\nonumber
\begin{split}
e^{\Phi(\vartheta_{0})}(d\vartheta_{0}\wedge dp)^{1/2} &= \rho^{*}[e^{\Phi_{D}}(d\vartheta_{D,0}\wedge dw)^{1/2}] = \left[\rho^{*}\left(\frac{e^{2\Phi_{D}}}{(G_{D})_{00}}(G_{D})_{00}d\vartheta_{D,0}\wedge dw\right)\right]^{1/2} \\
&= \frac{e^{\Phi_{D}(\rho(\vartheta_{0}))}}{(G_{D})_{00}^{1/2}}[G_{00}d\vartheta_{0}\wedge dp]^{1/2} = \frac{e^{\Phi_{D}(\rho(\vartheta_{0}))}}{(G_{D})_{00}}(d\vartheta_{0}\wedge dp)^{1/2},
\end{split}
\end{equation}
where the commutation of the pullback sign with the square-root should be understood as choosing one of the two correponding elements of the metaplectic group. We conclude that one should have
\begin{equation}\nonumber
\Phi_{D} = \Phi - \frac{1}{2}\ln G_{00},
\end{equation}
which is the correct transformation law.

\end{stringdual}

\section{Acknowledgements}
DSB thanks Pierre Andurand for partial support during this work. DSB has benefitted from discussions on this topic with Luigi Alfonsi, Gary Gibbons, Malcolm Perry and Alan Weinstein. GC thanks support from NSF through Grant NSF DMR-1606591. GC wishes to thank Llohann Speran\c ca and Alexander Abanov for useful discussions and the mathematics department of UFPR for the invitation to lecture a short course based on these notes in their 2022 Graduate Summer School.

\pagebreak

\printbibliography

\end{document}